\crefname{enumi}{}{}
\newtheorem{theorem}{Theorem}[section]
\newtheorem{lemma}[theorem]{Lemma}
\newtheorem{corollary}[theorem]{Corollary}
\theoremstyle{definition} \newtheorem{definition}[theorem]{Definition}
\theoremstyle{remark}
\newtheorem{remark}[theorem]{Remark}
\colorlet{colGeorge}{yellow!50!black}
\colorlet{colGeorgeBack}{yellow!20!white}
\newtcolorbox{boxGeorge}{colback=colGeorgeBack,colframe=colGeorge}
\colorlet{colAndy}{purple!50!black}
\colorlet{colAndyBack}{purple!20!white}
\newtcolorbox{boxAndy}{colback=colAndyBack,colframe=colAndy}
\newcommand{\andy}[1]{\begin{boxAndy} A: #1 \end{boxAndy}}
\colorlet{colFra}{green!50!black}
\colorlet{colFraBack}{green!20!white}
\newtcolorbox{boxFra}{colback=colFraBack,colframe=colFra}
\colorlet{colEmanuele}{blue!50!black}
\colorlet{colEmanueleBack}{blue!20!white}
\newtcolorbox{boxEmanuele}{colback=colEmanueleBack,colframe=colEmanuele}
\newcommand{\levy}{L\'evy\xspace}
\newcommand{\levywalk}{L\'evy walk\xspace}
\newcommand{\levyWalk}{L\'evy Walk\xspace}
\newcommand{\levywalks}{L\'evy walks\xspace}
\newcommand{\levyWalks}{L\'evy Walks\xspace}
\newcommand{\levyflight}{L\'evy flight\xspace}
\newcommand{\levyFlight}{L\'evy Flight\xspace}
\newcommand{\levyflights}{L\'evy flights\xspace}
\newcommand{\search }{{\mathrm{ANTS}}}
\newcommand{\algsearch}{{$\mbox{L\'evy-Alg}$}\xspace}
\newcommand{\work}{{work}\xspace}
\newcommand{\treasure}{{target}\xspace}
\newcommand{\trespoint}{{u^\ast}}
\newcommand{\variance}[1]{\mathrm{Var}\left( #1 \right)}
\DeclareMathOperator{\polylog}{polylog}
\DeclareMathOperator{\poly}{poly}
\newcommand{\bigo}{\mathcal{O}}
\newcommand{\mybigo}[1]{\mathcal{O}\left(#1\right)}
\newcommand{\myOmega}[1]{\Omega\left(#1\right)}
\newcommand{\myTheta}[1]{\Theta\left(#1\right)}
\newcommand{\mylittleo}[1]{o \left (#1 \right )}
\newcommand{\mylittleomega}[1]{ \omega \left(#1\right)}
\newcommand{\mean}{\mathbb{E}}
\newcommand{\expect}[1]{\mathbb{E}\left[ #1 \right]}
\newcommand{\abs}[1]{\left\lvert #1 \right\rvert}
\newcommand{\manhattan}[1]{\left\| #1 \right\|_1}
\newcommand{\supdist}[1]{\left \| #1 \right \|_{\infty}}
\newcommand{\cardinality}[1]{\left\lvert #1 \right\rvert}
\newcommand{\integer}{\mathbb{Z}}
\newcommand{\nat}{\mathbb{N}}
\newcommand{\real}{\mathbb{R}}
\newcommand{\prob}{\mathbb{P}}
\newcommand{\pr}[1]{\mathbb{P}\left( #1 \right)}
\newcommand{\myfloor}[1]{\left\lfloor #1 \right\rfloor}
\newcommand{\myceil}[1]{\left\lceil #1 \right\rceil}
\newcommand{\Mid}{\ \middle|\ }
\newcommand{\ind}{\mathds{1}}
\newcommand{\levywalkvisits}[2]{Z^w_{#1}\left(#2\right)}
\newcommand{\levyflightvisits}[2]{Z^f_{#1}\left(#2\right)}
\newcommand{\Zf}[2]{Z^f_{#1}\left(#2\right)}
\newcommand{\levyrand}[1]{L^w_{#1}}
\newcommand{\levyflightrand}[1]{L^f_{#1}}
\newcommand{\Lf}[1]{L^f_{#1}}
\newcommand{\genericrand}[1]{\mathbf{X}_{#1}}
\newcommand{\dmax}{d_{\text{max}}}
\newcommand{\levyconst}{c_{\alpha}}
\newcommand{\dist}{\ell}
\newcommand{\distu}{d_u}
\newcommand{\firstregion}{\mathcal{A}_1}
\newcommand{\secondregion}{\mathcal{A}_2}
\newcommand{\thirdregion}{\mathcal{A}_3}
\renewcommand{\AA}{\mathcal{A}}
\newcommand{\rwfirstregion}{\mathcal{A}_1}
\newcommand{\rwsecondregion}{\mathcal{A}_2}
\newcommand{\intervalsquare}[1]{\left[ #1 \right]}
\newcommand{\EE}{\mathcal{E}}
\newcommand{\origin}{\mathbf{0}}
\newcommand{\diff}{\mathrm{d}}
\title{Search via Parallel Lévy Walks on ${\mathbb Z}^2$}
\author[1]{Andrea Clementi}
\author[2]{Francesco d'Amore}
\author[3]{George Giakkoupis}
\author[2]{Emanuele Natale}
\affil[1]{Universit\`a di Roma ``Tor Vergata'', Roma, Italia \protect\\
clementi@mat.uniroma2.it}
\affil[2]{Universit\'e C\^ote d'Azur, Inria, CNRS, I3S, France\protect\\
francesco.d-amore@inria.fr\protect\\
emanuele.natale@inria.fr}
\affil[3]{Inria, Univ Rennes, CNRS, IRISA
Rennes, France\protect\\
george.giakkoupis@inria.fr}
\date{}
\begin{document}
\def\mystrut(#1,#2){\vrule height #1 depth #2 width 0pt}

\newcolumntype{C}[1]{>{\mystrut(3ex,3ex)\centering}p{#1}<{}}

\maketitle

\begin{abstract}
Motivated by the \emph{Lévy foraging hypothesis} -- the premise that various animal species have adapted to follow \emph{Lévy walks} to optimize their search efficiency -- we study the parallel hitting time of Lévy walks on the infinite two-dimensional grid.
We consider $k$ independent discrete-time Lévy walks, with the same exponent $\alpha \in(1,\infty)$, that start from the same node, and analyze the number of steps until the first walk visits a given target at distance $\ell$.
We show that for any choice of $k$ and $\ell$ from a large range, there is a unique optimal exponent $\alpha_{k,\ell} \in (2,3)$, for which the hitting time is $\tilde O(\ell^2/k)$ w.h.p., while modifying the exponent by an $\epsilon$ term increases the hitting time by a polynomial factor, or the walks fail to hit the target almost surely.
Based on that, we propose a surprisingly simple and effective parallel search strategy, for the setting where $k$ and $\ell$ are unknown:
the exponent of each Lévy walk is just chosen independently and uniformly at random from the interval $(2,3)$.
This strategy achieves optimal search time (modulo polylogarithmic factors) among all possible algorithms (even centralized ones that know $k$).
Our results should be contrasted with a line of previous work showing that the exponent $\alpha = 2$ is optimal for various search problems.
In our setting of $k$ parallel walks, we show that the optimal exponent depends on $k$ and $\ell$, and that randomizing the choice of the exponents works simultaneously for all $k$ and $\ell$.
\end{abstract}

\maketitle

\section{Introduction}
\label{sec:intro}

\subsection{Background}

A \emph{\levywalk} is a random walk process in which jump lengths are drawn from a power-law distribution.
Thus, the walk consists of a mix of long trajectories and short, random movements.
Over the last two decades, \levywalks have attracted significant attention, as a result of increasing empirical evidence that the movement patterns of various animal species resemble \levywalks.
Examples of such species range from snails~\cite{reynolds2017}, bees~\cite{Reynolds3763}, and albatross birds~\cite{viswanathan_levy_1996}, to sharks~\cite{humphries2010,sims2008}, deers~\cite{focardi2009}, and humans~\cite{boyer2006,raichlen2014}, among others \cite{reynolds_current_2018}.
Nowadays, \levywalks are
the most prominent
movement model in biology~\cite{reynolds_current_2018}, at least among models with comparable mathematical simplicity and elegance~\cite{viswanathan_physics_2011}.

The \emph{L\'evy foraging hypothesis}, put forward by Viswanathan et al.~\cite{viswanathan_optimizing_1999,viswanathan_levy_2008}, stipulates that the observed \levywalk movement patterns in animals must have been induced by natural selection, due to the optimality of \levywalks in searching for food.
Indeed, it has been shown that \levywalks achieve (near) optimal search time in certain settings.
In particular, Levy walks with exponent parameter $\alpha=2$ are optimal for searching sparse randomly distributed revisitable targets~\cite{viswanathan_optimizing_1999}.
However, these results were formally shown just for one-dimensional spaces~\cite{Buldyrev2001}, and do not carry over to higher-dimensions~\cite{levernier2020}.
Currently, there is no strong analytical evidence supporting the optimality of \levywalks in $d$-dimensional spaces for $d>1$.

In this paper we focus on \levywalks on two-dimensional spaces.
Concretely, we assume the infinite lattice $\integer^2$, and consider the Manhattan distance as the underlying metric.
To determine each jump of the \levywalk, an integer distance $d$ is chosen independently at random such that 
the probability of $d = i$ is inversely proportional to $i^\alpha$,
where $\alpha\in(1,\infty)$ is the exponent parameter of the walk.
A destination $v$ is then chosen uniformly at random among all nodes at distance $d$ from the current node $u$, and in the next $d$ steps, the process moves from $u$ to $v$ along a shortest lattice path approximating the straight line segment $\overline{uv}$.

We evaluate the search efficiency of \levywalks on $\integer^2$, by analysing the parallel hitting time of multiple walks originating at the same node.
Precisely, we assume that $k \geq 1$ independent \levywalks start simultaneously from the origin $(0,0)$ of the lattice.
Then the parallel hitting time for any given target node $\trespoint$ is the first step when some walk visits $\trespoint$.
This very basic setting can be viewed as a model of natural cooperative foraging behavior,
such as the behavior of ants around their nest.
In fact, our setting is as a special instance of the more general ANTS problem introduced by Feinerman and Korman~\cite{feinerman_ants_2017}.
The ANTS problem asks for a search strategy for $k$ independent agents that minimizes the parallel hitting time for an unknown target, subject to limited communication before the search starts.

\subsection{Our Results}

\subsubsection{Hitting Time Bounds for a Single \levyWalk}

A main technical contribution of our work is an analysis of the hitting time $\tau_\alpha(\trespoint)$ of a single \levywalk with exponent $\alpha\in(1,\infty)$, for an arbitrary target node $\trespoint$.
We show the following bounds on $\tau_\alpha(\trespoint)$,
assuming the \levywalk starts at the origin $(0,0)$, and $\trespoint$'s distance to the origin is $\dist = \|\trespoint\|_1$.

Consider first the \emph{super-diffusive} regime, where $\alpha\in (2,3)$.
In this regime, jump lengths have bounded mean and unbounded variance.
Roughly speaking, we show that in the first $t_\dist = \Theta(\dist^{\alpha - 1})$ steps,\footnote{Note that $t_\dist$ is of the same order as the expected number of steps before the first jump of length greater than $\dist$.}
the walk stays inside a ball of radius $t_\dist\cdot\polylog\dist$ with significant probability, while only a constant fraction of those steps are inside the smaller ball of radius~$\dist$.
We also show a monotonicity property, which roughly implies that the probability of visiting a node decreases as the node's distance from the origin increases.
Therefore, a constant fraction of the $t_\dist$ steps visits nodes at distances between $\ell$ and $\ell\cdot\polylog\dist$, and the visit probability of each of these nodes is upper bounded by that of node~$\trespoint$.
We thus obtain that the probability of visiting $\trespoint$ within $t_\ell$ steps
is $\myOmega{ {t_\dist}/{\dist^2 \polylog\dist}}$.

If we consider a smaller number of steps, $t = \mybigo{t_\dist/ \polylog \dist}$, then it is very likely that the walk stays in a ball of radius smaller than $\dist$, and we show a simple bound of $\mybigo{(t/t_\dist)^2\cdot {t_\dist}/{\dist^2}}$ for the probability of $\tau_\alpha(\trespoint) \leq t$, i.e., ignoring $\polylog\dist$ factors, the probability decreases by a factor of $\mybigo{(t/t_\dist)^2}$. 

On the other hand, if we consider a larger number of steps (even if $t \to \infty$), the probability that $\trespoint$ is hit does not increase significantly, just by at most a $\polylog\dist$ factor.

Therefore, in regime $\alpha\in (2,3)$, $\Theta(\dist^{\alpha - 1})$ steps suffice to maximize the hitting probability (within $\polylog\dist$ factors), while reducing this time reduces the probability super-linearly.

The \emph{diffusive} regime, $\alpha\in (3,\infty)$, is similar to the case of a simple random walk, as jump lengths have bounded mean and bounded variance.
We show that $\mybigo{\dist^2\polylog\dist}$ steps suffice to hit the target with probability $\myOmega{1/\polylog\dist}$, while for a smaller number of steps $t$, the probability decreases by a factor of $\mybigo{(t/\dist^2)^2}$. 
The behavior is similar also in the threshold case of $\alpha = 3$, even though the variance of the jump length is unbounded in this case. 

Finally, in the \emph{ballistic} regime, $\alpha\in (1,2]$, where jump lengths have unbounded mean and unbounded variance, the behavior is similar to that of a straight walk along a random direction.
We show that the target is hit with probability $\myOmega
{1/\dist\polylog\dist}$ in the first  $\myTheta{\dist}$ steps, while increasing the number of steps does not increase this probability significantly.

Below we give formal statements of these results, for the case where $\alpha$ is independent of $\dist$, as $\dist\to \infty$.
More refined statements and their proofs are given in \cref{sec:levywalk,sec:equivpwbw,sec:equivpwrw}.

\begin{theorem}
    \label{thm:lw23-intro}
    Let $\alpha$ be any real constant in  $(2,3)$  and $\trespoint$  any node in $\integer^2$ with    $\dist = \|\trespoint\|_1$. Then:
    \begin{enumerate}[(a)]
        \item\label{thm:lw23-intro:a}
            $\pr{\tau_\alpha(\trespoint)  = \mybigo{\dist^{\alpha - 1}}} = \myOmega{ {1}/{\dist^{3-\alpha}\log^2 \dist}}$;

        \item
            $\pr{\tau_\alpha(\trespoint)  \leq t } = \mybigo{{t^2}/{\dist^{\alpha+1}} }$,
            for any step $\dist \le t =\mybigo{\dist^{\alpha - 1}}$;

        \item
            $\pr{\tau_\alpha(\trespoint)  < \infty} = \mybigo{{{\log\dist}}/  {\dist^{3-\alpha}}}$.
    \end{enumerate}
\end{theorem}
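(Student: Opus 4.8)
The plan is to bound the total hitting probability by the expected number of visits over the \emph{tail} of the time horizon, after first disposing of the early steps with part~(b). Write $W_t$ for the position of the walk at step $t$, and split according to whether the target is first reached before or after the characteristic time $t_\dist = \myTheta{\dist^{\alpha-1}}$:
\[
\pr{\tau_\alpha(\trespoint) < \infty}
\;\le\; \pr{\tau_\alpha(\trespoint) \le t_\dist}
\;+\; \sum_{t > t_\dist} \pr{W_t = \trespoint}.
\]
The first term is handled directly by part~(b): since $\alpha > 2$ we have $\dist \le t_\dist = \mybigo{\dist^{\alpha-1}}$, so $\pr{\tau_\alpha(\trespoint) \le t_\dist} = \mybigo{t_\dist^2/\dist^{\alpha+1}} = \mybigo{\dist^{\alpha-3}}$, already within the claimed bound (even without the logarithm). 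The crucial point is that the second term is a genuine union bound over all later steps, so it automatically upper-bounds the probability of \emph{first} reaching $\trespoint$ after time $t_\dist$; this is exactly what lets us avoid any delicate accounting of revisits to $\trespoint$.

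It remains to control the tail sum. The key estimate I would establish is a local (anti-concentration) bound of the form $\pr{W_t = \trespoint} = \mybigo{\polylog\dist \cdot t^{-2/(\alpha-1)}}$ for all $t \ge t_\dist$. Granting this and writing $p = 2/(\alpha-1)$, we have $p \in (1,2)$ for $\alpha \in (2,3)$, so the tail is summable and dominated by its first term:
\[
\sum_{t > t_\dist} \pr{W_t = \trespoint}
= \mybigo{\polylog\dist \cdot t_\dist^{\,1-p}}
= \mybigo{\polylog\dist \cdot \dist^{(\alpha-1)(1-p)}}
= \mybigo{\polylog\dist \cdot \dist^{\alpha-3}},
\]
using $(\alpha-1)(1-p) = (\alpha-1) - 2 = \alpha - 3$. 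Combined with the early term this yields $\pr{\tau_\alpha(\trespoint) < \infty} = \mybigo{\dist^{\alpha-3}\polylog\dist}$, matching the statement up to the precise power of the logarithm. Note that the geometric tail contributes only the constant factor $1/(p-1) = (\alpha-1)/(3-\alpha) = \mybigo 1$ (as $\alpha$ is a fixed constant in $(2,3)$), so the sole source of logarithmic loss is the local bound itself; sharpening that bound to a single $\log\dist$ factor recovers the stated form.

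The main obstacle is precisely the local bound on $\pr{W_t = \trespoint}$, which I would prove in two stages. First, since $\alpha > 2$ the jump lengths have finite mean $\mu$, so by a renewal/concentration argument the number $n$ of completed jumps within $t$ steps is tightly concentrated around $t/\mu$; the infinite variance of the jump length makes this step delicate and is best done by truncating the single largest jump or by passing through the Pareto couplings of \cref{sec:equivpwbw,sec:equivpwrw}. Second, I would anti-concentrate the position: the underlying \levyflight is a mean-zero lattice walk whose step length has tail exponent $\alpha - 1 \in (1,2)$, hence lies in the domain of attraction of an $(\alpha-1)$-stable law and spreads to scale $R(t) = t^{1/(\alpha-1)}$, giving $\sup_x \pr{F_n = x} = \mybigo{R(t)^{-2}} = \mybigo{t^{-2/(\alpha-1)}}$ via a local-limit/Fourier estimate (the $\sup$ bound applies uniformly, since $\trespoint$ lies inside the bulk for every $t \ge t_\dist$).

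The genuinely hard part is transferring this bound from the \levyflight to the \emph{walk}, i.e.\ accounting for the intermediate steps traversed inside a long jump in progress: at a uniformly chosen step the in-progress jump is length-biased, with tail index $\alpha - 2 \in (0,1)$, so $W_t$ can sit at $\trespoint$ even when no flight endpoint does. Bounding the probability that the in-progress jump both originates at flight-scale $R(t)$ and has a lattice path passing through the \emph{specific} node $\trespoint$ is where the polylogarithmic loss enters, and it is exactly what the spreading and containment estimates of \cref{sec:levywalk} — the walk staying within radius $R(t)\cdot\polylog\dist$ with high probability — are designed to supply.
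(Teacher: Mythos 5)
There is a genuine gap: the theorem has three parts, and your proposal addresses only part~(c). Part~(a) --- the lower bound $\pr{\tau_\alpha(\trespoint) = \mybigo{\dist^{\alpha-1}}} = \myOmega{1/(\dist^{3-\alpha}\log^2\dist)}$ --- is the main technical content of the theorem and is nowhere engaged; in the paper it requires the capped-\levyflight hitting-time analysis (\cref{lemma:hittinglevyflight}, via the decomposition of $\integer^2$ into the regions $\AA_1,\AA_2,\AA_3$, the monotonicity property of \cref{lemma:monotonicity}, and the visit-counting identity $\pr{Z>0} = \expect{Z}/\expect{Z \mid Z>0}$) together with the flight-to-walk coupling of \cref{lemma:couplevyflightintowalk}. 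Part~(b) is also not proved: you invoke it as an ingredient for~(c), but no argument for it is given (the paper proves it in \cref{lemma:lowbound_superdiffusive} by conditioning on all jump lengths being below $\dist$, applying Chebyshev to the coordinate projections, and bounding the per-jump-phase visit probability). A proof of the theorem cannot assume its own clause~(b).

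Even restricted to part~(c), your argument rests on a local anti-concentration bound $\pr{W_t = \trespoint} = \mybigo{\polylog\dist\cdot t^{-2/(\alpha-1)}}$ \emph{for the walk}, which you assert but do not establish; you yourself identify the transfer from the flight (where a stable local limit theorem is plausible) to the walk (where the in-progress jump is length-biased with tail index $\alpha-2\in(0,1)$) as ``the genuinely hard part,'' and that is precisely where the proof is missing. The summation arithmetic downstream of that bound is fine, and the overall decomposition (early window via~(b), tail via expected visits and Markov-type reasoning) is a legitimate alternative skeleton. For comparison, the paper's \cref{lemma:prob_never_find_target} avoids any local limit theorem: it partitions time by the successive first exit times from dyadic annuli of radii $2^i\dist$, shows each exit happens within $\mybigo{(2^i\dist)^{\alpha-1}\log(2^i\dist)}$ jumps with high probability, bounds the per-jump-phase probability of visiting $\trespoint$ by $\mybigo{\mu/\dist_i^2}$ using only the monotonicity property (\cref{lemma:prob_anyjumpfindstreasure}), sums the resulting expected visit counts over the annuli, and concludes by Markov's inequality. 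If you want to complete your route, you would need to actually prove the local bound for the walk, including the length-biased in-progress-jump contribution; otherwise the paper's annulus argument is the more economical path.
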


\begin{theorem}\samepage
    \label{thm:lw3infty-intro}
    Let  $\alpha$ be any real constant in $ [3,\infty)$  and $\trespoint$ any node in $\integer^2$ with  $\dist = \|\trespoint\|_1$.  Then:

    \begin{enumerate}[(a)~]
        \item
            $\pr{\tau_\alpha(\trespoint)  = \mybigo{\dist^2 {\log^2\dist}}} =
            \myOmega{ 1/{\log^4\dist}}$;

        \item
            $\pr{\tau_\alpha(\trespoint)  \leq t } = ~ \mybigo{t^2{\log\dist} / \dist^4}$, for any step $t$ with $ \dist \le t = \mybigo{\dist^{2}}$.
    \end{enumerate}
\end{theorem}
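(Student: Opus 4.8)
The plan is to control, in both parts, the expected number of \emph{walk-visits} to $\trespoint$ within the first $t$ steps, $\Zw{t}{\trespoint}=\sum_{s=1}^{t}\ind\!\left[\text{the walk is at }\trespoint\text{ at step }s\right]$, exploiting the comparison with a finite-variance random walk developed in \cref{sec:equivpwrw}, which is valid throughout the regime $\alpha\ge 3$. Part~(b) is then a direct application of Markov's inequality, $\pr{\tau_\alpha(\trespoint)\le t}=\pr{\Zw{t}{\trespoint}\ge 1}\le\mean{\Zw{t}{\trespoint}}$, whereas part~(a) follows from the second-moment inequality $\pr{\tau_\alpha(\trespoint)\le T}\ge\left(\mean{\Zw{T}{\trespoint}}\right)^2\big/\mean{\left(\Zw{T}{\trespoint}\right)^2}$ with $T=\myTheta{\dist^2\log^2\dist}$.

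The first step is to reduce $\mean{\Zw{t}{\trespoint}}$ to the flight occupation of $\trespoint$. A single jump issued from a node $y$ sweeps the $\trespoint$ cell with probability $\myTheta{\manhattan{y-\trespoint}^{-\alpha}}$: conditioned on a length $d\ge\manhattan{y-\trespoint}$, its approximating path hits $\trespoint$ with probability $\myTheta{1/\manhattan{y-\trespoint}}$ (about $\myTheta{d/\manhattan{y-\trespoint}}$ of the $\myTheta{d}$ equidistant destinations have their path pass through $\trespoint$), and integrating against the jump-length law $\Pr[d=i]=\myTheta{i^{-\alpha}}$ leaves $\myTheta{\manhattan{y-\trespoint}^{-\alpha}}$. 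Since $\sum_{y\in\integer^2}\manhattan{y-\trespoint}^{-\alpha}=\myTheta{1}$ for $\alpha>2$, writing $g_t(y)$ for the expected number of jumps started at $y$ within $t$ steps (there are $\myTheta{t}$ of them, the mean jump length being finite) gives $\mean{\Zw{t}{\trespoint}}=\sum_y g_t(y)\,\myTheta{\manhattan{y-\trespoint}^{-\alpha}}=\myTheta{\mean{\Zf{t}{\trespoint}}}+\myTheta{t\,\dist^{-\alpha}}$, where $\mean{\Zf{t}{\trespoint}}=\sum_{n}\pr{S_n=\trespoint}$ counts flight-endpoints at $\trespoint$ and $S_n$ denotes the $n$-th endpoint. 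Everything thus reduces to estimating $\pr{S_n=\trespoint}$.

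Two regimes govern $\pr{S_n=\trespoint}$. In the \emph{Gaussian bulk} $V_n\gtrsim\dist^2$, the local central limit theorem for the coupled random walk yields $\pr{S_n=\trespoint}=\myTheta{1/V_n}$, where $V_n=\myTheta{n}$ for $\alpha>3$ and $V_n=\myTheta{n\log n}$ for $\alpha=3$ (the jump-length variance diverges only logarithmically and is truncated at the scale reached after $n$ jumps). Summing over $n\le\myTheta{T}$ with $T=\myTheta{\dist^2\log^2\dist}$ gives $\mean{\Zf{T}{\trespoint}}=\myOmega{1/\polylog\dist}$; combined with $\mean{\left(\Zw{T}{\trespoint}\right)^2}\le\mean{\Zw{T}{\trespoint}}\cdot\mybigo{\polylog\dist}$ — the strong Markov property bounds the conditional number of returns to $\trespoint$ by the truncated Green's function $\sum_{m\le T}\myTheta{1/V_m}=\mybigo{\polylog\dist}$ — the second-moment inequality yields $\pr{\tau_\alpha(\trespoint)\le T}=\myOmega{1/\log^4\dist}$, which is part~(a). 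In the \emph{pre-diffusive range} $V_n\ll\dist^2$ relevant to part~(b) (where $\dist\le t=\mybigo{\dist^2}$) the Gaussian mass is exponentially small, and $\pr{S_n=\trespoint}$ is instead dictated by the single-big-jump principle: conditioning on the $\myTheta{n}$ choices of which jump is large and on the remaining endpoints summing to some $z$ with $\manhattan{z}\ll\dist$, that jump must equal $\trespoint-z$, an event of probability $\myTheta{\dist^{-\alpha-1}}$, so $\pr{S_n=\trespoint}=\myTheta{n\,\dist^{-\alpha-1}}$. Hence $\mean{\Zf{t}{\trespoint}}=\sum_{n\le\myTheta{t}}\myTheta{n\,\dist^{-\alpha-1}}=\myTheta{t^2\dist^{-\alpha-1}}$, and since $\alpha\ge 3$ and $t\ge\dist$ both this term and the $\myTheta{t\,\dist^{-\alpha}}$ contribution are $\mybigo{t^2\log\dist/\dist^4}$, giving part~(b).

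The principal obstacle is the threshold $\alpha=3$, where the jump-length variance is only logarithmically divergent, so that neither the finite-variance local CLT nor a clean heavy-tail estimate applies verbatim. I would resolve this by truncating jumps at length $\dist$ — the discarded tail carries mass $\mybigo{\dist^{-1}}$ per jump, hence $\mybigo{t/\dist}$ over the relevant horizon, which is negligible — and proving a local limit theorem for the truncated walk with effective variance $V_n=\myTheta{n\log n}$; the logarithmic corrections this introduces are precisely what produce the $\log^2\dist$ factor in the horizon of part~(a) and the surplus $\log\dist$ in part~(b). A secondary technical point is to make the single-big-jump estimate $\pr{S_n=\trespoint}=\myTheta{n\,\dist^{-\alpha-1}}$ uniform in $n$, which requires isolating the largest jump and showing that configurations with two or more jumps of length $\myOmega{\dist}$ contribute only a lower-order term.
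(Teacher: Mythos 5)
Your overall skeleton runs parallel to the paper's: both reduce the problem to the occupation measure of the associated \levyflight at $\trespoint$, convert expected visits into a hitting probability by dividing by the expected number of returns (your second-moment inequality is the same as the paper's identity $\pr{Z>0}=\expect{Z}/\expect{Z\mid Z>0}$ combined with $\expect{Z\mid Z>0}\le\expect{\Zf{\origin}{t}}+1=\mybigo{\log^2 t}$), and transfer from flight to walk by controlling the total jump length. The substantive divergence is in how the point probabilities $\pr{S_n=\trespoint}$ are estimated. You invoke a local CLT in the bulk and a uniform local single-big-jump estimate $\pr{S_n=\trespoint}=\myTheta{n\dist^{-\alpha-1}}$ in the pre-diffusive range; the paper deliberately avoids both. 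For the lower bound it partitions $\integer^2$ into three regions, shows the near-origin region absorbs at most $|\AA_1|\cdot\expect{\Zf{\origin}{t}}=\mybigo{\dist^2\log^2 t}$ of the $t$ flight positions and the far region absorbs $o(t)$ of them (Chernoff/Chebyshev), so the middle annulus absorbs $\myOmega{t}$, and then uses the monotonicity property (\cref{lemma:monotonicity}) to conclude $\expect{\Zf{\trespoint}{t}}\ge\myOmega{t}/|\AA_2|$. For the upper bound it combines Chebyshev's inequality on the truncated walk (probability $\mybigo{n\nu/\dist^{\alpha-1}}$ of having escaped $B_{\dist/2}(\origin)$ after $n$ jumps) with the same monotonicity property (conditional point mass $\mybigo{1/\dist^2}$ near $\trespoint$), which yields exactly the order $n\dist^{-\alpha-1}$ you want without any local limit theorem. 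Your route would work for constant $\alpha>3$ by citing standard LLTs, but at $\alpha=3$ the required local limit theorem in the non-classical normalization $\sqrt{n\log n}$, and the uniform local large-deviation estimate, are substantial results that you would have to prove essentially from scratch for this lattice model.

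There is also a concrete error in your treatment of the threshold case. You propose to truncate jumps at length $\dist$ and claim the discarded tail mass is $\mybigo{\dist^{-1}}$ per jump, hence negligible over the horizon. For $\alpha=3$ the tail mass is $\pr{d\ge\dist}=\myTheta{\dist^{-(\alpha-1)}}=\myTheta{\dist^{-2}}$ (and $\mybigo{\dist^{-1}}$ would make matters worse, not better); over the horizon $T=\myTheta{\dist^2\log^2\dist}$ of part~(a) the expected number of jumps exceeding $\dist$ is therefore $\myTheta{\log^2\dist}$, so the truncated and untruncated processes are \emph{not} close, and the event that no jump exceeds $\dist$ has probability $e^{-\myTheta{\log^2\dist}}$ --- far too small to condition on when the target bound is $\myOmega{1/\log^4\dist}$. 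This is precisely why the paper caps the jumps at $(t\log t)^{1/(\alpha-1)}$ (which is $\approx\sqrt{t\log t}\gg\dist$ at $\alpha=3$) rather than at $\dist$: that choice makes the capping event $\EE_t$ have probability $1-\mybigo{1/\log t}$, which can be conditioned on harmlessly. Your part~(b) is less affected (there $t/\dist^{\alpha-1}=\mybigo{1}$ and the single large jump is the contribution you want to keep, not discard), but for part~(a) the truncation scale must be corrected and the effective variance $V_n$ recomputed at the larger cap before the rest of your argument can go through. A secondary, fixable point: your second-moment bound applies the strong Markov property to the \levywalk, which is not a Markov chain; the paper performs all the return-count accounting at the flight level and only couples to the walk at the very end.
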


\begin{theorem}
    \label{thm:lw12-intro}
    Let  $\alpha$ be any real constant in $(1,2]$ and $\trespoint$ any node in $\integer^2$ with  $\dist = \|\trespoint\|_1$. Then:
    \begin{enumerate}[(a)~]
        \item
            $\pr{\tau_\alpha(\trespoint)  = \mybigo{\dist}} = \myOmega{ 1/\dist{\log\dist}}$;
\item
            $\pr{\tau_\alpha(\trespoint)  < \infty} = \mybigo{{\log^2\dist} /\dist }$.
    \end{enumerate}
\end{theorem}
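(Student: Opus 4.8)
The plan is to work throughout with a single geometric estimate for one run (jump), and to organize the analysis around the reduction of the \levywalk to the \paretowalk and ultimately to the \ballwalk, since in this regime the displacement is dominated by the single longest run, whose direction is uniform. The workhorse I would establish first is the \emph{single-run hitting estimate}: if a run starts at a node $x$ with $\manhattan{x-\trespoint}=r$, then the probability that its straight lattice trajectory passes through $\trespoint$ is $\myTheta{r^{-\alpha}}$. This factorizes as (probability the run has length at least $r$) $\times$ (probability its direction is aligned with $\trespoint$ within the angular tolerance $\myTheta{1/r}$ needed to pass through a single node at distance $r$), i.e.\ $\myTheta{r^{1-\alpha}}\cdot\myTheta{1/r}$. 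The directional count is purely combinatorial on $\integer^2$ (among the $4d$ nodes at distance $d\ge r$, a $\myTheta{1/r}$-fraction lie in the forward cone of $\trespoint$), so this step is routine modulo the discretization of the approximating path.

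For part~(b), the all-time upper bound, I would union-bound over runs. Writing $X_{i-1}$ for the position before the $i$-th run and $R_i=\manhattan{X_{i-1}-\trespoint}$, and letting $A_i$ be the event that run $i$ passes through $\trespoint$, the single-run estimate and the independence of run $i$ from $X_{i-1}$ give
\[
\pr{\tau_\alpha(\trespoint)<\infty}=\pr{\textstyle\bigcup_{i\ge1}A_i}\le\sum_{i\ge1}\pr{A_i}=\sum_{i\ge1}\expect{\myTheta{R_i^{-\alpha}}}.
\]
The endpoints $X_0,X_1,\dots$ form the heavy-tailed flight, whose displacement after $i$ runs concentrates at scale $\myTheta{i^{1/(\alpha-1)}}$. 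Splitting the sum at $i^\ast=\myTheta{\dist^{\alpha-1}}$: the early runs ($i\le i^\ast$) sit at distance $\myTheta{\dist}$ from $\trespoint$ and contribute $\myTheta{\dist^{-\alpha}}$ apiece, totalling $\myTheta{\dist^{\alpha-1}\cdot\dist^{-\alpha}}=\myTheta{1/\dist}$, while the late runs are already at distance $\gg\dist$, so $\expect{R_i^{-\alpha}}\approx i^{-\alpha/(\alpha-1)}$ and the tail sum is again $\myTheta{1/\dist}$ since $\alpha/(\alpha-1)>1$. The polylogarithmic slack in the statement absorbs the threshold $\alpha=2$ (where the flight picks up an extra logarithmic scaling correction) and the density estimates below.

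For part~(a), the $\myOmega{1/(\dist\log\dist)}$ lower bound within $\myTheta{\dist}$ steps, the first moment alone is not decisive: a single run from the origin hits only with probability $\myTheta{\dist^{-\alpha}}$, so the boost must come from the $\myTheta{\dist^{\alpha-1}}$ runs that fit inside the first $\myTheta{\dist}$ steps (conditioning on the walk staying in a ball of radius $\myTheta{\dist}$, which has constant probability, their expected lengths are $\myTheta{\dist^{2-\alpha}}$ and their common distance to $\trespoint$ stays $\myTheta{\dist}$). I would run a second-moment argument on $\sum_i\ind_{A_i}$: the first moment is $\myTheta{1/\dist}$, and for the pairwise terms $\sum_{i<j}\pr{A_i\cap A_j}$ I would use that once a run passes through $\trespoint$ the walk is carried to distance $\myTheta{\dist}$ away and, being transient in this regime, returns to within reach of $\trespoint$ only rarely, so $\sum_{j>i}\pr{A_j\mid A_i}=\mybigo{1/\dist}$ by the same tail computation as in~(b) and the pairwise sum is $\mybigo{1/\dist^2}$. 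Paley--Zygmund then yields $\pr{\bigcup_iA_i}=\myOmega{1/\dist}$, with the extra logarithm in the statement leaving comfortable room for the technical losses.

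The step I expect to be the main obstacle is the control of the flight's position distribution that both summations secretly require: bounding the mass that $X_{i-1}$ places \emph{near} $\trespoint$ (rather than in the bulk at scale $i^{1/(\alpha-1)}$), which otherwise threatens to make $\expect{R_i^{-\alpha}}$ blow up. This needs local-limit-type estimates for two-dimensional $(\alpha-1)$-stable laws (in particular a uniform density bound of order $i^{-2/(\alpha-1)}$) together with special handling of the borderline $\alpha=2$. This is precisely where passing from the \levywalk to the \paretowalk, and then to the \ballwalk, pays off, since identifying the single longest run as the dominant contribution in every regime makes both the lower bound of~(a) and the ``no significant gain from extra time'' content of~(b) transparent.
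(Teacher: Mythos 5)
Your outline is sound and would deliver both bounds, but it takes a genuinely different route from the paper's in each half. For the all-time upper bound (b), you union-bound $\sum_i \expect{R_i^{-\alpha}}$ over runs and propose to control the mass that the endpoint process places near $\trespoint$ via local-limit estimates for two-dimensional $(\alpha-1)$-stable laws; the paper instead proves an elementary \emph{monotonicity property} (\cref{lemma:monotonicity}) for the radially symmetric flight of jump endpoints -- $\pr{J_t=v}\le\pr{J_t=u}$ whenever $\supdist{v}\ge\manhattan{u}$ -- which immediately yields the uniform $\mybigo{1/\dist^2}$ bound on the probability that a run starts in $B_{\dist/4}(\trespoint)$ (\cref{lemma:prob_anyjumpfindstreasure}) with no stable-law input, and then organizes the summation over runs by dyadic distance scales $2^i\dist$ rather than by your split at $i^\ast=\myTheta{\dist^{\alpha-1}}$ (\cref{lemma:prob_never_find_target}). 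Your route buys sharper quantitative control of the endpoint density; the paper's buys self-containedness and uniformity in $\alpha$, sidestepping exactly the obstacle you single out as the hardest step. For the lower bound (a), you run Paley--Zygmund on $\sum_i\ind_{A_i}$; the paper avoids the second moment entirely by conditioning on the total path length of the first $\myTheta{\dist^{\alpha-1}/\mu_\alpha}$ runs being at most $3\dist/4$ (an event of constant probability, by Markov's inequality applied to the capped jump lengths), under which the walk can never have reached $\trespoint$ in an earlier phase, so the events ``first hit occurs in phase $i$'' are disjoint and a plain sum of the per-phase probabilities $\myOmega{1/\dist^\alpha}$ suffices (\cref{lemma:hittingparetoballistic}). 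One imprecision in your part (a): a run that crosses $\trespoint$ is \emph{not} necessarily carried to distance $\myTheta{\dist}$ beyond it -- conditional on crossing from distance $r=\myTheta{\dist}$, the overshoot is spread at scale $r$, so the run ends within distance $s$ of $\trespoint$ with probability about $s/r$. Your correlation sum survives regardless, since Paley--Zygmund only needs the expected number of subsequent crossings given $A_i$ to be $\mybigo{1}$ (transience of the regime gives this), but the justification should pass through the overshoot distribution rather than the claim as stated.
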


\subsubsection{Parallel \levyWalks with Common Exponent}

Consider $k\geq1$ independent identical \levywalks with exponent $\alpha\in (1,\infty)$, that start simultaneously at the origin.
Let $\tau_a^k(\trespoint)$ denote the parallel hitting time for node $\trespoint$, i.e., the first step when some walk visits $\trespoint$.
It is straightforward to derive upper and lower bounds on $\tau_a^k(\trespoint)$ from the corresponding bounds on the hitting time of a single \levywalk.
For example, the next statement is a direct corollary of \cref{thm:lw23-intro}\cref{thm:lw23-intro:a}.

\begin{corollary}
    Let  $\alpha$ be any real constant in  $(2,3)$  and $\trespoint$  any node in $\integer^2$ with    $\dist = \|\trespoint\|_1$. Then
    $$\pr{\tau^k_\alpha(\trespoint)  =
    \mybigo{\dist^{\alpha - 1}}}
    = 1- e^{-\myOmega{ {k}/{\dist^{3-\alpha}
    {\log^2 \dist}}}}.$$
\end{corollary}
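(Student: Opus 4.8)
The plan is to derive this bound directly from the single-walk estimate in \cref{thm:lw23-intro}\cref{thm:lw23-intro:a} via a standard boosting-by-independence argument. First I would make the asymptotic notation in the statement precise: the event $\{\tau_\alpha(\trespoint) = \mybigo{\dist^{\alpha-1}}\}$ stands for $\{\tau_\alpha(\trespoint) \le t\}$ with a fixed deterministic threshold $t = c\,\dist^{\alpha - 1}$, where $c$ is the constant implied by \cref{thm:lw23-intro}\cref{thm:lw23-intro:a}. The key point is that a single common value of $t$ is fixed for all $k$ walks, so that the per-walk lower bound
$$\pr{\tau_\alpha(\trespoint) \le t} \ge p, \qquad p = \myOmega{1/\left(\dist^{3-\alpha}\log^2\dist\right)},$$
applies identically to each of the $k$ walks.

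Next, writing $\tau_\alpha^{(i)}(\trespoint)$ for the hitting time of the $i$-th walk, we have $\tau^k_\alpha(\trespoint) = \min_{i\in[k]} \tau_\alpha^{(i)}(\trespoint)$, so the event that the parallel process fails to hit $\trespoint$ by step $t$ is exactly the intersection of the $k$ events that each individual walk fails to do so. Since the walks are independent, I would factorize and bound termwise:
$$\pr{\tau^k_\alpha(\trespoint) > t} = \prod_{i=1}^k \pr{\tau_\alpha^{(i)}(\trespoint) > t} \le (1-p)^k \le e^{-pk}.$$
Taking complements and substituting the value of $p$ yields
$$\pr{\tau^k_\alpha(\trespoint) \le t} \ge 1 - e^{-pk} = 1 - e^{-\myOmega{k/\left(\dist^{3-\alpha}\log^2\dist\right)}},$$
which is precisely the claimed bound.

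I do not expect a substantial obstacle here, since all the genuine technical content is already encapsulated in \cref{thm:lw23-intro}\cref{thm:lw23-intro:a}, and the remaining argument is the elementary inequality $1-x \le e^{-x}$ together with independence. The only point that requires a little care is the one flagged above, namely that the threshold $t$ must be chosen uniformly across the walks rather than letting each walk carry its own implicit constant; this is what legitimizes applying the single-walk lower bound to every factor of the product.
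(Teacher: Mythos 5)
Your proposal is correct and follows essentially the same route as the paper: the authors also obtain this corollary directly from the single-walk bound by independence, writing $\pr{\tau^k_\alpha(\trespoint) > t} = (1-p)^k \le e^{-pk}$ with $1-x\le e^{-x}$. Your remark about fixing a single threshold $t = c\,\dist^{\alpha-1}$ uniformly across the $k$ walks is the right point of care and is implicit in the paper's treatment as well.
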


From the bounds we obtain for $\tau^k_\alpha$,
it follows\footnote{In fact, we use more refined versions of \cref{thm:lw23-intro,thm:lw12-intro,thm:lw3infty-intro}, to obtain bounds on $\tau^k_\alpha$ which allow $\alpha$ to be a function of $\dist$ and $k$.}
that, for each pair of $k$ and $\dist = \|\trespoint\|_1$ with $\polylog\dist \leq k \leq \dist\polylog\dist$,
    there is a unique optimal exponent $\alpha = 3 - \frac{\log k}{\log\dist} + \mybigo{\frac{\log\log\dist}{\log\dist}}$, which minimizes $\tau^k_\alpha(\trespoint)$, w.h.p.
Moreover, increasing or decreasing this exponent by an arbitrarily small constant term, respectively increases the hitting time by a $\poly(\dist)$ factor, or the walks never hit $\trespoint$ with probability $1-o(1)$.
For the case of $k\leq \polylog\dist$  or $k\geq \dist \polylog\dist$, all exponents $\alpha\in [3,\infty)$ or $\alpha\in (1,2]$, respectively, achieve the same optimal value of $\tau^k_\alpha(\trespoint)$ (within $\polylog\dist$ factors).
Formal statements of these results are given in \cref{sec:levywalk,sec:equivpwbw,sec:equivpwrw}.
The theorem below bounds the parallel hitting time for (near) optimal choices of $\alpha$.

\begin{theorem}
    \label{thm:plw-intro}
      Let $\trespoint$ be any node in  $\integer^2$, and $\dist = \|\trespoint\|_1$.
    \begin{enumerate}[(a)]
        \item
            If ${\log^6  \dist} \leq k \leq \dist {\log^4\dist}$, then for $\alpha = 3 - \frac{\log k}{\log\dist} + {5}\frac{\log \log\dist}{\log\dist}$,
            $\pr{\tau^k_\alpha(\trespoint)  =
            \mybigo{\frac{\dist^2 {\log^6\dist}}{k}}}
            =
            1- e^{-\omega(\log\dist)}$;

        \item
            If $k = \omega({\log^5\dist})$, then
            $\pr{\tau^k_3(\trespoint)
            = \mybigo{\dist^2}}= 1- e^{-\omega(\log\dist)}$;

        \item
            If $k = \omega(\dist {\log^2\dist})$, then
            $\pr{\tau^k_2(\trespoint)  = \mybigo{\dist}} = 1- e^{-\omega(\log\dist)}$.
    \end{enumerate}
\end{theorem}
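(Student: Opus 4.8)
The plan is to reduce all three claims to the single-\levywalk hitting-time lower bounds already recorded (in their refined, $\alpha$-dependent form promised in the footnote) in \cref{thm:lw23-intro,thm:lw3infty-intro,thm:lw12-intro}, using only the independence of the $k$ walks. Concretely, if a single walk with exponent $\alpha$ hits $\trespoint$ within $t$ steps with probability at least $p$, then, since the $k$ walks are independent and identically distributed,
\[
\pr{\tau^k_\alpha(\trespoint) \le t} = 1 - \left(1 - \pr{\tau_\alpha(\trespoint)\le t}\right)^k \ge 1 - (1-p)^k \ge 1 - e^{-pk}.
\]
Thus it suffices, in each regime, to exhibit a step budget $t$ matching the claimed bound together with a single-walk hitting probability $p$ for which $pk = \omega(\log\dist)$, since then the failure probability is $e^{-\omega(\log\dist)}$, exactly as required.

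For part~(c) this is immediate: taking $\alpha = 2$ and $t = \mybigo{\dist}$, \cref{thm:lw12-intro}\,(a) gives $p = \myOmega{1/(\dist\log\dist)}$, so $pk = \myOmega{k/(\dist\log\dist)} = \omega(\log\dist)$ whenever $k = \omega(\dist\log^2\dist)$. For part~(b), taking $\alpha = 3$ and $t = \mybigo{\dist^2}$, the refined version of \cref{thm:lw3infty-intro}\,(a) yields $p = \myOmega{1/\log^4\dist}$, whence $pk = \myOmega{k/\log^4\dist} = \omega(\log\dist)$ as soon as $k = \omega(\log^5\dist)$.

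The crux is part~(a). Here I would first substitute the prescribed exponent $\alpha = 3 - \frac{\log k}{\log\dist} + 5\frac{\log\log\dist}{\log\dist}$ into the two quantities controlling the single-walk bound. Since $(3-\alpha)\log\dist = \log k - 5\log\log\dist$ and $(\alpha-1)\log\dist = 2\log\dist - \log k + 5\log\log\dist$, one gets
\[
\dist^{3-\alpha} = \frac{k}{\log^5\dist}, \qquad \dist^{\alpha-1} = \frac{\dist^2\log^5\dist}{k}.
\]
The hypothesis $\log^6\dist \le k \le \dist\log^4\dist$ is exactly what keeps $\alpha$ inside $(2,3)$ and bounded away from the endpoints (with margin $\Theta(\log\log\dist/\log\dist)$), so the refined version of \cref{thm:lw23-intro}\,(a) applies with $t = \mybigo{\dist^{\alpha-1}} = \mybigo{\dist^2\log^6\dist/k}$ and
\[
p = \myOmega{\frac{1}{\dist^{3-\alpha}\log^2\dist}} = \myOmega{\frac{\log^3\dist}{k}}.
\]
Hence $pk = \myOmega{\log^3\dist} = \omega(\log\dist)$, which by the displayed inequality gives the stated w.h.p.\ bound.

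I expect the main obstacle to lie not in this parallel-composition step, which is routine, but in justifying that \cref{thm:lw23-intro,thm:lw3infty-intro} may be invoked with $\alpha$ varying with $\dist$ and $k$ rather than held constant: the hidden constants in those single-walk statements must be shown to be uniform in $\alpha$ as it ranges over $(2,3)$ and approaches the boundary values $2$ and $3$, where the mean and the variance of the jump length diverge. Controlling this uniformity — which is precisely why the refined versions in \cref{sec:levywalk,sec:equivpwbw,sec:equivpwrw} are invoked, and where the extra logarithmic slack separating $\dist^{\alpha-1}$ from the claimed $\dist^2\log^6\dist/k$ is absorbed — is the delicate part; once it is in place, parts (a)--(c) all follow from the single uniform Chernoff-type bound above.
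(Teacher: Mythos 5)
Your proposal is correct and follows essentially the same route as the paper: the paper proves these claims (as Corollaries~\ref{cor:plw23}(a), \ref{cor:plw3}(a), and \ref{cor:lw12}(a)) by taking the refined single-walk lower bounds of Theorems~\ref{thm:lw23}(a), \ref{thm:lw3}(a), and \ref{thm:lw2}(a) -- whose explicit $\mu$ and $\gamma$ dependence supplies exactly the uniformity in $\alpha$ you flag as the delicate point -- and composing them via independence and $1-x\le e^{-x}$, with the extra $\log\dist$ of slack absorbing the factor $\mu\le\log\dist/\log\log\dist$ in the time bound and $\gamma = o(\log^4\dist)$ in the exponent, just as you describe.
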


Observe that for any given $k,\dist$ with $k = \omega(\log^5\dist)$, if we choose the exponent $\alpha$ as in \cref{thm:plw-intro}, then
\begin{equation}
    \label{eq:work-opt-alpha}
    \pr{
    \tau_\alpha^k(\trespoint)
    =
    \mybigo{(\dist^2/k)\log^6\dist + \dist}
    }
    =
    1- e^{-\omega(\log\dist)}
    .
\end{equation}

\subsubsection{Parallel \levyWalks with Random Exponents}

The right choice of $\alpha$, according to \cref{thm:plw-intro}, requires knowledge of the values of $k$ and $\dist$ (at least within polylogarithmic factors).
We propose a very simple randomized strategy for choosing the exponents of the $k$ \levywalks, which almost matches the parallel hitting time bounds of \cref{thm:plw-intro}, for all distances $\dist$ simultaneously!
The strategy does not require knowledge of $\dist$, and works as long as $k \geq \polylog\dist$.
Interestingly, it does not require knowledge of $k$ either.
The strategy is the following:
\begin{quote}
    \emph{The exponent of each walk is sampled independently and uniformly at random from the real interval $(2,3)$.}
\end{quote}
The next theorem bounds the resulting parallel hitting time $\tau^k_{\mathit rand}(\trespoint)$, for an arbitrary node $\trespoint$.
Its proof is given in \cref{sec:algorithm_analysis}.

\begin{theorem}
	\label{thm:uniform-algo-intro}
    Let $\trespoint$ by any node in  $\in\integer^2$, $\dist = \|\trespoint\|_1$, and $k  \geq \log^8 \dist$.  Then
    \begin{equation}
        \label{eq:work-rand}
        \pr{\tau^k_{\mathit rand}(\trespoint)
        =
        \mybigo{(\dist^2/k)\log^7\dist + \dist\log^3\dist}}
        =
        1-e^{-\omega(\log\dist)}
        .
    \end{equation}
\end{theorem}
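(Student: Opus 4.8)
The plan is to exploit the fact that, although the algorithm is oblivious to $k$ and $\dist$, for the analysis we may fix $k$ and $\dist$ and then single out a narrow band of exponents around the value that is (near-)optimal for this pair, as dictated by \cref{thm:plw-intro}. Concretely, I would split into two regimes according to which term of the target bound dominates: the \emph{super-diffusive regime} $\log^8\dist \le k \le \dist\log^4\dist$, where the first term $(\dist^2/k)\log^7\dist$ dominates, and the \emph{near-ballistic regime} $k \ge \dist\log^4\dist$, where the additive term $\dist\log^3\dist$ dominates. In the first regime I would set $\alpha_\star = 3 - \frac{\log k}{\log\dist} + 7\frac{\log\log\dist}{\log\dist}$ (so that $\dist^{\alpha_\star - 1} = (\dist^2/k)\log^7\dist$ and $\dist^{3-\alpha_\star}=k/\log^7\dist$), and in the second $\alpha_\star = 2 + 3\frac{\log\log\dist}{\log\dist}$ (so that $\dist^{\alpha_\star-1}=\dist\log^3\dist$); one checks in each case that $\alpha_\star\in(2,3)$ throughout the stated range of $k$, which is precisely what forces the regime split at $k\asymp\dist\log^4\dist$.

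In both regimes I would consider the band $W=[\alpha_\star,\alpha_\star+\delta]\subset(2,3)$ and let $K$ be the number of the $k$ walks whose (independent, uniform on $(2,3)$) exponent falls in $W$. Since the interval $(2,3)$ has length one, each walk lands in $W$ independently with probability $\delta$, so $\mean[K]=k\delta$, and a Chernoff bound gives $K\ge k\delta/2$ except with probability $e^{-\myOmega{k\delta}}$. The crucial choice is $\delta=1/\polylog\dist$ (namely $\delta=1/\log^2\dist$ in the super-diffusive regime and $\delta=1/\log\dist$ in the near-ballistic regime): this is small enough that $\dist^{\delta}=\mybigo{1}$, so a single time budget $t=\Theta(\dist^{\alpha_\star+\delta-1})=\Theta(\dist^{\alpha_\star-1})$ simultaneously serves every exponent in $W$, and yet large enough that $k\delta=\omega(\log\dist)$, making the Chernoff failure probability $e^{-\omega(\log\dist)}$ while leaving enough walks in the band.

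Conditioned on $K\ge k\delta/2$, I would bound the parallel failure within $t$ steps by the product of the per-walk failure probabilities, using the independence of the walks. For a walk with exponent $\alpha'\in[\alpha_\star,\alpha_\star+\delta]$, the refined version of \cref{thm:lw23-intro}\cref{thm:lw23-intro:a} (the one allowing $\alpha$ to depend on $\dist$) gives hitting probability $\myOmega{1/(\dist^{3-\alpha'}\log^2\dist)}$ within $\mybigo{\dist^{\alpha'-1}}$ steps; since hitting is monotone in the time horizon, $t\ge\mybigo{\dist^{\alpha'-1}}$ for every $\alpha'\le\alpha_\star+\delta$, and $\dist^{3-\alpha'}\le\dist^{3-\alpha_\star}$ for $\alpha'\ge\alpha_\star$, each such walk hits $\trespoint$ within $t$ steps with probability at least $p=\myOmega{1/(\dist^{3-\alpha_\star}\log^2\dist)}$. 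Thus the probability that none of the $K$ band-walks hits $\trespoint$ within $t$ steps is at most $(1-p)^{K}\le e^{-pK}$. Plugging in the regime-specific values gives $pK=\myOmega{\delta\log^5\dist}=\myOmega{\log^3\dist}$ in the super-diffusive regime and $pK=\myOmega{\log^4\dist}$ in the near-ballistic regime, both $\omega(\log\dist)$, while $t\le(\dist^2/k)\log^7\dist+\dist\log^3\dist$ in both. A union bound over the two failure events (too few walks in $W$; all band-walks missing) then yields the claimed $1-e^{-\omega(\log\dist)}$ bound.

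I expect the main obstacle to be the calibration of the band width $\delta$: it must be wide enough that $\Theta(k\delta)$ walks land in $W$ and the aggregated hitting probability crosses the $\omega(\log\dist)$ threshold, yet narrow enough that $\dist^{\delta}=\mybigo{1}$, so that one deterministic time budget covers the whole band without inflating it by a polynomial factor. The identity $\dist^{1/\polylog\dist}=\mybigo{1}$ is exactly what makes both constraints satisfiable at once, and making the polylogarithmic exponents (together with the additive $+7\frac{\log\log\dist}{\log\dist}$ shift in $\alpha_\star$) line up with the target $(\dist^2/k)\log^7\dist+\dist\log^3\dist$ is the delicate bookkeeping; a secondary point is checking that $\alpha_\star$ and $\alpha_\star+\delta$ stay strictly inside $(2,3)$ across the full range $k\ge\log^8\dist$.
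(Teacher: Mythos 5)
Your overall strategy is the same as the paper's: fix $k$ and $\dist$, isolate a band of exponents of width $1/\polylog\dist$ around the near-optimal value, use a Chernoff bound to guarantee $\Theta(k\delta)$ walks land in the band, and then apply the single-exponent parallel hitting bound to those walks (the paper does exactly this in \cref{sec:algorithm_analysis}, with band width $\epsilon=\log\log\dist/\log\dist$ and the auxiliary \cref{thm:phtub23}). However, there is a concrete quantitative gap: you quote the refined single-walk bound as giving hitting probability $\myOmega{1/(\dist^{3-\alpha'}\log^2\dist)}$ \emph{within $\mybigo{\dist^{\alpha'-1}}$ steps}, but the refined statement (\cref{thm:lw23}\cref{thm:lw23:a}, via \cref{lemma:couplevyflightintowalk}) only guarantees hitting within $\mybigo{\mu\,\dist^{\alpha-1}}$ steps, where $\mu=\min\{\log\dist,\tfrac{1}{\alpha-2}\}$. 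This factor is not an artifact of the proof: $t$ jumps of a \levywalk with exponent $\alpha$ take $\Theta(\mu t)$ grid steps, because the mean jump length diverges as $\alpha\downarrow 2$. With your choices, $\mu$ is $\Theta(\log\dist/\log\log\dist)$ throughout the near-ballistic regime (where $\alpha_\star-2=3\log\log\dist/\log\dist$) and at the upper end of the super-diffusive regime (where $k$ is close to $\dist\,\polylog\dist$), so the true time budget exceeds your claimed $(\dist^2/k)\log^7\dist+\dist\log^3\dist$ by a $\Theta(\log\dist/\log\log\dist)$ factor there. The paper avoids this by carrying the $1/(\alpha_1-2)$ factor explicitly through \cref{thm:phtub23} and choosing the band location so that the \emph{product} $\dist^{\alpha+\epsilon-1}/(\alpha-2)$ fits the budget; in your scheme the fix is to lower the additive shift (e.g.\ $\alpha_\star=2+2\tfrac{\log\log\dist}{\log\dist}$ in the near-ballistic regime), after which the rest of your computation goes through.

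A secondary, non-fatal slip: the probability in the refined bound is $\myOmega{1/(\gamma\,\dist^{3-\alpha})}$ with $\gamma=(\log\dist)^{2/(\alpha-1)}/(3-\alpha)^2$, not $\myOmega{1/(\dist^{3-\alpha}\log^2\dist)}$; the extra $(3-\alpha)^{-2}$ matters when $\alpha_\star$ is close to $3$, i.e.\ when $k$ is close to $\log^8\dist$. There your product $pK$ degrades from $\myOmega{\log^3\dist}$ to $\myOmega{\log\dist\,(\log\log\dist)^2}$, which is still $\omega(\log\dist)$, so the conclusion survives, but the exponent you report is not the one the lemmas actually give. You should also verify $3-\alpha_\star-\delta=\omega(1/\log\dist)$ so that \cref{thm:lw23}\cref{thm:lw23:a} applies across the whole band; with your parameters this holds, but it is a hypothesis of the lemma and needs to be checked explicitly.
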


By comparing \cref{eq:work-opt-alpha} and \cref{eq:work-rand}, we observe that indeed the hitting time of the randomized strategy is only by a $\polylog\dist$ factor worse than that of the deterministic strategy based on \cref{thm:plw-intro}, which knows $\dist$ and $k$.
Moreover, this hitting time is optimal within a $\polylog\dist$ factor among \emph{all} possible search strategies (deterministic or randomized) that do not know $\dist$ within a constant factor, since a universal lower bound of $\myOmega{\dist^2/k + \dist}$ with constant probability applies to all such strategies, as observed in \cite{feinerman_ants_2017}.

\subsubsection{Implications on L\'evy Hypothesis and Distributed Search}

As already mentioned, our setting of $k$ independent walks starting from the same location, aiming to hit an unknown target, can be viewed as a basic model of animals' foraging behavior around a central location, such as a nest, a food storage area, or a sheltered environment.
The assumption that walks are independent is approximately true for certain animal species such as ants Cataglyphis, which lack pheromone-based marking mechanisms~\cite{razin2013desert}.
Our results suggest that if the typical or maximum distance $\dist$ of the food (target) from the nest (source) is fixed, then a group of animals executing parallel \levywalks with the same exponent can optimize search efficiency by tuning the exponent value and/or the number $k$ of animals participating in the foraging.
In that setting, no universally optimal exponent value exists, as the optimal exponent depends on $k$ and $\dist$.
An alternative, novel approach suggested by our last result is that each animal performs a \levywalk with a randomly chosen exponent.
This strategy, which surprisingly achieves near optimal search efficiency for all distance scales, implies that different members of the same group follow different search patterns.
The existence of such variation in the search patterns among individuals of the same species requires empirical validation.

In the context of the related ANTS problem~\cite{feinerman_ants_2017}, our result on parallel \levywalks with randomly selected exponents directly implies a uniform solution to the problem (i.e., independent of $k$ and $\dist$), which is extremely simple and natural, and is optimal within $\polylog\dist$ factors, w.h.p.

\section{Related Work}

\levywalks (also referred to as \levyflights) have been studied mostly by physicists, and mainly in continuous spaces~\cite{Zaburdaev2015lw,reynolds_current_2018}.
The idea that biological organisms could perform \levywalks was first suggested in the mid 80s~\cite{shlesinger_levy_1986}, as a potentially more efficient search strategy compared to Brownian motion.
\levywalks attracted significant attention after experimental work in the mid 90s showed that albatross birds follow \levywalk-like trajectories~\cite{viswanathan_levy_1996}, a pattern that was subsequently observed for various other organisms as well~\cite{reynolds2017,Reynolds3763,humphries2010,sims2008,focardi2009,boyer2006,raichlen2014}.
Even though statistical and methodological flaws were later pointed out in several of these works~\cite{edwards_revisiting_2007}, there is currently ample evidence that many animals do exhibit \levywalk movements~\cite{viswanathan_physics_2011,Humphries7169}.

A possible explanation for this phenomenon is the \emph{\levy foraging hypothesis}~\cite{viswanathan_optimizing_1999,viswanathan_levy_2008}:
``According to the optimal foraging theory~\cite{Werner1974}, natural selection drives species to adopt the most economically advantageous foraging pattern.
Thus species must have adapted to follow \levywalks because \levywalks optimize search efficiency.''
The main theoretical argument in support of this hypothesis was provided in~\cite{viswanathan_optimizing_1999}, stating that a \levywalk with exponent $\alpha = 2$ (known as Cauchy walk) maximizes the number of visits to targets, when targets are sparse and uniformly distributed.
This result has been formally shown for one-dimensional spaces~\cite{Buldyrev2001}, but is not true for higher-dimensional spaces~\cite{levernier2020}, at least not without additional assumptions~\cite{comment2021,reply2021}.

Very recently, a new argument was provided in~\cite{guinard2020} supporting the optimality of \levywalks with $\alpha=2$.
In the considered setting, the space is a square torus of area $n$, and the walk must find a  single, randomly selected target.
Two critical model assumptions are that the target may have an arbitrary diameter $D$, and that the \levywalk is ``intermittent,'' i.e., cannot  detect the target during a jump, only at the end of the jump.
Under these assumptions, the Cauchy walk was shown to achieve a (near) optimal search time of $\tilde O(n/D)$, whereas exponents $\alpha\neq2$ are suboptimal.\footnote{Note that if the target has a fixed size $D = 1$ or the walk is not intermittent, then all exponents $\alpha\geq 2$ or $\alpha \leq 2$, respectively, are optimal as well.}

Our results add a new perspective to the \levy foraging hypothesis.
Unlike~\cite{viswanathan_optimizing_1999} and~\cite{guinard2020},
we consider a \emph{collective} search setting, where $k$ individuals start from the same source and move independently.
The space is two-dimensional as in \cite{guinard2020} (but discrete and unbounded), and there is a single target (of unit size).
If rough information about the target's distance $\dist$ to the source is known then letting all individuals execute identical \levywalks with a specific exponent, which depends on $k$ and $\dist$, achieves (near) optimal search time.
If no information on $\dist$ is available, then using a random exponent for each walk, sampled independently from the super-diffusive range $(2,3)$, still achieves near optimal search time, for all distances $\dist$.

In our analysis, we derive upper and lower bounds on the hitting time of a \levywalk on $\integer^2$.
Bounds on the hitting time and related quantities for \levywalks on the (one-dimensional) real line are given in \cite{palyulin_first_2019}.
Bounds for general random walks on $\integer^d$, for $d\geq 1$, in the case where the walk has bounded second (or higher) moments can be found in~\cite{uchiyama2011}.
Recall that \levywalks have unbounded second moment when $\alpha \leq 3$.

In the Ants-Nearby-Treasure-Search (ANTS) problem~\cite{feinerman_ants_2017}, $k$ identical (probabilistic) agents starting from the same location, search for an unknown target on $\integer^2$.
Agents do not know $k$, and cannot communicate (or see each other).
However, before the search begins, each agent receives a $b$-bit \emph{advice} from an oracle.
In~\cite{feinerman_ants_2017}, matching upper and lower bounds are shown for the trade-off between the expected time until the target is found, and the size $b$ of the advice.
The proposed  optimal algorithms   repeatedly execute the following steps: walk to a random location in a ball of a certain radius (chosen according to the algorithm specifics), perform a spiral movement of the same  radius as the ball's, then return to the origin.
Our results suggest a very simple algorithm for the setting where no advice is given ($b=0$): Each agent performs a \levywalk with a uniformly random exponent sampled from $(2,3)$.
The algorithm is Monte Carlo, and finds the target w.h.p.\ in time that is larger than the optimal by at most a polylogarithmic factor.

Variants of the ANTS problem have been studied, where agents are (synchronous or asynchronous) finite state machines, which can communicate during the execution whenever they meet~\cite{EmekLUW14,ELSUW15,CELU17,LenzenLNR17}.
Another variant, involving parallel search on the line by $k$ non-communicating agents, is considered in \cite{fraigniaud_parallel_2016}.

In~\cite{boczkowski_random_2018,GuinardK20}, tight bounds were shown for the cover time on the cycle of a random walk with $k$ different jump lengths.
The optimal walk in this case is one that approximates (using $k$ levels) a \levywalk with exponent $\alpha = 2$.

When $\alpha \in (3,\infty)$, a \levywalk on $\integer^d$ behaves similarly to a simple random walk, as the variance of the jump length is bounded.
In particular, as $\alpha \to \infty$, the \levywalk jump converges in distribution to that of a simple random walk.
Parallel independent simple random walks have been studied extensively on finite graphs, under various assumptions for their starting positions~\cite{alon_many_2011,efremenko_how_2009,ElsasserSauwervald2011,IvaskovicKPS17,kanade_coalescence_2019}.
A main objective of that line of work has been to quantify the ``speedup'' achieved by $k$ parallel walks on the cover time, hitting times, and other related quantities, compared to a single walk.

The following basic network model has been proposed by Kleinberg to study the small world phenomenon~\cite{kleinberg_small_world_2000}.
A square (or, more generally, $d$-dimensional) finite lattice is augmented by adding one ``long-range'' edge from each node $u$, to a uniformly random node $v$ among all nodes at lattice distance $k$, where distance $k$ is chosen independently for each $u$, from a power-law distribution with exponent $\alpha$.
That is, the distribution of long-range edges is the same as the jump distribution of a \levywalk with the same exponent.
It was shown that (distributed) greedy routing is optimized when $\alpha = 1$,
whereas for $\alpha \neq 1$ the expected routing time is slowed down by polynomial factors~\cite{kleinberg_small_world_2000}.\footnote{In the paper, the exponent considered is that of choosing the endpoint of $u$'s long-range link to be a given node $v$ at distance $k$, which is proportional to $1/k^\beta$, where $\beta = \alpha + d-1$. Thus the optimal exponent is $\beta = 2$ for the square lattice, and $\beta = d$ for the $d$-dimensional lattice.}
This result is of similar nature as our result for the hitting time of $k$ identical \levywalks, where exactly one exponent is optimal.
However, in our case, this exponent depends on the target distance.
In Kleinberg's network, exponent $\alpha=1$ ensures that the lengths of long-range links are uniformly distributed over all \emph{distance scales}, which facilitates fast routing.
In our randomized strategy, availability of a sufficient number of walks with the right exponent is achieved by choosing the exponents uniformly at random over the interval $(2,3)$.

\subsection*{Roadmap}

The Supplementary Information is organized as follows. In \cref{sec:preliminaries}, we give some basic definitions and facts, and investigate a monotonicity property that plays a key role in our analysis. In \cref{sec:levywalk}, we provide the analysis of 
the regime $\alpha\in(2,3]$
while in \cref{sec:equivpwbw,sec:equivpwrw}, we investigate 
the regimes $\alpha\in(1,2]$ and $\alpha\in(3,\infty)$,
respectively.
Finally, in \cref{sec:algorithm_analysis}, 
we use results from \cref{sec:levywalk} to analyze the efficiency of our simple distributed search algorithm.

\section{Preliminaries}
    \label{sec:preliminaries}
    
\subsection{Technical notation}

Throughout the analysis, we make use of the conventional Bachmann–Landau notation for asymptotic behaviors of function, which we now recall. 
Let $f:\real  \rightarrow \real$ and $g:\real \rightarrow \real$ be any two functions.
We write $f(x) = \mybigo{g(x)}$ if a constant $M>0$ and a value $x_0 \in \real$ exist such that $\abs{f(x)} \le M g(x)$ for any $x > x_0$. Similarly, we write $f(x) = \myOmega{g(x)}$ if a constant $ m >0 $ and a value $ x_0 \in \real $ exist such that $ \abs{f(x)} \ge m g(x) $ for any $ x > x_0 $; finally, we write $ f(x) = \myTheta{g(x)} $ if  two constants $ 0 < m < M $ and a value $ x_0 \in \real $ exist such that $ m g(x) \le \abs{f(x)} \le Mg(x) $ for all $ x > x_0 $.
Moreover, we write $ f(x) = \mylittleo{g(x)} $ if $ \lim_{x \to \infty} \frac{f(x)}{g(x)} = 0 $, and $ f(x) = \mylittleomega{g(x)} $ if $ \lim_{x \to \infty} \frac{f(x)}{g(x)} = \infty$.
We complete this subsection by mentioning the $ \polylog $ function. By writing $ f(x) = \polylog x $, we mean that a constant $ m > 0 $ exists such that $ f(x) = \myTheta{\log^m x} $.

\subsection{Main definitions and notation}
For each point $s = (x,y)\in \real^2$, we write $\|s\|_p$ to denote its $p$-norm $(|x|^p + |y|^p)^{1/p}$.
The $p$-norm distance between points $s=(x,y)$ and $s' = (x',y')$ is $\|s-s'\|_p =(|x - x'|^p + |y-y'|^p)^{1/p}$.

We consider the infinite grid graph  $G = (\integer^2,E)$, where $E = \{\{u,v\} \colon \|u-v\|_1=1\}$.
The shortest-path distance between two nodes $u,v\in \integer^2$ in $G$ equals $\|u-v\|_1$.
In the following, we will say just \emph{distance} to refer to the shortest-path distance.

We denote by $R_d(u)$ the set of all nodes $v\in\integer^2$ that are at distance exactly $d$ from $u$, i.e.,
$R_d(u)
=
\{v\in\integer^2\colon \|u-v\|_1 = d \}
$.
We also define
$B_d(u) =
\{v\in\integer^2\colon \|u-v\|_1 \leq d \}$
and
$Q_d(u)
=
\{v\in\integer^2\colon \|u-v\|_\infty \leq d \}$.
See \cref{fig:thetworhombus} for an illustration.

\begin{figure}[t]
    \centering
    \includegraphics[scale=0.75]{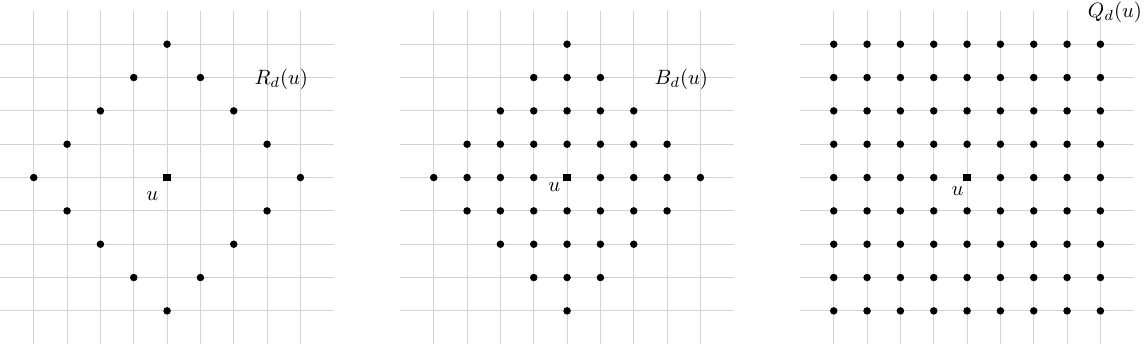}
     \caption{Illustrations of $R_d(u)$, $B_d(u)$, and $Q_d(u)$, for $d=4$.}
    \label{fig:thetworhombus}
\end{figure}

By $\overline{uv}$ we denote the straight-line segment on the real plane $\real^2$ between nodes $u$ and $v$.
A \emph{direct-path} between $u$ and $v$ in $G$ is a shortest path that ``closely follows'' the real segment $\overline{uv}$.
See \cref{fig:approximatingpath} for an illustration.

\begin{figure}[t]
	\centering
	\includegraphics[scale=0.75]{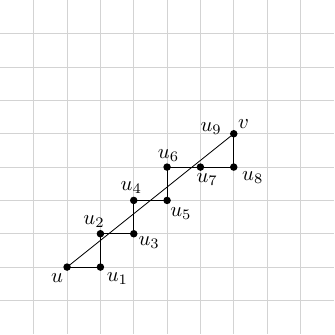}
	\caption{Example of a line segment $\overline{uv}$ and the direct-path between $u$ and $v$.}
	\label{fig:approximatingpath}
\end{figure}

\begin{definition}(Direct-path)
    \label{def:straight-path}
    A \emph{direct-path} from node $u$ to $v$ is a shortest path $u,u_1,\ldots,u_k=v$, where $k = \|u-v\|_1$, and for each $1\leq i < k$, $u_i \in R_{i}(u)$ and $\|u_i - w_i\|_2 = \min_{v'\in R_{i}(u)} \|v' - w_i\|_2$, where $w_i$ is the (unique) point $w$ in the real segment $\overline{uv}$ with $\|u-w\|_1 = i$.
\end{definition}

It is not hard to verify that $u,u_1,\ldots,u_k$ is indeed a path of $G$.
Also, unlike point $w_i$, node $u_i$ is not necessarily unique, since there may be two different nodes in $R_{i}(u)$ that are closest to $w_i$.
The next simple fact about direct-paths is proved in \cref{app:preliminaries}.

\begin{lemma}
    \label{lemma:direct-path}
    Let $u\in \integer^2$ and $d\geq 1$ be an integer.
    Suppose we sample a node $v$ uniformly at random from set $R_d(u)$, and then sample a direct-path $u,u_1,\ldots,u_d=v$ from $u$ to $v$ uniformly at random among all such paths.
    Then, for every $1\leq i < d$ and $w\in R_i(u)$, \[
        \frac{(i/d)\cdot\lfloor d/i\rfloor}{4i}
        \leq
        \pr{u_i = w}
        \le
        \frac{(i/d)\cdot\lceil d/i\rceil}{4i}
        .
    \]
\end{lemma}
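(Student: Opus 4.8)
The plan is to fix $u$, use translation invariance to assume $u = \mathbf 0$, and compute the marginal law of $u_i$ directly. Recall that $|R_d(u)| = 4d$ and $|R_i(u)| = 4i$, so $v$ is uniform over $4d$ nodes. For a fixed $v\in R_d(u)$ the point $w_i = (i/d)\,v$ on the segment $\overline{uv}$ satisfies $\|w_i\|_1 = i$, i.e.\ it lies on the boundary of the $\ell_1$-ball of radius $i$, on the same side of that diamond as $v$. A first step is the elementary geometric observation that, along a fixed side of the diamond, the $\ell_2$-distance to $w_i$ is an affine function of a single coordinate; hence the $\ell_2$-closest node(s) of $R_i(u)$ to $w_i$ are obtained by rounding that coordinate of $w_i$ to the nearest integer, and there are exactly two closest nodes precisely when the coordinate is a half-integer (otherwise exactly one). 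Write $S_i(v)\subseteq R_i(u)$ for this set of closest nodes, so $|S_i(v)|\in\{1,2\}$.

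Next I would establish that, conditional on $v$, the node $u_i$ is uniform over $S_i(v)$, i.e.\ $\pr{u_i = w \mid v} = \ind\{w\in S_i(v)\}/|S_i(v)|$. This reduces to showing that the number of direct-paths to $v$ factorizes over the steps and that every combination of per-step closest-node choices yields a valid (adjacent) path. The key point is that for fixed $v$ the rounded coordinate advances by $v_x/d \le 1$ per step, and by strictly less than $1$ whenever $v$ is interior to a side; consequently two half-integer (tie) values cannot occur at consecutive steps, and at a tie step both choices are adjacent to the uniquely determined neighboring nodes. Hence all $\prod_j |S_j(v)|$ choice-combinations are genuine, distinct direct-paths, and the claimed conditional uniformity follows.

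Combining the two steps gives
\[
    \pr{u_i = w} \;=\; \frac{1}{4d}\sum_{v\in R_d(u)} \frac{\ind\{w\in S_i(v)\}}{|S_i(v)|} \;=\; \frac{W(w)}{4d},
\]
where $W(w)$ is the total tie-weighted mass that $w$ receives. It then suffices to prove $\lfloor d/i\rfloor \le W(w) \le \lceil d/i\rceil$, since $\tfrac{(i/d)\lfloor d/i\rfloor}{4i} = \tfrac{\lfloor d/i\rfloor}{4d}$ and likewise for the ceiling. By the dihedral symmetry of the construction I would assume $w$ lies on the first-quadrant side of $R_i(u)$. If $w = (w_x, i - w_x)$ is interior to that side, the contributing $v$ are exactly those on the same side of $R_d(u)$ with $(i/d)v_x \in [w_x - \tfrac12,\, w_x + \tfrac12]$; in the variable $v_x$ this is an interval of length $d/i$, with interior integers contributing $1$ and endpoint integers (ties) contributing $\tfrac12$. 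A short case analysis on whether $0$, $1$, or $2$ endpoints are integers shows that this tie-weighted count always lies in $[\lfloor d/i\rfloor,\, \lceil d/i\rceil]$.

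The remaining and most delicate case is a corner node, say $w = (i,0)$, which is shared by two sides and additionally receives the axis node $v = (d,0)$ exactly. Here I would add the contributions of the two incident sides (each an interval of length $d/(2i)$ in the appropriate coordinate) together with the unit mass from $v=(d,0)$; a computation gives $W(w) = 2\lceil (d/i)/2\rceil - 1$ generically and $W(w) = d/i$ when $d/i$ is an even integer, and in every case one verifies $\lfloor d/i\rfloor \le W(w) \le \lceil d/i\rceil$. I expect this corner bookkeeping, and more generally the tie-weight accounting, to be the main obstacle: the gap between $\lfloor d/i\rfloor$ and $\lceil d/i\rceil$ is exactly what absorbs the half-integer ties, so the $\tfrac12$-weights must be tracked carefully rather than bounded crudely (a naive count of $\{v : w\in S_i(v)\}$ can reach $\lfloor d/i\rfloor+1$ and thus exceed $\lceil d/i\rceil$). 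Putting the interior and corner bounds together yields $\pr{u_i = w} = W(w)/(4d)$ inside $[\lfloor d/i\rfloor/(4d),\, \lceil d/i\rceil/(4d)]$, which is the claim.
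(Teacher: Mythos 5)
Your proposal is correct and follows essentially the same route as the paper's proof: both project $R_d(u)$ onto the radius-$i$ diamond and, for each $w\in R_i(u)$, count the nearest projected points with ties weighted by $1/2$, yielding a weighted count in $[\lfloor d/i\rfloor,\lceil d/i\rceil]$ out of $4d$. Your justification that $u_i$ is conditionally uniform over the set of closest nodes (via the factorization of direct-path counts and the no-consecutive-ties observation) makes explicit a step the paper merely asserts, and your separate corner-node bookkeeping is consistent with the paper's counting.
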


A (discrete-time) \emph{jump process} on $\integer^2$ is just an infinite sequence of random variables $(J_t)_{t\geq 0}$, where $J_t \in \integer^2$ for each integer $t \geq 0$.
We say that the process \emph{visits} node $v\in \integer^2$ at step $t\geq 0$ if $J_t = u$.
Our analysis will focus on the following two jump processes.

\begin{definition}[\levyflight]
    \label{def:levy-filght}
    A jump process $L^f = (\levyflightrand{t})_{t\ge0}$ on $\integer^2$ is a \levyflight with exponent parameter $\alpha\in(1,\infty)$, and start node $s\in\integer^2$, if $\levyflightrand{0} = s$, and for each $t\geq 0$, if $\levyflightrand{t} = u$ then $\levyflightrand{t+1} = v$, where node $v\in\integer^2$ is selected as follows:
First a jump distance $d$ is chosen independently at random such that
    \begin{equation}
        \label{eq:power-law-distr}
        \pr{d = 0} = 1/2, \text{ and } \pr{d = i} = c_{\alpha}/i^{\alpha} \text{ for } i\geq 1,
    \end{equation}
    where $c_{\alpha} $ is a normalizing factor.
    Then, node $v$ is chosen independently and uniformly at random among all nodes in $R_{d}(u)$ (i.e., all nodes at distance $d$ from $u$).
\end{definition}

\begin{definition}[\levywalk]
    \label{def:levy-walk}
    A jump process $L^w = (\levyrand{t})_{t\ge0}$ on $\integer^2$ is a \levywalk with exponent parameter $\alpha\in(1,\infty)$, and start node $s\in\integer^2$, if $\levyrand{0} = s$, and the process consists of a infinite sequence of \emph{jump-phases}, where each jump-phase is defined as follows:
    Suppose that the jump-phase begins at step $t+1$ (the first jump-phase begins at the first step), and suppose also that $\levyrand{t} = u$.
    First a distance $d$ and a node $v$ at distance $d$ from $u$ are chosen, in exactly the same way as in the \levyflight.
    If $d = 0$ then the jump-phase has length 1, and $\levyrand{t+1} = u$, i.e., the process stays put for one step.
    If $d \neq 0$ then the jump-phase has length $d$, and in the next $d$ steps the process follows a path $u,u_1,\ldots,u_d=v$ chosen uniformly at random among all direct-paths from $u$ to $v$, i.e., $\levyrand{t+i} = u_i$, for all $1\leq i\leq d$.\footnote{Our analysis works also if an arbitrary direct-path between $u$ and $v$ is selected, instead of a random one.}
\end{definition}

We observe that a \levyflight is a Markov chain, while a \levywalk is not.

\begin{remark}\label{remark:alpha>1+eps}
	Throughout the analysis, we assume $ \alpha $ to be a (not necessarily constant) real value greater than $ 1+\epsilon $ for some arbitrarily small constant $ \epsilon>0 $.
\end{remark}
We will often use the following  bound on the  tail distribution of the jump length $d$ chosen according to \eqref{eq:power-law-distr}:
\begin{equation}
    \label{eq:jumpatleast}
    \pr{d \geq i}
    =
    \myTheta{{1}/{i^{\alpha - 1}}}
\end{equation}
The next statement follows immediately from \cref{lemma:direct-path}.

\begin{corollary}
    \label{cor:visit-direct-path}
    Let $u,v\in \integer^2$, and $d = \|u-v\|_1 > 0$.
    If a \levywalk is at node $u$ at the beginning of a jump-phase, then the probability it visits $v$  during the jump-phase is $\myTheta{1/ d^\alpha}$.
\end{corollary}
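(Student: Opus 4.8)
The plan is to combine \cref{cor:visit-direct-path}'s two ingredients: the probability that the jump-phase selects distance exactly $d$, and the conditional probability that, given distance $d$ is chosen, the direct-path actually passes through the target $v$. Since $v$ must be visited at the very last step of a jump-phase of length exactly $d$ (being at distance $d$ from $u$), only one value of the jump distance can possibly land the walk on $v$, so I would write the visit probability as the product of these two factors and estimate each separately.

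First I would bound the probability that the jump-phase has length exactly $d$. By \eqref{eq:power-law-distr}, this is $\pr{d_{\text{jump}} = d} = c_\alpha / d^\alpha = \myTheta{1/d^\alpha}$, using that the normalizing constant $c_\alpha$ is a positive constant (bounded away from $0$ and $\infty$ for $\alpha > 1+\epsilon$, by \cref{remark:alpha>1+eps}). Second, conditioned on distance $d$ being chosen, the endpoint $v' \in R_d(u)$ is selected uniformly at random among the $|R_d(u)| = \myTheta{d}$ nodes at distance $d$; hence the conditional probability that the endpoint equals our specific $v$ is $1/|R_d(u)| = \myTheta{1/d}$. Since $v$ lies at distance exactly $d$, it can only be visited as the terminal node $u_d = v'$ of the direct-path, and this happens precisely when $v' = v$ (the random choice of which direct-path to follow does not affect the endpoint). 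Multiplying the two factors gives a visit probability of $\myTheta{1/d^\alpha} \cdot \myTheta{1/d} = \myTheta{1/d^{\alpha+1}}$.

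At this point I notice a discrepancy: the stated bound is $\myTheta{1/d^\alpha}$, not $\myTheta{1/d^{\alpha+1}}$. The resolution is that $v$ need not be visited only as an endpoint — if the walk chooses a distance $d' > d$ and a direction such that the direct-path to the farther endpoint passes through $v$ en route, then $v$ is also visited during that jump-phase. So the correct accounting is a sum over all $d' \geq d$ of (the probability of choosing distance $d'$) times (the probability that a uniformly random endpoint in $R_{d'}(u)$, together with a uniformly random direct-path, routes through $v$ at its $d$-th step). For the intermediate-visit term with $d' > d$, I would invoke \cref{lemma:direct-path} with $i = d$: the probability that the $d$-th node of the path equals $v$ is $\myTheta{(d/d')\cdot\lfloor d'/d\rfloor /d} = \myTheta{1/d'}$ up to the floor/ceiling, and the probability of choosing distance $d'$ is $\myTheta{1/d'^\alpha}$. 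Summing $\myTheta{1/d'^{\alpha+1}}$ over $d' \geq d$ gives $\myTheta{1/d^\alpha}$ by comparison with $\int_d^\infty x^{-(\alpha+1)}\,\diff x = \myTheta{d^{-\alpha}}$, which dominates the endpoint contribution $\myTheta{1/d^{\alpha+1}}$ and yields the claimed bound.

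The main obstacle will be handling the floor/ceiling terms in \cref{lemma:direct-path} cleanly across the sum, since for $d' \gg d$ the factor $\lfloor d'/d\rfloor / d' \approx 1/d$ behaves smoothly, but for $d'$ comparable to $d$ the quantization matters; I would split the sum into the range $d \le d' \le 2d$ (handled by the crude bounds $\myTheta{1/d'} = \myTheta{1/d}$ on the per-endpoint probability) and $d' > 2d$ (where $\lfloor d'/d\rfloor = \myTheta{d'/d}$ makes the factor $\myTheta{1/d'}$), and verify that both ranges contribute $\myTheta{1/d^\alpha}$. The lower bound is immediate from just the $d' = d$ term, so the real work is only the matching upper bound via the tail sum.
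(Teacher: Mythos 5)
Your overall route---summing over all possible jump lengths $d' \ge d$ the probability of choosing $d'$ times the probability that the chosen direct-path passes through $v$ at its $d$-th step---is exactly the intended argument; the paper treats \cref{cor:visit-direct-path} as an immediate consequence of \cref{lemma:direct-path} and does not spell it out further. Your first two paragraphs, which count only endpoint visits and land on $\myTheta{1/d^{\alpha+1}}$, are a false start that you correctly diagnose and repair in the third paragraph.

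Two quantitative slips remain. First, you misread \cref{lemma:direct-path}: with target radius $i=d$ and jump length $d'$, the bounds $\frac{(d/d')\lfloor d'/d\rfloor}{4d}$ and $\frac{(d/d')\lceil d'/d\rceil}{4d}$ both simplify to $\myTheta{1/d}$, since the prefactor $(d/d')\lfloor d'/d\rfloor$ lies in $[1/2,1]$ for every $d'\ge d$; in other words, the path's crossing point of $R_d(u)$ is essentially uniform over its $4d$ nodes, so the per-term probability is $\myTheta{1/d}$, not $\myTheta{1/d'}$. Your final answer survives only because $\sum_{d'\ge d} d'^{-\alpha}\cdot d^{-1}$ and $\sum_{d'\ge d}d'^{-\alpha-1}$ happen to both be $\myTheta{d^{-\alpha}}$, but the per-term claim as written is false for $d'\gg d$, and your proposed case split over the floor/ceiling terms is unnecessary once the prefactor is bounded uniformly. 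Second, and more seriously, your closing claim that the lower bound is immediate from the $d'=d$ term alone is wrong: that term contributes only $\frac{c_\alpha}{d^\alpha}\cdot\frac{1}{4d} = \myTheta{1/d^{\alpha+1}}$, which falls short of the required $\myOmega{1/d^\alpha}$ by a factor of $d$. The lower bound, like the upper bound, needs the tail sum; for instance, the range $d\le d'\le 2d$ already contributes $\myOmega{d\cdot d^{-\alpha}\cdot d^{-1}} = \myOmega{d^{-\alpha}}$. Your third paragraph in fact supplies this (a sum of same-sign $\Theta$-terms is a $\Theta$ of the sum of the estimates), so the fix is simply to delete the final sentence's shortcut rather than to add anything new.
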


\begin{definition}[Hitting Time]
    \label{def:search-problem}
    The \emph{hitting time} for node $\trespoint\in \integer^2$ of a jump process is the first step $t\geq 0$ when the process visits u.
    For a set of $k$ independent jump processes that run in parallel, their \emph{parallel hitting time} for $\trespoint$ is the first step in which  some (at least one) of the $k$ processes visits $\trespoint$.
\end{definition}

We will denote by $\tau_\alpha(\trespoint)$ the hitting time for  $\trespoint$ of a single \levywalk processes with exponent $\alpha$ starting from the origin $\origin = (0,0)$; and by $\tau^k_\alpha(\trespoint)$ the parallel hitting time for  $\trespoint$ of $k$ independent copies of the above  \levywalk.
Unless stated otherwise, we will always assume that the starting node of a jump processes  is the origin $\origin = (0,0)$.

For a \levyflight $L^f = (\levyflightrand{t})_{t\ge0}$, we denote by $\levyflightvisits{u}{t}$ the number of times the process visits node $u\in\integer^2$ until step $t$,  i.e., $\levyflightvisits{u}{t} = |\{i \colon \levyflightrand{i} = u\}\cap\{1,\ldots,t\}|$.
We define $\levywalkvisits{u}{t}$ similarly for a \levywalk.

\subsection{Bounds via Monotonicity}
\label{sec:monotonicity-bounds}

We will now describe an intuitive monotonicity property that applies to a family of jump processes that includes \levyflights (but not \levywalks).
We then use this property, and the similarity between \levy flights and walks, to show upper bounds on the probability that a \levywalk visits a given target.
We start by defining the family of \emph{monotone radial} jump processes.

\begin{definition}[Monotone radial process]
    \label{def:radial-monotone}
    A jump process $(J_t)_{t\geq 0}$
    is \emph{monotone radial} if,
for any pair of nodes $u,v\in \integer^2$,
    and any $t\geq 0$,
$\pr{J_{t+1} = v \mid J_{t} = u} = \rho(\|u-v\|_1)$,
    for some non-increasing function $\rho$.
\end{definition}

Clearly, \levyflights are monotone radial processes.
For all such processes, we use geometric arguments to prove the following property. 
The proof is deferred to \cref{app:monotonicity}.

\begin{lemma}[Monotonicity property]
    \label{lemma:monotonicity}
   Let  $(J_i)_{i\geq 0}$ be any monotone radial jump process.    For  every pair $u,v \in \integer^2$ and any $t\geq 0$, if $\|v\|_\infty \geq \|u\|_1$  then
    $\pr{J_t = u} \geq \pr{J_t = v}$.
\end{lemma}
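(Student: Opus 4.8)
The plan is to reduce everything to a one‑dimensional unimodality fact, applied to two different orthogonal ``frames'' of the walk. Throughout I use the paper's convention that the process starts at the origin, so the law $f_0$ of $J_0$ is $\delta_{\origin}$. The first step is to observe that the time‑$t$ law is a pure convolution power. By the law of total probability and the defining property of a monotone radial process, $f_{t+1}(w)=\sum_{u} f_t(u)\,\rho(\|w-u\|_1)=(f_t*\mu)(w)$, where $\mu(z):=\rho(\|z\|_1)$ is a probability distribution on $\integer^2$; note this uses only the one‑step conditional, not a Markov assumption. Hence $f_t=\mu^{*t}$, so the law of $J_t$ is that of $\sum_{i=1}^t Z_i$ with $Z_i=(A_i,B_i)$ i.i.d.\ $\sim\mu$, and I may freely compute with this i.i.d.\ sum. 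Since $\mu$ is invariant under the $8$ symmetries of the square (they preserve $\|\cdot\|_1$), $f_t$ is too, so $f_t(w)$ depends only on $(p,q):=(\|w\|_\infty,\ \|w\|_1-\|w\|_\infty)$, the pair $(|w_1|,|w_2|)$ sorted into non‑increasing order; write $f_t(w)=h(p,q)$ on the region $p\ge q\ge 0$.

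I claim the lemma follows from two monotonicities of $h$: $(\mathrm{M})$ $h$ is non‑increasing in each argument on $\{p\ge q\ge0\}$; and $(\star)$ $h(p,q)\ge h(p+q,0)$, i.e.\ among points at a fixed $\ell_1$‑distance the axis point has the smallest probability. Indeed, writing $(p_u,q_u)$ and $(p_v,q_v)$ for the sorted coordinates of $u$ and $v$, the hypothesis $\|v\|_\infty\ge\|u\|_1$ reads $p_v\ge p_u+q_u$, and then $h(p_u,q_u)\ \ge\ h(p_u+q_u,0)\ \ge\ h(p_v,0)\ \ge\ h(p_v,q_v)$, using $(\star)$ once and $(\mathrm{M})$ twice; the outer terms are $\pr{J_t=u}$ and $\pr{J_t=v}$.

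Both monotonicities come from the same device: condition on the increments of one coordinate of a suitable frame, leaving a one‑dimensional sum of independent symmetric unimodal variables. For $(\mathrm{M})$, I condition on $B=(B_1,\dots,B_t)$. Given $B$, the increments $A_i$ are independent with $\pr{A_i=a\mid B_i=b}\propto\rho(|a|+|b|)$, symmetric and non‑increasing in $|a|$; hence $X_t=\sum_i A_i$ is conditionally a sum of independent symmetric unimodal variables, so $\pr{X_t=x\mid B}$ is symmetric unimodal in $x$. Averaging over all $B$ with $\sum_iB_i=y$ shows $\pr{J_t=(x,y)}$ is non‑increasing in $|x|$ at fixed $y$; by symmetry the same holds in the other coordinate, giving $(\mathrm{M})$. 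For $(\star)$, I pass to the diagonal frame $S_i=A_i+B_i$, $R_i=A_i-B_i$ and condition on $S=(S_1,\dots,S_t)$. Given $S$, the $R_i$ are independent with $\pr{R_i=r\mid S_i=s}\propto\rho(\max(|s|,|r|))$ over $r\equiv s\ (\mathrm{mod}\ 2)$, which is symmetric and non‑increasing in $|r|$ along that residue class. Thus $R_t=\sum_iR_i=X_t-Y_t$ is conditionally a sum of independent symmetric (coset‑)unimodal variables, so $\pr{R_t=r\mid S}$ is non‑increasing in $|r|$ among $r$ of the correct parity. Since $(p,q)$ and $(p+q,0)$ share $S_t=p+q$ and have $|R_t|$ equal to $p-q$ and $p+q$ respectively, averaging over $S$ with $\sum_iS_i=p+q$ yields $h(p,q)\ge h(p+q,0)$, which is $(\star)$.

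The only remaining ingredient is the one‑dimensional fact that a convolution of symmetric, coset‑unimodal distributions is again symmetric and non‑increasing in $|x|$ on its support coset. I would prove it by writing each factor as a nonnegative mixture of uniform distributions on symmetric arithmetic progressions of step $2$ and, by bilinearity of convolution, reducing to two such uniforms, whose convolution is an explicit symmetric ``trapezoid'' that is non‑increasing in $|x|$ on its coset. I expect this parity bookkeeping to be the main obstacle. For $(\mathrm{M})$ the variables live on all of $\integer$ and the statement is the classical one, but the diagonal argument for $(\star)$ genuinely needs the parity‑restricted version: when $S_i$ is odd, the conditional law of $R_i$ is supported on the odd integers and so is \emph{not} unimodal on $\integer$ in the usual sense (it has a ``hole'' at $0$). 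Carrying this through cleanly — rather than via an error‑prone reflection/injection argument, which is delicate here because jumps can be arbitrarily long and the naive tail‑reflection fails to be injective — is where the real care is required.
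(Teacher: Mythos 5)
Your proof is correct, but it takes a genuinely different route from the paper's. The paper argues by induction on $t$: it first reduces the claim to the two elementary displacements $v\in\{(x_u-1,y_u+1),\,(x_u+1,y_u)\}$, then reflects the plane across the perpendicular bisector of $\overline{uv}$ and pairs each node $w$ on the far side with its mirror image $f(w)$, so that $\pr{J_t=u}-\pr{J_t=v}$ becomes a sum over $w$ of products of two non-negative factors (one supplied by the inductive hypothesis at time $t-1$, the other by the single-step monotonicity of $\rho$). You instead exploit that the time-$t$ marginal is the convolution power $\mu^{*t}$ --- correctly noting that this needs only the law of total probability, not a Markov assumption --- and reduce everything to the one-dimensional fact that a convolution of symmetric, coset-unimodal laws on $\integer$ is again symmetric and coset-unimodal, applied once in the axis frame and once in the diagonal frame via the identity $|a|+|b|=\max(|a+b|,|a-b|)$; the chain $h(p_u,q_u)\ge h(p_u+q_u,0)\ge h(p_v,0)\ge h(p_v,q_v)$ then delivers exactly what the hypothesis $\|v\|_\infty\ge\|u\|_1$ demands. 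Your route isolates two reusable and somewhat stronger intermediate statements (coordinate-wise monotonicity, and ``the axis point is least likely among points at a fixed $\ell_1$-distance'') and rests on a classical lemma provable by the mixture-of-uniforms representation you sketch, at the price of the parity bookkeeping in the diagonal frame, which you correctly identify as the one delicate point. The paper's route is more elementary and self-contained (no unimodality-preservation lemma) but pays for it with the geometric case analysis around the sets $D(w)$ and the reflection injections. Both arguments survive the conditioning on the truncation event $\EE_t$ used in \cref{remark:conditional_monotonicity}: in your setting, capping the jump length merely replaces $\rho$ by another non-increasing function, so the increments remain i.i.d.\ and monotone radial.
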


Next, we use \cref{lemma:monotonicity} to upper bound the probability that a target node is visited during a given jump-phase of a \levywalk, and then bound the probability that the target is visited during \emph{at least one} jump-phase.

\begin{lemma}
    \label{lemma:prob_anyjumpfindstreasure}
Let $\trespoint$ be an arbitrary node with  $\dist = \|\trespoint\|_1$. Let $ \epsilon > 0 $ be an arbitrarily small constant.
	For any $\alpha \geq 1 + \epsilon$, the probability that a \levywalk visits $\trespoint$ during its $i$-th jump-phase is
	$\mybigo{\mu\cdot(\dist^{-2} + \dist^{-\alpha})}$ if $ \alpha \neq 2$, where $ \mu = \min \{\log \dist, \abs{\frac{1}{2 - \alpha}}\} $, and $\mybigo{\log\dist /\dist^2}$
	if $\alpha=2$.
\end{lemma}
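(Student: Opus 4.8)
The plan is to condition on the position $W$ of the \levywalk at the beginning of its $i$-th jump-phase. The crucial observation is that the endpoints of successive jump-phases of a \levywalk are exactly the positions of a \levyflight: in each jump-phase the walk draws a distance $d$ and a destination $v\in R_d(\cdot)$ in precisely the way a \levyflight does, and the phase ends at $v$. Hence $W$ is distributed as $\levyflightrand{i-1}$, and in particular it is the position of a monotone radial process, so \cref{lemma:monotonicity} applies to it. Conditioned on $W = w \ne \trespoint$, \cref{cor:visit-direct-path} gives that the walk visits $\trespoint$ during the phase with probability $q(w) = \myTheta{1/\manhattan{w - \trespoint}^\alpha}$ (the case $w=\trespoint$ is trivial). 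Thus the quantity to bound is
\[
  P_i = \sum_{w \in \integer^2} \pr{\levyflightrand{i-1} = w}\, q(w).
\]

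Next I would turn monotonicity into a pointwise bound on $\pr{\levyflightrand{i-1}=w}$. For any $w$, each node $u$ with $\manhattan{u} \le \supdist{w}$ satisfies $\supdist{w} \ge \manhattan{u}$, so by \cref{lemma:monotonicity} we have $\pr{\levyflightrand{i-1}=u} \ge \pr{\levyflightrand{i-1}=w}$. These nodes form $B_{\supdist{w}}(\origin)$, which has $\myTheta{\supdist{w}^2}$ elements whose probabilities sum to at most $1$; hence
\[
  \pr{\levyflightrand{i-1}=w} = \mybigo{1/\supdist{w}^2} = \mybigo{1/\manhattan{w}^2},
\]
using $\supdist{w} \ge \manhattan{w}/2$ in $\integer^2$. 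In words, the \levyflight puts at most a $\mybigo{1/\manhattan{w}^2}$ fraction of its mass on any node at distance $\manhattan{w}$ from the origin.

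I would then split $\integer^2$ by the distance of $w$ from the target. In the near-target region $B_{\dist/2}(\trespoint)$ every node has $\manhattan{w} \ge \dist/2$, so the pointwise bound gives $\pr{\levyflightrand{i-1}=w} = \mybigo{1/\dist^2}$; summing $q$ over shells $R_{d'}(\trespoint)$ with $\cardinality{R_{d'}(\trespoint)} = \myTheta{d'}$, this region contributes
\[
  \mybigo{\frac{1}{\dist^2}\sum_{d'=1}^{\dist/2} \frac{\cardinality{R_{d'}(\trespoint)}}{(d')^\alpha}}
  =
  \mybigo{\frac{1}{\dist^2}\sum_{d'=1}^{\dist/2} (d')^{1-\alpha}}.
\]
In the far region $\manhattan{w - \trespoint} > \dist/2$ we have $q(w) = \mybigo{\dist^{-\alpha}}$ uniformly, and since the total mass is at most $1$ this region contributes only $\mybigo{\dist^{-\alpha}}$; the single node $w=\trespoint$ contributes $\pr{\levyflightrand{i-1}=\trespoint}=\mybigo{\dist^{-2}}$. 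The estimate thus reduces to the partial sum $\Sigma := \sum_{d'=1}^{\dist/2}(d')^{1-\alpha}$, which is $\myTheta{\log\dist}$ for $\alpha = 2$, $\myTheta{\min\{\log\dist,\,1/(\alpha-2)\}}$ for $\alpha > 2$, and $\myTheta{(\dist^{2-\alpha}-1)/(2-\alpha)}$ for $\alpha < 2$. Matching each regime against $\mu = \min\{\log\dist, \abs{1/(2-\alpha)}\}$ gives near-target contribution $\mybigo{\mu\,\dist^{-2}}$ when $\alpha \ge 2$ and $\mybigo{\mu\,\dist^{-\alpha}}$ when $\alpha < 2$, so that all three pieces combine to $\mybigo{\mu(\dist^{-2}+\dist^{-\alpha})}$ (and to $\mybigo{\log\dist/\dist^2}$ at $\alpha=2$).

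The main obstacle is the bookkeeping of $\Sigma$ across the regimes $\alpha<2$, $\alpha=2$, $\alpha>2$, and in particular seeing how the $\log\dist$ cap inside $\mu$ arises: when $\alpha$ is within $\mybigo{1/\log\dist}$ of $2$, the factor $\dist^{2-\alpha}$ is only a constant away from $1$, so $\Sigma$ saturates at $\myTheta{\log\dist}$ instead of $\myTheta{1/\abs{2-\alpha}}$, exactly reproducing the truncation defining $\mu$. A secondary point requiring care is verifying that the far-region and target contributions are dominated by $\mybigo{\mu(\dist^{-2}+\dist^{-\alpha})}$; this is immediate in the range where $\mu = \myOmega{1}$, and one must keep track of the constant hidden in the $(\dist/2)^{-\alpha}$ estimate so that it is absorbed by the $\dist^{-\alpha}$ term.
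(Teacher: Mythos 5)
Your proof is correct and follows essentially the same route as the paper's: both condition on the starting node of the $i$-th jump-phase, use that the phase endpoints form a \levyflight so that \cref{lemma:monotonicity} yields the pointwise $\mybigo{1/\dist^2}$ bound near the target, sum the $\myTheta{1/d^\alpha}$ visit probabilities from \cref{cor:visit-direct-path} over shells around $\trespoint$, and dispose of the far region with the uniform $\mybigo{\dist^{-\alpha}}$ bound. The only cosmetic differences are your radius $\dist/2$ versus the paper's $\dist/4$ and your more explicit bookkeeping of the partial sum $\sum_{d'}(d')^{1-\alpha}$ across the three regimes of $\alpha$.
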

\begin{proof}
For any $v\in B_{\dist/4}(\trespoint)$, the probability that the $i$-th jump starts in $v$ is at most $\bigo(1/\dist^2)$ due to \cref{lemma:monotonicity}, since the process restricted only to the jumps endpoints is a \levyflight.
Moreover, for any  $1 \le d \le \dist/4$, there are at most $4d$ nodes in $B_{\dist/4}(\trespoint)$ located at distance $d$ from $\trespoint$. Then, from the chain rule and \cref{cor:visit-direct-path}, the probability that the $i$-th jump starts from $B_{\dist/4}(\trespoint)$ and the agent visits the \treasure during the jump-phase is bounded by
\[
\bigo\left(\frac{1}{\dist^2}\right)\sum_{d = 1}^{\dist/4} 4d\cdot\bigo\left(\frac{1}{d^\alpha}\right) + \bigo\left(\frac{1}{\dist^2}\right),
\]
the term $\bigo(1/\dist^2)$ being the contribution of $\trespoint$ itself. The above expression equals $ \mybigo{\mu \dist^{2 -\alpha} /\dist^2} $ if $ \alpha \neq 2$, and $ \mybigo{\log \dist / \dist^2} $ if $ \alpha = 2 $.
For any fixed node $v $, denote by $F_i$ event that that, during a jump-phase starting in $v$, the agent   visits the \treasure. We now prove that $\pr{F_i} = \mybigo{\mu / \dist^\alpha}$.

Let $V_i$ be the event that the starting point of the $i$-th jump is in $B_{\dist/4}(\trespoint)$. Notice that $\pr{F_i \mid \overline{V_i}}$ is at most $\mybigo{1 /  \dist^{\alpha} }$ for \cref{cor:visit-direct-path}. Then,
\begin{align*}
	\pr{F_i} & \le \pr{F_i \mid V_i}\pr{V_i} + \pr{F_i \mid \overline{V_i}} \le \pr{F_1 \mid V_i} + \mybigo{\frac{1}{\dist^\alpha}} .
\end{align*}
Then, if $ \alpha > 2 $,
\[
\pr{F_i} = \mybigo{\frac{\mu}{\dist^2} + \frac{1}{\dist^\alpha}}  = \mybigo{\frac{\mu}{\dist^2}}.
\]
If $ \alpha = 2 $, $ \pr{F_i} = \mybigo{\log \dist / \dist^2} $. And if $ 1 < \alpha < 2 $,
\[
\pr{F_i} = \mybigo{\frac{\mu}{\dist^\alpha} + \frac{1}{\dist^\alpha}} = \mybigo{\frac{\mu}{\dist^\alpha}}.
\qedhere
\]
\end{proof}

\begin{lemma}
	\label{lemma:prob_never_find_target}
	Let $\trespoint$ be an arbitrary node with  $\|\trespoint\|_1 = \dist$. Let $ \epsilon > 0 $ be an arbitrarily small constant.
	The probability that a \levywalk with exponent $1 + \epsilon \le \alpha < 3$ visits $\trespoint$ at least once (at any step $t$) is
	$ \mybigo{\mu \log \dist(\dist^{-1} + \dist^{3 - \alpha})} $ if $ \alpha \neq 2 $, where $ \mu = \min\{\log \dist, \abs{\frac{1}{2-\alpha}}\} $, and $ \mybigo{\frac {\log^2 \dist} {\dist}} $ if $\ \alpha = 2 $.
\end{lemma}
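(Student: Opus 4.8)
The plan is to bound $\pr{\tau_\alpha(\trespoint)<\infty}$ by the expected number of jump-phases during which $\trespoint$ is visited, and then to control this sum by splitting it at the characteristic time scale $t_\dist = \myTheta{\dist^{\alpha-1}}$.

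First I would reduce to a sum of per-phase probabilities. Since within a single jump-phase the \levywalk follows a direct-path, which is a shortest path and hence visits $\trespoint$ at most once, the number of jump-phases that visit $\trespoint$ is at least the indicator of $\{\tau_\alpha(\trespoint)<\infty\}$. Writing $F_i$ for the event that the $i$-th jump-phase visits $\trespoint$ (as in the proof of \cref{lemma:prob_anyjumpfindstreasure}), a first-moment bound gives $\pr{\tau_\alpha(\trespoint)<\infty}\le\sum_{i\ge1}\pr{F_i}$. The whole difficulty is that the per-phase estimate of \cref{lemma:prob_anyjumpfindstreasure}, namely $\pr{F_i}=\mybigo{\mu(\dist^{-2}+\dist^{-\alpha})}$, is uniform in $i$ and thus not summable by itself; the real content is to show that only about $t_\dist\log\dist$ phases contribute.

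Next I would perform the main split at $N=\myTheta{t_\dist\log\dist}=\myTheta{\dist^{\alpha-1}\log\dist}$. For the first $N$ phases I simply apply \cref{lemma:prob_anyjumpfindstreasure}: $\sum_{i=1}^{N}\pr{F_i}\le N\cdot\mybigo{\mu(\dist^{-2}+\dist^{-\alpha})}=\mybigo{\mu\log\dist\,(\dist^{-(3-\alpha)}+\dist^{-1})}$, which gives the claimed bound (for $\alpha=2$ one uses instead the $\mybigo{\log\dist/\dist^2}$ per-phase bound, producing the $\mybigo{\log^2\dist/\dist}$ case). Note that the single extra $\log\dist$ factor in the target is precisely the one coming from $N$, while the two per-phase terms $\mu\dist^{-2}$ and $\mu\dist^{-\alpha}$ scale under multiplication by $N$ into the two target terms $\mu\log\dist\,\dist^{-(3-\alpha)}$ and $\mu\log\dist\,\dist^{-1}$.

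Finally, the crux is to show that the tail $\sum_{i>N}\pr{F_i}$ is of lower order. Using that the jump-phase endpoints of a \levywalk form a \levyflight, I would write $\pr{F_i}=\sum_v \pr{\Lf{i-1}=v}\,q(v)$, where by \cref{cor:visit-direct-path} $q(v)=\myTheta{\|v-\trespoint\|_1^{-\alpha}}$ for $v\neq\trespoint$ and $q(\trespoint)=1$. The obstacle is that \cref{lemma:monotonicity} only yields the time-uniform bound $\pr{\Lf{i}=v}=\mybigo{\|v\|_\infty^{-2}}$, which is not summable; what is needed is a spreading (anti-concentration) estimate $\pr{\Lf{i}=v}=\mybigo{i^{-2/(\alpha-1)}}$ reflecting that after $i$ phases the flight is spread over scale $i^{1/(\alpha-1)}$, together with a high-probability confinement of the flight to that scale (used to truncate the $v$-sum, which is automatic for $\alpha>2$ but necessary for $\alpha\le2$). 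Since $\alpha<3$ gives $2/(\alpha-1)>1$, these estimates make the flight's Green's function finite, i.e.\ the \levyflight transient, and a direct computation then shows that $\sum_{i>N}\pr{F_i}$ is smaller than the first-$N$ contribution by a $\polylog\dist$ factor. Establishing this spreading/confinement estimate for the \levyflight on $\integer^2$ is the main difficulty; the remainder is bookkeeping of the $\mu$ and $\log\dist$ factors and of the $\alpha=2$ boundary case.
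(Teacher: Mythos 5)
Your first-moment reduction and your treatment of the first $N=\myTheta{\dist^{\alpha-1}\log\dist}$ jump-phases are sound and coincide with the opening block of the paper's argument: multiplying the uniform per-phase bound of \cref{lemma:prob_anyjumpfindstreasure} by $N$ yields exactly the two claimed terms (and the $\log^2\dist/\dist$ bound at $\alpha=2$); you have also correctly read the statement's exponent as $\dist^{-(3-\alpha)}$, consistent with \cref{thm:lw23}\cref{thm:lw23:c}. The genuine gap is the tail $\sum_{i>N}\pr{F_i}$, where your proposal stops at naming what would be needed: the pointwise anti-concentration estimate $\pr{\Lf{i}=v}=\mybigo{i^{-2/(\alpha-1)}}$ together with a high-probability confinement of the flight to scale $i^{1/(\alpha-1)}$ is precisely the hard content, and it is left unproven. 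This is not bookkeeping: the confinement half is delicate for $\alpha\le 2$, where the jump length has unbounded mean, and the closest statement available in the paper (\cref{lemma:visitsorigin}) is proved only for $\alpha\in(2,3]$, only at the origin, and only conditionally on the capping event $\EE_t$.

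The paper closes the tail by a different and more elementary device that avoids any pointwise heat-kernel bound: a dyadic decomposition in space rather than in time. Let $t_i$ be the first time the walk reaches distance $\lambda_i=2^i\dist$ from the origin. Since a single jump of length $\ge 2\lambda_i$ forces such an escape, \cref{eq:jumpatleast} gives $\pr{t_i\ge n\tau_i}\le\lambda_i^{-2n}$ for $\tau_i=\myTheta{\lambda_i^{\alpha-1}\log\lambda_i}$; and after time $t_i$ the walk restarts at distance $\myOmega{\lambda_i}$ from $\trespoint$, so \cref{lemma:prob_anyjumpfindstreasure} --- which is uniform over the phase index and already accounts for excursions back toward the target --- bounds each subsequent per-phase hitting probability by $\mybigo{\mu/\lambda_i^2}$ when $\alpha>2$ (analogously in the other ranges). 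Epoch $i$ therefore contributes $\mybigo{\mu\tau_i/\lambda_i^2}=\mybigo{\mu\lambda_i^{\alpha-3}\log\lambda_i}$ expected visits, the geometric decay in $i$ (valid precisely because $\alpha<3$) makes the total $\mybigo{\mu\log\dist\cdot\dist^{\alpha-3}}$ for $\alpha>2$ and $\mybigo{\mu\log\dist/\dist}$ for $\alpha<2$, and Markov's inequality finishes. This uses only ingredients already established and treats all $\alpha\in[1+\epsilon,3)$ uniformly; if you prefer your route, you must actually prove the spreading and confinement estimates for the \levyflight, which is comparable in difficulty to the lemma itself.
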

\begin{proof}

	For each $i\ge 0$, consider  the first time $t_i$ the agent is at distance at least $\lambda_i = 2^i \dist$ from the origin. From \cref{eq:jumpatleast}, the probability any jump has length no less than $2\lambda_i$ is at least $ c/\lambda_i^{\alpha - 1} $, for some constant $c>0$. Define, for $i\ge 1$, the values $\tau_i = 2c^{-1}\lambda_i^{\alpha-1}\log \lambda_i	$. The probability that $t_i \ge n\tau_i$ is bounded from above by the probability that no jump between the first $n\tau_i$ jumps has length at least $2\lambda_i$. 
	Since the jump lengths are mutually independent, from \cref{eq:jumpatleast} we get 
	\begin{align*}
		\pr{t_i \ge n\tau_i} & \le  \left[1- \frac{c}{\lambda_i^{\alpha-1}}\right]^{\frac{2}{c}n\lambda_i^{\alpha-1}\log\lambda_i} \\
		& \le \exp \left (-2n \log \lambda_i\right ) \\
		& = \frac{1}{\lambda_i ^{2n}} = \frac{1}{2^{2ni}\dist^{2n}},
	\end{align*}
	where we have used the well known inequality
	$ 1 - x \le e^{-x} $ that holds for every real  $ x  $.
	We next bound   the expected number of visits to the \treasure node from time $t_i$ to time $t_{i+1}$ as follows:
	\begin{align}
		\label{eq:expression}
		& \expect{\levywalkvisits{\trespoint}{t_{i+1}} - \levywalkvisits{\trespoint}{t_i}}\notag\\
		\le \ & \expect{\levywalkvisits{\trespoint}{t_{i+1}} - \levywalkvisits{\trespoint}{t_i} \mid t_{i+1} \le \tau_{i+1}}\pr{t_{i+1} \le \tau_{i+1}} \notag\\
		+ \ & \sum_{n \ge 1} \expect{\levywalkvisits{\trespoint}{t_{i+1}} - \levywalkvisits{\trespoint}{t_i} \mid n\tau _{i+1} < t_{i+1} \le (n+1)\tau_{i+1}}\cdot \notag\\
		\cdot \ & \pr{t_{i+1} \ge n\tau_i}.
	\end{align}
	
	We now proceed by analysing  three different ranges for the exponent $\alpha$ and use \cref{lemma:prob_anyjumpfindstreasure}. Notice that the agent starts at distance $\Omega\left(\lambda_i\right)$ from the target.
	
	If $ \alpha > 2 $, the expression in \cref{eq:expression} is equal to
	\begin{align*}
		& \mybigo{\frac{\mu \tau_{i+1}}{\lambda_i^2}} + \sum_{n \ge 1} \mybigo{\frac{\mu(n+1)\tau_{i+1}}{2^{2ni}\dist^{2n}\lambda_i^2}} = \mybigo{\frac{\mu\tau_i}{\lambda_i^2}}
	\end{align*}
	since the sum $ \sum_{n\ge 1 } (n+1)\left (2^i \dist\right )^{-2n} $ is less than a constant.
If $ \alpha = 2 $, the expression in \cref{eq:expression} is
	\begin{align*}
		& \mybigo{\frac{\tau_{i+1}\log \lambda_i}{\lambda_i^2}} + \sum_{n \ge 1} \mybigo{\frac{(n+1)\tau_{i+1}\log \lambda_i}{2^{2ni}\dist^{2n}\lambda_i^2}} = \mybigo{\frac{\tau_i \log \lambda_i}{\lambda_i^2}}.
	\end{align*}
	And if $ 1 < \alpha < 2 $, the expression is
	\begin{align*}
		& \mybigo{\frac{\mu \tau_{i+1}}{(2 - \alpha)\lambda_i^\alpha}} + \sum_{n \ge 1} \mybigo{\frac{\mu (n+1)\tau_{i+1}}{2^{2ni}\dist^{2n}\lambda_i^\alpha}} = \mybigo{\frac{\mu\tau_i}{\lambda_i^\alpha}}.
	\end{align*}
	The same bounds with $ i = 1 $ hold for the expected number of visits to the \treasure until time $t_1$.
	From the facts above, for $\alpha > 2$,   the expected total number of visits to the \treasure is bounded by
	\begin{align*}
		& \mybigo{\frac{\mu\tau_1}{ \lambda_1^2}} + \sum_{i \ge 1}\mybigo{\frac{\mu\tau_i}{ \lambda_i^2}} \\
		= \ & \mybigo{\frac{\mu \log \dist}{\dist^{3 - \alpha}}} + \sum_{i \ge 1} \mybigo{\mu \cdot \frac{\log(2^i) + \log \dist}{(2^{2i(3 - \alpha)} \dist^{3 - \alpha})}} \\
		= \ & \mybigo{\frac{\mu\log \dist}{\dist^{3 - \alpha}}}.
	\end{align*}
	Similarly, for $ \alpha = 2 $, we obtain the bound $ \mybigo{\log^2 \dist / \dist^{3 - \alpha}} $, while,  for $ 1 < \alpha < 2 $, we get
	\begin{align*}
		& \mybigo{\frac{\mu\log \dist}{\dist}} + \sum_{i \ge 1} \mybigo{\mu \cdot \frac{\log(2^i) + \log \dist}{(2^{2i(3 - \alpha)} \dist)}}
		= \mybigo{\frac{\mu\log \dist}{\dist}}.
	\end{align*}
	Finally, we bound the probability the agent visits the \treasure at least once using the   Markov's inequality.
\end{proof}

 \section{The Case \texorpdfstring{$\alpha\in(2,3]$}{alpha in (2,3]}}
\label{sec:levywalk}

In this section, we analyze the hitting time of \levywalks when the  exponent
parameter $\alpha$ belongs to the range $(2,3]$.
In this case, the jump length has bounded mean and unbounded variance.

Recall that $\tau_\alpha(\trespoint)$ is the hitting time for target $\trespoint$ of a single \levywalk with exponent $\alpha$, and $\tau^k_\alpha(\trespoint)$ is the parallel hitting time for  $\trespoint$ of $k$ independent copies of the above  \levywalk.
All walks start from the origin.

We will prove the following bounds on the hitting time $\tau_\alpha$.

\begin{theorem}
	\label{thm:lw23}
    Let $\alpha\in (2,3)$, $\trespoint\in\integer^2$, and $\dist = \|\trespoint\|_1$.
    Let $\mu = \min\{\log\dist, \frac{1}{\alpha - 2}\}$, $ \nu = \min \{\log \dist, \frac{1}{3 - \alpha}\}$, and $\gamma = \frac{(\log\dist)^{\frac{2}{\alpha - 1}}} {(3-\alpha)^2}$.
    Then:
    \begin{enumerate}[(a)]
        \item \label{thm:lw23:a}
            $ \pr{\tau_\alpha(\trespoint) = \mybigo{\mu\cdot \dist^{\alpha - 1}}}
            =
            \myOmega{1/(\gamma \cdot \dist^{3-\alpha})}$,
            if $3 - \alpha  = \omega(1 / \log \dist)$;

        \item \label{thm:lw23:b}
            $ \pr{\tau_\alpha(\trespoint) \leq t } = \mybigo{\mu \nu \cdot {t^2} /{\dist^{\alpha+1}} }$, for any step $\dist \le t = \mybigo{\dist^{\alpha - 1} / \nu}$;

        \item \label{thm:lw23:c}
            $\pr{\tau_\alpha(\trespoint) < \infty} = \mybigo{\mu \cdot{\log\dist}/  {\dist^{3-\alpha}}}$.
    \end{enumerate}
\end{theorem}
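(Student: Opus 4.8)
The plan is to treat the three parts separately, in all cases comparing the \levywalk to the \levyflight $\Lf{\cdot}$ formed by its jump endpoints (which, unlike the walk, is monotone radial), and using a ``clock change'' between steps and jumps: since for $\alpha\in(2,3)$ the mean jump length is $\myTheta{\mu}$, a budget of $T$ steps corresponds, up to concentration of the sum of jump lengths, to $N=\myTheta{T/\mu}$ jump-phases. Because any flight visit to $\trespoint$ is also a walk visit, lower bounds proved for the flight transfer to the walk, whereas upper bounds on walk visits are obtained directly. Part \cref{thm:lw23:c} is then essentially already done: it is the specialization of \cref{lemma:prob_never_find_target} to $\alpha\in(2,3)$, where $3-\alpha\in(0,1)$ forces the term $\dist^{-(3-\alpha)}$ to dominate $\dist^{-1}$, giving $\pr{\tau_\alpha(\trespoint)<\infty}=\mybigo{\mu\log\dist/\dist^{3-\alpha}}$ immediately.

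For part \cref{thm:lw23:b} I would bound $\pr{\tau_\alpha(\trespoint)\le t}\le \expect{\Zw{\trespoint}{t}}$ and estimate the expected number of visits within $t$ steps. Decompose each jump-phase's contribution into (i) phases starting inside $B_{\dist/4}(\trespoint)$ that then cross $\trespoint$, and (ii) phases starting farther away that cross $\trespoint$ via a single jump of length $\ge 3\dist/4$. Contribution (ii) is $\mybigo{1/\dist^\alpha}$ per phase by \cref{cor:visit-direct-path}, hence $\mybigo{(t/\mu)\cdot\dist^{-\alpha}}$ in total, which is of lower order in the claimed range $\dist\le t=\mybigo{\dist^{\alpha-1}/\nu}$. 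The main term is (i): using \cref{lemma:monotonicity} to bound the probability that the flight's $n$-th endpoint lies in $B_{\dist/4}(\trespoint)$ — a tail event for $t\ll\dist^{\alpha-1}$ created by a single long jump, whose probability grows linearly in the jump index $n$ — and summing over $n\le\myTheta{t/\mu}$ produces the quadratic factor $\myTheta{t^2/\dist^{\alpha-1}}$; multiplying by the local crossing cost $\mybigo{\mu/\dist^2}$, obtained exactly as in the proof of \cref{lemma:prob_anyjumpfindstreasure}, yields $\mybigo{\mu\nu\,t^2/\dist^{\alpha+1}}$. The factor $\nu=\min\{\log\dist,1/(3-\alpha)\}$ arises from summing the geometric contributions across the distance scales $\dist,2\dist,\dots$ up to the confinement radius.

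Part \cref{thm:lw23:a} is the crux. Take $T=\myTheta{\mu\dist^{\alpha-1}}$, i.e.\ $N=\myTheta{\dist^{\alpha-1}}$ flight jumps, and argue in three stages. \emph{Spread and confinement:} using the tail bound \eqref{eq:jumpatleast} and the stable-law scaling by which a sum of $n$ independent jump vectors (length tail index $\alpha-1\in(1,2)$) has displacement of order $n^{1/(\alpha-1)}$, I would show that, after fixing the constant in $N=\myTheta{\dist^{\alpha-1}}$, a constant fraction of the endpoints $\Lf{n}$ with $n\le N$ sit at distance $\myTheta{\dist}\ge\ell$ from $\origin$, while with probability $\myOmega{1/\polylog\dist}$ the whole flight stays inside $B_R$ with $R=\dist\cdot\polylog\dist$; the quantity $\gamma=(\log\dist)^{2/(\alpha-1)}/(3-\alpha)^2$ packages the resulting $R^2$ confinement cost together with the $(3-\alpha)$-dependence of the spread fraction (and this is where the hypothesis $3-\alpha=\omega(1/\log\dist)$ is needed). \emph{Monotonicity transfer:} for each $n$, \cref{lemma:monotonicity} gives $\pr{\Lf{n}=\trespoint}\ge\pr{\Lf{n}=v}$ for every $v$ with $\|v\|_\infty\ge\ell$; summing over the $\mybigo{R^2}$ such nodes in $B_R$ gives $\pr{\Lf{n}=\trespoint}\ge\pr{\Lf{n}\in B_R,\ \|\Lf{n}\|_\infty\ge\ell}/\mybigo{R^2}$, and summing over $n\le N$ with the spread estimate yields the first-moment bound $\expect{\Zf{\trespoint}{N}}=\myOmega{1/(\gamma\,\dist^{3-\alpha})}$, which transfers to the walk. \emph{Second moment:} finally I would apply $\pr{\Zw{\trespoint}{T}\ge1}\ge\expect{\Zw{\trespoint}{T}}^2/\expect{\Zw{\trespoint}{T}^2}$ together with $\expect{\Zw{\trespoint}{T}^2}\le 2\,\expect{\Zw{\trespoint}{T}}\cdot G$, where $G=\max_u\expect{\text{visits to }\trespoint\mid\text{start at }u}$ is the maximal expected number of (re)visits; in the super-diffusive regime $\sum_n\pr{\Lf{n}=\origin}\approx\sum_n n^{-2/(\alpha-1)}$ converges, so $G=\mybigo{\polylog\dist}$ by a \cref{lemma:prob_never_find_target}-type estimate applied from the target's neighborhood, and this factor is again absorbed into $\gamma$.

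The main obstacle is the lower bound of \cref{thm:lw23:a}: both the confinement-and-spread estimate — obtaining the correct $\gamma$/polylog dependence while ensuring a genuinely constant fraction of endpoints lie beyond distance $\ell$ with the walk still inside $B_R$ — and the second-moment control of return visits require the real work and the careful tracking of the constants $\mu,\nu,\gamma$. Parts \cref{thm:lw23:b} and \cref{thm:lw23:c} are comparatively routine consequences of the monotonicity machinery of \cref{sec:monotonicity-bounds}.
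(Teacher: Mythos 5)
Your proposal follows essentially the same route as the paper: part~(c) is indeed just \cref{lemma:prob_never_find_target}; part~(b) is the per-phase union bound of \cref{lemma:lowbound_superdiffusive} (Chebyshev-based confinement within distance $\dist/2$ plus the monotonicity-based $\mybigo{\mu/\dist^2}$ crossing cost); and part~(a) is the first/second-moment argument of \cref{lemma:hittinglevyflight,lemma:couplevyflightintowalk} --- lower-bounding the flight's expected visits to the annulus between radius $\dist$ and $\dist\cdot\polylog\dist$ by subtracting the visits to $\firstregion$ and $\thirdregion$, transferring mass to $\trespoint$ via \cref{lemma:monotonicity}, dividing by the $\mybigo{1/(3-\alpha)^2}$ expected number of returns, and coupling back to the walk through concentration of the summed capped jump lengths. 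The one slip is in part~(b): for an upper bound you must allow up to $t$ jump-phases within $t$ steps (a phase can last a single step), not $\myTheta{t/\mu}$; summing your linear-in-$n$ endpoint probability up to $t$ is exactly what produces the stated $\myTheta{t^2/\dist^{\alpha-1}}$ factor and is what the paper does, so the fix is immediate but your ``clock change'' as stated is only valid in the lower-bound direction.
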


Using \cref{thm:lw23}, we can easily obtain the following bounds on the parallel hitting time $\tau^k_\alpha$.
The proof is given in \cref{sec:cor:plw23}.

\begin{corollary}
	\label{cor:plw23}
    Let  $\trespoint\in\integer^2$ and  $\dist = \|\trespoint\|_1$, and  let $k$ be any integer such that  $\log^6\dist \leq k \leq \dist \log ^4 \dist$.
    Let $\alpha^\ast = 3 - \frac{\log k}{\log \dist}$, and $ \overline{\alpha} = \max\{2,\alpha^\ast - 4 \frac{\log \log \dist}{\log \dist}\} $. Then:
    \begin{enumerate}[(a)]
        \item \label{cor:plw23:a}
            For $\alpha = \alpha^\ast + 5\frac{\log\log\dist}{\log \dist}$,
            $\pr{\tau_{\alpha}^k(\trespoint) = \mybigo{\dist^2 \log^6 \dist / k}}
            =
            1 - e^{-\mylittleomega{\log \dist}}$;
        \item \label{cor:plw23:b}
            For $\overline{\alpha} < \alpha <3$,
            $\pr{\tau_\alpha^k(\trespoint) \leq (\dist^2/k)\cdot \dist^{(\alpha-\alpha^\ast)/2} / \log^4 \dist}
            =
            o(1)$;  

        \item \label{cor:plw23:c}
            For $2 < \alpha \leq  \alpha^\ast$,
            $\pr{\tau_\alpha^k(\trespoint) < \infty} =  \mybigo{{\log^2\dist}/  {\dist^{\alpha^\ast-\alpha}}}$.
    \end{enumerate}
\end{corollary}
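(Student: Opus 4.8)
The plan is to derive all three bounds directly from the single-walk estimates of \cref{thm:lw23}, exploiting that the $k$ walks are mutually independent. Two elementary tools suffice. For the upper bound~\cref{cor:plw23:a}, if $p$ denotes the probability that one walk hits $\trespoint$ within a given number of steps, then the probability that \emph{none} of the $k$ walks does so is $(1-p)^k \le e^{-pk}$, so it is enough to show $pk = \omega(\log\dist)$. For the lower bounds~\cref{cor:plw23:b,cor:plw23:c}, I would instead use the union bound $\pr{\tau^k_\alpha(\trespoint)\le t} \le k\cdot\pr{\tau_\alpha(\trespoint)\le t}$ and plug in the single-walk upper bounds. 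The recurring computation is the substitution $\alpha^\ast = 3-\frac{\log k}{\log\dist}$, which gives the identities $\dist^{3-\alpha^\ast}=k$ and $\dist^{\alpha^\ast-1}=\dist^2/k$; combined with $\dist^{c\,\log\log\dist/\log\dist}=\log^{c}\dist$, these let me rewrite every power of $\dist$ appearing in \cref{thm:lw23} as a product of $\dist^2/k$ (or $k$) and $\polylog\dist$ factors.

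For~\cref{cor:plw23:a}, substituting $\alpha = \alpha^\ast + 5\frac{\log\log\dist}{\log\dist}$ yields $\dist^{3-\alpha} = k/\log^5\dist$ and $\dist^{\alpha-1} = (\dist^2/k)\log^5\dist$. I would first check that this $\alpha$ lies in $(2,3)$ and that $3-\alpha = \omega(1/\log\dist)$, both of which follow from $\log^6\dist \le k \le \dist\log^4\dist$, so that \cref{thm:lw23:a} applies. Bounding $\gamma = \frac{(\log\dist)^{2/(\alpha-1)}}{(3-\alpha)^2} = \mybigo{\log^4\dist/\log^2\log\dist}$ (using $2/(\alpha-1)<2$ and $3-\alpha \ge \frac{\log\log\dist}{\log\dist}$) gives a single-walk hitting probability $p = \myOmega{1/(\gamma\dist^{3-\alpha})} = \myOmega{\log\dist\cdot\log^2\log\dist/k}$ within $\mybigo{\mu\dist^{\alpha-1}} = \mybigo{\dist^2\log^6\dist/k}$ steps, using $\mu\le\log\dist$. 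Hence $pk = \omega(\log\dist)$ and $1-e^{-pk} = 1-e^{-\omega(\log\dist)}$, as claimed.

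For~\cref{cor:plw23:c}, the union bound with \cref{thm:lw23:c} gives $\pr{\tau^k_\alpha(\trespoint)<\infty} = \mybigo{k\mu\log\dist/\dist^{3-\alpha}}$; writing $\dist^{3-\alpha} = k\,\dist^{\alpha^\ast-\alpha}$ and using $\mu\le\log\dist$ collapses this to $\mybigo{\log^2\dist/\dist^{\alpha^\ast-\alpha}}$. For~\cref{cor:plw23:b}, the union bound with \cref{thm:lw23:b} gives $\pr{\tau^k_\alpha(\trespoint)\le t} = \mybigo{k\mu\nu t^2/\dist^{\alpha+1}}$; setting $t = T := (\dist^2/k)\dist^{(\alpha-\alpha^\ast)/2}/\log^4\dist$ and simplifying with $\dist^{3-\alpha^\ast}=k$ makes all powers of $\dist$ and $k$ cancel, leaving $\mybigo{\mu\nu/\log^8\dist} = \mybigo{1/\log^6\dist} = o(1)$ since $\mu,\nu\le\log\dist$.

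The main obstacle is not the arithmetic but verifying that \cref{thm:lw23:b} is actually applicable at $t=T$, i.e.\ that $\dist \le T = \mybigo{\dist^{\alpha-1}/\nu}$. The lower endpoint is handled by a triviality: a walk moves at most one unit per step, so $\pr{\tau^k_\alpha(\trespoint)\le t}=0$ whenever $t<\dist$, and the claim holds vacuously when $T<\dist$. For the upper endpoint I would compare $T$ with $\dist^{\alpha-1}/\nu = (\dist^2/k)\dist^{\alpha-\alpha^\ast}/\nu$, so that $T/(\dist^{\alpha-1}/\nu) = \nu\,\dist^{-(\alpha-\alpha^\ast)/2}/\log^4\dist$; this is below $1$ both when $\alpha\ge\alpha^\ast$ (since then $\dist^{-(\alpha-\alpha^\ast)/2}\le1$) and when $\overline\alpha<\alpha<\alpha^\ast$, where the defining bound $\alpha^\ast-\alpha < 4\frac{\log\log\dist}{\log\dist}$ of $\overline\alpha$ forces $\dist^{(\alpha^\ast-\alpha)/2} < \log^2\dist$, again keeping the ratio below $1$. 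Checking these endpoint conditions, and being careful that the $\mu$, $\nu$, and $\gamma$ factors never blow up over the whole admissible range of $\alpha$ and $k$, is where the real care lies.
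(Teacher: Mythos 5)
Your proposal is correct and follows essentially the same route as the paper's proof: both reduce directly to \cref{thm:lw23} via the independence of the $k$ walks and the substitutions $\dist^{3-\alpha^\ast}=k$, $\dist^{\alpha^\ast-1}=\dist^2/k$, with the same bookkeeping for $\mu$, $\nu$, and $\gamma$. The only cosmetic difference is that for parts (b) and (c) you use the union bound $k\cdot p$ where the paper bounds $1-(1-p)^k$ exactly (interchangeable here since $kp=o(1)$), and your explicit check that $T=\mybigo{\dist^{\alpha-1}/\nu}$ in part (b) is if anything more careful than the paper's one-line assertion.
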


\cref{cor:plw23}\cref{cor:plw23:a} states that $\tau_{\alpha}^k(\trespoint) \leq (\dist^2 / k)\cdot \polylog \dist $, w.h.p., when $\alpha \approx \alpha^\ast = 3 - {\log k}/{\log \dist}$.
One the other hand, \cref{cor:plw23}\cref{cor:plw23:b} says that the lower bound $\tau_{\alpha}^k(\trespoint) \geq (\dist^2/k) \cdot \dist^{(\alpha-\alpha^\ast)/2} / \polylog \dist$ holds with probability $1-o(1)$
for any $\alpha \geq \alpha^\ast - \mybigo{\log\log\ell/\log\ell}$, and \cref{cor:plw23}\cref{cor:plw23:c} says that the target is never hit with probability at least $1 - \polylog\ell /  {\dist^{\alpha^\ast-\alpha}}$, if $\alpha \leq \alpha^\ast$.
Therefore, the optimal hitting time of $\dist^2 / k$ is achieved (modulo $\polylog\ell$ factors) only for values of $\alpha$ very close to $\alpha^\ast$, precisely only if $|\alpha - \alpha^\ast| = \mybigo{\log\log\ell/\log\ell}$.

For the threshold case  $\alpha = 3$, similar bounds to those of  \cref{thm:lw23} apply,  modulo some $\polylog\dist$ factors, as stated in the next theorem.

\begin{theorem}
	\label{thm:lw3}
	 Let $\trespoint\in \integer^2$ and $\dist = \|\trespoint\|_1$. Then:
	\begin{enumerate}[(a)]
		\item \label{thm:lw3:a}
		$\pr{\tau_3(\trespoint) = \mybigo{\dist^2}} = \myOmega{1 / \log^4 \dist}$;
		\item \label{thm:lw3:b}
		$ \pr{\tau_3(\trespoint) \leq t} = \mybigo{t^2 \log \dist /\dist^4}$, for any step $t$ with $\dist \le t = \mybigo{\dist^2/ \log \dist}$.
	\end{enumerate}
\end{theorem}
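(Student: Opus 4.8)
The plan is to control, for both parts, the random variable $Z = \Zw{\trespoint}{t}$ counting the visits of the \levywalk to $\trespoint$ within $t$ steps, and to exploit that at the threshold $\alpha=3$ the walk behaves like a two-dimensional random walk whose single-jump variance diverges only logarithmically; this logarithmic divergence will be the source of the extra $\log\dist$ factors. Decomposing over jump-phases via the chain rule and \cref{cor:visit-direct-path}, exactly as in the proof of \cref{lemma:prob_anyjumpfindstreasure}, one obtains
\begin{equation*}
\expect{Z} = \sum_{u\in\integer^2}\phi(u)\cdot\myTheta{1/\|u-\trespoint\|_1^{3}},
\end{equation*}
where $\phi(u)$ is the expected number of jump-phases contained in the first $t$ steps that start at $u$, i.e. the expected number of visits that the embedded \levyflight makes to $u$.

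For the upper bound \cref{thm:lw3}\cref{thm:lw3:b} I would bound $\pr{\tau_3(\trespoint)\le t} = \pr{Z\ge1}\le\expect{Z}$ by Markov's inequality, so it suffices to show $\expect{Z}=\mybigo{t^2\log\dist/\dist^4}$. Since the jump length has finite mean at $\alpha=3$, there are $\myTheta{t}$ phases in $t$ steps, so $\sum_u\phi(u)=\mybigo{t}$. The crux is a sharp bound on $\phi(u)$ for $u$ at distance $\myTheta{\dist}$: via a single-big-jump estimate, the probability that the flight sits at such a $u$ after $i\le t=\mybigo{\dist^2/\log\dist}$ jumps is dominated by the event that one of the $i$ jumps has length $\myTheta{\dist}$ and lands near $u$, which is $\mybigo{i\log\dist/\dist^{4}}$ (the $\log\dist$ being the threshold variance correction); summing over $i\le t$ gives $\phi(u)=\mybigo{t^2\log\dist/\dist^{4}}$. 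Splitting the sum for $\expect{Z}$ into $u\in B_{\dist/4}(\trespoint)$, where $\sum\myTheta{\|u-\trespoint\|_1^{-3}}=\mybigo{1}$, and its complement, where $\|u-\trespoint\|_1=\myOmega{\dist}$ and $\sum_u\phi(u)=\mybigo{t}$ contributes only $\mybigo{t/\dist^3}\le\mybigo{t^2/\dist^4}$ (as $t\ge\dist$), yields the claimed $\mybigo{t^2\log\dist/\dist^4}$. This matches \cref{thm:lw23}\cref{thm:lw23:b} at $\alpha=3$ with $\mu=1$ and $\nu=\log\dist$, and I would indeed reuse the same phase-decomposition machinery. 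The main obstacle here is the local endpoint estimate $\mybigo{i\log\dist/\dist^4}$: a single-big-jump (local large-deviation) bound for the heavy-tailed sum at the threshold exponent, where the combined contribution of the many intermediate-scale jumps must be shown to cost only a $\log\dist$ factor.

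For the lower bound \cref{thm:lw3}\cref{thm:lw3:a} a first-moment estimate does not suffice, so I would use the second-moment method, $\pr{\tau_3(\trespoint)\le t}=\pr{Z\ge1}\ge\expect{Z}^2/\expect{Z^2}$, with $t=\myTheta{\dist^2}$. To lower bound $\expect{Z}$, fix the annulus $\mathcal{A}=\{u:\dist\le\|u\|_1\le\dist\,\polylog\dist\}$, which has $\mybigo{\dist^2\,\polylog\dist}$ nodes. With probability $\myOmega{1/\polylog\dist}$ the walk reaches distance $\dist$ within $\myTheta{\dist^2}$ steps and stays inside the ball of radius $\dist\,\polylog\dist$, and during this a $\myOmega{1/\polylog\dist}$ fraction of the $\myTheta{\dist^2}$ phases end in $\mathcal{A}$; hence the expected number of flight-visits to $\mathcal{A}$ is $\myOmega{\dist^2/\polylog\dist}$. \cref{lemma:monotonicity}, applied to the embedded \levyflight, shows that $\trespoint$ (lying on the inner edge of $\mathcal{A}$) is visited at least as often as any node farther from the origin, hence at least the average over $\mathcal{A}$, giving $\phi(\trespoint)=\myOmega{1/\polylog\dist}$ and thus $\expect{Z}\ge\phi(\trespoint)=\myOmega{1/\polylog\dist}$.

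It then remains to bound $\expect{Z^2}$. Writing $Z$ as a sum of phase-indicators and using that the jump endpoints form a Markov chain (the \levyflight is Markovian, unlike the walk), conditioning on the first phase through $\trespoint$ gives $\expect{Z^2}=\mybigo{\expect{Z}\cdot G}$, where $G$ is the expected number of flight-visits to $\trespoint$ made by a walk started at $\trespoint$ within $\myTheta{\dist^2}$ phases. Since at $\alpha=3$ the flight is a recurrent planar walk with only logarithmically divergent variance, $G=\mybigo{\log\dist}$, analogously to the $\myTheta{\log t}$ self-returns of a planar random walk in $t$ steps. Combining, $\pr{Z\ge1}=\myOmega{\expect{Z}/\log\dist}=\myOmega{1/\polylog\dist}$, and a careful accounting of the polylogarithmic factors accumulated in the $\expect{Z}$ lower bound and in $G$ gives the stated $\myOmega{1/\log^4\dist}$. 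The main obstacle for \cref{thm:lw3:a} is this correlation bound $\expect{Z^2}=\mybigo{\expect{Z}\log\dist}$: one must control the self-returns of the threshold flight, whose single-jump variance diverges, and carefully relate visits of the non-Markovian \levywalk to those of its embedded \levyflight.
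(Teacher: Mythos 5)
Your overall strategy is sound and is, at its core, the same as the paper's: the paper proves \cref{thm:lw3} by rerunning the proof of \cref{thm:lw23} at $\alpha=3$, i.e.\ part (a) via \cref{lemma:hittinglevyflight} (region decomposition $\AA_1,\AA_2,\AA_3$, monotonicity to compare the target with the annulus average, and division by the expected number of self-returns $a_t(3)=\mybigo{\log^2 t}$ from \cref{lemma:visitsorigin}) combined with the flight-to-walk coupling of \cref{lemma:couplevyflightintowalk}, and part (b) via the per-phase first-moment bound of \cref{lemma:lowbound_superdiffusive}. Your second-moment formulation $\pr{Z\ge1}\ge\expect{Z}^2/\expect{Z^2}$ with $\expect{Z^2}=\mybigo{\expect{Z}\cdot G}$ is exactly the paper's identity $\pr{Z>0}=\expect{Z}/\expect{Z\mid Z>0}$ with $\expect{Z\mid Z>0}\le a_t+1$ in renewal form; note that the paper only establishes $G=\mybigo{\log^2 t}$ (not the $\mybigo{\log t}$ of a simple planar walk), which together with $\expect{Z}=\myOmega{1/\log^2\dist}$ is precisely what produces the exponent $4$.

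Two concrete remarks. First, the step you single out as the main obstacle in part (b) --- the local estimate $\pr{\Lf{i}=u}=\mybigo{i\log\dist/\dist^4}$ for $u$ near the target --- does not require any single-big-jump local large-deviation analysis: the paper obtains it by combining Chebyshev's inequality (the truncated coordinate variance at $\alpha=3$ is $\myTheta{i\log\dist}$, so the flight is at distance $\ge\dist/2$ from the origin with probability $\mybigo{i\log\dist/\dist^2}$) with the monotonicity pigeonhole (conditional on being that far out, any fixed node in $B_{\dist/4}(\trespoint)$ carries at most $\mybigo{1/\dist^2}$ of the mass, since $\myTheta{\dist^2}$ nodes are at least as likely). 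Both ingredients are tools you already invoke, so this obstacle dissolves. Second, in part (a) your annulus $\mathcal{A}=\{u:\dist\le\manhattan{u}\le\dist\polylog\dist\}$ should have its inner boundary taken in the sup-norm, as in the paper's $\AA_1=Q_\dist(\origin)$: \cref{lemma:monotonicity} guarantees $\pr{J_t=\trespoint}\ge\pr{J_t=v}$ only when $\supdist{v}\ge\manhattan{\trespoint}$, and the ``corners'' of $Q_\dist(\origin)\setminus B_\dist(\origin)$ lie in your $\mathcal{A}$ but may be visited more often than $\trespoint$, which would break the averaging step unless they are excluded (and their visits absorbed into the $\AA_1$ budget, as the paper does via \cref{lemma:boundclose}). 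You should also make explicit how unbounded jumps are handled when claiming the walk stays in a ball of radius $\dist\polylog\dist$; the paper does this by conditioning on the event $\EE_t$ that all $t$ jumps are at most $\sqrt{t\log t}$, which holds with probability $1-\mybigo{1/\log t}$, and then decoupling time from the number of jumps via a Chernoff bound on $\sum_j S_j$.
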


The next corollary gives bounds on the parallel hitting time $\tau_{3}^k$.
The proof is similar to that of \cref{cor:plw23}.

\begin{corollary}
	\label{cor:plw3}
	   Let $\trespoint\in \integer^\ast$ and $\ell = \|\trespoint\|_1=\dist$, and  let $k$ be any integer such that $  \mylittleomega{\log^5 \dist} \le k \le \dist^2 / \log^2 \dist $.
Then:
	\begin{enumerate}[(a)]
		\item \label{cor:plw3:a}
		$\pr{\tau_3^k(\trespoint) = \mybigo{\dist^2}} = 1 - e^{-\mylittleomega{\log \dist}}$.
\item \label{cor:plw3:b}
		$ \pr{\tau_3^k(\trespoint) \leq \dist^2/ \sqrt k} = \mylittleo{1}$.
	\end{enumerate}
\end{corollary}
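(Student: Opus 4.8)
The plan is to derive both parts directly from the single–walk bounds of \cref{thm:lw3}, using nothing about the $k$ \levywalks beyond the fact that they are independent and identically distributed. Writing $p(t) = \pr{\tau_3(\trespoint)\le t}$ for the hitting probability of a single walk, independence gives the exact identity
\[
\pr{\tau_3^k(\trespoint)\le t} = 1-\bigl(1-p(t)\bigr)^k .
\]
Thus an \emph{upper} bound on $\tau_3^k$ (part (a)) follows by lower bounding $p(t)$ and exponentiating the complement, while a \emph{lower} bound on $\tau_3^k$ (part (b)) follows by upper bounding $p(t)$ and invoking the union bound $\pr{\tau_3^k(\trespoint)\le t}\le k\,p(t)$. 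This is exactly the template used for \cref{cor:plw23}.

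For part (a) I would take $t=C\dist^2$, with $C$ the constant hidden in \cref{thm:lw3}\cref{thm:lw3:a}, so that $p(C\dist^2)=\myOmega{1/\log^4\dist}$. Then
\[
\pr{\tau_3^k(\trespoint) > C\dist^2} = \bigl(1-p(C\dist^2)\bigr)^{k} \le e^{-k\,p(C\dist^2)} = e^{-\myOmega{k/\log^4\dist}} .
\]
Since $k=\mylittleomega{\log^5\dist}$, the exponent satisfies $k/\log^4\dist=\mylittleomega{\log\dist}$, which yields $\pr{\tau_3^k(\trespoint)=\mybigo{\dist^2}}=1-e^{-\mylittleomega{\log\dist}}$, as claimed. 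Only the lower bound on $k$ is used here; the upper bound $k\le\dist^2/\log^2\dist$ plays no role in (a).

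For part (b) I would instantiate \cref{thm:lw3}\cref{thm:lw3:b} at $t=\dist^2/\sqrt k$. The first step is an admissibility check: the hypotheses on $k$ guarantee $\dist\le t=\mybigo{\dist^2/\log\dist}$, because $k\le\dist^2/\log^2\dist$ forces $t\ge\dist\log\dist$, and $k=\mylittleomega{\log^5\dist}$ forces $\sqrt k=\mylittleomega{\log\dist}$, hence $t=\mylittleo{\dist^2/\log\dist}$. Consequently $p(t)=\mybigo{t^2\log\dist/\dist^4}=\mybigo{\log\dist/k}$ after substituting $t^2=\dist^4/k$, and the union bound gives $\pr{\tau_3^k(\trespoint)\le t}\le k\,p(t)=\mybigo{\log\dist}$ at the bare scale. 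To actually reach $\mylittleo 1$ one threads the polylogarithmic factors exactly as the $\log^4\dist$ factor is threaded through \cref{cor:plw23}\cref{cor:plw23:b}: shrinking the threshold by a $\polylog\dist$ safety factor turns $k\,p(t)$ into $\mybigo{1/\polylog\dist}=\mylittleo 1$, certifying that the parallel walk cannot beat the $\dist^2/\sqrt k$ scale except with vanishing probability.

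The two probabilistic reductions and the substitution $t^2=\dist^4/k$ are routine. The one delicate point — and the main obstacle — is the polylogarithmic bookkeeping in part (b): the union bound evaluated exactly at $\dist^2/\sqrt k$ only produces $\mybigo{\log\dist}$, so the $\polylog\dist$ slack carried in the statement of the threshold must be tracked carefully (mirroring \cref{cor:plw23}) to push the bound below $1$ and down to $\mylittleo 1$.
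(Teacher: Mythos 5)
Your proposal is correct and follows essentially the same route as the paper, which proves \cref{cor:plw3} only by reference (``the proof is similar to that of \cref{cor:plw23}''): part~(a) is exactly the paper's computation, namely $1-(1-p)^k\le e^{-kp}$ applied to \cref{thm:lw3}\cref{thm:lw3:a} with $k=\mylittleomega{\log^5\dist}$, and part~(b) is the same union-bound reduction to \cref{thm:lw3}\cref{thm:lw3:b} used for \cref{cor:plw23}\cref{cor:plw23:b}. The one substantive point you raise is real: at the threshold $t=\dist^2/\sqrt k$ taken literally, \cref{thm:lw3}\cref{thm:lw3:b} gives $k\,p(t)=\mybigo{\log\dist}$, which is vacuous, and no sharper single-walk bound is available in the paper to close this; your fix of inserting a $\polylog\dist$ safety factor into the threshold is precisely what \cref{cor:plw23}\cref{cor:plw23:b} does explicitly (the $/\log^4\dist$ there), so what you have actually proved is the statement with threshold $\dist^2/(\sqrt{k}\,\polylog\dist)$. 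This is a $\sqrt{\log\dist}$-level imprecision in the corollary as stated rather than a flaw in your argument, and it is harmless for the qualitative conclusion the paper draws from part~(b).
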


\cref{cor:plw3}\cref{cor:plw3:a} says that $\tau_3^k(\trespoint) = \mybigo{\dist^2}$, w.h.p., for any $k \geq \polylog\ell$, and \cref{cor:plw3}\cref{cor:plw3:b} provides a very crude lower bound indicating that increasing $k$ beyond $\polylog\ell$, can only result in sublinear improvement.

\subsection{Proof of Theorems \ref{thm:lw23} and \ref{thm:lw3}}
\label{ssec:mainlevy}

We will use the following key lemma, which is shown in  \cref{ssec:analysislevyflight}.
The lemma provides an upper bound on the hitting time of a \levyflight, assuming the maximum jump length is capped to some appropriate value.

\begin{lemma}
    [\levyflight with {$\alpha \in (2,3]$}]
    \label{lemma:hittinglevyflight}
    Let $h_f$ be the hitting time of a \levyflight
for target $\trespoint$ with $\|\trespoint\|_1=\dist$.
    Let $\EE_t$ be the event that each of the first $t$ jumps has length less than $(t\log t)^{1/(\alpha-1)}$.
    Then, there is a $t = \myTheta{\dist^{\alpha - 1}}$ such that:
    \begin{enumerate}[(a)]
        \item $ \pr{h_f \leq t \Mid \EE_t} = \myOmega{1/(\gamma \dist^{3 - \alpha})}$,
        if  $2 < \alpha  \leq 3-\omega(1 / \log \dist)$, where $ \gamma = \frac{(\log \dist)^{\frac{2}{\alpha - 1}}}{(3 - \alpha)^2} $;
        \item $ \pr{h_f \leq t \Mid \EE_t} = \myOmega{1/\log^4 \dist}$, if $\alpha =3$.
    \end{enumerate}
\end{lemma}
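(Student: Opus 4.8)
The plan is to establish the lower bound on $\pr{h_f \le t \mid \EE_t}$ by a second-moment (Paley--Zygmund) argument applied to the visit count $\Zf{\trespoint}{t}$ of the \levyflight. Writing $Z := \Zf{\trespoint}{t}$ and using $\{h_f \le t\} = \{Z \ge 1\}$, we have $\pr{h_f \le t \mid \EE_t} = \pr{Z \ge 1 \mid \EE_t} \ge \expect{Z \mid \EE_t}^2 / \expect{Z^2 \mid \EE_t}$, so it suffices to lower bound the first moment and upper bound the second moment, both under the conditioning. The crucial simplification is that $\EE_t$ is the intersection of the \emph{independent} events $\{d_s < M\}$, $1 \le s \le t$, where $d_s$ is the $s$-th jump length and $M = (t\log t)^{1/(\alpha-1)}$; hence, conditioned on $\EE_t$, the flight is again a \emph{Markov} jump process with i.i.d.\ jumps drawn from \eqref{eq:power-law-distr} truncated at $M$. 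I write $\tilde X_s$ for this conditioned walk: truncation preserves the monotone-radial property, so \cref{lemma:monotonicity} still applies to $\tilde X$, and $\tilde X$ is translation invariant, so the usual return decomposition is available. Throughout I fix $t = \myTheta{\dist^{\alpha-1}}$, chosen so that the natural spatial scale $s^{1/(\alpha-1)}$ of $\tilde X_s$ is $\myTheta{\dist}$ for all $s \in [t/2, t]$.

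For the first moment I avoid a pointwise local limit theorem, combining a soft distributional estimate with monotonicity. Let $A' = \{v : \dist \le \supdist{v} \le C\dist\}$ for a large constant $C$; then $|A'| = \myTheta{\dist^2}$ and every $v \in A'$ satisfies $\supdist{v} \ge \dist = \manhattan{\trespoint}$, so \cref{lemma:monotonicity} gives $\pr{\tilde X_s = \trespoint} \ge \pr{\tilde X_s = v}$ and hence $\pr{\tilde X_s = \trespoint} \ge \pr{\tilde X_s \in A'}/|A'|$. For $\alpha \in (2,3)$ the generalized CLT places the rescaled walk $\tilde X_s / s^{1/(\alpha-1)}$ in the domain of attraction of a full-support $(\alpha-1)$-stable law (a coupling with the untruncated walk---the two differ only when some jump exceeds $M$, which over $t$ steps has probability $\mybigo{1/\log\dist}$ by \eqref{eq:jumpatleast}---transfers this to $\tilde X_s$), and choosing $C$ large makes the stable mass of the annulus, hence $\pr{\tilde X_s \in A'}$, a positive constant for every $s \in [t/2,t]$. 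Summing over these $\myTheta{t}$ steps yields $\expect{Z \mid \EE_t} \ge \sum_s \pr{\tilde X_s \in A'}/|A'| = \myOmega{t/\dist^2} = \myOmega{\dist^{-(3-\alpha)}}$. In the borderline case $\alpha = 3$ the truncated walk has finite variance $\myTheta{s \log M}$, so an ordinary two-dimensional local CLT gives $\pr{\tilde X_s \in A'} = \myTheta{1/\log\dist}$ and thus $\expect{Z \mid \EE_t} = \myOmega{1/\log\dist}$.

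For the second moment I exploit the Markov and translation structure of $\tilde X$: for $s \le s'$, $\pr{\tilde X_s = \trespoint, \tilde X_{s'} = \trespoint} = \pr{\tilde X_s = \trespoint}\,\pr{\tilde X_{s'-s} = \origin}$, so $\expect{Z^2 \mid \EE_t} \le 2\,\expect{Z \mid \EE_t}\cdot G$, where $G = \sum_{r=0}^t \pr{\tilde X_r = \origin}$ is the (truncated) Green's function. Since $\pr{\EE_t} \ge 1/2$ (the expected number of jumps of length $\ge M$ among the first $t$ is $\mybigo{1/\log t}$ by \eqref{eq:jumpatleast}), I bound $\pr{\tilde X_r = \origin} \le 2\,\pr{X_r = \origin}$ by the untruncated return probability and establish $\pr{X_r = \origin} = \mybigo{r^{-2/(\alpha-1)}}$ by a characteristic-function estimate: the tail \eqref{eq:jumpatleast} gives $1 - \hat p(\theta) = \myTheta{\euclidean{\theta}^{\alpha-1}}$ near the origin and $|\hat p(\theta)| \le 1 - \myOmega{1}$ away from it, whence $\pr{X_r = \origin} = \frac{1}{(2\pi)^2}\int \hat p(\theta)^r\,\diff\theta = \mybigo{r^{-2/(\alpha-1)}}$. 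As $2/(\alpha-1) > 1$ for $\alpha < 3$, the sum converges and $G = \mybigo{\sum_{r \ge 1} r^{-2/(\alpha-1)}} = \mybigo{\nu}$ with $\nu = \min\{\log\dist, 1/(3-\alpha)\}$, the hypothesis $3-\alpha = \omega(1/\log\dist)$ realizing $\nu = 1/(3-\alpha)$; for $\alpha = 3$ the same estimate gives $\pr{X_r = \origin} = \mybigo{1/r}$ and $G = \mybigo{\log\dist}$. Plugging into Paley--Zygmund gives $\pr{Z \ge 1 \mid \EE_t} \ge \expect{Z \mid \EE_t}/(2G) = \myOmega{\dist^{-(3-\alpha)}/\nu} = \myOmega{(3-\alpha)\,\dist^{-(3-\alpha)}}$ for $\alpha \in (2,3)$, which is at least $\myOmega{1/(\gamma\dist^{3-\alpha})}$ since $\gamma(3-\alpha) = (\log\dist)^{2/(\alpha-1)}/(3-\alpha) \ge 1$; and it gives $\myOmega{1/\log^2\dist} = \myOmega{1/\log^4\dist}$ for $\alpha = 3$.

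The main obstacle is the anti-concentration input, above all the local (pointwise) return estimate $\pr{X_r = \origin} = \mybigo{r^{-2/(\alpha-1)}}$ that drives the Green's function bound: unlike the first moment, it cannot be extracted from monotonicity (which only identifies $\origin$ as the mode) and genuinely requires Fourier analysis of a two-dimensional heavy-tailed lattice walk, including the delicate small-$\theta$ expansion $1 - \hat p(\theta) = \myTheta{\euclidean{\theta}^{\alpha-1}}$ and the control of its dependence on $3-\alpha$ (which produces the $\nu$ factor). The secondary difficulty is making the ``constant mass in $A'$'' statement hold uniformly over all $s \in [t/2,t]$ and transferring it rigorously from the untruncated to the truncated walk; this is precisely what forces the restriction $3-\alpha = \omega(1/\log\dist)$ in part~(a) and the separate finite-variance treatment of the threshold $\alpha = 3$.
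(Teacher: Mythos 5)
Your proposal reaches the lemma by the same ultimate reduction as the paper --- a lower bound on $\expect{\Zf{\trespoint}{t}\mid\EE_t}$ divided by an upper bound on the expected number of returns to the origin --- but it obtains both ingredients by genuinely different means. Your Paley--Zygmund step with the Markov/translation decomposition $\expect{Z^2\mid\EE_t}\le 2\,\expect{Z\mid\EE_t}\cdot G$ is essentially equivalent to the paper's identity $\pr{Z>0}=\expect{Z}/\expect{Z\mid Z>0}$ combined with $\expect{Z\mid Z>0}\le\expect{\Zf{\origin}{t}\mid\EE_t}+1$; that part is fine. The real divergence is upstream. For the first moment the paper never invokes a limit theorem for the position of the walk: it runs a counting argument (total visits $=t$, minus at most $ct$ visits to $Q_\dist(\origin)$ by an escape-and-symmetry argument, minus $\mybigo{t/((3-\alpha)\log t)}$ visits to the far region by Chebyshev, divided by $|\AA_2|$ via monotonicity). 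For the Green's function the paper again avoids Fourier analysis, using a block-of-time escape argument plus monotonicity to get $\expect{\Zf{\origin}{t}\mid\EE_t}=\mybigo{1/(3-\alpha)^2}$. Your route replaces these with a stable CLT for the annulus mass and a characteristic-function local return estimate $\pr{X_r=\origin}=\mybigo{r^{-2/(\alpha-1)}}$. If those inputs are established with the right uniformity, your bound $\myOmega{(3-\alpha)\dist^{-(3-\alpha)}}$ is in fact sharper than what the paper proves, and comfortably implies the lemma.

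The weak point is exactly the one you flag, and it is worse than ``delicate'': the claim that $\pr{\tilde X_s\in A'}$ is a positive constant for $A'=\{v:\dist\le\supdist{v}\le C\dist\}$ is false as stated once $\alpha$ is allowed to approach $3$ with $\dist$ (which the lemma permits, requiring only $3-\alpha=\omega(1/\log\dist)$). The truncated walk's typical displacement after $s=\myTheta{\dist^{\alpha-1}}$ steps is governed by the truncated variance $\myTheta{M^{3-\alpha}/(3-\alpha)}$ per step, i.e.\ the walk spreads to scale roughly $\dist/\sqrt{3-\alpha}\gg\dist$, so the mass of a width-$C\dist$ annulus is only $\myTheta{3-\alpha}$, not $\myTheta{1}$. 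This costs you a factor $3-\alpha$ in $\expect{Z\mid\EE_t}$ (still absorbed by the $(3-\alpha)^{-2}$ inside $\gamma$, so the lemma survives), but it means the ``soft'' weak-convergence argument must be replaced by a quantitative estimate with explicit dependence on $3-\alpha$ --- at which point you are essentially forced back to Chebyshev-type bounds like the paper's. The same caveat applies to the constant in $1-\hat p(\theta)=\myTheta{\euclidean{\theta}^{\alpha-1}}$, which degenerates as $\alpha-1\to2$; the Green's function bound $G=\mybigo{1/(3-\alpha)}$ needs that dependence tracked. Neither issue is fatal, but both asserted inputs require proofs that are at least as long as the elementary arguments the paper uses in their place.
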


The second lemma we need is an upper bound on the hitting time of a \levywalk in terms of the hitting time of the capped \levyflight considered above.
The proof proceeds by coupling the two processes, and is given in \cref{ssec:tech_contrib_coupling}.

\begin{lemma}
\label{lemma:couplevyflightintowalk}
    Let $h_f$  and $\EE_t$ be defined as in \cref{lemma:hittinglevyflight}, and let $\tau_\alpha (\trespoint)$ be the hitting time of a \levywalk with the same exponent $\alpha$,
    for the same target $\trespoint$. Then, for any $2 <\alpha \leq 3$ and step $t$, and for $ \mu = \min\{\log \dist, \frac{1}{\alpha - 2}\} $,
    \[
        \pr{\tau_\alpha (\trespoint) = \mybigo{\mu t}}
        \geq
        \left (1 - \mybigo{{1}/{\log t}}\right )
        \cdot\left (
            \pr{h_f \leq t \Mid \EE_t}
            - e^{-t^{\myTheta{1}}}
        \right )
        .
    \]

\end{lemma}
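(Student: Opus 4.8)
The plan is to couple the \levyflight and the \levywalk so that they make \emph{identical} jumps, thereby turning a hit of the flight (counted in number of jumps) into a hit of the walk (counted in number of steps), and then to control how many steps the walk spends completing its first $t$ jump-phases. The event $\EE_t$ caps the jump lengths precisely so that this step count is well concentrated.

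\textbf{The coupling.} First I would place both processes on a common probability space, sharing the same sequence of jump distances $d_1,d_2,\dots$ (drawn from \eqref{eq:power-law-distr}) and the same destinations. Under this coupling the flight's position after $j$ jumps, $\flightrand{j}$, equals the walk's position at the end of its $j$-th jump-phase, and $\EE_t$ is the same event for both. Hence if $h_f\le t$, i.e. $\flightrand{j}=\trespoint$ for some $j\le t$, then the walk sits on $\trespoint$ at the end of its $j$-th jump-phase. Writing $S_t$ for the number of steps the walk takes to finish its first $t$ jump-phases, this yields the deterministic inclusion $\{h_f\le t\}\subseteq\{\tau_\alpha(\trespoint)\le S_t\}$ on the coupled space (the walk may hit $\trespoint$ even earlier, while traversing a direct-path, which only helps).

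\textbf{Controlling the step count.} Since a jump of length $d$ costs $\max\{d,1\}$ steps, $S_t\le t+\sum_{i=1}^t d_i$. I would show that, conditioned on $\EE_t$, $S_t=\mybigo{\mu t}$ with overwhelming probability. Conditioned on $\EE_t$ the $d_1,\dots,d_t$ are i.i.d., each distributed as $d$ capped below $M:=(t\log t)^{1/(\alpha-1)}$. A short computation gives $\expect{d\Mid d<M}=\mybigo{\mu}$ — bounding $\sum_{i<M} i^{1-\alpha}=\myTheta{\min\{1/(\alpha-2),\log M\}}$ both by $\expect{d}=\myTheta{1/(\alpha-2)}$ and by $\myTheta{\log M}=\myTheta{\log\dist}$ in the relevant regime $t=\poly(\dist)$ — so, as $\mu=\myOmega{1}$, $\expect{S_t\Mid\EE_t}=\mybigo{\mu t}$. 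Because the summands are bounded by $M$ with $\expect{d^2\Mid d<M}=\myTheta{M^{3-\alpha}/(3-\alpha)}$, Bernstein's inequality with deviation $\lambda=\myTheta{\mu t}$ gives $\pr{S_t\ge C\mu t\Mid\EE_t}\le e^{-t^{\myTheta{1}}}$ for a suitable constant $C$: both the variance term $\lambda^2/\variance{S_t}$ and the range term $\lambda/M$ are polynomial in $t$, since the exponents $1-\frac{3-\alpha}{\alpha-1}=\frac{2(\alpha-2)}{\alpha-1}$ and $1-\frac{1}{\alpha-1}=\frac{\alpha-2}{\alpha-1}$ are strictly positive for $\alpha\in(2,3]$. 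This is the step I expect to be the main obstacle, both because it is where the quantity $\mu$ genuinely enters and because one must check the Bernstein exponent is $t^{\myTheta{1}}$ uniformly in $\alpha$ (the edge $\alpha\to 2$ being the delicate one).

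\textbf{Assembling the bound.} Set $G:=\{S_t\le C\mu t\}$. The deterministic inclusion gives $\pr{\tau_\alpha(\trespoint)=\mybigo{\mu t}}\ge\pr{\{h_f\le t\}\cap\EE_t\cap G}$. Factoring out $\EE_t$ and using the elementary bound $\pr{X\cap G\mid\EE_t}\ge\pr{X\mid\EE_t}-\pr{\overline G\mid\EE_t}$ (valid for any events $X,G$),
\[
\pr{\{h_f\le t\}\cap\EE_t\cap G}\ge\pr{\EE_t}\left(\pr{h_f\le t\Mid\EE_t}-\pr{\overline G\Mid\EE_t}\right).
\]
Finally I would bound $\pr{\EE_t}=1-\mybigo{1/\log t}$ via a union bound, $\pr{\overline{\EE_t}}\le t\cdot\pr{d\ge M}=t\cdot\myTheta{1/(t\log t)}=\myTheta{1/\log t}$, and substitute $\pr{\overline G\Mid\EE_t}\le e^{-t^{\myTheta{1}}}$ from the previous paragraph, which reproduces exactly the claimed inequality.
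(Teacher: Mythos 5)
Your proposal is correct and follows essentially the same route as the paper: condition on $\EE_t$, bound $\expect{\sum_j S_j \mid \EE_t} = \mybigo{\mu t}$ via the integral test, show the total step count exceeds $C\mu t$ with probability at most $e^{-t^{\myTheta{1}}}$ by a concentration bound on the capped jump lengths, and combine with $\pr{\EE_t} = 1 - \mybigo{1/\log t}$ through the same inclusion--exclusion step. The only cosmetic difference is that you invoke Bernstein's inequality where the paper rescales the capped jumps to $[0,1]$ and applies the multiplicative Chernoff bound; both yield the same $e^{-t^{\myTheta{1}}}$ tail, and the delicacy you flag as $\alpha \to 2$ is present in the paper's exponent $t^{(\alpha-2)/(2(\alpha-1))}$ as well.
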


The last lemma we need is the following lower bound on the hitting time of a \levywalk, proved in \cref{sec:lowbound_superdiffusive}.

\begin{lemma}
    \label{lemma:lowbound_superdiffusive}
Let $\trespoint\in\integer^2$ and $\ell= \manhattan{\trespoint}$.
    For any step $t\geq \dist$,
	\begin{enumerate}[(a)]
        \item
        $\pr{\tau_\alpha (\trespoint) \leq t}
        =
        \mybigo{ \frac{\nu \mu t^2}{\dist^{\alpha + 1}}}$ if $ \alpha \neq 3 $ and $ t = \mybigo{\dist^{\alpha - 1}/\nu} $, where $ \nu = \min \{\log \dist, \frac{1}{3 - \alpha}\} $ and $ \mu = \min\{\log \dist, \frac{1}{\alpha - 2}\} $;
        \label{lemma:lowbound_superdiffusive:a}

        \item
        $\pr{\tau_\alpha (\trespoint) \leq t}
        =
	    \mybigo{\frac{t^2\log\dist}{\dist^{\alpha+ 1}}}$
        if $\alpha = 3$ and $ t = \mybigo{\dist^2/\log\dist} $.
        \label{lemma:lowbound_superdiffusive:B}
	\end{enumerate}
\end{lemma}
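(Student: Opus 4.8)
The plan is to upper bound $\pr{\tau_\alpha(\trespoint)\le t}$ by the expected number of jump-phases that start within the first $t$ steps and visit $\trespoint$, just as in the proof of \cref{lemma:prob_anyjumpfindstreasure}, but now keeping track of \emph{when} each jump-phase occurs. Recall that the starting nodes of the successive jump-phases of the \levywalk are exactly the positions $\levyflightrand{0},\levyflightrand{1},\dots$ of the associated \levyflight, and that by \cref{cor:visit-direct-path} a jump-phase started at a node $w\ne\trespoint$ visits $\trespoint$ with probability $\mybigo{1/\manhattan{w-\trespoint}^\alpha}$ (and with probability $1$ if $w=\trespoint$). Writing $M$ for the number of jump-phases contained in the first $t$ steps (so $M\le t$ always, since every phase lasts at least one step),
\[
  \pr{\tau_\alpha(\trespoint)\le t}
  \le
  \sum_{i\ge 0}\sum_{w}\pr{\levyflightrand{i}=w,\ i<M}\cdot\mybigo{\tfrac{1}{\manhattan{w-\trespoint}^\alpha}}.
\]
I would then split the inner sum according to whether the start $w$ is far from the target, $\manhattan{w-\trespoint}> \dist/2$, or close to it, $\manhattan{w-\trespoint}\le \dist/2$.

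The far contribution is easy: each such phase hits $\trespoint$ with probability $\mybigo{1/\dist^\alpha}$, and since $M\le t$, the total is $\mybigo{t/\dist^\alpha}$, which is dominated by the claimed bound whenever $t\ge\dist$ (as $\nu\mu\ge 1$). The close contribution is the crux. Every admissible $w$ here satisfies $\manhattan{w}\ge\dist/2$, hence lies at $\ell_\infty$-distance at least $\dist/4$ from the origin, so the monotonicity property (\cref{lemma:monotonicity}) bounds the flight's occupation probability of each such node by $\mybigo{1/\dist^2}$. Grouping the $\myTheta{r}$ nodes at distance $r$ from $\trespoint$ and summing $r^{-\alpha}$ over $1\le r\le\dist/2$ reproduces exactly the per-phase bound $\mybigo{\mu/\dist^2}$ of \cref{lemma:prob_anyjumpfindstreasure}, using $\sum_{r\ge1}r^{1-\alpha}=\myTheta{\mu}$. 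A naive union bound over the at most $t$ phases would therefore give only $\mybigo{\mu t/\dist^2}$, weaker than the target by the factor $\dist^{\alpha-1}/(\nu t)\ge 1$.

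To recover this factor I would replace the uniform occupation bound $\mybigo{1/\dist^2}$ by one reflecting how unlikely it is to be near $\trespoint$ after only a few jumps. The key estimate is
\[
  \sum_{i=0}^{t-1}\pr{\levyflightrand{i}=w}
  =
  \mybigo{\nu\, t^2/\dist^{\alpha+1}}
  \qquad\text{for every }w\text{ with }\manhattan{w}\ge \dist/2 .
\]
The intuition is that reaching distance $\myTheta{\dist}$ within $i\le t=\mybigo{\dist^{\alpha-1}/\nu}$ jumps essentially forces one atypically long jump, of length at least $\dist/8$, whose per-jump probability is $\myTheta{1/\dist^{\alpha-1}}$ by \cref{eq:jumpatleast}; conditioned on such a jump having occurred, monotonicity again pins the occupation probability of a fixed far node at $\mybigo{1/\dist^2}$, giving $\pr{\levyflightrand{i}=w}=\mybigo{i/\dist^{\alpha+1}}$ and hence the displayed sum after summing over $i<t$ (the factor $\nu$ absorbing moderate jumps and, as $\alpha\to 3$, the slowly converging sum over jump-length scales). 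Substituting this into the close contribution gives
\[
  \sum_{1\le r\le \dist/2} \mybigo{\tfrac{1}{r^\alpha}}\cdot\myTheta{r}\cdot\mybigo{\tfrac{\nu t^2}{\dist^{\alpha+1}}}
  = \mybigo{\tfrac{\nu t^2}{\dist^{\alpha+1}}}\sum_{r\ge1}r^{1-\alpha}
  = \mybigo{\tfrac{\nu\mu\, t^2}{\dist^{\alpha+1}}}
\]
(the $r=0$ term, a phase starting at $\trespoint$ itself, is of the same order and is subsumed), which together with the far contribution proves part~\cref{lemma:lowbound_superdiffusive:a}. Part~\cref{lemma:lowbound_superdiffusive:B} is the specialization $\alpha=3$, where $\mu=\myTheta{1}$ and $\nu=\log\dist$, so $\nu\mu=\myTheta{\log\dist}$.

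The main obstacle is this refined occupation estimate, and specifically its conditional monotonicity step: \cref{lemma:monotonicity} governs the \emph{unconditioned} flight, whereas conditioning on one jump being long destroys the monotone-radial structure at that step. I would handle it either by a robust version of monotonicity, showing that convolving a radially non-increasing occupation law with the uniform law on an annulus $R_d$ keeps the maximal single-node probability $\mybigo{1/\dist^2}$, or by a direct large-deviation bound ruling out that the flight reaches distance $\dist/2$ within $t\le\dist^{\alpha-1}/\nu$ jumps in the absence of any jump of length $\ge\dist/8$. Pinning down the exact dependence on $\nu=\min\{\log\dist,1/(3-\alpha)\}$ and $\mu=\min\{\log\dist,1/(\alpha-2)\}$, rather than crude $\polylog\dist$ factors, also requires summing carefully across the dyadic jump-length scales, which is where the borderline behaviour at $\alpha=2$ and $\alpha=3$ (captured by $\mu$ and $\nu$) enters.
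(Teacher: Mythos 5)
Your proposal is correct and follows essentially the same route as the paper's proof: the paper likewise bounds each phase's hit probability by conditioning on the event $W_{i-1}$ that the walk is at distance greater than $\dist/2$ from the origin (probability $\mybigo{\nu t/\dist^{\alpha-1}}$ via the same one-long-jump-or-Chebyshev decomposition), uses monotonicity to pin the conditional occupation probability of each node of $B_{\dist/4}(\trespoint)$ at $\mybigo{1/\dist^2}$, sums $\mybigo{r^{1-\alpha}}$ over annuli around $\trespoint$ to extract the factor $\mu$, and union-bounds over the at most $t$ jump-phases. The conditional-monotonicity obstacle you flag is resolved exactly by your first alternative: the paper conditions on the \emph{position} being far from the origin rather than on a specific jump being long, and observes that there are $\myTheta{\dist^2}$ nodes consistent with that conditioning, each at least as likely to be occupied as the given node near the target.
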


\begin{proof}[\bf Proof of \cref{thm:lw23}]
    \cref{lemma:hittinglevyflight,lemma:couplevyflightintowalk} imply that, for some $ t = \myTheta{\dist^{\alpha - 1}} $,
    \[
    	\pr{\tau_\alpha(\trespoint) = \mybigo{\mu t}} = \myOmega{\frac{(3-\alpha)^2}{\dist^{3-\alpha}\log^{\frac{2}{\alpha - 1}}\dist}},
    \]
    obtaining (a).
Parts (b) and (c) follow from \cref{lemma:lowbound_superdiffusive} and \cref{lemma:prob_never_find_target}, respectively.
\end{proof}

\begin{proof}[\bf Proof of \cref{thm:lw3}]
    The proof proceeds in exactly the same way as the proof of \cref{thm:lw23}, using \cref{lemma:hittinglevyflight,lemma:couplevyflightintowalk} to show  \cref{thm:lw3:a}, and \cref{lemma:prob_never_find_target} to show \cref{thm:lw3:b}.
\end{proof}

\subsection{Proof of Lemma \ref{lemma:hittinglevyflight}} 

\label{ssec:analysislevyflight}

We first define some notation.
Then we give an overview of the analysis, before we provide the detailed proof.

Recall that $(\Lf{i})_{i\geq0}$ denotes the \levyflight process, and $\Zf{u}{i} = |\{j \colon \levyflightrand{j} = u\}\cap\{1,\ldots,i\}|$ is the number of visits to node $u$ in the first $i$ steps.
Let $t = \myTheta{\dist^{\alpha - 1}}$
be a step to be fixed later.
For each $i\geq 1$, let $S_i$ be the length of the $i$-th jump of the \levyflight, i.e.,
\[
    S_i = \|\Lf{i}-\Lf{i-1}\|_1.
\]
Define also the events
\[
    E_i = \{ S_i \le (t\log t)^{{1}/{(\alpha-1)}}\},
\]
and let $\EE_i = \bigcap_{j=1}^i E_j$.
For each node $u$ and $i\geq 0$, let
\[
    p_{u,i} \ = \pr{\Lf{i} = u \ \middle|\ \EE_i},
\]
and note that
$
\expect{\Zf{u}{i} \Mid \EE_i}
    =
    \sum_{j=0}^i p_{u,j}.
$
We  partition the set of nodes into disjoint sets $\AA_1,\AA_2,\AA_3$ defined as follows:
\[
  \begin{gathered}
    \firstregion = \{v \colon \|v\|_\infty \le \dist\}
\\
    \AA_2 =
    \begin{cases}
        \{v \colon \manhattan{v}
        \le 2(t\log t)^{{1}/{(\alpha-1)}}\}
        \setminus \AA_1
            &\mbox{if } \alpha \in (2,3) \\
        \{v \colon \manhattan{v}
        \le 2\sqrt t\log t\}
        \setminus \AA_1
            & \mbox{if } \alpha = 3
    \end{cases}
\\
    \thirdregion = \integer^2\setminus (\AA_1\cup\AA_2).
  \end{gathered}
\]

\subsubsection{Proof Overview}
\label{sec:overview}

We discuss just the case of $\alpha\in(2,3)$; the case of $\alpha= 3$ is similar.
We assume all probability and expectation quantities below are conditional on the event $\EE_t$, and we omit writing this conditioning explicitly.

First, we show a simple upper bound on the mean number of visits to $\AA_1$ until step $t = \myTheta{\dist^{\alpha - 1}}$, namely,
\[
\sum_{v\in \AA_1} \expect{\Zf{v}{t}} \le ct
,
\]
for a constant $c < 1$: with constant probability the walk visits a node outside $\AA_1$ in the first $t/2$ steps, and after that at most a constant fraction of steps visit nodes in $\AA_1$, by symmetry.

To bound the mean number of visits to $\AA_2$, we use the monotonicity property from \cref{sec:monotonicity-bounds} and the fact that
$\|v\|_\infty \geq \dist = \|\trespoint\|_1$ for all $v\in\AA_2$, to obtain
\[
\sum_{v\in \AA_2}
\expect{\Zf{v}{t}}\le
|\AA_2|\cdot
\expect{\Zf{\trespoint}{t}}\leq
4(t\log t)^{1/(\alpha-1)}\cdot
\expect{\Zf{\trespoint}{t}}.
\]

For the number of visits to $\AA_3$ we obtain the following bound using Chebyshev's inequality, for a constant $c'$,
\[
\sum_{v\in \AA_3} \expect{\Zf{v}{t}}\leq
c' {t}/{\left((3-\alpha)\log t\right)}
.
\]

From the above results, and the fact that the total number of visits to all three sets is $t$, we get
\[
ct
+
4(t\log t)^{1/(\alpha-1)}\cdot
\expect{\Zf{\trespoint}{t}}+
c'{t}/{[(3-\alpha)\log t]}
\geq
t,
\]
which implies
\[
\expect{\Zf{\trespoint}{t}}=
\Omega\left(
t^\frac{\alpha-3}{\alpha-1}\cdot (\log t)^{-\frac{2}{\alpha-1}}
\right)
\]
if $ 3 - \alpha = \omega(1 / \log t) $.
We can express the probability of $h_f\leq t$ in terms of the above mean as

\begin{align*}
	\pr{h_f \leq t} & = \pr{\Zf{\trespoint}{t}>0} \\
	& = \expect{\Zf{\trespoint}{t}}/\,
	\expect{\Zf{\trespoint}{t} \ \middle|\ \Zf{\trespoint}{t}>0}.
\end{align*}
We have
\[
\expect{\Zf{\trespoint}{t} \ \middle|\ \Zf{\trespoint}{t}>0}
\leq
\expect{\Zf{\origin}{t}}+1
.
\]
We also compute
\[
\expect{\Zf{\origin}{t}} = \bigo(1/(3-\alpha)^2).
\]
Combining the last four equations
yields
\[
\pr{h_f \leq t}
=
\myOmega{t^\frac{\alpha-3}{\alpha-1}\cdot (\log t)^{-\frac{2}{\alpha-1}}\cdot (3-\alpha)^2},
\]
and substituting $t = \myTheta{\dist^{\alpha - 1}}$ completes the proof.

\subsubsection{Detailed Proof}
We give now the details of the analysis.
Throughout the section $2 < a \leq 3-\omega(1 / \log \dist)$.

\begin{lemma}[Bound on the visits to $\firstregion$]
    \label{lemma:boundclose}
    For any $t= \Theta(\dist^{\alpha-1})$ large enough, there is a constant $c\in (0,1)$ such that
    \[
        \sum_{v\in Q_{\dist}(\origin)} \expect{\levyflightvisits{v}{t}\Mid \EE_t} \le ct .
    \]
\end{lemma}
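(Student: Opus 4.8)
The plan is to rewrite the quantity as a sum of one-step occupation probabilities and then to bound, for a constant fraction of the steps, the probability that the flight lies outside $Q_{\dist}(\origin)=\firstregion$ by a constant. Since the flight makes exactly one jump per step, $\sum_{v}\levyflightvisits{v}{t}=t$ deterministically, and by linearity
\[
  \sum_{v\in Q_{\dist}(\origin)}\expect{\levyflightvisits{v}{t}\Mid\EE_t}
  =\sum_{i=1}^{t}\pr{\Lf{i}\in Q_{\dist}(\origin)\Mid\EE_t}.
\]
So it suffices to show the right-hand side is at most $ct$ for some constant $c<1$. I would reduce to a single coordinate: since $Q_{\dist}(\origin)=\{v:\|v\|_\infty\le\dist\}$, membership forces $|(\Lf{i})_x|\le\dist$, hence $\pr{\Lf{i}\in Q_{\dist}(\origin)\mid\EE_t}\le\pr{|S^{x}_{i}|\le\dist\mid\EE_t}$, where $S^{x}_{i}=\sum_{j\le i}\eta_j$ and $\eta_j$ is the (axis-symmetric) $x$-displacement of the $j$-th jump. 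It therefore suffices to bound this one-dimensional concentration probability away from $1$ for a constant fraction of the steps.

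The heart of the argument is a reflection applied to the \emph{largest} jump. Conditioned on $\EE_t$ the jumps are independent, each with a law symmetric under $x\mapsto-x$ (the cap constrains only $\ell_1$-lengths). Let $M_i=\max_{j\le i}|\eta_j|$, with maximizing index $J$. Conditioning on the absolute values $\{|\eta_j|\}_{j\le i}$ (which determine $J$ and $M_i$) and on the signs of all $\eta_j$ with $j\ne J$, the remaining sign of $\eta_J$ is still uniform in $\{-1,+1\}$ by the axis-reflection symmetry of jump $J$. Writing $S^{x}_{i}=R\pm M_i$ with $R$ fixed, on $\{M_i>\dist\}$ the two inequalities $|R+M_i|\le\dist$ and $|R-M_i|\le\dist$ cannot hold simultaneously (together they force $2M_i\le2\dist$), so at most one of the two equally likely signs lands in $[-\dist,\dist]$. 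Taking expectations gives
\[
  \pr{|S^{x}_{i}|\le\dist\Mid\EE_t}\le 1-\tfrac12\,\pr{M_i>\dist\Mid\EE_t}.
\]

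It remains to lower-bound $\pr{M_i>\dist\mid\EE_t}$ by a constant once $i\ge t/2$. Because an $x$-displacement exceeding $\dist$ requires $\ell_1$-jump length exceeding $\dist$, \eqref{eq:jumpatleast} gives $\pr{|\eta_j|>\dist}=\myTheta{\dist^{-(\alpha-1)}}=\myTheta{1/t}$, and since the cap $(t\log t)^{1/(\alpha-1)}$ exceeds $\dist$ by a polylog factor, conditioning on $E_j$ changes this only by a constant. By independence, $\pr{M_i>\dist\mid\EE_t}=1-(1-\myTheta{1/t})^{i}\ge1-e^{-\myTheta{1}}=:\delta_0>0$ for $i\ge t/2$; this is precisely the ``reaches outside $\firstregion$ within $t/2$ steps with constant probability'' phenomenon of the overview, read off the single largest jump. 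Summing the per-step bound and keeping only the terms with $i\ge t/2$ yields $\sum_{i=1}^t\pr{\Lf{i}\in Q_{\dist}(\origin)\mid\EE_t}\le t-\tfrac12\cdot\tfrac{t}{2}\,\delta_0=(1-\delta_0/4)\,t$, so $c=1-\delta_0/4\in(0,1)$ works; as $\delta_0$ depends only on the hidden constant in $t=\myTheta{\dist^{\alpha-1}}$, this holds for any such $t$ large enough.

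The step I expect to be the main obstacle is exactly the passage from ``a long jump is likely'' to ``the position is outside $Q_{\dist}(\origin)$ at the fixed time $i$''. The escape-plus-symmetry sketch must contend with the flight re-entering $Q_{\dist}(\origin)$ after it first leaves, which makes a direct occupation-time bound awkward; the reflection-of-the-largest-jump device sidesteps this by controlling the one-dimensional marginal $S^{x}_{i}$ at each individual time. The remaining care is in the conditioning: one must check that conditioning on $\EE_t$ preserves both the independence of the jumps and the per-axis symmetry that makes the sign flip measure-preserving, and that using $\EE_t$ rather than $\EE_i$ is harmless because jumps $i+1,\dots,t$ are independent of $S^{x}_{i}$.
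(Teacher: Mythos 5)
Your proof is correct, but it takes a genuinely different route from the paper's. The paper argues globally about the trajectory: with probability at least $3/4$ a jump of length at least $5\dist$ occurs within the first $t/4$ steps, forcing the flight to distance at least $\tfrac{5}{2}\dist$; from that point on it invokes the monotonicity property (\cref{lemma:monotonicity}), re-centered at the current position, to exhibit three disjoint regions each at least as likely as $Q_{\dist}(\origin)$ to be visited at any future step, so the per-step occupation probability of $Q_{\dist}(\origin)$ after the exit is at most $1/4$; combining the cases gives the constant $37/64$. You instead bound each marginal $\pr{\Lf{i}\in Q_{\dist}(\origin)\mid\EE_t}$ directly by a one-dimensional symmetrization on the largest $x$-displacement: conditioning on all absolute values and all other signs leaves the sign of the maximal jump uniform, and on $\{M_i>\dist\}$ at most one sign keeps $|S^x_i|\le\dist$, giving $\pr{|S^x_i|\le\dist\mid\EE_t}\le 1-\tfrac12\pr{M_i>\dist\mid\EE_t}$, with $\pr{M_i>\dist\mid\EE_t}$ bounded below by a constant for $i\ge t/2$ since each capped jump has $x$-displacement exceeding $\dist$ with probability $\Theta(1/t)$. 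Your route is more self-contained (it does not use \cref{lemma:monotonicity} or the re-centering step, whose "three disjoint regions" claim is only justified by a figure in the paper), it sidesteps the re-entry issue explicitly rather than through monotonicity at every future time, and it works for any $t=\Theta(\dist^{\alpha-1})$ with the constant $c$ depending on the hidden constant, whereas the paper needs $t$ to be a sufficiently large multiple of $\dist^{\alpha-1}$. What the paper's approach buys in exchange is reuse of machinery already established for the other regions $\AA_2,\AA_3$. The conditioning points you flag are indeed the ones to check, and they all go through: $\EE_t$ factors over the jumps, so independence and per-axis symmetry are preserved, and the extra conditioning on $E_{i+1},\dots,E_t$ does not affect the law of $S^x_i$.
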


\begin{proof}We bound  the probability the walk has moved to distance $\frac{5}{2} \dist$ at least once, within time $t=\Theta\left(\dist^{\alpha-1}\right)$,    by the probability that at least one of the performed jumps  is no less than $5 \dist $ (we denote this
    latter event by $H$). Indeed, if there is a jump of length at least $5\lambda$, the walk moves necessarily to distance no less than $\frac{5}{2}\dist$. Then,
    \begin{align*}
        \pr{S_j \ge 5 \dist \mid S_j \le (t\log t)^\frac{1}{\alpha-1} } = & \ \sum_{k=5 \dist}^{(t\log t)^\frac{1}{\alpha-1}} \frac{\levyconst}{k^\alpha} \\
        \stackrel{(\ast)}{\ge} & \ \frac{\levyconst}{\alpha-1}\left(\frac{1}{(5\dist)^{\alpha-1}} - \frac{1}{t\log t}  \right) \\
        \stackrel{(\star)}{\ge} & \ \frac{\levyconst}{2(\alpha-1)(5\dist)^{\alpha-1}},
    \end{align*}
    where $(\ast)$ follows  for the integral test (\cref{fact:integraltest}), while $(\star)$  easily  holds for a large enough
      $\dist$  since $t = \Theta(\dist^{\alpha-1})$.
    Thanks to the mutual  independence among the random destinations chosen by the agent,   the probability of the event ``the desired jump takes place within time $c'\cdot 2(\alpha-1)(5\dist)^{\alpha-1}/\levyconst$''  is bounded by
    \[
        1-\left[1- \frac{\levyconst}{2(\alpha-1)(5\dist)^{\alpha-1}}\right]^{c'\frac{2(\alpha-1)(5\dist)^{\alpha-1}}{\levyconst}} \ge \frac{3}{4},
    \]
    for some constant  $c'>0$ and for $\dist$ large enough.
     Hence, by choosing  $t \ge 4c'\cdot 2(\alpha-1)(5\dist)^{\alpha-1}/\levyconst$, the desired jump takes place with probability $\frac 3 4$, within time $\frac t 4$. Once reached such a distance (conditional on the previous  event), \cref{fig:otherzones01} shows there are at least other 3 mutually disjoint regions which are  at least as equally likely as $Q_{\dist}(\origin)$ to be visited at any future time.

    \begin{figure}[t]

        \centering
        \includegraphics[scale=0.75]{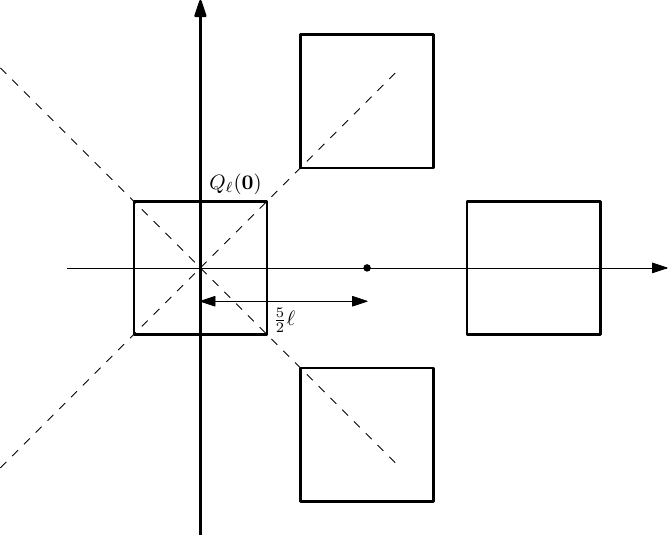}
        \caption{The disjoint zones at least as equally likely as $Q_{\dist}(\origin)$ to be visited.}
        \label{fig:otherzones01}
    \end{figure}
    Thus, the probability to visit $Q_{\dist}(\origin)$ at any future time step is at most $\frac 1 4$. Observe that
    \begin{align*}
        \expect{\sum_{v\in Q_{\dist}(\origin)} \levyflightvisits{v}{t} \mid \EE_t} = & \expect{\sum_{v\in Q_{\dist}(\origin)} \levyflightvisits{v}{t} \mid H, \EE_t }\pr{H \mid \EE_t} + \expect{\sum_{v\in Q_{\dist}(\origin)} \levyflightvisits{v}{t} \mid H^C, \EE_t }\pr{H^C \mid \EE_t}  \\
        \le & \ \left(\frac{1}{4}t + \frac{1}{4}\cdot \frac{3}{4}t\right)\frac{3}{4} + t\cdot \frac{1}{4} \\
        = & \ \frac{t}{4}\left(1 + \frac{3}{4}+\frac{9}{16}\right)
        =  \ \frac{37}{64}t ,
    \end{align*}
    and    the proof is completed.
\end{proof}

For the rest of \cref{ssec:analysislevyflight}, let $ t = \myTheta{\dist^{\alpha - 1}} $ as in \cref{lemma:visitsorigin}
\begin{remark}\label{remark:conditional_monotonicity}
	The monotonicity property (\cref{lemma:monotonicity}) holds despite the conditional event $ \EE_t $. The proof is exactly the same.
\end{remark}

Notice that, from $\expect{\levyflightvisits{v}{t}\mid \EE_t} = \sum_{i=0}^t p_{v,i}$ and the monotonicity property, we easily get  the following bound.

\begin{corollary}\label{corollary:induction}
    $\expect{\levyflightvisits{u}{t}\mid \EE_t} \ge \expect{\levyflightvisits{v}{t}\mid \EE_t}$ for all $v\notin Q_{\distu}(\origin)$.
\end{corollary}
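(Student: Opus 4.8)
The plan is to reduce the comparison of expected visit counts to a termwise comparison of single-step occupation probabilities, and then invoke the monotonicity property. Recall from the setup that for any node $w$ and any $j \le t$ one has $\pr{\Lf{j} = w \mid \EE_t} = p_{w,j}$, where $p_{w,j} = \pr{\Lf{j} = w \mid \EE_j}$; this holds because the event $\{\Lf{j}=w\}$ together with $E_1,\dots,E_j$ depends only on the first $j$ (mutually independent) jumps, whereas $E_{j+1},\dots,E_t$ depend on later jumps and are therefore irrelevant to the conditional law of $\Lf{j}$. Consequently $\expect{\levyflightvisits{w}{t} \mid \EE_t} = \sum_{j=0}^t p_{w,j}$, so it suffices to prove the pointwise inequality $p_{u,j} \ge p_{v,j}$ for every $0 \le j \le t$ and then sum.

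For the pointwise bound, I would first note that $v \notin Q_{\distu}(\origin)$ means exactly $\|v\|_\infty > \distu = \|u\|_1$, and in particular $\|v\|_\infty \ge \|u\|_1$. This is precisely the hypothesis of the monotonicity property (\cref{lemma:monotonicity}), which would give $\pr{\Lf{j}=u} \ge \pr{\Lf{j}=v}$. The one subtlety is that \cref{lemma:monotonicity} is stated for an unconditioned monotone radial process, while here every probability is conditioned on $\EE_j$. By \cref{remark:conditional_monotonicity}, the \levyflight conditioned on capped jumps is still monotone radial---its one-step transition law remains a non-increasing function of the $\ell_1$-distance---so the geometric argument applies verbatim and yields $p_{u,j} = \pr{\Lf{j}=u \mid \EE_j} \ge \pr{\Lf{j}=v \mid \EE_j} = p_{v,j}$.

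Summing over $j=0,\dots,t$ then gives $\expect{\levyflightvisits{u}{t}\mid\EE_t} = \sum_{j=0}^t p_{u,j} \ge \sum_{j=0}^t p_{v,j} = \expect{\levyflightvisits{v}{t}\mid\EE_t}$, which is the claim. There is no real obstacle here: the content is entirely carried by \cref{lemma:monotonicity} and the observation in \cref{remark:conditional_monotonicity} that conditioning on $\EE_t$ preserves the monotone-radial structure. The only point requiring slight care is to respect the precise form of the monotonicity hypothesis: it compares the $\ell_\infty$-norm of $v$ against the $\ell_1$-norm of $u$, which is exactly why the statement is phrased via the $\ell_\infty$-ball $Q_{\distu}(\origin)$ rather than the $\ell_1$-ball $B_{\distu}(\origin)$, and one must not mistakenly claim $p_{u,j}\ge p_{v,j}$ for an arbitrary $v$ that is merely farther than $u$ in $\ell_1$-distance.
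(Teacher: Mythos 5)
Your proof is correct and follows essentially the same route as the paper: write $\expect{\levyflightvisits{w}{t}\mid \EE_t}=\sum_{j=0}^t p_{w,j}$, observe that $v\notin Q_{\distu}(\origin)$ means $\|v\|_\infty>\|u\|_1$, and apply the monotonicity property (which, per \cref{remark:conditional_monotonicity}, survives the conditioning on capped jumps) termwise before summing. Your explicit justification that conditioning on $\EE_t$ versus $\EE_j$ gives the same law for $\Lf{j}$ is a detail the paper leaves implicit, but it is the right one.
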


Namely, the more the node is ``far'' (according to the sequence of squares $\{Q_{d}(\origin)\}_{d\in \nat}$) from the origin, the less it is visited on average. Thus, each node is visited at most as many times as the origin, on average. This easily gives an upper bound on the total number of visits to $\secondregion$ until time $t$, namely, by taking $u = \trespoint$ and by observing that each $v \in \secondregion$ lies outside $Q_{\dist}(\origin)$, we get that the average number of visits to $\secondregion$ is at most    the expected number of visits to the \treasure $\trespoint$ (i.e. $\expect{\levyflightvisits{\trespoint}{t}\mid \EE_t}$) times (any upper bound of) the size  of $\secondregion$:
in formula,  it is    upper bounded by $\expect{\levyflightvisits{\trespoint}{t}\mid \EE_t} \cdot 4(t\log t)^{\frac{2}{\alpha-1}}$ if $\alpha\in (2,3)$, and by $\expect{\levyflightvisits{\trespoint}{t}\mid \EE_t} \cdot 4t\log^2 t$ if $\alpha=3$.

The next lemma considers    $\thirdregion$.

\begin{lemma}[Bound on visits to $\thirdregion$]\label{lemma:boundfaraway}
    It holds that
     \begin{align}
        \sum_{\substack{v = (x,y)\colon \\ |x|+|y| \geq 2(t\log t)^\frac{1}{\alpha-1}}} & \expect{\levyflightvisits{v}{t} \mid \EE_t} = \bigo\left( \frac{t}{(3-\alpha)\log t}\right) \ \ \text{if $\alpha\in(2,3)$;} \label{eq:boundfaraway1} \\
        \sum_{\substack{v = (x,y)\colon \\ |x|+|y| \geq 2\sqrt{t}\log t}} & \expect{\levyflightvisits{v}{t} \mid \EE_t} = \bigo\left( \frac{t}{\log t}\right) \ \ \text{if $\alpha = 3$}.\label{eq:boundfaraway2}
    \end{align}
\end{lemma}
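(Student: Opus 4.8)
The plan is to rewrite each left-hand side as a sum of tail probabilities for the position of the \levyflight and to bound each of these by a second-moment argument. Since in $t$ steps the flight occupies exactly one node per step, $\sum_{v\in\thirdregion}\Zf{v}{t}=\sum_{i=1}^{t}\ind[\Lf{i}\in\thirdregion]$, so by linearity of expectation
\[
\sum_{v\in\thirdregion}\expect{\Zf{v}{t}\mid \EE_t}
=
\sum_{i=1}^{t}\pr{\Lf{i}\in\thirdregion \mid \EE_t}.
\]
Writing $M=(t\log t)^{1/(\alpha-1)}$, recall that $\thirdregion=\{v\colon \manhattan{v}\ge 2M\}$ for $\alpha\in(2,3)$ (and $\{v\colon\manhattan{v}\ge 2\sqrt t\log t\}$ for $\alpha=3$). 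Because $\Lf{i}$ and $\EE_i=\bigcap_{j\le i}E_j$ are determined by the first $i$ jumps, whereas $\bigcap_{j>i}E_j$ is determined by the later, independent jumps, conditioning on $\EE_t$ is equivalent to conditioning on $\EE_i$ as far as the event $\{\Lf{i}\in\thirdregion\}$ is concerned. Hence each summand equals $\pr{\manhattan{\Lf{i}}\ge 2M\mid \EE_i}$ (respectively $\pr{\manhattan{\Lf{i}}\ge 2\sqrt t\log t\mid \EE_i}$).

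First I would observe that, conditioned on $\EE_i$, the flight $\Lf{i}=\sum_{j=1}^{i}X_j$ is a sum of i.i.d.\ mean-zero increments: capping the jump length $S_j=\manhattan{X_j}$ to be below $M$ leaves the chosen direction uniform, so $\expect{X_j\mid E_j}=\origin$ by symmetry, and the $X_j$ stay mutually independent. The cross terms therefore vanish and the coordinatewise variances add, giving
\[
\expect{\euclidean{\Lf{i}}^2\mid \EE_i}=i\cdot\expect{\euclidean{X_1}^2\mid E_1}\le i\cdot\expect{S_1^2\mid E_1},
\]
and since $\manhattan{w}^2\le 2\euclidean{w}^2$ in the plane, $\expect{\manhattan{\Lf{i}}^2\mid \EE_i}\le 2i\,\expect{S_1^2\mid E_1}$. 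The conditional second moment of a single capped jump is read off from \cref{eq:power-law-distr}: since $\pr{S_1\le M}=\myTheta{1}$ for $\alpha>2$,
\[
\expect{S_1^2\mid E_1}=\myTheta{\sum_{d=1}^{M} \levyconst\, d^{2-\alpha}},
\]
which by the integral test equals $\myTheta{M^{3-\alpha}/(3-\alpha)}$ when $\alpha\in(2,3)$ and $\myTheta{\log M}$ when $\alpha=3$.

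Next I would apply Markov's inequality to $\manhattan{\Lf{i}}^2$. For $\alpha\in(2,3)$, using $M^{\alpha-1}=t\log t$,
\[
\pr{\manhattan{\Lf{i}}\ge 2M\mid \EE_i}
\le
\frac{\expect{\manhattan{\Lf{i}}^2\mid \EE_i}}{4M^2}
=
\mybigo{\frac{i\,M^{3-\alpha}}{(3-\alpha)M^2}}
=
\mybigo{\frac{i}{(3-\alpha)\,t\log t}}.
\]
Summing over $i=1,\dots,t$ and using $\sum_{i\le t}i=\myTheta{t^2}$ yields $\mybigo{t/\big((3-\alpha)\log t\big)}$, which is \cref{eq:boundfaraway1}. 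The case $\alpha=3$ is handled identically with threshold $2\sqrt t\log t$: there $\expect{\manhattan{\Lf{i}}^2\mid \EE_i}=\mybigo{i\log t}$, so each term is $\mybigo{i\log t/(t\log^2 t)}=\mybigo{i/(t\log t)}$, and summing gives $\mybigo{t/\log t}$, i.e.\ \cref{eq:boundfaraway2}.

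The step I expect to be the main obstacle is pinning down the dependence on $3-\alpha$ in the conditional second moment. The exponent $2-\alpha$ lies in $(-1,0)$, so $\sum_{d\le M}d^{2-\alpha}$ is dominated by its largest terms and genuinely contributes the factor $1/(3-\alpha)$ that appears in \cref{eq:boundfaraway1}; getting this constant right (rather than an $M$-dependent overcount) is what the integral test must deliver. One must also verify that both the normalising constant $\levyconst$ and the conditioning factor $\pr{S_1\le M}$ are $\myTheta{1}$ uniformly in $\alpha$, which relies on $\alpha$ being bounded away from $1$ (\cref{remark:alpha>1+eps}) and at most $3$; otherwise these factors could distort the claimed bound.
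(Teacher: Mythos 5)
Your proposal is correct and follows essentially the same route as the paper: bound the conditional second moment of the capped-jump displacement (which contributes the $1/(3-\alpha)$ factor via the integral test), apply a second-moment tail inequality at the threshold $2(t\log t)^{1/(\alpha-1)}$ (resp.\ $2\sqrt{t}\log t$), and sum the resulting per-step probabilities over $i\le t$. The only cosmetic differences are that the paper applies Chebyshev to each coordinate separately and combines them with a union bound, and bounds each step's probability uniformly by the worst case $i=t$ before multiplying by $t$, whereas you apply Markov to $\manhattan{\Lf{i}}^2$ directly and keep the $i$-dependence in the sum; both give the same bound.
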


\begin{proof}Let $\levyflightrand{t'}$ be the two dimensional random variable representing the coordinates of the node the agent performing the \levyflight is located in at time $t'$.
    Consider the projection of the \levyflight on the $x$-axis, namely the random variable $X_{t'}$ such that $\levyflightrand{t'} = (X_{t'}, Y_{t'})$.
    The random variable $X_{t'}$ can be expressed as the sum of $t'$ random variables $S^x_j$, $j=1,\dots,t'$, representing the jumps (with sign) that the projection of the walk takes at each of the $t'$ rounds.
    The partial distribution of the jumps along the $x$-axis, conditional on the event $\EE_t$, can be derived as follows.\footnote{We remark that in \cref{app:projection} we estimate the unconditional distribution of the jump projection length on the $x$-axis (\cref{lemma:projection}) for any $\alpha > 1$. Nevertheless, in this case we are conditioning on the event the original two dimensional jump is bounded, and thus we cannot make use of \cref{lemma:projection}.} For any $0 \le d \le (t\log t)^\frac{1}{\alpha-1}$,
    \begin{align}\label{eq:partialdistribution}
        & \ \pr{S^x_j = \pm d \mid S_j \le (t\log t)^\frac{1}{\alpha-1}} \nonumber \\
        = & \left[\frac{1}{2} + \sum_{k= 1}^{(t\log t)^\frac{1}{\alpha-1}} \frac{\levyconst}{2k^{\alpha+1}} \right]\ind_{d=0} + \left[ \frac{\levyconst}{2d^{\alpha+1}} + \sum_{k= 1+d}^{(t\log t)^\frac{1}{\alpha-1}} \frac{\levyconst}{k^{\alpha+1}} \right] \ind_{d\neq 0} ,
    \end{align}
    where: $\ind_{d\in A}$ returns  $1$ if $d\in A$ and 0 otherwise, the term
    \[
       \frac{\ind_{d=0}}{2} + \frac{\levyconst}{2d^{\alpha+1}}\ind_{d\neq 0}
    \]
    is the probability that the original jump lies along the horizontal axis and has ``length'' exactly $d$ (there are two such jumps if $d > 0$), and, for $k\ge 1+d$, the terms
    \[
        \frac{\levyconst}{2k^{\alpha+1}}\ind_{d = 0} +  \frac{\levyconst}{k^{\alpha+1}}\ind_{d\neq 0}
    \]
    are the probability that the original jump has ``length'' exactly $k$ and its projection on the horizontal axis has ``length'' $d$ (there are two such jumps if $d =0$, and four such jumps if $d > 0$).
    Observe that  $\eqref{eq:partialdistribution}$ is of the order of
    \[
        \myTheta{\frac{1}{d^{\alpha+1}} +\sum_{k= 1+d}^{(t\log t)^\frac{1}{\alpha-1}} \frac{1}{k^{\alpha+1}}}.
    \]
    By the integral test (\cref{fact:integraltest} in \ref{app:tools}), we know that this probability is
    \[
        \pr{S^x_j = \pm d \mid \EE_j} = \Theta\left(\frac{1}{d^{\alpha}}\right).
    \]
    Due to symmetry, it is easy to see that $\expect{X_{t'} \mid \EE_t} = 0$ for each time $t'$, while
    \[
        \variance{X_{t'} \mid \EE_t} = \sum_{i=1}^{t'} \variance{S^x_j \mid \EE_j} = t' \variance{S^x_1 \mid E_1}
    \]
    since $S^x_1, \dots, S^x_{t'}$ are i.i.d.

    As for  the case  $\alpha\in (2,3)$, the variance of $S^x_1$ conditioned to the event $E_1 = \left\{S_1 \le (t\log t)^\frac{1}{\alpha-1}\right\}$, can be bounded   as follows
    \begin{align*}
        \variance{S^x_1 \mid E_1} \le & \ \sum_{k=1}^{(t\log t)^\frac{1}{\alpha-1}} \bigo\left(\frac{k^2}{k^{\alpha}}\right) \\
        \stackrel{(\ast)}{=} & \ \bigo\left(\frac{1}{3-\alpha}\left[(t\log t)^\frac{3-\alpha}{\alpha-1}-1\right]\right)  \\
        = & \ \mybigo{\frac{\left(t \log t\right)^\frac{3-\alpha}{\alpha-1}}{3 - \alpha}},
    \end{align*}
    where, in $(\ast)$, we used the integral test (\cref{fact:integraltest}). Observe that the event $\EE_t=\bigcap_{i=1}^t E_i$ has probability
    \[
        \pr{\EE_t} = 1-\bigo\left(\frac{1}{\log t}\right).
    \]
    Then, for each $t'\le t$, from the Chebyshev's inequality and the fact that $\expect{X_{t'} \mid \EE_t} = 0$,
    \begin{align*}
        \pr{\abs{X_{t'}} \ge (t\log t)^\frac{1}{\alpha-1} \mid \EE_t} \le \frac{t'\variance{S^x_1 \mid E_1}}{(t\log t)^\frac{2}{\alpha-1}} \le \frac{t\variance{S^x_1 \mid E_1}}{(t\log t)^\frac{2}{\alpha-1}}   = \bigo\left( \frac{1}{(3 - \alpha)\log t} \right),
    \end{align*}
    which implies that
    \begin{align*}
        \pr{\abs{X_{t'}} \ge (t\log t)^\frac{1}{\alpha-1}} \le \pr{\abs{X_{t'}} \ge (t\log t)^\frac{1}{\alpha-1} \mid \EE_t} + \pr{\EE_t^C} = \bigo\left(\frac{1}{(3-\alpha)\log t}\right).
    \end{align*}
    Then, the probability that both $X_{t'}$ and $Y_{t'}$ are less than $(t\log t)^\frac{1}{\alpha-1}$ (call the corresponding events $A_{x,t'}$ and $A_{y,t'}$, respectively) is
    \begin{align*}
        \pr{A_{x,t'} \cap A_{y,t'}} = \pr{A_{x,t'}} + \pr{A_{y,t'}} - \pr{A_{x,t'} \cup A_{y,t'}} \ge 1-\bigo\left(\frac{1}{(3-\alpha)\log t}\right),
    \end{align*}
    for any $t'\le t$.
    Then, let $Z'(t)$
    be the random variable indicating the number of times the \levyflight visits the set of nodes whose coordinates are both no less than $(t\log t)^\frac{1}{\alpha-1}$, until time $t$. Then,
    \[
        \expect{Z'(t) \mid \EE_t} \le \sum_{\substack{v=(x,y) \\ \abs{x} + \abs{y} \ge 2(t\log t)^\frac{1}{\alpha-1}}} \expect{\levyflightvisits{v}{t} \mid \EE_t},
    \]
    and
    \begin{align*}
        \expect{Z'(t) \mid \EE_t} = & \ \sum_{i=0}^t \expect{Z'(i) \mid A_{x,i} \cap A_{y,i}, \EE_t}\pr{A_{x,i} \cap A_{y,i} \mid \EE_t} \\
        + & \ \sum_{i=0}^t \expect{Z'(i) \mid (A_{x,i} \cap A_{y,i})^C, \EE_t}\pr{(A_{x,i} \cap A_{y,i})^C \mid \EE_t}\\
        = & \ \sum_{i=0}^t\expect{Z'(i) \mid (A_{x,i} \cap A_{y,i})^C, \EE_t}\pr{(A_{x,i} \cap A_{y,i})^C \mid \EE_t} \\
        \le & \ t\cdot \bigo\left(\frac{1}{(3-\alpha)\log t}\right) =\bigo\left(\frac{t}{(3-\alpha)\log t}\right),
    \end{align*}
    which proves    \cref{eq:boundfaraway1}.

    As for the case  $\alpha=3$, the variance of $S_1^x$ conditional on $E_1$ is  $\mybigo{\log (t\log t)}$. Then, we look at the probability that $\abs{X_{t'}}$ is at least $\sqrt{t}\cdot \log t$ conditional on $\EE_t$, which is, again, $\mybigo{1/\log t}$. Finally,  the proof proceeds in exactly the same way of the previous case, obtaining \cref{eq:boundfaraway2}.
\end{proof}

\begin{lemma}\label{lemma:goodtrick}
    For $t = \Theta( \dist^{\alpha-1})$,
\begin{align}
        & ct + \expect{\levyflightvisits{\trespoint}{t}\mid \EE_t} \cdot 4(t\log t)^{\frac{2}{\alpha-1}} + \bigo\left(\frac{t}{(3-\alpha)\log t}\right) \geq t \ \ \text{if $\alpha \in(2,3)$;} \label{eq:goodtrick1} \\
        & ct + \expect{\levyflightvisits{\trespoint}{t}\mid \EE_t} \cdot 4t\log^2 t + \bigo\left(\frac{t}{\log t}\right) \geq t \ \ \text{if $\alpha = 3$.}\label{eq:goodtrick2}
    \end{align}
\end{lemma}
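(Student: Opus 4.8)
The plan is to prove this as a pure accounting inequality, solving later for $\expect{\levyflightvisits{\trespoint}{t}\mid\EE_t}$. The one identity that drives everything is conservation of mass: a \levyflight occupies exactly one node at each of the steps $1,\dots,t$, so $\sum_{v\in\integer^2}\levyflightvisits{v}{t}=t$ holds deterministically, and hence $\sum_{v\in\integer^2}\expect{\levyflightvisits{v}{t}\mid\EE_t}=t$ as well — the conditioning on $\EE_t$ does not disturb this, since the identity holds pointwise on every sample path. I would then split this total over the partition $\integer^2=\AA_1\cup\AA_2\cup\AA_3$ and upper bound each of the three contributions separately. Because the three upper bounds must sum to at least the total $t$, rearranging isolates the middle term and delivers exactly \cref{eq:goodtrick1,eq:goodtrick2}.

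For the innermost region $\AA_1=Q_\dist(\origin)$, I would quote \cref{lemma:boundclose} verbatim, which gives $\sum_{v\in\AA_1}\expect{\levyflightvisits{v}{t}\mid\EE_t}\le ct$ for a constant $c\in(0,1)$. For the annular region $\AA_2$, the decisive point is that every $v\in\AA_2$ lies outside $\AA_1$, so $\|v\|_\infty>\dist=\|\trespoint\|_1$; by the conditional monotonicity property (\cref{corollary:induction}, which remains valid under the conditioning by \cref{remark:conditional_monotonicity}) this forces $\expect{\levyflightvisits{v}{t}\mid\EE_t}\le\expect{\levyflightvisits{\trespoint}{t}\mid\EE_t}$ for each such $v$. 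Summing over $\AA_2$ and bounding its cardinality by that of the enclosing ball $B_{2(t\log t)^{1/(\alpha-1)}}(\origin)$, i.e. $|\AA_2|\le 4(t\log t)^{2/(\alpha-1)}$ for $\alpha\in(2,3)$ and $|\AA_2|\le 4t\log^2 t$ for $\alpha=3$, produces the middle term. For the far region $\AA_3$ I would simply invoke \cref{lemma:boundfaraway}, which already supplies $\sum_{v\in\AA_3}\expect{\levyflightvisits{v}{t}\mid\EE_t}=\mybigo{t/((3-\alpha)\log t)}$ when $\alpha\in(2,3)$ and $\mybigo{t/\log t}$ when $\alpha=3$.

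Putting the three bounds together, $t=\sum_{v\in\AA_1}\expect{\levyflightvisits{v}{t}\mid\EE_t}+\sum_{v\in\AA_2}\expect{\levyflightvisits{v}{t}\mid\EE_t}+\sum_{v\in\AA_3}\expect{\levyflightvisits{v}{t}\mid\EE_t}\le ct+4(t\log t)^{2/(\alpha-1)}\,\expect{\levyflightvisits{\trespoint}{t}\mid\EE_t}+\mybigo{t/((3-\alpha)\log t)}$ for $\alpha\in(2,3)$, which is \cref{eq:goodtrick1}; the case $\alpha=3$ is identical, using the $\alpha=3$ branch of each cited bound, and gives \cref{eq:goodtrick2}. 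There is no genuine obstacle here, as the three regional estimates are each already established; the only steps requiring care are verifying the orientation of the monotonicity inequality — namely that the nodes of $\AA_2$ are $\ell_\infty$-farther from the origin than $\trespoint$ is in $\ell_1$, which is exactly what the definition $\AA_1=\{v\colon\|v\|_\infty\le\dist\}$ guarantees — and checking that the conservation identity $\sum_v\levyflightvisits{v}{t}=t$ is unaffected by conditioning on $\EE_t$.
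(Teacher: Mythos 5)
Your proof is correct and follows essentially the same route as the paper's: the conservation identity $\sum_{v}\expect{\levyflightvisits{v}{t}\mid \EE_t}=t$, the regional bounds from \cref{lemma:boundclose} and \cref{lemma:boundfaraway}, and the monotonicity bound on $\AA_2$ via \cref{corollary:induction} with the cardinality estimates $4(t\log t)^{2/(\alpha-1)}$ and $4t\log^2 t$. Nothing is missing relative to the paper's argument.
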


\begin{proof}Suppose the agent has made $t$ jumps for some $t=\Theta( \dist^{\alpha-1})$  (the same $t$ of \cref{lemma:boundclose}), thus visiting exactly $t$ nodes. Then,
    \[
        \expect{\sum_{v\in \integer^2}\levyflightvisits{v}{t} \mid \EE_t} = t.
    \]
    As for  \cref{eq:goodtrick1}, we observe that,  from \cref{lemma:boundclose}, the number of visits to $\firstregion=Q_{\dist}(\origin)$ until time $t$ is at most $ct$, for some constant $c\in (0,1)$. From \cref{lemma:boundfaraway}, the number of visits to $\thirdregion$ is at most $\bigo\left(t/\left ((3-\alpha)\log t\right )\right)$. Thanks to \cref{corollary:induction}, each of the remaining nodes, i.e., the nodes in $\secondregion$ (whose size  is   at most $4(t\log t)^\frac{2}{\alpha-1}$), is visited by the agent at most $\expect{\levyflightvisits{\trespoint}{t} \mid \EE_t}$ times. It follows that
    \[
        ct + \expect{\levyflightvisits{\trespoint}{t} \mid \EE_t}\cdot 4(t\log t)^\frac{2}{\alpha-1}+\bigo\left(\frac{t}{(3-\alpha)\log t}\right) \ge t.
    \]

    As for  \cref{eq:goodtrick2},  we proceed as for the first case above, by noticing that
the number of visits to $\secondregion$ is at most $\expect{\levyflightvisits{\trespoint}{t} \mid \EE_t} \cdot  (4t\log^2 t)$. This gives \cref{eq:goodtrick2}.
\end{proof}

The next two lemmas  provide a clean relationship between the probability to hit a node $u$ within time $t$ to the average number of visits to the origin and to the average number of visits to $u$ itself. In particular, the first lemma  estimate the average number of visits to the origin.
For any $t \ge 0$ and $\alpha\in(2,3]$, let $\expect{\levyflightvisits{\origin}{t}\mid \EE_t} = a_t(\alpha)$.

\begin{lemma}[Visits to the origin]\label{lemma:visitsorigin}\
    \begin{enumerate}[(a)]
         \item If $\alpha \in(2,3)$, then $a_t(\alpha) = \mybigo{1/(3-\alpha)^2}$. \label{item:ballvisits1}
        \item If $\alpha = 3$, then $a_t(3) = \mybigo{\log^2 t}$. \label{item:ballvisits2}
    \end{enumerate}
\end{lemma}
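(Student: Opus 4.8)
The plan is to write $a_t(\alpha)=\sum_{j=0}^{t}p_{\origin,j}$ with $p_{\origin,j}=\pr{\Lf{j}=\origin\mid\EE_j}$ and to bound each return probability by its Fourier transform. Since the jump lengths are mutually independent and $\EE_j=\bigcap_{i\le j}E_i$ with $E_i=\{S_i\le M\}$, $M=(t\log t)^{1/(\alpha-1)}$, factorizes over the jumps, conditional on $\EE_j$ the increments of the \levyflight are i.i.d.\ with a common \emph{capped} jump law; and for the marginal of $\Lf{j}$ conditioning on $\EE_j$ or on $\EE_t$ coincide, because later jumps are independent of $\Lf{j}$. Writing $S$ for a single capped jump vector and $\phi(\theta)=\expect{\cos(\theta\cdot S)}$ (real, by the reflection symmetries of the jump law), Fourier inversion on $[-\pi,\pi]^2$ gives $p_{\origin,j}=\frac{1}{(2\pi)^2}\int_{[-\pi,\pi]^2}\phi(\theta)^j\,\diff\theta$. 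Because the walk holds in place with probability $\pr{S=\origin}\ge1/2$, we have $\phi(\theta)\ge2\pr{S=\origin}-1\ge0$, with $\phi(\theta)=1$ only at $\theta=\origin$; summing the geometric series then yields
\[
a_t(\alpha)=\frac{1}{(2\pi)^2}\int_{[-\pi,\pi]^2}\frac{1-\phi(\theta)^{t+1}}{1-\phi(\theta)}\,\diff\theta\le\frac{1}{(2\pi)^2}\int_{[-\pi,\pi]^2}\min\Big\{t+1,\tfrac{1}{1-\phi(\theta)}\Big\}\,\diff\theta.
\]

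The heart of the argument is a two-sided estimate of $1-\phi(\theta)$ in terms of $r=\euclidean{\theta}$. Using $1-\cos x=\myTheta{\min\{x^2,1\}}$ for $|x|\le\pi$, together with $\pr{\manhattan{S}=d}=\myTheta{1/d^\alpha}$ for $1\le d\le M$ and the uniformly random jump direction, one gets $1-\phi(\theta)=\myTheta{\sum_{d=1}^{M}d^{-\alpha}\min\{(rd)^2,1\}}$. Splitting the sum at $d\approx1/r$ and invoking the integral test (\cref{fact:integraltest}) gives, for $1/M\lesssim r\lesssim1$,
\[
1-\phi(\theta)=\myTheta{\tfrac{r^{\alpha-1}}{3-\alpha}}\quad(\alpha\in(2,3)),\qquad 1-\phi(\theta)=\myTheta{r^2\log(1/r)}\quad(\alpha=3),
\]
while $1-\phi(\theta)=\myTheta{r^2\sigma^2}$ for $r\lesssim1/M$, where $\sigma^2=\variance{S^x_1\mid E_1}$ is the capped one-coordinate variance from \cref{lemma:boundfaraway}. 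The standing assumption $3-\alpha=\omega(1/\log\dist)$ ensures the crossover radius $1/M$ lies below the radii relevant for $j\le t$, so only the displayed power-law (resp.\ $\alpha=3$) behaviour is used.

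With this estimate I would pass to polar coordinates and split at the radius $r_0$ where $1/(1-\phi)=t$. The inner disc contributes at most $\int_0^{r_0}t\,r\,\diff r=\frac{t}{2}r_0^2$, which is $\mylittleo{1}$ after substituting $t=\myTheta{\dist^{\alpha-1}}$. The outer annulus, out to the constant radius $c=\myTheta{1}$ bounding $[-\pi,\pi]^2$, contributes $\myTheta{\int_{r_0}^{c}\frac{r}{1-\phi(\theta)}\,\diff r}$; here the radial integral $\int r^{2-\alpha}\,\diff r$ supplies one factor $1/(3-\alpha)$ and the coefficient of $1/(1-\phi)$ supplies another, so bounding the two without exploiting their cancellation gives $a_t(\alpha)=\mybigo{1/(3-\alpha)^2}$, which is part~\cref{item:ballvisits1}. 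For $\alpha=3$ the identical computation with $1-\phi=\myTheta{r^2\log(1/r)}$ replaces each power of $1/(3-\alpha)$ by a $\myTheta{\log t}$ factor, yielding $a_t(3)=\mybigo{\log^2 t}$, which is part~\cref{item:ballvisits2}.

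The main obstacle is the estimate of $1-\phi(\theta)$ with constants \emph{uniform in $\alpha$} as $\alpha\to3$: the approximation $1-\cos x\approx x^2/2$ is valid only for $rd\lesssim1$, so the split at $d\approx1/r$ and the matching of the two regimes across $r\approx1/M$ must be done carefully, and the square-lattice anisotropy entering the direction-average has to be controlled away from the origin so that $\phi$ stays bounded below $1$ on the rest of $[-\pi,\pi]^2$ and the integral converges. A secondary subtlety is the atom at $\origin$: for small $j$ the return probability is governed by holding events and is $\myTheta{2^{-j}}$ rather than of local-limit type, so it is the truncation $\min\{t+1,\cdot\}$ (not a bare $1/(1-\phi)$ bound) that keeps the near-origin part of the integral finite.
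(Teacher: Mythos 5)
Your Fourier-analytic route is genuinely different from the paper's proof and, executed carefully, it works — indeed it proves something slightly stronger. The paper argues probabilistically instead: it partitions time into windows $[2t'\log t',\,2(t'+1)\log(t'+1))$, shows that by the start of the $t'$-th window the walk has escaped to distance $(t')^{1/(\alpha-1)}/2$ except with probability $\mybigo{1/(t')^2}$, and then invokes the monotonicity property (\cref{lemma:monotonicity}) to conclude that after escaping there are $\myOmega{(t')^{2/(\alpha-1)}}$ nodes each at least as likely as $\origin$ to be occupied, whence $p_{\origin,k}=\mybigo{1/(t')^{1+\epsilon}}$ with $\epsilon=(3-\alpha)/(\alpha-1)$; summing $\mybigo{\log t'/(t')^{1+\epsilon}}$ over $t'$ gives $\mybigo{1/\epsilon^2}$. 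Your approach bypasses monotonicity entirely and reduces everything to the characteristic function, at the cost of the anisotropy and uniformity issues you already flag (the aperiodicity concern is harmless here: the $d=1$ jumps alone give $1-\phi(\theta)\ge c_\alpha\bigl(1-\tfrac{1}{2}(\cos\theta_1+\cos\theta_2)\bigr)$, which is $\myOmega{r^2}$ on all of $[-\pi,\pi]^2$). Two corrections to your bookkeeping, both in the conservative direction: (i) the estimate $1-\phi(\theta)=\myTheta{r^{\alpha-1}/(3-\alpha)}$ is not uniform over $1/M\lesssim r\lesssim 1$ — when $(3-\alpha)\log(1/r)=\mybigo{1}$ the sum $\sum_{d\le 1/r}d^{2-\alpha}$ behaves like $\log(1/r)$ rather than $(1/r)^{3-\alpha}/(3-\alpha)$, so there $1-\phi=\myTheta{r^2\log(1/r)}$; (ii) since $1/(1-\phi(\theta))=\mybigo{(3-\alpha)\,r^{1-\alpha}}$ in the power-law regime, the coefficient contributes a factor $(3-\alpha)$, not $1/(3-\alpha)$, and it cancels the $1/(3-\alpha)$ coming from $\int r^{2-\alpha}\,\diff r$; together with the corrected intermediate regime the outer annulus actually contributes $\mybigo{1+\log(1/(3-\alpha))}$, and for $\alpha=3$ one gets $\mybigo{\log\log t}$. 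These are better than the stated $\mybigo{1/(3-\alpha)^2}$ and $\mybigo{\log^2 t}$, so the lemma follows either way once the uniformity of the $1-\phi$ estimate is pinned down.
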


\begin{proof}For the case $\alpha \in(2,3)$, we proceed as follows. Since $\expect{\levyflightvisits{\origin}{t} \mid \EE_t} = \sum_{k=1}^t p_{\origin,k}$, it suffices to accurately bound the probability $p_{\origin,k}$ for each $k=1, \dots, t$. Let us make a partition of  the natural numbers in the following way
    \[
        \nat = \bigcup_{t'=1}^\infty \bigg[\nat \cap \left[2t'\log t', 2(t'+1)\log(t'+1)\right)\bigg].
    \]
    For each $k\in \nat$, there exists $t'$ such that $k\in \left[2t'\log t', 2(t'+1)\log(t'+1)\right)$. Then, within $2t'\log t'$ steps, we claim that the walk has moved to distance $\lambda = \frac{(t')^\frac{1}{\alpha-1}}{2}$ at least once, with probability $\Omega\left(\frac{1}{(t')^2}\right)$.
    Indeed, if there is one jump of length at least $2\lambda$,  then the  walk has necessarily moved to a distance at least $\lambda$ from the origin. We now bound  the probability that one jump is at least $2\lambda$. For the integral test and for $ \lambda > 0 $, we get
    \begin{align*}
        \pr{S_j \ge 2\lambda \mid S_j \le (t\log t)^\frac{1}{\alpha-1} } \ge & \ \frac{1}{\pr{S_j \le (t\log t)^\frac{1}{\alpha-1}}}\left[\int_{2\lambda}^{(t\log t)^\frac{1}{\alpha-1}} \frac{\levyconst}{s^\alpha}ds\right] \\
        \ge & \ \frac{\levyconst}{\alpha-1}\left(\frac{1}{t'}-\frac{1}{t\log t}\right)\\
        \ge & \ \frac{\levyconst}{\alpha-1}\left(\frac{1-\frac{t'}{t\log t}}{t'}\right) \\
        \ge & \ \frac{\levyconst}{\alpha-1}\left(\frac{1-\frac{1}{2\log (t')\log t}}{t'}\right) \\
        = & \ \Omega\left(\frac{1}{t'}\right),
        \end{align*}
    where the last inequality holds since $2t'\log t' \le t$. Thus, the probability that the first $2t'\log t'$ jumps are less than $2\lambda$ is
    \begin{align*}
        \pr{\cap_{j=1}^{2t'\log t'} \{S_j < 2\lambda\} \mid \EE_t} \stackrel{(\ast)}{=} & \ \left[1-\pr{S_1 < 2\lambda \mid S_1 \le (t\log t)^\frac{1}{\alpha-1}}\right]^{2t'\log t'} \\
        \ge & \ \left[1-\Omega\left(\frac{1}{t'}\right)\right]^{2t'\log t'} \\
        = & \ \bigo\left(\frac{1}{(t')^2}\right),
    \end{align*}
    where in $(\ast)$ we used the  independence among the agent's jumps.
    Once the agent reaches such a distance,   \cref{lemma:monotonicity} implies that there are at least $\lambda^2 = \Omega\left( (t')^\frac{2}{\alpha-1}\right) $ different nodes that are at least as equally likely as $\origin$ to be visited at any given future time.
    Thus, the probability to reach the origin at any future time is at most $\bigo\left(\frac{1}{(t')^\frac{2}{\alpha-1}}\right)=\bigo\left(\frac{1}{(t')^{1+\epsilon}}\right)$ with $\epsilon = (3-\alpha)/(\alpha-1)>0$: in particular the bound holds for $p_{\origin,k}$. Observe that in an interval $\left[2t'\log t', 2(t'+1)\log(t'+1)\right)$ there are
    \[2(t'+1)\log(t'+1)-2t'\log t' = 2t'\left[\log\left(1+\frac{1}{t'}\right)\right]+2\log(t'+1) = \bigo\left(\log t'\right)\]
    integers. Let $\levyflightrand{t}$ be the two-dimensional random variable denoting the node visited at time $t$ by an agent which  started from  the origin, and let $H_{t'}$ be the event $\cup_{j=1}^{2t'\log t'} \{S_j \ge 2\lambda\}$. Observe that,  by the law of total probability,
    \[
        p_{\origin,k} = \pr{\levyflightrand{t} = \origin \mid H_{t'}, \EE_t}\pr{H_{t'} \mid \EE_t} + \pr{\levyflightrand{t} = \origin \mid H_{t'}^C, \EE_t}\pr{H_{t'}^C \mid \EE_t} .
    \]

    Thus, if $I_{t'}=\left[2t'\log t',2(t'+1)\log(t'+1)\right)$, we get
    \begin{align*}
        \sum_{k=1}^t p_{\origin,k} \le & \ \sum_{t' = 1}^t \sum_{k\in I_{t'}} p_{\origin,k} \\
        \le & \ \sum_{t' = 1}^t \left[\pr{\levyflightrand{t} = \origin \mid H_{t'}, \EE_t}\pr{H_{t'} \mid \EE_t} + \pr{\levyflightrand{t} = \origin \mid H_{t'}^C, \EE_t}\pr{H_{t'}^C \mid \EE_t}\right]\bigo(\log t')\\
        \le & \ \sum_{t' = 1}^t \left[\bigo\left( \frac{1}{(t')^{1+\epsilon}}\right)+\bigo\left(\frac{1}{(t')^2}\right)\right]\bigo(\log t') \\
        = & \ \sum_{t' = 1}^t \bigo\left( \frac{\log t'}{(t')^{1+\epsilon}}\right) \stackrel{(\star)}{=} \mybigo{\frac{1}{\epsilon^2}} = \mybigo{\frac{1}{(3-\alpha)^2}},
    \end{align*}
    where for $ \star $ we used the integral test and partial integration. In particular, it holds that
    \[
    	\int_{1}^{t} \frac{\log (x)}{x^{1 + \epsilon}} \diff x  = \frac{1}{\epsilon^2}\left [ \frac{\epsilon(1 - \log x)}{x^{\epsilon}} - \frac{\epsilon+1}{x^{\epsilon}}\right ]_{1}^t.
    \]

    For the case $\alpha = 3$, we can consider the same  argument above for  the previous case where we  fix  $\lambda = \sqrt{t'}$. Then   the  proof proceeds as in the previous case  by observing    that the average number of visits until time $t$ is, now, of magnitude  $\mybigo{\log^2 t}$.
\end{proof}

\begin{lemma}\label{corollary:visitsorigin}
    Let $u\in \integer^2$ be any node. Then,
    \begin{itemize}
        \item[(i)] $\expect{\levyflightvisits{u}{t} \mid \EE_t} \le a_t(\alpha)$,
        \item[(ii)] $1 \le \expect{\levyflightvisits{u}{t} \mid \levyflightvisits{u}{t} > 0, \EE_t} \le a_t(\alpha)$,
        \item[(iii)] $\expect{\levyflightvisits{u}{t} \mid \EE_t}/a_t(\alpha) \le \pr{\levyflightvisits{u}{t}>0 \mid \EE_t} \le \expect{\levyflightvisits{u}{t} \mid \EE_t} $.
    \end{itemize}
\end{lemma}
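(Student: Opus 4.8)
The three bounds are tightly linked, so the plan is to prove (i) and (ii) directly and then obtain (iii) as a purely formal consequence. The starting point is the decoupling observation recorded just before the lemma: since the jumps are mutually independent and the event $\{\Lf{j}=w\}$ depends only on the first $j$ jumps, while $E_{j+1},\dots,E_t$ depend only on later (hence independent) jumps, one has $\pr{\Lf{j}=w\mid\EE_t}=\pr{\Lf{j}=w\mid\EE_j}=p_{w,j}$ for every $j\le t$. Consequently $\expect{\Zf{w}{t}\mid\EE_t}=\sum_{j=0}^t p_{w,j}$, and in particular $a_t(\alpha)=\sum_{j=0}^t p_{\origin,j}$; I would use this identity throughout.

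For (i) I would invoke the conditional monotonicity property (\cref{lemma:monotonicity} together with \cref{remark:conditional_monotonicity}). For any node $u$ we have $\|u\|_\infty\ge 0=\|\origin\|_1$, so the lemma gives $p_{\origin,j}=\pr{\Lf{j}=\origin\mid\EE_j}\ge\pr{\Lf{j}=u\mid\EE_j}=p_{u,j}$ for every $j$. Summing over $j=0,\dots,t$ and using the identity above yields $\expect{\Zf{u}{t}\mid\EE_t}\le a_t(\alpha)$, which is exactly (i).

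The heart of the argument is the upper bound in (ii); its lower bound is immediate, since $\Zf{u}{t}\ge 1$ on the event $\{\Zf{u}{t}>0\}$. For the upper bound I would condition on the first visit time to $u$. Let $\tau$ be the first step in $\{1,\dots,t\}$ at which the flight is at $u$; the events $\{\tau=s\}$, $s=1,\dots,t$, partition $\{\Zf{u}{t}>0\}$, and on $\{\tau=s\}$ we have $\Zf{u}{t}=\sum_{i=s}^t\ind_{\Lf{i}=u}$. The key step is to restart the flight at step $s$: because a \levyflight is a translation-invariant Markov chain, conditioned on $\Lf{s}=u$ the future $(\Lf{s+j})_{j\ge0}$ is a flight started at $u$, and the number of returns to $u$ over the next $t-s$ steps has the same law as the number of returns to $\origin$ of a fresh flight from $\origin$ over $t-s$ steps. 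The one thing to check is the interaction with $\EE_t$: the cap $B=(t\log t)^{1/(\alpha-1)}$ is the \emph{same} for all of the first $t$ jumps, so the constraints on the post-$s$ jumps are exactly the events $E_1,\dots,E_{t-s}$ of the restarted flight, while the constraints on jumps $1,\dots,s$ (which define $\{\tau=s\}$ and $\EE_s$) involve independent jumps and hence do not alter the law of any later position. Combining these, $\expect{\Zf{u}{t}\mid\tau=s,\EE_t}=\sum_{j=0}^{t-s}p_{\origin,j}\le\sum_{j=0}^{t}p_{\origin,j}=a_t(\alpha)$, and averaging over $s$ with weights $\pr{\tau=s\mid\Zf{u}{t}>0,\EE_t}$ gives $\expect{\Zf{u}{t}\mid\Zf{u}{t}>0,\EE_t}\le a_t(\alpha)$.

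Finally, (iii) follows formally from (ii). Since $\Zf{u}{t}$ vanishes on $\{\Zf{u}{t}=0\}$, the law of total expectation gives $\expect{\Zf{u}{t}\mid\EE_t}=\expect{\Zf{u}{t}\mid\Zf{u}{t}>0,\EE_t}\cdot\pr{\Zf{u}{t}>0\mid\EE_t}$; solving for the probability and inserting the two-sided bound $1\le\expect{\Zf{u}{t}\mid\Zf{u}{t}>0,\EE_t}\le a_t(\alpha)$ from (ii) yields $\expect{\Zf{u}{t}\mid\EE_t}/a_t(\alpha)\le\pr{\Zf{u}{t}>0\mid\EE_t}\le\expect{\Zf{u}{t}\mid\EE_t}$. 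The genuinely delicate point, and the step I would write out most carefully, is the restart argument in (ii): one must verify that conditioning on the truncation event $\EE_t$ does not break the Markov/translation-invariance reasoning. This works precisely because the truncation is applied jump-by-jump with a \emph{common} threshold $B$, so the process restarted at the first-visit time is again a truncated flight with the identical per-jump threshold; were the threshold to depend on the jump index, the restarted process would see a different truncation and the clean comparison with $a_t(\alpha)$ would fail.
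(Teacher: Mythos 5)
Your proposal is correct and follows essentially the same route as the paper: part (ii) is proved by restarting the flight at the first visit time $\tau$ and comparing the remaining visits to $u$ with the visits to the origin of a fresh (identically truncated) flight, and part (iii) is the same law-of-total-expectation identity. The only cosmetic difference is part (i), which you obtain directly from the conditional monotonicity property, whereas the paper deduces it from (ii) via $\expect{\levyflightvisits{u}{t} \mid \EE_t} \le \expect{\levyflightvisits{u}{t} \mid \levyflightvisits{u}{t}>0, \EE_t}$; both are one-line observations, and your explicit verification that conditioning on the per-jump cap does not break the restart argument is a welcome addition the paper leaves implicit.
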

\begin{proof}Claim (i) is a direct  consequence of (ii), since $ \expect{\levyflightvisits{u}{t} \mid \levyflightvisits{u}{t} > 0, \EE_t} \ge \expect{\levyflightvisits{u}{t} \mid \EE_t}$. As for Claim   (ii),   let $\tau$ be the first time the agent visits $u$. Then, conditional on $\levyflightvisits{u}{t}>0$, $\tau$ is at most $t$, and
    \[
        \expect{\levyflightvisits{u}{t} \mid \levyflightvisits{u}{t} >0, \EE_t} = \expect{\levyflightvisits{\origin}{t-\tau} \mid \tau \le t, \EE_t} \le \expect{\levyflightvisits{\origin}{t} \mid \EE_t} = a_t(\alpha).
    \]
    Notice that   this expectation is at least 1 since  we have  the conditional event.
    As for Claim (iii), let us explicitly write the term $\expect{\levyflightvisits{u}{t}\mid \levyflightvisits{u}{t} >0, \EE_t }\cdot \pr{\levyflightvisits{u}{t}>0 \mid \EE_t}$:
    \begin{align*}
        & \ \sum_{i=1}^t i\pr{\levyflightvisits{u}{t} = i \mid \levyflightvisits{u}{t} > 0, \EE_t}\cdot \pr{\levyflightvisits{u}{t}>0 \mid \EE_t} \\
        = & \ \sum_{i=1}^t i\frac{\pr{\levyflightvisits{u}{t} = i , \levyflightvisits{u}{t} > 0, \EE_t}}{\pr{\levyflightvisits{u}{t} > 0, \EE_t}}\cdot \frac{\pr{\levyflightvisits{u}{t} > 0, \EE_t}}{\pr{\EE_t}} \\
        = & \ \sum_{i=1}^t i\frac{\pr{\levyflightvisits{u}{t} = i , \levyflightvisits{u}{t} > 0, \EE_t}}{\pr{\EE_t}} \\
        = & \ \sum_{i=1}^t i\pr{\levyflightvisits{u}{t} = i \mid \EE_t} \\
        = & \ \expect{\levyflightvisits{u}{t} \mid \EE_t}.
    \end{align*}
    Then,
    \[
        \expect{\levyflightvisits{u}{t} \mid \EE_t} \ge \pr{\levyflightvisits{u}{t}>0 \mid \EE_t} = \frac{\expect{\levyflightvisits{u}{t} \mid \EE_t}}{\expect{\levyflightvisits{u}{t}\mid \levyflightvisits{u}{t} >0, \EE_t }} \ge \frac{\expect{\levyflightvisits{u}{t} \mid \EE_t}}{a_t(\alpha)},
    \]
    since, from Claim (ii),  $\expect{\levyflightvisits{u}{t}\mid \levyflightvisits{u}{t} >0, \EE_t } \le a_t(\alpha)$.
\end{proof}

We can now complete the proof of \cref{lemma:hittinglevyflight}, as follows.
From \cref{lemma:goodtrick} we have that $ \expect{\levyflightvisits{\trespoint}{t} \Mid \EE_t} = \myOmega{(3 - \alpha )^2 / \left (t^{(3-\alpha)/(\alpha-1)}(\log t)^{2/(\alpha - 1)}\right )}  $ if $ \alpha \in (2,3) $ and $ 3 - \alpha = \omega(1/\log t) $ and $ \expect{\levyflightvisits{\trespoint}{t} \Mid \EE_t} = \myOmega{1 / \left (\log t\right )^{2}} $ if $ \alpha = 3 $. Then, \cref{lemma:visitsorigin} and claim (iii) of \cref{corollary:visitsorigin} give the results by substituting $ t = \myTheta{\dist^{\alpha -1}} $.

 \subsection{Proof of Lemma \ref{lemma:couplevyflightintowalk}}
\label{ssec:tech_contrib_coupling}

Let $S_j$
be the random variable denoting the $j$-th jump-length. 
From \cref{eq:jumpatleast}, we get
\[\prob\left(S_j > (t\log t)^\frac{1}{\alpha-1}\right) = \myTheta{\frac{1}{t\log t}}.\]
Let  $E_j$ be the event $\left\{S_j \le (t\log t)^\frac{1}{\alpha-1}\right\}$, and let $\EE_t$ be the intersection of $E_j$ for $j=1, \dots, t$.
Notice that, by the union bound,  the probability of $\EE_t $ is $1-\bigo(1/\log t)$. We next apply the multiplicative form of the Chernoff bound to the sum of $S_j$, conditional on the event $\EE_t$. This is possible since the variable $S_j/(t\log t)^\frac{1}{\alpha-1} $ takes values in $[0,1]$.
To this aim, we first  bound  the expectation of the sum of the random variables $S_j$, for $j=1,\dots, t$ conditional on $\EE_t$.
\begin{align*}
    \mean\left[\sum_{j=1}^t S_j \ \bigm|\  \EE_t\right] = & \ \sum_{j=1}^t \mean[S_j \mid \EE_t] = \myTheta{t} + t\frac{\levyconst}{\prob(\EE_t)}\sum_{d = 1}^{(t\log t)^\frac{1}{\alpha-1}}\frac{d}{d^\alpha} \\
    \le & \ \myTheta{t} + 2\levyconst t\sum_{d = 1}^{(t\log t)^\frac{1}{\alpha-1}}\frac{1}{d^{\alpha-1}} \\
    \stackrel{(a)}{\le} & \mybigo{\mu t},
\end{align*}
where in $ (a) $ we have $ \mu = \min \{\log \dist, \frac{1}{\alpha - 2}\} $ for the integral test (\cref{fact:integraltest}).
We now  use the Chernoff bound (\cref{lem:chernoff:multiplicative}) on the normalized sum of all jumps, to show that such a  sum is at most linear in $\mybigo{\mu t}$ with probability $1-\exp(-t^{\Theta(1)})$, conditional on $\EE_t$. In formula,
\begin{align*}
    \prob\left(\sum_{j=1}^t S_j \ge  \myTheta{\mu t} \ \bigm|\ \EE_t\right) = & \ \prob\left(\frac{\sum_{j=1}^t S_j}{(t\log t)^\frac{1}{\alpha-1}} \ge \frac{2\myTheta{\mu t}}{(t\log t)^\frac{1}{\alpha-1}} \ \bigm|\ \EE_t\right)
    \\
    \le &\ \exp\left(-\frac{2\myTheta{\mu t}}{3\left((t\log t)^\frac{1}{\alpha-1}\right)}\right) \\
    \le & \ \exp\left(-\myTheta{\frac{\mu t^\frac{\alpha-2}{\alpha-1}}{(\log t)^\frac{1}{\alpha-1}}}\right)\le \exp\left(-\Theta\left(t^\frac{\alpha-2}{2(\alpha-1)}\right)\right).
\end{align*}
Then, define
\begin{align*}
    & A = \{\text{the \levywalk finds the \treasure within time } \myTheta{\mu t} \text{,}  \\
    & A_1 = \left\{\sum_{j=1}^t S_j = \myTheta{\mu t} \right\} \text{, and } \\
    & A_2 = \{\text{the \levyflight finds the \treasure within } t \text{ jumps}\}.
\end{align*}
Observe that the event $A_1 \cap A_2$ implies that the \levywalk finds the \treasure within $t$ jumps, which, in turn, implies the event $ A $.
Indeed, $A_1 \cap A_2$ implies the \treasure is found in one of the $ t $ jump endpoints, and the overall amount of  steps  is $\myTheta{\mu t}$. Let $ p(t) = \pr{h_f \le t \Mid \EE_t} $. Then
\begin{align*}
    \prob(A) \ge & \ \prob(A_1,A_2)
    \ge \ \prob(A_1,A_2,\EE_t) \\
{=} & \ \prob(\EE_t)\left[\prob(A_1\mid \EE_t)+\prob(A_2 \mid \EE_t)-\prob(A_1 \cup A_2 \mid \EE_t)\right] \\
{\ge} & \ \left(1-\bigo\left(\frac{1}{\log t}\right)\right)\left[1-exp(-t^{\Theta(1)})+p(t)-1\right] \\
    = & \ \left(1-\bigo\left(\frac{1}{\log t}\right)\right)\left(p(t) - exp\left(-t^{\Theta(1)}\right)\right),
\end{align*}
where in the second line we used the definition of conditional probability and the inclusion-exclusion principle, and in the third line we used that $\pr{\EE_t} = (1-\bigo(1/\log t))$, $\pr{A_1 \mid \EE_t} \ge 1 - \exp(-t^\Theta(1))$, and $\pr{A_1\cup A_2 \mid \EE_t} \le 1$.
 \subsection{Proof of Lemma \ref{lemma:lowbound_superdiffusive}}

\label{sec:lowbound_superdiffusive}

Let $X_i$
    be the $x$-coordinate of the agent at the end of the $i$-th jump. For any $i\le t$, we bound the probability that $X_i > \dist / 4$. The probability that there is a jump whose length is at least $\dist$ among the first $i$ jumps is $\myTheta{i/\dist^{\alpha-1}}$.
	We first  consider the case  $\alpha \in (2,3)$.
	Conditional on the event that the first $i$ jump-lengths are all smaller than $\dist$ (event $C_i$), the expectation of $X_i$ is zero and its variance is
	\[
	\myTheta{1} + i\cdot\sum_{d=1}^{\dist/4} \myTheta{\frac{d^2}{d^\alpha}} = \myTheta{i \cdot \nu \dist^{3 - \alpha}},
	\]
	for the integral test (\cref{fact:integraltest}), where $ \nu = \min \{\log \dist, \frac{1}{3 - \alpha}\} $.
	Chebyshev's inequality implies that
	\[
	\pr{\abs{X_i} \ge \dist/4 \mid C_i} \le \frac{\myTheta{i \cdot \nu \dist^{3 - \alpha}}}{\Theta(\dist^2)} = \myTheta{\frac{i\nu}{\dist^{\alpha-1}}}.
	\]
	Since the conditional event has probability $1-\myTheta{i/\dist^{\alpha-1}}$, then the ``unconditional'' probability  of the event $\abs{X_i} \le \dist/4$ is
	\[
	\left [1 - \myTheta{\frac{i}{\dist^{\alpha - 1}}}\right ]\left [1 - \mybigo{\frac{i  \nu}{\dist^{\alpha - 1}}}\right ] = 1 - \mybigo{\frac{\nu t}{\dist^{\alpha - 1}}},
	\]
	since $i\le t$, for  $ t $ which is some $ \mybigo{\dist^{\alpha - 1}/\nu} $.
	The same result holds analogously for $Y_i$ (the $y$-coordinate of the agent after the $i$-th jump), thus obtaining   $\abs{X_i} + \abs{Y_i} \le \dist/2$, with probability $1-\mybigo{\nu t/ \dist^{\alpha -1}}$ by the union bound.
	
	During the first jump-phase,  thanks to \cref{cor:visit-direct-path}, the probability the agents visits the \treasure is $\mybigo{1 / \dist^\alpha }$. Let $2 \le i \le t$. We want to estimate the probability that during $i$-th jump-phase the agents visits the \treasure, having the additional information that $t=\mybigo{\dist^{\alpha - 1}}$. As in the proof of \cref{lemma:prob_anyjumpfindstreasure}, we consider the node $\trespoint$  where the \treasure is located on, and   the rhombus centered in $\trespoint$ that contains the nodes within distance  $\dist / 4$ from $\trespoint$, namely	$ B_{\dist/4}(\trespoint) $ .
	Let $F_i$ be  the event that during the $i$-th jump-phase the agent visits the \treasure; let $V_{i-1}$ be the event that the $(i-1)$-th jump ends in $B_{\dist/4}(\trespoint)$, and let $W_{i-1}$ be  the event that the $(i-1)$-th jump ends at distance farther than $\dist/2$ from the origin.
	Finally, let $\levyflightrand{i}$ be the two-dimensional
	random variable denoting the coordinates of the node the agent is located on at the end of the $i$-th jump-phase. Then,
	\begin{align*}
	\pr{F_i \mid V_{i-1}}\pr{V_{i-1}\mid W_{i-1}} = & \ \sum_{v \in B_{\dist/4}(\trespoint)} \pr{F_i \mid \levyflightrand{i} = v}\pr{\levyflightrand{i} = v \mid W_{i-1}} \\
	\le & \ \bigo\left(\frac{1}{\dist^2}\right)\sum_{v \in B_{\dist/4}(\trespoint)} \pr{F_i \mid \levyflightrand{i} = v},
	\end{align*}
	where in the above inequalities we used the monotonicity property (\cref{lemma:monotonicity}, which holds since the process restricted to the jump endpoints is a \levyflight
), and the fact that, for each $v\in B_{\dist/4}(\trespoint)$, there are at least $\Theta\left(\dist^2\right)$ nodes at distance at least $\dist/2$ from the origin which are more likely to be the destination of the $i$-th jump than $v$. Then, we proceed as  in the proof of \cref{lemma:prob_anyjumpfindstreasure} and   obtain
	\begin{align}
	\pr{F_i \mid V_{i-1}}\pr{V_{i-1}\mid W_{i-1}} = \mybigo{\frac{\mu}{\dist^2}}, \label{eq:lflightlow2}
	\end{align}
	where $ \mu = \min \{\log \dist, \frac{1}{\alpha - 2}\} $.
	By the law of total probabilities, we get
	\begin{align}
	\pr{F_i} = & \ \pr{F_i \mid W_{i-1}}\pr{W_{i-1}} + \pr{F_i \mid W_{i-1}^C}\pr{W_{i-1}^C} \nonumber\\
	= & \ \left[\pr{F_i \mid W_{i-1}, V_{i-1}}\pr{V_{i-1}\mid W_{i-1}}+\pr{F_i \mid W_{i-1}, V_{i-1}^C}\pr{V_{i-1}^C\mid W_{i-1}}\right]\pr{W_{i-1}} \nonumber\\
	+ & \ \pr{F_i \mid W_{i-1}^C}\pr{W_{i-1}^C} \nonumber\\
{\le} & \ \left[\pr{F_i \mid V_{i-1}}\pr{V_{i-1}\mid W_{i-1}}+\pr{F_i \mid W_{i-1}, V_{i-1}^C}\right]\pr{W_{i-1}} + \pr{F_i \mid W_{i-1}^C}\pr{W_{i-1}^C} \nonumber\\
{\le} & \  \left[\mybigo{\frac{\mu}{\dist^2}} + \bigo\left(\frac{1}{\dist^\alpha}\right)\right]\bigo\left(\frac{\nu t}{\dist^{\alpha - 1}}\right) + \bigo\left(\frac{1}{\dist^\alpha}\right) = \bigo\left(\frac{\nu \mu t}{\dist^{\alpha + 1}}\right),
    \label{eq:lowbound1}
	\end{align}
	where for second-last inequality we used that $V_{i-1} \subset W_{i-1}$ and that $\pr{V_{i-1}^C \mid W_{i-1}}\le 1$, while for the last inequality we used \cref{eq:lflightlow2}, and
	that $\pr{F_i \mid W_{i-1},V_{i-1}^C} = \bigo\left(1/\dist^\alpha\right)$, which is true because the jump starts in a node whose distance form the \treasure is $\Omega(\dist)$, and that $\pr{F_i \mid W_{i-1}^C} =\bigo\left(1/\dist^\alpha\right)$, which is true for the same reason.
	
	Thus, by the union bound and by  \cref{eq:lowbound1}, the probability that during at least one between the $t$ jump-phases, the agent finds the \treasure is
	\begin{align*}
	\mybigo{\frac{1}{\dist^\alpha}} + (t-1)\bigo\left(\frac{\nu \mu t}{\dist^{\alpha + 1}}\right) &  = \mybigo{\frac{\nu \mu t^2}{\dist^{\alpha + 1}}}
	\end{align*}
	since $ t \ge \dist $,
	which gives the first claim of the lemma by observing that within time $ t $ at most $ t $ jumps can be performed.
	
	Consider now the case  $\alpha = 3$. The proof proceeds exactly as in the first case,
	with the only  key difference that  the variance of $X_i$ is $\myTheta{i \log \dist}$. This means that the probability that $\abs{X_i}$ is at least $\dist/4$ conditional to $C_i$ is $\mybigo{\log \dist / \dist^2}$, and the ``unconditional'' probability that $\abs{X_i}$ is less than $\dist/4$ is $1-\mybigo{t\log \dist / \dist^{2}}$. It thus follows that
	\[
	\pr{F_i} = \mybigo{{t \log \dist}/{\dist^{4}}}.
	\]
	Then we get the second claimed bound of the lemma:
	$
	\mybigo{{t^2\log \dist}/{\dist^{4}}}.
	$
	
\subsection{Proof of Corollary \ref{cor:plw23}}
\label{sec:cor:plw23}

From \cref{thm:lw23}\cref{thm:lw23:a} and the independence among the agents, we get that
\[
	\pr{\tau_\alpha^k(\trespoint) = \mybigo{\mu \dist^{\alpha - 1}}} = 1 - \left[ 1 - \myOmega{\frac{1}{\gamma\dist^{3 - \alpha}}}\right ]^k \ge 1 - e^{\myOmega{\frac{k}{\gamma\dist^{3 - \alpha}}}},
\]
where we have used the inequality $ 1 - x \le e^{-x} $ for all $ x\in \real $. Then, if $ \alpha = \alpha^\ast + 5\frac{\log\log \dist}{\log \dist} $,
\[
	\pr{\tau_\alpha^k(\trespoint) = \mylittleo{\frac{\dist^2 \log^6 \dist}{k}}} = 1 - e^{-\mylittleomega{\log \dist}},
\]
since $ \mu \le \frac{1}{\alpha - 2} \le \frac{\log\dist}{\log\log \dist}$, $ \dist^{\alpha - 1} = \frac{\dist^2\log^5\dist}{k} $, $ \gamma = \mylittleo{\log^4 \dist}  $ and $ \dist^{3 - \alpha} = \frac{k}{\log^5 \dist} $, thus giving Claim (a).
From \cref{thm:lw23}.\cref{thm:lw23:b} and the independence among the agents, we get
	\[
		\pr{\tau_\alpha^k(\trespoint) > t} = \left [1 - \mybigo{\frac{\mu \nu t^2}{\dist^{\alpha + 1}}}\right ]^k ,
	\]
	for $ \dist \le t = \mylittleo{\dist^{\alpha - 1}/ \nu} $.
	Let $ t = \frac{\dist^2 \cdot \dist^{\frac{\alpha - \alpha^\ast}{2}}}{k\log^4 \dist } $ which is a function in $ \mylittleo{\dist^{\alpha - 1}/ \nu} $ since $ \alpha > \overline{\alpha} $. If $ t \ge \dist $, we get
	\[
		\pr{\tau_\alpha^k ( \trespoint) > t} = \left [1 - \mybigo{\frac{\mu \nu \dist^4}{k^2\dist^{\alpha^\ast + 1}\log^8 \dist  }}\right ]^k \ge e^{-\mybigo{\frac{1}{\log^6 \dist}}} = 1 - \mybigo{\frac{1}{\log^6 \dist}},
	\]
	since $ \mu\nu \le \log^2 \dist $, $ \dist^{\alpha^\ast + 1} = \dist^4 / k $. Notice that, in the inequality we have used that $ 1 - x \ge e^{-\frac{x}{1 - x}} $ if $ x < 1 $, and in the last equality we have used the Taylor's expansion of the exponential function. If $ t < \dist $, we get $ \pr{\tau_\alpha^k(\trespoint) > t} = 1$ (at least $ \dist $ steps are needed to reach the \treasure). Therefore, Claim (b) follows.
Finally, from \cref{thm:lw23}.\cref{thm:lw23:c} and the independence among the agents, we get
	\[
		\pr{\tau_\alpha^k (\trespoint) =  \infty} = \left [1 - \mybigo{\frac{\mu \log \dist}{\dist^{3 - \alpha}}}\right ]^k = \left[1 - \mybigo{\frac{\log^2 \dist}{k \dist^{\alpha^\ast - \alpha}}}\right]^k \ge \exp \left (-\mybigo{\frac{\log^2 \dist}{\dist^{\alpha^\ast - \alpha}}}\right ),
	\]
	since $ \mu \le \log^ \dist $, $ \dist^{3- \alpha} = \dist^{3 - \alpha^\ast}\cdot \dist^{\alpha^{\ast} - \alpha} = k\dist^{\alpha^\ast - \alpha} $. Notice that in the last inequality we have used that $ 1 - x \ge e^{-\frac{x}{1-x}} $ for $ x < 1 $, which is our case since $ k \ge \log^6 \dist $. Hence, we have Claim (c).
 
\section{The Case \texorpdfstring{$\alpha\in(1,2]$}{alpha in (1,2]}}
\label{sec:equivpwbw}

We now analyze the hitting time of \levywalks with parameter $\alpha \in (1,2]$, which is the exponent range for which the jump length has unbounded mean and unbounded variance.
We show the following theorems.

\begin{theorem}
\label{thm:lw12}
	Let $\alpha\in[1+\epsilon,2)$, where $ \epsilon>0 $ is an arbitrarily small constant.
Let $ \trespoint \in \integer^2$, and $\ell = \|\trespoint\|_1$.
Let $ \mu = \min \{\log \dist, \frac{1}{2 - \alpha}\}  $.
	Then:
	\begin{enumerate}[(a)]
		\item \label{thm:lw12:a}
		$ \pr{\tau_\alpha(\trespoint) = \mybigo{\dist}} = \myOmega{{1}/{\mu\dist}}$;
\item \label{thm:lw12:b}
		$\pr{\tau_\alpha(\trespoint) < \infty} = \mybigo{{\mu\log\dist}/{\dist}} $.
	\end{enumerate}
\end{theorem}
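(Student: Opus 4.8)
The plan is to treat the two parts separately. Part (b) is essentially free: \cref{lemma:prob_never_find_target} already bounds the probability that a \levywalk with exponent $\alpha\in[1+\epsilon,3)$ ever visits $\trespoint$, and specializing its conclusion to $\alpha\in(1,2)$ yields $\pr{\tau_\alpha(\trespoint)<\infty}=\mybigo{\mu\log\dist/\dist}$ (in this regime the $\dist^{-1}$ contribution is the operative one in that lemma's proof). Since $\{\tau_\alpha(\trespoint)<\infty\}$ is exactly the event that $\trespoint$ is visited at least once, this is the claimed bound, and no further work is needed.

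The substance is part (a). The guiding picture is that for $\alpha<2$ the walk is ballistic: a jump conditioned to have length at most $\myTheta{\dist}$ has expected length $\myTheta{\mu\dist^{2-\alpha}}$ (a computation via the integral test and \eqref{eq:jumpatleast}), so in $\mybigo{\dist}$ steps the walk makes about $m=\myTheta{\dist^{\alpha-1}/\mu}$ short jumps before drifting away, and each jump-phase that begins near the origin passes through $\trespoint$ with probability $\myTheta{1/\dist^\alpha}$ by \cref{cor:visit-direct-path}; multiplying gives $m\cdot\dist^{-\alpha}=\myTheta{1/(\mu\dist)}$. To make this rigorous I would fix $m=\delta\dist^{\alpha-1}/\mu$ for a small constant $\delta$, write $\ell_j$ for the i.i.d.\ length of the $j$-th jump and $D_i=\sum_{j\le i}\ell_j$, and define
\[
    B_i=\{D_{i-1}\le \dist/4\}\cap\{\text{jump-phase } i \text{ passes through }\trespoint\},\qquad i=1,\dots,m .
\]
On $B_i$ the start of phase $i$ lies within distance $\dist/4$ of the origin, so $\trespoint$ is reached by step $D_{i-1}+\|\cdot-\trespoint\|_1\le \dist/4+\tfrac54\dist=\mybigo{\dist}$; hence $\bigcup_i B_i\subseteq\{\tau_\alpha(\trespoint)=\mybigo{\dist}\}$.

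The key point is that the $B_i$ are pairwise disjoint, which removes any need for a second-moment argument. Indeed, a jump-phase reaching $\trespoint$ from a start within $\dist/4$ of the origin must have length at least $\tfrac34\dist$ (as $\|\trespoint\|_1=\dist$), whereas $B_j$ for $j>i$ requires $D_{j-1}\le\dist/4$ and thus $\ell_i\le\dist/4$; this is contradictory, and symmetrically for $j<i$. Thus $\pr{\bigcup_i B_i}=\sum_i\pr{B_i}$, and for each term I would bound $\pr{B_i}\ge\pr{D_{i-1}\le\dist/4}\cdot\inf_{\|v\|_1\le\dist/4}\pr{\text{fresh phase from }v\text{ hits }\trespoint}$, estimating the second factor by $\myOmega{1/\dist^\alpha}$ via \cref{cor:visit-direct-path} (using $\|v-\trespoint\|_1\in[\tfrac34\dist,\tfrac54\dist]$), and the first factor by a constant.

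The one delicate step, and the main obstacle, is the claim $\pr{D_{i-1}\le\dist/4}=\myOmega{1}$ for all $i\le m$, since for $\alpha<2$ the jump lengths have infinite mean and $D_{m-1}$ is a heavy-tailed sum that can be dominated by a single huge term. I would handle it by the decomposition $\pr{D_{m-1}\le\dist/4}\ge\pr{\text{all }\ell_j\le\dist/4}\cdot\pr{D_{m-1}\le\dist/4\mid \text{all }\ell_j\le\dist/4}$. By \eqref{eq:jumpatleast} the first factor equals $(1-\myTheta{\dist^{-(\alpha-1)}})^{m}=e^{-\myTheta{\delta/\mu}}=\myOmega{1}$ (as $\mu\ge1$), while conditionally each $\ell_j$ has mean $\myTheta{\mu\dist^{2-\alpha}}$, so $\mean{[D_{m-1}\mid\cdot]}=\myTheta{\delta\dist}$ and Markov's inequality makes the second factor at least $1/2$ for $\delta$ small enough; monotonicity of $D_i$ in $i$ extends the bound to every $i\le m$. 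Summing the disjoint events then gives $\pr{\tau_\alpha(\trespoint)=\mybigo{\dist}}\ge\sum_{i\le m}\pr{B_i}=\myOmega{m/\dist^\alpha}=\myOmega{1/(\mu\dist)}$, completing part (a).
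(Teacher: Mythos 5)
Your proposal is correct and follows essentially the same route as the paper's proof (\cref{lemma:hittingparetoballistic} for part (a), \cref{lemma:prob_never_find_target} for part (b)): cap the number of jump-phases at $m=\Theta(\dist^{\alpha-1}/\mu)$, show the partial sums of jump lengths stay $\mybigo{\dist}$ with constant probability by conditioning on all jumps being short and applying Markov's inequality, and invoke \cref{cor:visit-direct-path} to get a $\Theta(1/\dist^\alpha)$ hit probability per phase that starts near the origin. The only difference is how the per-phase events are combined --- you observe they are pairwise disjoint and sum their probabilities (a clean justification, enabled by your tighter $\dist/4$ threshold), whereas the paper lower-bounds the union via a chain-rule product --- and both yield $\myOmega{m/\dist^\alpha}=\myOmega{1/(\mu\dist)}$.
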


\begin{theorem}
\label{thm:lw2}
	Let $ \trespoint \in \integer^2$ and $\ell = \|\trespoint\|_1$.
	Then:
	\begin{enumerate}[(a)]
		\item \label{thm:lw2:a}
		$ \pr{\tau_2(\trespoint) = \mybigo{\dist}} = \myOmega{{1}/{\dist \log \dist}}$;
\item \label{thm:lw2:b}
		$\pr{\tau_2(\trespoint) < \infty} = \mybigo{{\log^2\dist}/{\dist}} $.
	\end{enumerate}
\end{theorem}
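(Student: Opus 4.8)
Part~(b) is immediate: it is exactly the $\alpha=2$ instance of \cref{lemma:prob_never_find_target}, which already asserts that a \levywalk with exponent $2$ visits $\trespoint$ at some step with probability $\mybigo{\log^2\dist/\dist}$. So the work is in the lower bound of part~(a). I would prove it by a first--moment estimate combined with a bound on the ``clumping'' of visits, in the same spirit as the flight analysis of \cref{sec:levywalk}, treating $\alpha=2$ as the boundary of \cref{thm:lw12} where the parameter $\mu=\min\{\log\dist,\frac{1}{2-\alpha}\}$ takes its extreme value $\log\dist$. Concretely, fix $T=\myTheta{\dist}$, let $Z=\Zw{\trespoint}{T}$ be the number of visits to $\trespoint$ in the first $T$ steps, so $\pr{\tau_2(\trespoint)\le T}=\pr{Z>0}$, and use the identity $\pr{Z>0}=\expect{Z}/\expect{Z\mid Z>0}$, bounding the two factors separately.

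\emph{First moment.} The key observation is that after $t$ steps the walk has $\ell_1$-displacement at most $t$, since each step moves it by one; hence \emph{every} jump-phase that begins within the first $T$ steps starts at a node $w$ with $\manhattan{w}\le T=\myTheta{\dist}$, so $\manhattan{w-\trespoint}=\mybigo{\dist}$ (assuming $\trespoint$ has not yet been reached). By \cref{cor:visit-direct-path} the conditional probability that this phase visits $\trespoint$ is therefore $\myOmega{1/\dist^2}$, whatever the past. Writing $J$ for the number of jump-phases completed within the first $T$ steps, this yields $\expect{Z}\ge\myOmega{1/\dist^2}\cdot\expect{J}$. It remains to show $\expect{J}=\myOmega{\dist/\log\dist}$. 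For this I would fix $m=\myTheta{\dist/\log\dist}$ and show the first $m$ phases fit inside $T$ steps with constant probability: by \eqref{eq:jumpatleast} each of the first $m$ jumps exceeds $T$ with probability $\myTheta{1/\dist}$, so with probability $1-\mylittleo{1}$ none does; on that event the total length of the $m$ phases is a sum of $m$ truncated jump lengths, each of mean $\myTheta{\log\dist}$ (by the integral test, as $\sum_{j\le T}\pr{d\ge j}=\myTheta{\log\dist}$ when $\alpha=2$), hence of total mean $\myTheta{\dist}$; choosing the hidden constant in $m$ small enough and applying Markov's inequality makes this total at most $T$ with probability at least $1/2$. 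Thus $\expect{J}=\myOmega{\dist/\log\dist}$ and $\expect{Z}=\myOmega{1/(\dist\log\dist)}$.

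\emph{Bounded clumping.} I would then show $\expect{Z\mid Z>0}=\mybigo{1}$. Condition on hitting $\trespoint$, first during some phase. Because a direct-path traverses nodes of strictly increasing distance from its start, each node is visited at most once per phase, so there is no further visit within that phase, and the subsequent evolution is a fresh \levywalk from the phase's endpoint. Hence $\expect{Z\mid Z>0}$ is at most $1$ plus the expected number of visits a \levywalk makes to a fixed node started from an arbitrary node, which by translation invariance is $1$ plus the expected self-return count. In the ballistic regime $\alpha=2$ this count is $\mybigo{1}$, argued as in \cref{lemma:visitsorigin}: once the walk has reached distance $\lambda$, the monotonicity property (\cref{lemma:monotonicity}, applied to the jump endpoints, which form a \levyflight) makes the start no more likely than $\myOmega{\lambda^2}$ other nodes; since for $\alpha=2$ the displacement after $t$ steps is $\myOmega{t/\log t}$, the per-step return probabilities are $\mybigo{(\log t/t)^2}$ and hence summable, with the mid-phase pass-throughs bounded analogously via \cref{cor:visit-direct-path}. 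Combining the two factors, $\pr{Z>0}=\expect{Z}/\expect{Z\mid Z>0}=\myOmega{1/(\dist\log\dist)}$, which is part~(a).

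\textbf{Main obstacle.} The delicate point is the clumping bound $\expect{Z\mid Z>0}=\mybigo{1}$. Unlike the \levyflight, the \levywalk is non-Markovian and also occupies mid-phase nodes, so the clean visits-to-origin estimate of \cref{lemma:visitsorigin} (proved there only for $\alpha\in(2,3]$, and for the flight) does not transfer verbatim and must be re-derived at $\alpha=2$, simultaneously controlling the returns of the endpoint flight (through monotonicity) and the mid-phase pass-throughs (through \cref{cor:visit-direct-path}), and in particular justifying the $\myOmega{t/\log t}$ displacement lower bound that makes the return series converge. By contrast, the first-moment count of phases is routine once one exploits the displacement-$\le$-steps bound, which is exactly what removes any need to control how far the walk wanders.
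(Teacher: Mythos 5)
Your part~(b) matches the paper exactly (it is the $\alpha=2$ case of \cref{lemma:prob_never_find_target}), and the first half of your part~(a) — with constant probability the first $m=\Theta(\dist/\log\dist)$ jump-phases all stay within $\mybigo{\dist}$ of the origin and take $\mybigo{\dist}$ total steps, and each such phase hits $\trespoint$ with probability $\myTheta{1/\dist^2}$ by \cref{cor:visit-direct-path} — is precisely the computation in the paper's \cref{lemma:hittingparetoballistic}. Where you genuinely diverge is in how you convert the per-phase hit probabilities into a lower bound on $\pr{Z>0}$: the paper applies the chain rule directly to the union $\bigcup_{i\le m}(F_i\cap W_{i-1})$, getting $1-\prod_i(1-\pr{F_i\cap W_{i-1}})=\myOmega{m/\dist^2}$ without ever needing to control repeated visits, whereas you use the identity $\pr{Z>0}=\expect{Z}/\expect{Z\mid Z>0}$ and must therefore also prove the clumping bound $\expect{Z\mid Z>0}=\mybigo{1}$. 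That bound is true and provable with the paper's own machinery (the epoch decomposition of \cref{lemma:prob_never_find_target} run with the target at the start, where the $\alpha=2$ series $\sum_i \log^2(2^i)/2^i$ converges), but it is a nontrivial extra lemma that the paper's route renders unnecessary; your appeal to ``translation invariance'' to reduce visits-from-an-arbitrary-node to the self-return count does not work as stated (translation invariance only equates visits to $\trespoint$ from $v$ with visits to $\trespoint-v$ from the origin), so you would need either a uniform-in-$v$ bound via that epoch argument or a monotonicity statement for walks rather than flights. Two further small repairs: conditioning on a phase being \emph{completed} within the first $T$ steps biases that phase's jump length downward and is correlated with hitting $\trespoint$, so the inequality $\expect{Z}\ge\myOmega{1/\dist^2}\expect{J}$ should instead be run over phases that \emph{start} within the budget (with the event $W_{i-1}$ measurable in the past), and the hit may then occur up to $\mybigo{\dist}$ steps later, which costs only a constant factor in the time horizon. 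In short: correct and completable, but the second-moment half is avoidable work, and the paper's union/chain-rule argument is the shorter path.
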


The above theorems imply the following bounds  on the parallel hitting time.

\begin{corollary}\label{cor:lw12}
	Let $\alpha\in[1+\epsilon,2]$, where $ \epsilon>0 $ is an arbitrarily small constant.
    Let $\trespoint\in \integer^2$ and  $\dist = \|\trespoint\|_1$.
    Let $\mu = \min \{\log \dist, \frac{1}{2 - \alpha}\}  $. Then:
	\begin{enumerate}[(a)]
		\item \label{cor:lw12:a}
		$ \pr{\tau_{\alpha}^k(\trespoint) = \mybigo{\dist}} = 1 - e^{-\mylittleomega{\log \dist}}$, if $ k = \mylittleomega{\dist \log^2 \dist} $;
		\item \label{cor:lw12:b}
		$ \pr{\tau_\alpha^k (\trespoint) < \infty}  = o(1) $, if $ k = \mylittleo{\dist / \log^2 \dist} $.
	\end{enumerate}
\end{corollary}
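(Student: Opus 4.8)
The plan is to derive the corollary directly from the single-walk bounds of \cref{thm:lw12,thm:lw2}, using independence of the $k$ walks for part~\cref{cor:lw12:a} and a union bound for part~\cref{cor:lw12:b}. The one simplification that makes everything uniform across the range $\alpha\in[1+\epsilon,2]$ is the trivial observation that $\mu = \min\{\log\dist,\frac{1}{2-\alpha}\}\le\log\dist$: this lets me replace every occurrence of $\mu$ by $\log\dist$, so that the $\alpha<2$ bounds (where $\mu$ appears) and the $\alpha=2$ bounds (which already carry an extra $\log\dist$) take the same form. All walks are identical and independent, so I will simply combine the per-walk probabilities.

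For part~\cref{cor:lw12:a}, I would first record that, by \cref{thm:lw12}\cref{thm:lw12:a} when $\alpha\in[1+\epsilon,2)$ and by \cref{thm:lw2}\cref{thm:lw2:a} when $\alpha=2$, a single walk hits $\trespoint$ within $\mybigo{\dist}$ steps with probability at least $\myOmega{1/(\mu\dist)}\ge\myOmega{1/(\dist\log\dist)}$, using $\mu\le\log\dist$. Since the $k$ walks are independent and execute the same process, the probability that \emph{none} of them hits $\trespoint$ within the common $\mybigo{\dist}$ window is $(1-p)^k$ with $p=\myOmega{1/(\dist\log\dist)}$. Applying the standard inequality $1-x\le e^{-x}$, this failure probability is at most $e^{-\myOmega{k/(\dist\log\dist)}}$. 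When $k=\mylittleomega{\dist\log^2\dist}$ we have $k/(\dist\log\dist)=\mylittleomega{\log\dist}$, so the failure probability is $e^{-\mylittleomega{\log\dist}}$, which yields $\pr{\tau^k_\alpha(\trespoint)=\mybigo{\dist}}=1-e^{-\mylittleomega{\log\dist}}$.

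For part~\cref{cor:lw12:b}, I would bound the probability that some walk \emph{ever} visits $\trespoint$ by a union bound over the $k$ walks. By \cref{thm:lw12}\cref{thm:lw12:b} and \cref{thm:lw2}\cref{thm:lw2:b}, a single walk visits $\trespoint$ at some finite step with probability $\mybigo{\mu\log\dist/\dist}\le\mybigo{\log^2\dist/\dist}$, again by $\mu\le\log\dist$. Hence $\pr{\tau^k_\alpha(\trespoint)<\infty}\le k\cdot\mybigo{\log^2\dist/\dist}$, and for $k=\mylittleo{\dist/\log^2\dist}$ this is $\mylittleo{1}$, as claimed.

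There is no genuine obstacle here: the content lies entirely in the single-walk estimates of \cref{thm:lw12,thm:lw2}, and the corollary is essentially bookkeeping on top of them. The only point demanding minor care is reconciling the two exponent ranges, $\alpha\in[1+\epsilon,2)$ versus $\alpha=2$; the inequality $\mu\le\log\dist$ absorbs the extra logarithmic factor in the $\alpha=2$ bounds and makes both the lower bound of part~\cref{cor:lw12:a} and the upper bound of part~\cref{cor:lw12:b} hold uniformly for all $\alpha\in[1+\epsilon,2]$.
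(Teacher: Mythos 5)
Your proposal is correct and follows essentially the same route as the paper: both parts reduce to the single-walk bounds of \cref{thm:lw12,thm:lw2}, use $\mu\le\log\dist$ to unify the ranges $\alpha<2$ and $\alpha=2$, and then combine the $k$ independent walks (the paper phrases part (b) as $1-(1-p)^k$ bounded via $1-x\ge e^{-x/(1-x)}$ rather than your union bound $kp=\mylittleo{1}$, but these are interchangeable). No gaps.
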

Recall also that the trivial lower bound $\tau_{\alpha}^k(\trespoint) \geq \ell$ holds.

\subsection{Proof of Theorems \ref{thm:lw12} and \ref{thm:lw2}}

We first show
the following lemma, which bounds the hitting time for a single \levywalk.

\begin{lemma}
\label{lemma:hittingparetoballistic}
Let $\alpha \in (1,2]$ and $\trespoint \in \integer^2$ with $\|\trespoint\|_1=\dist$. Then,
    \begin{enumerate}[(a)]
        \item $ \pr{\tau_\alpha(\trespoint) = \bigo(\dist)} = \myOmega{\frac{1}{\mu \dist}}$, if $\alpha\in[1+\epsilon,2)$ for an arbitrarily small constant $ \epsilon>0 $, where $ \mu = \min\{\frac{1}{\alpha - 2}, \log \dist\} $.
        \item $ \pr{\tau_\alpha(\trespoint) = \bigo(\dist)} = \myOmega{\frac{1}{\dist \log \dist}}$, if $\alpha =2$.
    \end{enumerate}
\end{lemma}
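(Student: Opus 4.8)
The plan is to lower-bound $\pr{\tau_\alpha(\trespoint)=\mybigo{\dist}}$ by aggregating a small per-phase hitting probability over the many jump-phases that fit inside a linear step budget $T=\myTheta{\dist}$. The governing fact is ballistic in nature: if a jump-phase starts at a node $q$ with $\manhattan{q-\trespoint}=D$ and has length at least $D$, then by \cref{lemma:direct-path} the node it reaches at distance $D$ along its direct-path is essentially uniform on $R_D(q)$, so the phase visits $\trespoint$ with probability $\myTheta{1/D}$. Since a jump reaches length $\ge D$ only with probability $\myTheta{1/D^{\alpha-1}}$ by \cref{eq:jumpatleast}, a single phase starting at distance $D=\myTheta{\dist}$ hits $\trespoint$ with probability only $\myTheta{1/\dist^\alpha}$ (consistent with \cref{cor:visit-direct-path}); to reach the target bound $\myOmega{1/(\mu\dist)}$ I must therefore combine the $\myTheta{\dist^{\alpha-1}}$ phases available within $\mybigo{\dist}$ steps.

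First I would fix $T=\myTheta{\dist}$, set $Z=\Zw{\trespoint}{T}$, and use the identity $\pr{\tau_\alpha(\trespoint)\le T}=\pr{Z>0}=\expect{Z}\big/\expect{Z\mid Z>0}$, exactly as in the \levyflight analysis. For the denominator, after a first visit the expected number of further visits to $\trespoint$ is at most the expected number of times a \levywalk ever returns to its start; restricting to the jump-endpoints (which form a Markov \levyflight) and charging the within-phase visits via \cref{cor:visit-direct-path}, this is $\mybigo{\mu}$ for $\alpha\in(1,2)$, by a computation parallel to \cref{lemma:visitsorigin}. This is the only place where the factor $\mu$ enters.

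For the numerator I would prove $\expect{Z}=\myOmega{1/\dist}$ directly. Writing $Z=\sum_{j\ge1}\ind[\text{phase }j\text{ visits }\trespoint\text{ by step }T]$ and restricting to phases that begin by step $T/2$ with start-node at distance $\myTheta{\dist}$ from $\trespoint$, each such phase contributes $\myTheta{1/\dist^\alpha}$ in expectation (the probability it is long enough, $\myTheta{1/\dist^{\alpha-1}}$, times the conditional hit probability $\myTheta{1/\dist}$), and its hit lands by step $T$. Thus it suffices that, with constant probability, $\myOmega{\dist^{\alpha-1}}$ phases with such start-nodes occur within the budget. Here the heavy tail is essential: the partial sums $\sum_{i\le n}S_i$ of the jump-lengths have tail index $\alpha-1<1$, so they do \emph{not} obey a law of large numbers but are dominated by their largest term and reach scale $\dist$ only after $n=\myTheta{\dist^{\alpha-1}}$ jumps; hence $\myTheta{\dist^{\alpha-1}}$ phases indeed fit within $\mybigo{\dist}$ steps, and since $\trespoint$ lies at distance $\dist$ inside the resulting $\myTheta{\dist}$-displacement cloud, a constant fraction of start-nodes sit at distance $\myTheta{\dist}$ from it. Multiplying gives $\expect{Z}=\myOmega{\dist^{\alpha-1}\cdot\dist^{-\alpha}}=\myOmega{1/\dist}$, whence $\pr{Z>0}=\myOmega{1/(\mu\dist)}$.

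The main obstacle is this joint control: the step-count $\sum_i S_i$ is itself dominated by a single large jump, so one must reconcile ``$\myTheta{\dist^{\alpha-1}}$ phases fit in $\mybigo{\dist}$ steps'' with the fact that the net displacement is also $\myTheta{\dist}$, and verify via \cref{lemma:direct-path} that the per-phase hit probability stays $\myTheta{1/\dist}$ when a phase starts from a displaced random node rather than from $\origin$. Because the number of fitting phases and their lengths and directions are correlated, I would make this rigorous by conditioning on the jump-length sequence (equivalently, coupling to the continuous Pareto walk the paper sets up) before revealing the phase directions, so that the per-phase hit probabilities can be summed cleanly. The non-Markovianity of the \levywalk affects only the $\expect{Z\mid Z>0}$ bound and is sidestepped by working with the jump-endpoint \levyflight together with \cref{cor:visit-direct-path}. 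Finally, part (b), $\alpha=2$, follows from the identical scheme, the only change being that the return count becomes $\myTheta{\log\dist}$ (the $\alpha=2$ analogue of \cref{lemma:visitsorigin}), giving $\pr{\tau_2(\trespoint)=\mybigo{\dist}}=\myOmega{1/(\dist\log\dist)}$.
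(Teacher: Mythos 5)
Your high-level strategy (aggregate a per-phase hit probability of $\myTheta{1/\dist^{\alpha}}$ over the phases that fit in $\mybigo{\dist}$ steps) is the same as the paper's, but your accounting of where the factor $\mu$ comes from contains a genuine error. The central claim that ``$\myTheta{\dist^{\alpha-1}}$ phases indeed fit within $\mybigo{\dist}$ steps'' is false precisely in the regime where $\mu$ matters, i.e.\ $\alpha$ at or near $2$. The stable-law heuristic you invoke (partial sums of tail index $\alpha-1<1$ scale like $n^{1/(\alpha-1)}$, dominated by the largest term) breaks down at the boundary $\alpha-1=1$: for $\alpha=2$ the truncated mean of a jump is $\expect{S\wedge\dist}=\myTheta{\log\dist}$ and the sum of $n$ jumps concentrates around $\myTheta{n\log n}$, so only $\myTheta{\dist/\log\dist}=\myTheta{\dist^{\alpha-1}/\mu}$ phases complete within $\mybigo{\dist}$ steps with constant probability (and similarly $\myTheta{\dist^{\alpha-1}/\mu}$ when $2-\alpha=\mybigo{1/\log\dist}$). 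Consequently your numerator bound $\expect{Z}=\myOmega{1/\dist}$ is unjustified; the phase-count argument only yields $\expect{Z}=\myOmega{1/(\mu\dist)}$. The factor $\mu$ you instead attribute to the denominator $\expect{Z\mid Z>0}=\mybigo{\mu}$ is asserted without proof (the cited \cref{lemma:visitsorigin} covers only $\alpha\in(2,3]$, and for a ballistic walk the expected number of returns is plausibly $\mybigo{1}$, not $\myTheta{\mu}$); if the denominator is in fact $\myTheta{1}$ your two claims do not combine to the stated bound once the numerator is corrected, and if it really is $\myTheta{\mu}$ you would land at $\myOmega{1/(\mu^{2}\dist)}$, a factor $\mu$ short.

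The paper's proof places the $\mu$ where it actually arises and avoids the second-moment identity entirely: it conditions on all of the first $i\le c\dist^{\alpha-1}/\mu_{\alpha}$ jumps having length less than $\dist$ (an event of constant probability), observes that the conditional mean jump length is $\mybigo{\mu_{\alpha}\dist^{2-\alpha}}$ so that by Markov's inequality the walk stays in $Q_{3\dist/4}(\origin)$ through all these phases with constant probability, and then applies a chain-rule computation to the events $F_i\cap W_{i-1}$ (phase $i$ hits $\trespoint$ while the displacement is still at most $3\dist/4$), each of probability $\myOmega{1/\dist^{\alpha}}$ by \cref{cor:visit-direct-path}, to get $1-(1-\myOmega{1/\dist^{\alpha}})^{c\dist^{\alpha-1}/\mu_{\alpha}}=\myOmega{1/(\mu_{\alpha}\dist)}$. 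No bound on $\expect{Z\mid Z>0}$ is needed. To repair your argument you would need either to prove the correct phase count $\myTheta{\dist^{\alpha-1}/\mu}$ together with $\expect{Z\mid Z>0}=\mybigo{1}$, or to abandon the ratio identity in favor of the paper's direct union-over-phases computation.
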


\begin{proof}Consider a single agent moving according the \levywalk with parameter $\alpha \in (1,2]$ .
By Equation \eqref{eq:jumpatleast} in \cref{sec:preliminaries}, the probability the agent chooses a jump of length at least $\dist$ is of the order of $\myTheta{1/d^{\alpha - 1}}$. Let $ c $ be some constant to be fixed later, and let $ \mu_\alpha $ be equal to $ \min\{\frac{1}{\alpha - 2}, \log \dist\} $ if $ \alpha < 2 $, and to $ \log \dist $ if $ \alpha = 2 $. Then, the probability that all the first $c \dist^{\alpha-1}/\mu_\alpha$ jumps have length less than $  \dist $ is
\begin{align*}
    \left(1-\myTheta{\frac{1}{\dist^{\alpha-1}}}\right)^{\frac{c \dist^{\alpha-1}}{\mu_\alpha}}
\end{align*}
which is greater than positive constant strictly less than 1 thanks to the inequality $ \exp(-x / (1-x)) < (1 - x) $ for $ x < 1 $.
Let $E_i$ be the event that the $ i $-th jump-length is less than $ \dist $ and $ \EE_i = \cap_{1 \le j \le i} E_j $. By what has been said before, we have
\[
    \pr{\EE_i} \ge \myTheta{1} \ \ \text{for all $i\le c\dist^{\alpha-1} / \mu_\alpha$}.
\]
Conditional on $\EE_i$, the sum of the first $i$ jumps is at most $3\dist /4$ with constant probability. Indeed, if $j<i$, the expected value of $S_j$ is, for the integral test (\cref{fact:integraltest})
\[
\expect{S_j \mid \EE_i} = \mybigo{1} + \sum_{d=1}^{\dist  - 1} \frac{\levyconst d}{d^\alpha} = \mybigo{\mu_\alpha \dist^{2-\alpha}}.
\]
Thus,
\[
    \expect{\sum_{j=1}^{i} S_j \mid  \EE_i} \le \sum_{j=1}^{c\dist^{\alpha-1} / \mu_{\alpha}} \expect{S_j \mid \EE_i} = \mybigo{c\dist}.
\]
We choose $c $ small enough so that this expression is less than $\dist / 2$.
Conditional on $\EE_i$, the $\{S_j\}_{j\le i}$ random variables are non negative and we can use the Markov's inequality to get that their sum is bounded by $ 3\dist / 4 $ with constant probability.
Indeed
\begin{align*}
	\pr{\sum_{j=1}^{i} S_j \ge \frac{3\dist}{4} \Mid  \EE_i} \le \pr{\sum_{j=1}^{i} S_j \ge \frac{3\expect{\sum_{j=1}^{i} S_j \Mid  \EE_i}}{2} \Mid  \EE_i} \le \frac{2}{3}.
\end{align*}

The latter implies there is at least constant probability the agent has displacement at most $3\dist/4$ from the origin during the first $ c\dist^{\alpha-1} / \mu_\alpha$ jumps and in time $ \mybigo{\dist} $ (since the sum of all jumps is at most linear), without any conditional event:
\begin{align*}
   \pr{\sum_{j\le i} S_j \le 3\dist/4}
    \ge & \ \pr{\sum_{j\le i} S_j \le 3\dist/4 \Mid  \EE_i}\pr{\EE_i} \\
    \ge & \ \Theta(1)
\end{align*}
for each $i\le c\dist^{\alpha-1} / \mu_\alpha$.

Define the event $W_i=\{\sum_{j\le i} S_j \le 3\dist/4\}$. We now compute the probability that, given $i\le c\dist^{\alpha-1} / \mu_\alpha$, in the first $ i-1 $ jumps the displacement has been at most $ 3\dist/4 $ and during the $i$-th jump-phase the agent finds the \treasure. Let $F_i$ be such the latter event. Since
\[
    \pr{F_i, W_{i-1}} = \pr{F_i \Mid W_{i-1}}\pr{W_{i-1}},
\]
we estimate $\pr{F_i \mid W_{i-1}}$. Let $\levyrand{t}$ be the
two-dimensional random variable representing the coordinates of the nodes the \levywalk visits at time $t$. If $t_i$ is the time the agent ends the $i$-th jump-phase, we have
\[
    \pr{F_i \Mid W_{i-1}} \ge \sum_{v \in Q_{3\dist/4}(\origin)} \pr{F_i \Mid \levyrand{t_{i-1}} = v, W_{i-1}}\pr{\levyrand{t_{i-1}} = v \Mid W_{i-1}}.
\]
By \cref{cor:visit-direct-path}, the term $\pr{F_i \Mid \levyrand{t_{i-1}} = v, W_{i-1}}$ is $\myTheta{1/\dist^{\alpha}}$, and, since $ \levyrand{t_{i-1}} \in Q_{3\dist /4}(\origin) $ is implied by $ W_{i-1} $, we have
\[
    \sum_{v \in Q_{3\dist/4}(\origin)} \pr{F_i \Mid \levyrand{t_{i-1}} = v, W_{i-1}}\pr{\levyrand{t_{i-1}} = v \Mid W_{i-1}} \ge \Theta\left(\frac{1}{\dist^\alpha}\right)\cdot \pr{\levyrand{t_{i-1}} \in Q_{3\dist /4}(\origin) \Mid W_{i-1}} = \Theta\left(\frac{1}{\dist^\alpha}\right),
\]
implying $\pr{F_i,W_{i-1}} = \Omega\left(\frac{1}{\dist^\alpha}\right)$ for all $i\le c\dist^{\alpha-1} / \left (1 + \log \dist \cdot \ind _ {[\alpha = 2]}\right )$.
Then, for the chain rule, the probability that none of the events $F_i\cap W_{i-1}$ holds for each $i\le \dist^{\alpha-1} / \log(c\dist)$ is
\begin{align*}
    \pr{\bigcup_{i\le \frac{c\dist^{\alpha-1} }{ \mu_\alpha}}(F_i\cap W_{i-1})} =  & \ 1 - \pr{\bigcap_{i\le \frac{c\dist^{\alpha-1} }{ \mu_\alpha}}(F_i^C\cup W_{i-1}^C)} \\
    = & \ 1 - \prod_{i\le \frac{c\dist^{\alpha-1} }{\mu_\alpha}} \pr{F_i^C\cup W_{i-1}^C \bigm | \bigcap_{j\le i-1}(J_j^C\cup W_{j-1}^C)}
    \\
    = & \ 1 - \prod_{i\le \frac{c\dist^{\alpha-1} }{ \mu_\alpha}}\left(1 - \pr{F_i\cap W_{i-1} \bigm | \bigcap_{j\le i-1}(J_j^C\cup W_{j-1}^C)}\right) \\
    \\
    \stackrel{(\ast)}{\ge} & \ 1 - \prod_{i\le \frac{c\dist^{\alpha-1} }{\mu_\alpha}}\left(1 - \pr{F_i\cap W_{i-1}, \bigcap_{j\le i-1}(J_j^C\cup W_{j-1}^C)}\right)
    \\
    \stackrel{(\star)}{=} & \
     1 - \prod_{i\le \frac{c\dist^{\alpha-1} }{\mu_\alpha}}\left(1 - \pr{F_i\cap W_{i-1}}\right)
    \\
= & \ 1 -\left(1 - \Omega\left(\frac{1}{\dist^\alpha}\right)\right)^{\frac{c\dist^{\alpha-1} }{ \mu_\alpha}} \\
    \ge & \
    1 - e^{-\Omega\left(\frac{c}{\mu_\alpha \dist }\right)} = \Omega\left(\frac{c}{\mu_\alpha \dist}\right),
\end{align*}
where, $(\ast)$ holds since $\pr{A \mid B} \ge \pr{A,B}$, $(\star)$ holds since $W_{i-1} \subseteq (W_{j-1}^C \cup J_{j}^C)$ for $j\le i-1$, and the last equality holds by the inequality $ e^{-x} \ge 1 - x $ for all $ x $, and by the Taylor's expansion of $f(x) = e^x$. Then, there is probability at least $\Omega\left(c/\left (\mu_\alpha\dist\right )\right)$ to find the \treasure within time $\bigo(\dist)$.
\end{proof}

\cref{lemma:hittingparetoballistic} gives part (a) of \cref{thm:lw12,thm:lw2}, while part (b) comes from \cref{lemma:prob_never_find_target}.

\subsection{Proof of Corollary~\ref{cor:lw12}}

	First, suppose $ \alpha \in [1+\epsilon,2) $.
	From \cref{cor:lw12}.\cref{cor:lw12:a} and the independence between agents, we get that
	\[
	\pr{\tau_\alpha^k(\trespoint) = \mybigo{\dist}} = 1 - \left [ \myOmega{\frac{1}{\mu\dist}}\right ]^k \ge 1 - e^{-\myOmega{\frac{k}{\dist \log \dist }}},
	\]
	where we  used the inequality $ 1 - x \le e^{-x} $ for every real $ x $ and  the bound  $ \mu \le \log \dist $.
From \cref{cor:lw12}.\cref{cor:lw12:b} and the independence between agents, we get that
	\[
	\pr{\tau_\alpha^k(\trespoint) = \infty} = \left[1 - \mybigo{\frac{\mu \log \dist}{\dist}}\right] ^ k \ge \exp\left (-\mybigo{\frac{k \log^2 \dist}{\dist}}\right ),
	\]
	where we   used  again  $ \mu \le \log \dist $ and the inequality $ 1 - x \ge e^{-\frac{x}{1-x}} $ for every real $x$. Thus, if $ k = \mylittleo{\dist / \log^2 \dist} $, for the Taylor's expansion of the exponential function, we get hitting time $ \infty $ with probability $ 1 - o(1) $.
For $ \alpha = 2 $ the proof proceeds exactly in the same way.

\section{The Case \texorpdfstring{$\alpha\in(3,\infty)$}{alpha in (3,infty)}}
\label{sec:equivpwrw}

We analyze now the hitting time of \levywalks with parameter $\alpha \in (3,\infty)$, which is the exponent range for which the jump length has bounded mean and bounded variance.

\begin{theorem}
	\label{thm:lw3+}
	Let $\alpha\in (3, \infty)$ and  $\trespoint\in \integer^2$ with $\dist = \|\trespoint\|_1$.
	Let $\nu = \min\{\log\dist, \frac{1}{\alpha - 3}\}$, and $\gamma = \frac{\alpha^2}{(\alpha-3)^2}$. Then:
	\begin{enumerate}[(a)]
		\item \label{thm:lw3+:a}
		$ \pr{\tau_\alpha(\trespoint) = \mybigo{\dist^2 \log ^ 2 \dist}}
		= \myOmega{{1}/{(\gamma \log^4 \dist)}}
		$,
		if $\alpha \ge  3 + \mylittleomega{\log \log \dist / \log \dist} $;
		
		\item \label{thm:lw3+:b}
		$ \pr{\tau_\alpha(\trespoint) \leq t } = \mybigo{ \nu \cdot {t^2} /{\dist^{\alpha+1}} }$, for any step $\dist \le t = \mybigo{\dist^{2} / \nu}$.
	\end{enumerate}
\end{theorem}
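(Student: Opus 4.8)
The plan is to mirror the three-step strategy of \cref{sec:levywalk}, since the diffusive regime $\alpha>3$ is the ``symmetric'' counterpart of the super-diffusive regime $\alpha\in(2,3)$ around the threshold $\alpha=3$. For part \cref{thm:lw3+:a} I would first bound the hitting time of a suitably capped \levyflight and then transfer the bound to the \levywalk by coupling; for part \cref{thm:lw3+:b} I would argue directly on the \levywalk, bounding the probability that a single jump-phase hits $\trespoint$. The essential difference from \cref{sec:levywalk} is that the projected one-dimensional step now has \emph{bounded} variance $\variance{S^x_1\mid E_1}=\myTheta{\nu}$, with $\nu=\min\{\log\dist,\frac1{\alpha-3}\}$, because $\sum_d d^{2-\alpha}$ converges for $\alpha>3$. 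Hence the walk is genuinely diffusive: after $t$ steps it spreads to distance $\myTheta{\sqrt{t\nu}}$ rather than $\myTheta{t^{1/(\alpha-1)}}$, and it is recurrent. Recurrence is precisely why the statement contains no ``$\pr{\tau_\alpha(\trespoint)<\infty}$'' clause as in \cref{thm:lw23}, since that probability equals $1$.

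For part \cref{thm:lw3+:a} I would prove an analogue of \cref{lemma:hittinglevyflight}: fix $t=\myTheta{\dist^2\log^2\dist}$ (the cover-time scale of a ball of radius $\dist$), cap jumps at $(t\log t)^{1/(\alpha-1)}=\mylittleo{\dist}$, and condition on the event $\EE_t$ that no jump exceeds the cap, which has probability $1-\mybigo{1/\log t}$. I would reuse the partition into $\firstregion=Q_\dist(\origin)$, a middle annulus $\secondregion$ reaching the diffusive spread $\myTheta{\sqrt{t\nu}}$ (so that $\trespoint\in\firstregion$ while every $v\in\secondregion$ has $\|v\|_\infty\ge\dist=\manhattan{\trespoint}$), and a far region $\thirdregion$. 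The three bounds adapt directly: the walk leaves $\firstregion$ within $t/4$ steps with constant probability (now via the diffusive reach, using $t\nu\gg\dist^2$), so \cref{lemma:boundclose} gives $\sum_{v\in\firstregion}\expect{\Zf{v}{t}\mid\EE_t}\le ct$; and Chebyshev on the projected coordinate with $\variance{X_{t'}\mid\EE_t}=\myTheta{t'\nu}$ bounds the visits to $\thirdregion$ as in \cref{lemma:boundfaraway}. Monotonicity (\cref{lemma:monotonicity,corollary:induction}) yields $\sum_{v\in\secondregion}\expect{\Zf{v}{t}\mid\EE_t}\le|\secondregion|\cdot\expect{\Zf{\trespoint}{t}\mid\EE_t}$, and since the total number of visits is exactly $t$, I get a lower bound on $\expect{\Zf{\trespoint}{t}\mid\EE_t}$. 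Feeding this, together with a bound on $a_t(\alpha)=\expect{\Zf{\origin}{t}\mid\EE_t}$ obtained by the interval-partition method of \cref{lemma:visitsorigin} with $\lambda=\myTheta{\sqrt{t'\log t'\,\nu}}$, into the identity $\pr{h_f\le t\mid\EE_t}=\expect{\Zf{\trespoint}{t}\mid\EE_t}/\expect{\Zf{\trespoint}{t}\mid\Zf{\trespoint}{t}>0,\EE_t}$ (bounding the denominator by $a_t(\alpha)$ via \cref{corollary:visitsorigin}) gives $\pr{h_f\le t\mid\EE_t}=\myOmega{1/(\gamma\log^4\dist)}$. Finally I would invoke the coupling argument of \cref{lemma:couplevyflightintowalk}, whose proof is valid for all $\alpha>2$; since $\mu=\min\{\log\dist,\frac1{\alpha-2}\}=\myTheta{1}$ here, the $t$ flight-jumps correspond to $\myTheta{t}=\myTheta{\dist^2\log^2\dist}$ walk-steps with probability $1-e^{-t^{\myTheta{1}}}$, establishing \cref{thm:lw3+:a}.

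For part \cref{thm:lw3+:b} I would follow \cref{lemma:lowbound_superdiffusive} almost verbatim, the only change being the truncated projection variance. Let $X_i$ be the $x$-coordinate after the $i$-th jump and $C_i$ the event that the first $i$ jumps are all shorter than $\dist$. For $\alpha>3$ one has $\variance{X_i\mid C_i}=\myTheta{i\nu}$ (no $\dist^{3-\alpha}$ factor, as the truncated second moment converges), so Chebyshev gives $\pr{\abs{X_i}\ge\dist/4\mid C_i}=\mybigo{i\nu/\dist^2}$; combined with $\pr{C_i^C}=\mybigo{i/\dist^{\alpha-1}}$, this shows that for $\dist\le t=\mybigo{\dist^2/\nu}$ the walk stays within distance $\dist/2$ of the origin throughout the first $t$ jumps with good probability. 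I would then bound the probability $\pr{F_i}$ that the $i$-th jump-phase hits $\trespoint$, splitting on whether the previous jump landed far from the origin and using \cref{lemma:monotonicity,cor:visit-direct-path} exactly as for \cref{eq:lflightlow2,eq:lowbound1}; a union bound over the $\le t$ jump-phases gives $\pr{\tau_\alpha(\trespoint)\le t}=\mybigo{\nu t^2/\dist^{\alpha+1}}$ (the factor $\mu$ appearing in \cref{lemma:lowbound_superdiffusive} is now absorbed since $\mu=\myTheta{1}$).

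The main obstacle is the part-\cref{thm:lw3+:a} analysis of the \emph{recurrent} diffusive flight: controlling the expected number of returns to the origin $a_t(\alpha)$ and the effective size of $\secondregion$ precisely enough to extract the claimed $\gamma=\alpha^2/(\alpha-3)^2$ and $\log^4\dist$ factors. Unlike the super-diffusive case, where $a_t(\alpha)=\mybigo{1/(3-\alpha)^2}$ is uniformly bounded, here $a_t(\alpha)=\myTheta{\log t/\nu}$ grows with $t$ because the two-dimensional Green's function diverges; the interval-partition bound on $p_{\origin,k}$ must therefore be redone with the diffusive reach $\lambda=\myTheta{\sqrt{t'\log t'\,\nu}}$ and combined with the $\myTheta{\log t'}$ integers per interval. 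A secondary point is verifying that the capped flight still exits $\firstregion$ within a constant fraction of $t$, which needs $t\nu\gg\dist^2$ and thus the choice $t=\myTheta{\dist^2\log^2\dist}$; this holds throughout the regime $\alpha-3=\omega(\log\log\dist/\log\dist)$ assumed in \cref{thm:lw3+:a} (under which $\nu=\frac1{\alpha-3}$), provided $\alpha-3=\mylittleo{\log^2\dist}$ so that distance $\dist$ is reachable within the allotted time.
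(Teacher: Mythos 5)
For part~\cref{thm:lw3+:a} your architecture is sound, but it diverges from the paper's in several details. The paper does \emph{not} cap the jumps in this regime: \cref{prop:pfrwhittingtime} works with the unconditional flight, bounds the visits to $\firstregion$ by $\cardinality{\firstregion}\cdot\expect{\Zf{\origin}{t}} = \mybigo{\dist^2\log^2 t}$ via monotonicity together with $\expect{\Zf{\origin}{t}}=\mybigo{\log^2 t}$ (\cref{lemma:pfrworiginvisits}, proved by the CLT with reach $\myTheta{\sqrt{t'}}$) --- which is exactly why $t$ must be $\myTheta{\dist^2\log^2\dist}$ rather than $\myTheta{\dist^2}$ --- and it transfers to the walk by a one-line Chebyshev bound on $\sum_i S_i$ (\cref{lemma:coupling_levyflight_levywalk_diffusive}) rather than the conditioned Chernoff coupling of \cref{lemma:couplevyflightintowalk}. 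Your alternative of exporting \cref{lemma:boundclose} (leave $\firstregion$ with constant probability, then only a constant fraction of steps return) can also work, but the single-long-jump mechanism of that lemma fails for $\alpha>3$, so the exit must genuinely come from the CLT. Two quantitative slips: the projected step variance is $\myTheta{1+\nu}$, not $\myTheta{\nu}$ (the $d=1$ term alone contributes $\myOmega{1}$), which matters for $\alpha>4$ when you size $\secondregion$; and your estimate $a_t(\alpha)=\myTheta{\log t/\nu}$ is not delivered by the interval-partition argument you cite, since reaching distance $\myTheta{\sqrt{t'\log t'\,\nu}}$ within $2t'\log t'$ steps is only a constant-probability event, not a w.h.p.\ one. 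The paper settles for $\mybigo{\log^2 t}$, which already yields $\myOmega{1/(\gamma\log^4\dist)}$, so these are repairable.

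The genuine gap is in part~\cref{thm:lw3+:b}. You correctly note that for $\alpha>3$ the truncated variance is $\variance{X_i\mid C_i}=\myTheta{i(1+\nu)}$, with no $\dist^{3-\alpha}$ factor, so Chebyshev gives only $\pr{W_{i-1}}=\mybigo{t\nu/\dist^{2}}$ (the heavy-tail contribution $\mybigo{t/\dist^{\alpha-1}}$ being smaller). Feeding this into the decomposition of \cref{eq:lowbound1} yields $\pr{F_i}=\mybigo{1/\dist^{2}}\cdot\mybigo{t\nu/\dist^{2}}+\mybigo{1/\dist^{\alpha}}$ and, after the union bound, $\mybigo{\nu t^{2}/\dist^{4}}$ --- which for $\alpha>3$ misses the claimed $\mybigo{\nu t^{2}/\dist^{\alpha+1}}$ by a factor of $\dist^{\alpha-3}$. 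The paper's \cref{lemma:pwrwlowerb2} reaches the exponent $\alpha+1$ only because it writes $\variance{X_i\mid C_i}=\mybigo{i\nu\dist^{3-\alpha}}$, i.e.\ it imports the super-diffusive formula, which for $\alpha>3$ is not an upper bound on $i\sum_{d\le\dist}d^{2-\alpha}=\myTheta{i(1+\frac{1}{\alpha-3})}$. So your sketch, carried out as written, proves a strictly weaker statement than (b). To recover $\pr{W_{i-1}}=\mybigo{t/\dist^{\alpha-1}}$ you would need the sub-Gaussian tail of the capped sum (so that reaching distance $\dist/2$ is dominated by a single uncapped long jump), not Chebyshev, and this forces $t$ to sit below $\dist^{2}/\log\dist$ by an appropriate constant; you should either supply that argument or flag that the stated bound does not follow from the variance computation you give.
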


From the above result, we easily obtain the following bounds on the parallel hitting time.

\begin{corollary}
	\label{cor:lw3+}
	Let $\alpha\in (3, \infty)$,  $\trespoint\in \integer^2$, $\dist = \|\trespoint\|_1$,  and $ 1 \le k \le \mylittleo{\dist^2} $.
	Let $\nu = \min\{\log\dist, \frac{1}{\alpha - 3}\}$ and $\gamma = \frac{\alpha^2}{(\alpha-3)^2}$. Then:
	\begin{enumerate}[(a)]
		\item \label{cor:lw3+:a}
		$ \pr{\tau_\alpha^k(\trespoint) = \mybigo{\dist^2 \log ^ 2 \dist}}
		=  1 - e^{ - \mylittleomega{\log \dist} }$, if $ k \ge \myOmega{\log^6\dist} $ and $ \alpha \ge 3 + \mylittleomega{\frac{\log \log \dist}{\log \dist}}  $;
		
		\item \label{cor:lw3+:b}
		$ \pr{\tau_\alpha^k(\trespoint) \leq \dist^2 / \sqrt{k}} = 1 - o(1)$.
	\end{enumerate}
\end{corollary}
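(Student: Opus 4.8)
The plan is to obtain both parts of \cref{cor:lw3+} from the single--walk estimates of \cref{thm:lw3+} together with the mutual independence of the $k$ walks, in the same way that \cref{cor:plw23} is deduced from \cref{thm:lw23}.

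For part \cref{cor:lw3+:a}, set $p = \pr{\tau_\alpha(\trespoint) = \mybigo{\dist^2\log^2\dist}}$, which by \cref{thm:lw3+}\cref{thm:lw3+:a} is $\myOmega{1/(\gamma\log^4\dist)}$ with $\gamma = \alpha^2/(\alpha-3)^2$. By independence,
\[
\pr{\tau_\alpha^k(\trespoint) = \mybigo{\dist^2\log^2\dist}} = 1-(1-p)^k \ge 1-e^{-pk}.
\]
The hypothesis $\alpha \ge 3+\mylittleomega{\log\log\dist/\log\dist}$ is exactly what keeps $\gamma$ polylogarithmic (substituting the gap gives $\gamma = \mylittleo{\log^2\dist}$, hence $p = \myOmega{1/\polylog\dist}$), and together with $k \ge \myOmega{\log^6\dist}$ it makes $pk = \mylittleomega{\log\dist}$, yielding the stated $1-e^{-\mylittleomega{\log\dist}}$. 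The only routine step is the arithmetic relating the gap $\alpha-3$ to $\gamma$ and then to the exponent $pk$.

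Part \cref{cor:lw3+:b} is a \emph{lower} bound on the parallel hitting time, so the event carrying probability $1-\mylittleo{1}$ is that no walk reaches $\trespoint$ before step $\dist^2/\sqrt{k}$; that is, I would establish $\pr{\tau_\alpha^k(\trespoint) \ge \dist^2/\sqrt{k}} = 1-\mylittleo{1}$ (equivalently $\pr{\tau_\alpha^k(\trespoint) < \dist^2/\sqrt{k}} = \mylittleo{1}$), the companion of \cref{cor:plw3}\cref{cor:plw3:b}, expressing that $k$ walks gain at most a $\sqrt{k}$ factor over the single--walk time $\myTheta{\dist^2}$. A first--moment (union) bound over the $k$ walks, fed by \cref{thm:lw3+}\cref{thm:lw3+:b} at $t = \dist^2/\sqrt{k}$, gives
\[
\pr{\tau_\alpha^k(\trespoint) < \dist^2/\sqrt{k}} \le k\cdot\pr{\tau_\alpha(\trespoint)\le \dist^2/\sqrt{k}} = k\cdot\mybigo{\frac{\nu\,(\dist^2/\sqrt{k})^2}{\dist^{\alpha+1}}} = \mybigo{\nu\,\dist^{3-\alpha}}.
\]
Before applying \cref{thm:lw3+}\cref{thm:lw3+:b} I verify its range $\dist \le t = \mybigo{\dist^2/\nu}$ at $t = \dist^2/\sqrt{k}$: the lower inequality is $k \le \dist^2$, guaranteed by $k = \mylittleo{\dist^2}$, and the upper inequality is $\sqrt{k} = \myOmega{\nu}$. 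Since $\alpha > 3$ forces $\dist^{3-\alpha} = \mylittleo{1}$ and $\nu \le \log\dist$, the right--hand side is $\mylittleo{1}$, so the complementary event has probability $1-\mylittleo{1}$, as required.

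The main obstacle is the regime $\alpha \to 3^+$. The factor $\nu\,\dist^{3-\alpha}$ is $\mylittleo{1}$ only when $\alpha - 3 = \mylittleomega{\log\log\dist/\log\dist}$ (the same gap as in part \cref{cor:lw3+:a}); for $\alpha - 3 = \myTheta{1/\log\dist}$ one has $\nu = \log\dist$ and $\dist^{3-\alpha} = \myTheta{1}$, so the union bound degrades to $\myTheta{\log\dist}$, which is vacuous. (This degeneracy already affects the $\alpha = 3$ companion \cref{cor:plw3}\cref{cor:plw3:b}.) Thus the argument is clean precisely for $\alpha$ bounded away from $3$, where moreover the validity window $\sqrt{k} = \myOmega{\nu}$ holds automatically even for small $k$; pushing $\alpha$ arbitrarily close to $3$ would require a sharper single--walk bound on $\pr{\tau_\alpha(\trespoint) \le t}$ near the threshold than the one supplied by \cref{thm:lw3+}\cref{thm:lw3+:b}.
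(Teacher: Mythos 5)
Your treatment of \cref{cor:lw3+}\cref{cor:lw3+:a} is essentially the paper's own proof: independence, the inequality $1-x\le e^{-x}$, and \cref{thm:lw3+}\cref{thm:lw3+:a}. Like the paper, you leave the closing arithmetic implicit, and in fact it is looser than you assert: with $k=\myTheta{\log^6\dist}$ and $\alpha-3$ just above $\log\log\dist/\log\dist$ one only gets $pk = k/(\gamma\log^4\dist)=\mylittleomega{(\log\log\dist)^2}$, not $\mylittleomega{\log\dist}$ — but this looseness is inherited from the paper, not introduced by you.

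For \cref{cor:lw3+}\cref{cor:lw3+:b} your route genuinely differs from the paper's, and the difference is exactly where your proof has a gap. Your reading of the claim as $\pr{\tau_\alpha^k(\trespoint)\le \dist^2/\sqrt{k}}=\mylittleo{1}$ is the right one (the displayed ``$1-\mylittleo{1}$'' contradicts the surrounding prose and the analogue \cref{cor:plw3}\cref{cor:plw3:b}), but you then attack it with a union bound at exactly $t=\dist^2/\sqrt{k}$, obtaining $\mybigo{\nu\dist^{3-\alpha}}$. As you yourself admit, this is vacuous when $\alpha-3=\mybigo{\log\log\dist/\log\dist}$, and the validity requirement $\sqrt{k}=\myOmega{\nu}$ of \cref{thm:lw3+}\cref{thm:lw3+:b} also fails for small $k$ as $\alpha\to 3^+$; so your argument covers only $\alpha$ bounded away from $3$, whereas part \cref{cor:lw3+:b} carries no gap hypothesis at all. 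Your diagnosis — that the full range would need a sharper single-walk estimate — is not how the paper closes this. The paper proves the lower bound at the polylogarithmically smaller time $t=\dist^2/(\sqrt{k}\log^2\dist)$: the sacrificed $\log^2\dist$ factor makes $t=\mybigo{\dist^2/\nu}$ automatic for every $k\ge1$ and every $\alpha>3$ (since $\nu\le\log\dist$), and it turns the failure probability into $k\cdot \nu t^2/\dist^{\alpha+1}=\nu\dist^{3-\alpha}/\log^4\dist\le 1/\log^3\dist=\mylittleo{1}$ uniformly in $\alpha$, the troublesome factor $\nu\le\log\dist$ being absorbed by slack rather than having to be beaten by $\dist^{3-\alpha}$. (The paper uses the product form $\left[1-q\right]^k\ge e^{-kq/(1-q)}$ instead of a union bound; that difference is immaterial.) So the missing idea is slack in the time threshold, not sharper hitting-time estimates. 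One caveat in the paper's favor of your reading: its final step, upgrading the conclusion from $\dist^2/(\sqrt{k}\log^2\dist)$ to $\dist^2/\sqrt{k}$ ``since the former is smaller,'' goes the wrong way for a $\le$-event, so what the paper actually establishes is $\pr{\tau_\alpha^k(\trespoint)\le \dist^2/(\sqrt{k}\log^2\dist)}=\mybigo{1/\log^3\dist}$; the trade-off is thus: the paper gets the bound on the whole parameter range but at a weaker threshold, while you get the nominal threshold but only away from $\alpha=3$.
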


\cref{cor:lw3+}\cref{cor:lw3+:a} says that $\tau_\alpha^k(\trespoint) = \mybigo{\dist^2\log^2\dist}$, w.h.p., for $k \geq \polylog\ell$ and $ \alpha \ge 3 + \mylittleomega{\frac{\log\log \dist}{\log \dist}} $, and \cref{cor:lw3+}\cref{cor:lw3+:b} provides a crude lower bound indicating that increasing $k$ beyond $\polylog\ell$, can only result in sublinear improvement.

\subsection{Proof of Theorem \ref{thm:lw3+}}
    \label{ssec:diffusive}

The structure of the proof is similar to that for \cref{thm:lw23,thm:lw3}.
We will use the next three lemmas, which are analogous to \cref{lemma:hittinglevyflight,lemma:couplevyflightintowalk,lemma:lowbound_superdiffusive}, respectively

\begin{lemma}
    [\levyflight with {$\alpha \in (3,\infty)$}]
    \label{prop:pfrwhittingtime}
    Let $h_f$ be the hitting time of a \levyflight for target $\trespoint\in\integer^2$, and let $\ell = \|\trespoint\|_1$.
    If $ \alpha - 3 = \mylittleomega{\log \log \dist / \log \dist} $, then
    \[
        \pr{h_f = \mybigo{ \dist^2 \log ^2 \dist}} = \myOmega{\frac{(\alpha - 3)^3}{\alpha^2\log^4\dist}}.
    \]
\end{lemma}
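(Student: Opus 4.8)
The plan is to replay the proof of \cref{lemma:hittinglevyflight} (the super-diffusive flight bound) in the diffusive regime $\alpha>3$, where the jump length has \emph{finite} variance and the \levyflight behaves like a finite-variance two-dimensional random walk. Write $\sigma^2 := \var(S^x_1) = \myTheta{\sum_{d\ge 1} d^{2-\alpha}}$ for the variance of one coordinate of a jump; this is $\myTheta{1/(\alpha-3)}$ for $\alpha$ near $3$ and $\myTheta{1}$ for $\alpha$ bounded away from $3$. I fix the horizon $t = \myTheta{\dist^2\log^2\dist}$, the diffusive analogue of $\myTheta{\dist^{\alpha-1}}$, chosen so that the typical displacement $\sigma\sqrt t$ comfortably exceeds $\dist$. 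Unlike the case $\alpha\in(2,3]$, no capping event $\EE_t$ is needed since the variance is already finite, so the estimates are unconditional. I partition $\integer^2$ into $\AA_1 = Q_{\dist}(\origin)$, $\AA_2 = Q_R(\origin)\setminus \AA_1$ with $R = \myTheta{\sqrt t\,\log t}$, and $\AA_3 = \integer^2\setminus(\AA_1\cup\AA_2)$, and bound the mean number of visits to each region up to step $t$, using that the three counts sum deterministically to $t$.

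For the three regions I proceed as follows. (i) For $\AA_1$, I show $\sum_{v\in\AA_1}\expect{\levyflightvisits{v}{t}} \le ct$ for a constant $c<1$. The single-large-jump escape of \cref{lemma:boundclose} must be replaced, because for $\alpha>3$ a jump of length $\gg\dist$ is too rare within $t$ steps; instead I use diffusive spreading: by the central limit behaviour of the projected walk $X_i=\sum_{j\le i}S^x_j$ (mean $0$, variance $i\sigma^2\ge\myTheta{\dist^2}$ at $i=t/2$), the walk reaches distance $\myOmega{\dist}$ by step $t/2$ with constant probability, and then the symmetry/monotonicity argument of \cref{lemma:boundclose} (four disjoint, at-least-equally-likely translates of $Q_{\dist}$) forces it into $\AA_1$ only a constant fraction of the remaining time. (ii) For $\AA_3$, Chebyshev on $X_i,Y_i$ with $\var(X_i)\le t\sigma^2$ and $R=\myTheta{\sqrt t\log t}$ gives $\pr{\abs{X_i}>R}=\mybigo{\sigma^2/\log^2 t}$, which the hypothesis $\alpha-3=\mylittleomega{\log\log\dist/\log\dist}$ renders $\mylittleo{1}$, so $\sum_{v\in\AA_3}\expect{\levyflightvisits{v}{t}}=\mylittleo{t}$. (iii) For $\AA_2$, every $v$ satisfies $\|v\|_\infty>\dist=\|\trespoint\|_1$, so by \cref{lemma:monotonicity} $\expect{\levyflightvisits{v}{t}}\le\expect{\levyflightvisits{\trespoint}{t}}$, whence $\sum_{v\in\AA_2}\expect{\levyflightvisits{v}{t}}\le\abs{\AA_2}\cdot\expect{\levyflightvisits{\trespoint}{t}}=\mybigo{t\log^2 t}\cdot\expect{\levyflightvisits{\trespoint}{t}}$. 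Combining (i)--(iii) with $\sum_v\expect{\levyflightvisits{v}{t}}=t$ yields $\expect{\levyflightvisits{\trespoint}{t}}=\myOmega{1/\log^2\dist}$.

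Next I convert this into a hitting probability exactly as in the super-diffusive case. The $\EE_t$-free analogue of \cref{corollary:visitsorigin}(iii) (whose proof is identical) gives
\[
\pr{h_f\le t} = \pr{\levyflightvisits{\trespoint}{t}>0} \ge \frac{\expect{\levyflightvisits{\trespoint}{t}}}{a_t},
\]
where $a_t=\expect{\levyflightvisits{\origin}{t}}$ is the mean number of visits to the origin. It remains to upper bound $a_t$, for which I run the blocking argument of \cref{lemma:visitsorigin} in its diffusive form: partition the first $t$ steps into diffusive scales, use that after about $n$ steps the walk has spread to distance $\myTheta{\sigma\sqrt n}$ so that, by monotonicity, $\myOmega{\sigma^2 n}$ nodes are at least as likely as the origin and hence $p_{\origin,n}=\mybigo{1/(\sigma^2 n)}$, and sum the resulting series. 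Tracking the $\alpha$- and $(\alpha-3)$-dependence as in \cref{lemma:visitsorigin} gives $a_t=\mybigo{\gamma\log^2\dist}$ with $\gamma=\alpha^2/(\alpha-3)^2$. Substituting both estimates gives $\pr{h_f\le t}=\myOmega{1/(\gamma\log^4\dist)}$, which implies the claimed bound $\myOmega{(\alpha-3)^3/(\alpha^2\log^4\dist)}$ in the relevant regime where $\alpha$ is near $3$ (for $\alpha$ bounded away from $3$ the flight dominates a simple random walk and the bound is immediate).

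The main obstacle I anticipate is the visits-to-origin estimate $a_t$: obtaining the correct dependence $\gamma=\alpha^2/(\alpha-3)^2$ requires carefully tracking how the finite but $(\alpha-3)$-sensitive variance simultaneously controls the rate of diffusive spreading and the count of equally-likely nodes at each scale, and this is precisely the diffusive counterpart of the delicate computation in \cref{lemma:visitsorigin}. The hypothesis $\alpha-3=\mylittleomega{\log\log\dist/\log\dist}$ is exactly what keeps both this estimate and the $\AA_3$ bound (via $\sigma^2=\mylittleo{\log^2 t}$) meaningful. A secondary technical point is the lower bound in step (i): establishing that the walk reaches distance $\myOmega{\dist}$ with constant probability needs a central-limit/anti-concentration lower bound for the projected flight, rather than the single-large-jump mechanism available when $\alpha<3$.
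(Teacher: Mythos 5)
Your proposal is correct and follows essentially the same route as the paper: the same partition into $\AA_1,\AA_2,\AA_3$ at radii $\dist$ and $\myTheta{\sqrt{t}\log t}$ with $t=\myTheta{\dist^2\log^2\dist}$, the same monotonicity bound $\sum_{v\in\AA_2}\expect{\levyflightvisits{v}{t}}\le|\AA_2|\cdot\expect{\levyflightvisits{\trespoint}{t}}$, and the same conversion $\pr{h_f\le t}\ge\expect{\levyflightvisits{\trespoint}{t}}/a_t$ with the diffusive-scale blocking argument giving $a_t=\mybigo{\log^2 t}$. The only (harmless) deviations are in sub-steps: the paper bounds the visits to $\AA_1$ simply by $|\AA_1|\cdot a_t=\mybigo{\dist^2\log^2 t}$ via monotonicity rather than by your CLT-escape-plus-symmetry argument, controls $\AA_3$ with a variance Chernoff bound after capping jumps at $\sqrt{t}$ rather than with Chebyshev, and places the $\gamma=\alpha^2/(\alpha-3)^2$ loss in the radius of $\AA_2$ (via $\delta=\myTheta{1/(\alpha-3)^2}$) rather than in $a_t$ --- your accounting still yields the claimed $\myOmega{1/(\gamma\log^4\dist)}$.
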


\begin{lemma}
    \label{lemma:coupling_levyflight_levywalk_diffusive}
    Let $h_f$ be defined as in \cref{prop:pfrwhittingtime}, and
    let $\tau_\alpha (\trespoint)$ be the hitting time of a \levywalk with the same $\alpha \in (3,\infty)$,
    for the same \treasure. Then, for every step $t$,
    \[
        \pr{\tau_\alpha (\trespoint) = \mybigo{t}}
        \geq
            \pr{h_f \leq t}
            - \mybigo{\frac{\alpha}{(\alpha - 3)t}}
        .
    \]
\end{lemma}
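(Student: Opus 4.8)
The plan is to reuse the coupling argument of \cref{lemma:couplevyflightintowalk}, but to exploit the fact that in the diffusive regime $\alpha>3$ the jump length has both bounded mean \emph{and} bounded variance. Consequently, no capping event $\EE_t$ is needed here, and a single application of Chebyshev's inequality (rather than the Chernoff bound used in the super-diffusive case) will produce the polynomial error term $\mybigo{\alpha/((\alpha-3)t)}$. I couple the \levyflight and the \levywalk in the natural way, by letting them share the same sequence of jump distances and destinations; then after $j$ jumps the flight sits exactly at the $j$-th jump-phase endpoint of the walk, while the walk additionally traverses all intermediate nodes of each direct-path. The only quantity I must control is the number of walk-steps required to complete the first $t$ jump-phases.

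First I would estimate the first two moments of a single jump length $S_j=\|\Lf{j}-\Lf{j-1}\|_1$. Since the normalizing series is bounded between $1$ and $\zeta(3)$ for $\alpha>3$, we have $\levyconst = \myTheta{1}$. By the integral test (\cref{fact:integraltest}), $\expect{S_1}=\levyconst\sum_{d\geq1}d^{1-\alpha}=\myTheta{1}$, so $\expect{S_1}\leq M$ for some absolute constant $M$. For the second moment, since $\alpha-2>1$, the integral test gives $\expect{S_1^2}=\levyconst\sum_{d\geq1}d^{2-\alpha}=\mybigo{1+\tfrac{1}{\alpha-3}}=\mybigo{\tfrac{\alpha}{\alpha-3}}$, hence $\variance{S_1}=\mybigo{\tfrac{\alpha}{\alpha-3}}$. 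This is the step whose constant scaling must be tracked carefully, as it is precisely what supplies the $\alpha/(\alpha-3)$ factor in the stated error term.

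Next I would introduce three events: $A=\{\tau_\alpha(\trespoint)\leq(2M+1)t\}$, the event that the \levywalk visits $\trespoint$ within $(2M+1)t$ steps; $A_1=\{\sum_{j=1}^t S_j\leq 2Mt\}$; and $A_2=\{h_f\leq t\}$. The key implication is $A_1\cap A_2\Rightarrow A$: under $A_2$ one of the first $t$ jump endpoints equals $\trespoint$, and since the $j$-th jump-phase of the walk has length $\max\{S_j,1\}$, under $A_1$ the total number of steps to reach that endpoint is at most $\sum_{j=1}^t\max\{S_j,1\}\leq\sum_{j=1}^t S_j+t\leq(2M+1)t$, so the walk hits $\trespoint$ within $(2M+1)t=\mybigo{t}$ steps. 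Applying Chebyshev's inequality to the i.i.d.\ sum $\sum_{j=1}^t S_j$, whose mean is $t\expect{S_1}\leq Mt$ and variance is $t\,\variance{S_1}$, and noting that $A_1^C$ forces a deviation of at least $2Mt-t\expect{S_1}\geq Mt$ from the mean, I get
\[
    \pr{A_1^C}\ \leq\ \frac{t\,\variance{S_1}}{(Mt)^2}\ =\ \mybigo{\frac{\alpha}{(\alpha-3)t}}.
\]
Combining the pieces yields
\[
    \pr{\tau_\alpha(\trespoint)=\mybigo{t}}\ \geq\ \pr{A}\ \geq\ \pr{A_1\cap A_2}\ \geq\ \pr{A_2}-\pr{A_1^C}\ =\ \pr{h_f\leq t}-\mybigo{\frac{\alpha}{(\alpha-3)t}},
\]
which is the claim.

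I do not anticipate a serious obstacle: the coupling and the event decomposition are identical in spirit to \cref{lemma:couplevyflightintowalk}, and the real work is the elementary moment estimate. The one point requiring care is verifying that the bound $\variance{S_1}=\mybigo{\alpha/(\alpha-3)}$ is tight enough to reproduce exactly the stated error term, and that $\expect{S_1}=\myTheta{1}$ so that the constant $M$ (and hence the denominator $M^2$) is genuinely $\myTheta{1}$ uniformly in $\alpha$; both follow from the integral test together with $\levyconst=\myTheta{1}$.
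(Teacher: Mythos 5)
Your proposal is correct and follows essentially the same route as the paper: both proofs couple the walk to the flight via the shared jump sequence, bound the variance of a single jump length by $\mybigo{\alpha/(\alpha-3)}$ using the integral test, apply Chebyshev's inequality to the sum $\sum_{j=1}^t S_j$ to get the $\mybigo{\alpha/((\alpha-3)t)}$ error term, and conclude by intersecting with the event $\{h_f\le t\}$. Your version is slightly more careful than the paper's (explicitly tracking the constant $M$ and accounting for the one step consumed by zero-length jumps via $\max\{S_j,1\}$), but these are refinements of the same argument, not a different one.
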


\begin{lemma}
    \label{lemma:pwrwlowerb2}
    Let $\alpha \in (3,\infty)$, $\trespoint\in\integer^2$, and $\ell =\|\trespoint\|_1$.
    For any step $t$ such that $\dist \leq t = \mybigo{\dist^2 / \nu}$,

    \[\pr{\tau_\alpha (\trespoint) \leq t}
    =
    \mybigo{ {\nu t^2}/{\dist^{\alpha + 1}}},
    \]
    where $ \nu = \min\{\log \dist, \frac{1}{\alpha - 3}\} $.
\end{lemma}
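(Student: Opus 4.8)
The plan is to transport the proof of \cref{lemma:lowbound_superdiffusive} into the bounded-variance regime $\alpha>3$, imitating line by line its threshold branch $\alpha=3$ (where the stated denominator $\dist^{\alpha+1}$ already reads $\dist^4$). I record only the jump endpoints of the walk, obtaining a \levyflight with coordinates $\levyflightrand{i}=(X_i,Y_i)$, and I fix a horizon $\dist\le t=\mybigo{\dist^2/\nu}$. As in the source, the whole estimate reduces to controlling, for each phase $i\le t$, two quantities: the probability $\pr{W_{i-1}}$ that the $(i-1)$-th jump ends farther than $\dist/2$ from the origin—so that the $i$-th jump-phase can plausibly reach $\trespoint$—and the conditional per-phase hitting probability $\pr{F_i\mid W_{i-1}}$. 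A union bound over the $t$ phases then delivers the bound on $\pr{\tau_\alpha(\trespoint)\le t}$.

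First I would bound the displacement $X_i$ (and symmetrically $Y_i$). Conditioning on the event $C_i$ that each of the first $i$ jumps has length below $\dist$—whose complement has probability $\myTheta{i/\dist^{\alpha-1}}$ by~\eqref{eq:jumpatleast}—the coordinate $X_i$ is a centered sum of $i$ i.i.d.\ truncated increments, and the integral test (\cref{fact:integraltest}) gives per-jump conditional variance $\sum_{d=1}^{\dist/4}\myTheta{d^{2-\alpha}}=\myTheta{\nu}$, so $\variance{X_i\mid C_i}=\myTheta{i\nu}$. Applying Chebyshev's inequality to $X_i$ and $Y_i$, and folding in $\pr{C_i^C}$, bounds the probability that the endpoint has drifted within reach of $\trespoint$, i.e.\ controls $\pr{W_{i-1}}$ over the whole horizon $t=\mybigo{\dist^2/\nu}$. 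This displacement estimate is the crux: it is the analog of the bound that yields the $\dist^{\alpha-1}$ factor in the super-diffusive proof, and the argument must pin down the exact power of $\dist$ that it contributes.

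Next, for the conditional hitting probability I reuse the monotonicity argument behind~\eqref{eq:lflightlow2}. Splitting on the event $V_{i-1}$ that the $(i-1)$-th jump lands in $B_{\dist/4}(\trespoint)$, \cref{lemma:monotonicity} (which applies because the endpoint process is a \levyflight and, by \cref{remark:conditional_monotonicity}, survives the conditioning) bounds the probability of landing at any fixed $v$ near $\trespoint$ by $\bigo(1/\dist^2)$, since $\myTheta{\dist^2}$ nodes closer to the origin are at least as likely; summing \cref{cor:visit-direct-path}'s per-phase visit probabilities $\myTheta{1/\|v-\trespoint\|_1^\alpha}$ over $B_{\dist/4}(\trespoint)$ contributes the convergent factor $\sum_{d\ge1}\myTheta{d^{1-\alpha}}=\myTheta{\mu}$, with $\mu=\min\{\log\dist,\tfrac{1}{\alpha-2}\}=\myTheta{1}$ here. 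Combined with the crude estimate $\bigo(1/\dist^\alpha)$ for phases starting at distance $\myOmega{\dist}$ from $\trespoint$ (again \cref{cor:visit-direct-path}), the law of total probability gives $\pr{F_i}=[\bigo(\mu/\dist^2)+\bigo(1/\dist^\alpha)]\pr{W_{i-1}}+\bigo(1/\dist^\alpha)$, and feeding in the displacement bound of the previous step collapses this to the per-phase estimate $\pr{F_i}=\mybigo{\nu t/\dist^{\alpha+1}}$.

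Finally, a union bound over the first $t$ jump-phases, using $t\ge\dist$ to absorb the isolated first-phase term $\bigo(1/\dist^\alpha)$ into the sum, gives $\pr{\tau_\alpha(\trespoint)\le t}=\mybigo{\nu t^2/\dist^{\alpha+1}}$, as claimed, with the $\alpha=3$ branch of \cref{lemma:lowbound_superdiffusive} serving as the exact template. The main obstacle is the displacement bound of the second paragraph: one must show that, over the allowed horizon $t=\mybigo{\dist^2/\nu}$, the endpoint walk lies within one jump-phase of $\trespoint$ only with the small probability needed to drive the per-phase bound down to the $\dist^{\alpha+1}$ denominator, and extracting the exact power of $\dist$ there is delicate precisely because in the diffusive regime the walk approaches $\trespoint$ by the accumulation of many short jumps rather than by a single long one, so the interplay between the Chebyshev contribution and the heavy-tail term $\pr{C_i^C}$ must be resolved carefully.
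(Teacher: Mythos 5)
Your proposal reproduces the paper's proof structure step for step (truncate the jump lengths at $\dist$, apply Chebyshev to the coordinates of the endpoint process, bound the per-phase hitting probability via monotonicity, finish with a union bound over phases), but it has a genuine gap at exactly the step you flag and postpone, and that step cannot be repaired. Your variance computation is essentially the correct one: for $\alpha>3$ the truncated per-jump variance is $\sum_{d=1}^{\dist/4}\myTheta{d^{2-\alpha}}=\myTheta{\max\{1,\nu\}}$ (even the $d=1$ term contributes $\myTheta{1}$, so write $\max\{1,\nu\}$ rather than $\nu$). But from $\variance{X_i \mid C_i}=\mybigo{i\max\{1,\nu\}}$, Chebyshev gives only $\pr{\abs{X_i}\ge\dist/4 \mid C_i}=\mybigo{i(1+\nu)/\dist^2}$, hence $\pr{W_{i-1}}=\mybigo{(1+\nu)t/\dist^2+t/\dist^{\alpha-1}}$; feeding this into your total-probability decomposition yields $\pr{F_i}=\mybigo{(1+\nu)t/\dist^4+t/\dist^{\alpha+1}+1/\dist^{\alpha}}$, and the union bound then delivers $\pr{\tau_\alpha(\trespoint)\le t}=\mybigo{(1+\nu)t^2/\dist^4}$. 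The denominator is $\dist^4$, not $\dist^{\alpha+1}$: your claimed ``collapse'' to $\mybigo{\nu t/\dist^{\alpha+1}}$ is a non sequitur, short of the target by the polynomial factor $\dist^{\alpha-3}$.

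You should also know this is not a defect of your execution. The paper's own proof reaches $\dist^{\alpha+1}$ only by asserting the truncated variance to be $\mybigo{i\nu\dist^{3-\alpha}}$, i.e.\ by transplanting the superdiffusive formula of \cref{lemma:lowbound_superdiffusive}; for $\alpha>3$ that quantity tends to $0$ and lies below the contribution of unit-length jumps alone, so the paper's displacement bound $\pr{W_{i-1}}=\mybigo{\nu t/\dist^{\alpha-1}}$ is unjustified, and your computation is the right one. Moreover, no argument can close the gap, because the statement with denominator $\dist^{\alpha+1}$ fails for $\alpha$ bounded away from $3$: take $\alpha=4$ and $t=\Theta(\dist^2)$ (admissible, since $\nu=1$); the walk then has i.i.d.\ isotropic, finite-variance jumps, completes $\myTheta{\dist^2}$ jump-phases within time $t$ with constant probability, and the standard local-CLT/expected-visits argument for planar diffusive walks (the same scheme the paper uses to prove \cref{prop:pfrwhittingtime}) shows it hits a prescribed node at distance $\dist$ with probability $\myTheta{1/\log\dist}$, whereas the lemma would assert $\mybigo{1/\dist}$. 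What your corrected argument does prove, $\mybigo{(1+\nu)t^2/\dist^4}$, is exactly part (b) of \cref{thm:lw3infty-intro}, and it is strong enough for the downstream use of this lemma in \cref{cor:lw3+}\cref{cor:lw3+:b}.
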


The proofs of the above lemmas are given in \cref{ssec:pfrw,sec:proof-coupling-lf3infty,ssec:pwrw-lowbounds}, respectively.
Using these lemmas can now prove our main result as follows.
From \cref{prop:pfrwhittingtime,lemma:coupling_levyflight_levywalk_diffusive}, by substituting $ t = \myTheta{\dist^2 \log^2 \dist} $, we get
\[
	\pr{\tau_\alpha (\trespoint) = \mybigo{\dist^2\log^2\dist}} = \myOmega{\frac{(\alpha - 3)^2}{\alpha^2\log^4\dist}} - \mybigo{\frac{\alpha}{(\alpha - 3)\dist^2 \log^2 \dist}} = \myOmega{\frac{(\alpha - 3)^2}{\alpha^2\log^4\dist}}
\]
if $3 + \mylittleomega{\log \log \dist / \log \dist} \le \alpha $, which is part (a) of \cref{thm:lw3+}.
Also, by applying \cref{lemma:pwrwlowerb2}, we get part (b) of \cref{thm:lw3+}.

 \subsection{Proof of Lemma \ref{prop:pfrwhittingtime}} \label{ssec:pfrw}

The proof is similar to that of \cref{lemma:hittinglevyflight}, in \cref{ssec:analysislevyflight}.
We reuse some of the notation defined there.
Namely, $(\Lf{i})_{i\geq0}$ denotes the \levyflight process, $S_i$ is the length of the $i$-th jump, and $\Zf{u}{i}$ is the number of visits to $u$ in the first $i$ steps.
We also use the following modified definitions:
For each node $u$ and $i\geq 0$, we let
\[
    p_{u,i} \ = \pr{\Lf{i} = u},
\]
thus
$
    \expect{\Zf{u}{i}}
    =
    \sum_{j=0}^i p_{u,j}.
$
We partition $\integer^2$ into sets $\AA_1,\AA_2,\AA_3$ as follows.
Let $\delta > 0$ be some value to be fixed later.
Then,
\[
  \begin{gathered}
    \AA_1 = \{v \colon \|v\|_\infty \le \dist\}
\\
    \AA_2
    =
    \{v \colon \manhattan{v} \le 4\sqrt{2(1+\delta)t}\log t\}
    \setminus \AA_1
\\
    \AA_3 = \integer^2\setminus (\AA_1\cup\AA_2).
  \end{gathered}
\]

\subsubsection{Proof Overview}
\label{sssec:roadmap-pfrw}

First, we bound the mean number of visits to $\AA_1$ until a given step $t$.
We show that
\[
    \expect{\Zf{\origin}{t}}
    =
    \mybigo{\log^2 t}.
\]
The monotonicity property from \cref{sec:monotonicity-bounds} implies that
\[
    \sum_{v\in \AA_1} \expect{\Zf{v}{t}}
    \le |\AA_1|\cdot \expect{\Zf{\origin}{t}}
    \leq
    c(3-\alpha)(\dist \log t)^2/\alpha
    ,
\]
where $c$ is a constant.
To bound the mean number of visits to $\AA_2$, as before, we use the monotonicity property again to obtain
\[
    \sum_{v\in \AA_2}
    \expect{\Zf{v}{t}}
    \le
    |\AA_2|\cdot
    \expect{\Zf{\trespoint}{t}}
    \leq
    32(1+\delta)t\log^2 t\cdot
    \expect{\Zf{\trespoint}{t}}
    .
\]
For the number of visits to $\AA_3$, using a Chernoff-Hoeffding bound we show that
\[
    \sum_{v\in \AA_3} \expect{\Zf{v}{t}}\leq
    c'\left (t^{1-(\alpha-3)/{2}} + 1\right ),
\]
for some constant $c'$ and for $ \delta = \myTheta{1/(\alpha - 3)^2} $ large enough.
Combining the above we obtain
\[
    c\dist^2\log^2t
    +
    32(1+\delta)t\log^2 t\cdot
    \expect{\Zf{\trespoint}{t}}
    +
    c'\left(t^{1-(\alpha-3)/{2}} + 1\right)
    \geq
    t.
\]
By choosing $t = \myTheta{\dist^2 \log^2 \dist }$ and $ \alpha - 3 = \mylittleomega{\log \log \dist / \log \dist} $, the above inequality implies
\[
    \expect{\Zf{\trespoint}{t}}=
    \Omega\left(
        \frac{1}{(1 + \delta)  \log^2 \dist}
    \right)
    .
\]
Since
\[
    \pr{h_f \leq t}
    =
    \pr{\Zf{\trespoint}{t}>0}
    =
    \expect{\Zf{\trespoint}{t}}/\,
    \expect{\Zf{\trespoint}{t} \ \middle|\ \Zf{\trespoint}{t}>0},
\]
and
$
    \expect{\Zf{\trespoint}{t} \ \middle|\ \Zf{\trespoint}{t}>0}
    \leq
    \expect{\Zf{\origin}{t}}+1
    =
    \mybigo{\log ^2 t},
$
we obtain
$
    \pr{h_f \leq t}
    =
    \myOmega{(\alpha - 3)^2/(\alpha^2  \log^4 \dist)}.
$
\subsubsection{Detailed Proof}

\begin{lemma}\label{lemma:pfrworiginvisits}
    For any $t \ge 0$, $\expect{\levyflightvisits{\origin}{t}} = b_t = \bigo\left  (\log^2 t\right )$.
\end{lemma}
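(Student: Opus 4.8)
The plan is to follow the template of \cref{lemma:visitsorigin}\cref{item:ballvisits2} (the threshold case $\alpha = 3$), since a \levyflight with $\alpha\in(3,\infty)$ is diffusive and, by the monotonicity property, the origin is again the most frequently visited node. Writing $b_t = \expect{\Zf{\origin}{t}} = \sum_{k=1}^t p_{\origin,k}$ with $p_{\origin,k}=\pr{\Lf{k}=\origin}$, I would partition the steps as $\nat = \bigcup_{t'\geq 1}I_{t'}$, where $I_{t'} = [2t'\log t',\,2(t'+1)\log(t'+1))$ has size $\bigo(\log t')$, and bound $\sum_{k\in I_{t'}}p_{\origin,k}$ uniformly in $t'$. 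Fix a small constant $c_2>0$, set $\lambda = c_2\sqrt{t'}$ and $m = 2t'\log t'$ (so that $m\leq k$ for every $k\in I_{t'}$), and let $\sigma_\lambda$ be the first step at which the flight reaches $\ell_1$-distance at least $\lambda$ from $\origin$.

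The single-$k$ estimate I would establish is $p_{\origin,k} \leq \bigo(1/t') + \pr{\sigma_\lambda > m}$. For the first term I condition on $\sigma_\lambda = s\le m$ and on $\Lf{s}=w$ with $\|w\|_1\geq\lambda$: since a \levyflight is a spatially homogeneous Markov chain, its continuation from $w$ is a monotone radial process (\cref{lemma:monotonicity} applied after translating $w$ to the origin), so every node in $Q_{\|w\|_1}(w)$ is at least as likely as $\origin$ to host $\Lf{k}$. As there are $\Theta(\|w\|_1^2)=\myOmega{\lambda^2}$ such nodes and the total mass is $1$, this gives $\pr{\Lf{k}=\origin\mid\Lf{s}=w}=\bigo(1/\lambda^2)=\bigo(1/t')$. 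Summing over $s$ and $w$, and adding the contribution of the event $\sigma_\lambda>m$ (on which $\sigma_\lambda>k$, so the walk has simply failed to spread out), yields the claim.

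The crux is to show $\pr{\sigma_\lambda>m}$ is small enough, and this is exactly where the diffusive case genuinely differs from $\alpha=3$: here we cannot rely on a single long jump to escape $B_\lambda(\origin)$. Instead I would split the window into $2\log t'$ blocks of length $t'$, and let $D_j$ be the displacement over the $j$-th block; these are i.i.d., and the event $\sigma_\lambda>m$ forces $\|D_j\|_1<2\lambda$ for every $j$, because all block endpoints $\Lf{jt'}$ then lie in $B_\lambda(\origin)$. Hence $\pr{\sigma_\lambda>m}\leq q^{2\log t'}$, where $q=\pr{\|D_1\|_1<2c_2\sqrt{t'}}$. Since $\alpha>3$, the first coordinate $X$ of $D_1$ is a sum of $t'$ i.i.d.\ mean-zero jumps of bounded variance, so its spread is $\myTheta{\sqrt{t'}}$; a standard small-ball (anti-concentration) estimate — e.g.\ the Kolmogorov–Rogozin inequality, which needs only a finite second moment and therefore also covers the range $3<\alpha\leq 4$ where the third moment diverges — gives $q\leq\pr{|X|<2c_2\sqrt{t'}}=\bigo(c_2)$. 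Choosing $c_2$ small enough that $q\leq e^{-1/2}$ gives $\pr{\sigma_\lambda>m}\leq (t')^{-1}$ (a smaller $c_2$ yields any prescribed polynomial decay).

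Putting the pieces together, $\sum_{k\in I_{t'}}p_{\origin,k} = \bigo(\log t')\cdot\left(\bigo(1/t') + (t')^{-1}\right) = \bigo(\log t'/t')$, so that $b_t = \sum_{t'\geq 1}\bigo(\log t'/t') = \bigo(\log^2 t)$ by the integral test, exactly as in \cref{lemma:visitsorigin}. I expect the stay-in-ball bound $\pr{\sigma_\lambda>m}$ to be the main obstacle: it must decay at least like $1/t'$, and securing this requires both the block decomposition — so that the $\log t'$-fold longer time window provides $\log t'$ independent escape attempts — and a small-ball inequality valid under only a second-moment hypothesis, rather than the single-large-jump argument that suffices when $\alpha\leq 3$.
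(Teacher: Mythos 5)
Your proposal is correct and follows essentially the same route as the paper's proof: the same partition of time into intervals $[2t'\log t', 2(t'+1)\log(t'+1))$, the same monotonicity argument giving $\bigo(1/t')$ for the return probability once the flight has spread to distance $\Theta(\sqrt{t'})$, and the same decomposition of the window of length $2t'\log t'$ into $2\log t'$ i.i.d.\ blocks to control the escape probability. The only (minor) divergence is that the paper lower-bounds the per-block escape probability via the central limit theorem while you invoke a Kolmogorov--Rogozin small-ball estimate, and your explicit remark that failure to escape forces every block increment to be small is a slightly more careful rendering of a step the paper states informally.
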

\begin{proof}First, we show the following.  Let $\levyflightrand{t'}$ be the two dimensional random variable representing the coordinates of the agent performing the \levyflight at time $t'$.
    Let $\levyflightrand{t'} = (X_{t'}, Y_{t'})$ and consider the projection of the \levyflight on the $x$-axis $X_{t'}$: it can be expressed as the sum of $t'$ random variables $S^x_j$, $j=1,\dots,t'$, representing the projection of the jumps (with sign) of the agent on the $x$-axis at times $j=1,\dots,t'$. The partial distribution of the jumps along the $x$-axis is given by
    \cref{lemma:projection} in \cref{app:projection}, and states that, for any given $d\ge 1$, we have
    \[
         \pr{S^x_j = \pm d} = \myTheta{\frac{1}{d^{\alpha}}}.
    \]
Since $\expect{\levyflightvisits{\origin}{t} } = \sum_{k=1}^t p_{\origin,k}$, it suffices to accurately bound the probability $p_{\origin,k}$ for each $k=1, \dots, t$. Let us partition the natural numbers in the following way
    \[
        \nat = \bigcup_{t'=1}^\infty \bigg[\nat \cap \left[2t'\log t', 2(t'+1)\log(t'+1)\right)\bigg].
    \]
    For each $k\in \nat$, there exists $t'$ such that $k\in \left[2t'\log t', 2(t'+1)\log(t'+1)\right)$. Then, within $2t'\log t'$ steps the walk has moved to distance $\myTheta{\sqrt{t'}}$ at least once, with probability $\Omega\left(\frac{1}{(t')^2}\right)$.
    Indeed, the sequence $\{S^x_j\}_{1\le j \le t'}$ consists of i.i.d.\ r.v.s with zero mean and finite variance $ \myOmega{1} $. Thus, the central limit theorem (\cref{thm:CLT} in \cref{app:tools}) implies that the variable
    \[
        \frac{S^x_1 + \dots + S^x_{t}}{\sigma \sqrt{t}}
    \]
    converges in distribution to a standard normal random variable $Z$, with $ \sigma = \myOmega{1} $. Let $\epsilon>0$ be a small enough constant, then there exists a $\overline{t'}$ large enough, such that for all $t' \ge \overline{t'}$ it holds that
    \[
        \pr{ S^x_1 + \dots + S^x_{t'}\ge \sigma \sqrt{t'}} \ge \pr{Z \ge 1} - \epsilon = \frac{c}{2} >0,
    \]
    for some suitable constant $ c \in (0,1) $. The symmetrical results in which the normalized sum is less than $-\sigma\sqrt{t'
    }$ holds analogously.
    Thus, for all $t' \ge \overline{t'}$, we have that $\abs{\sum_{j=1}^{t'} S^x_j} \ge \sigma\sqrt{t'}$ with constant probability $c>0$. In $2t'\log t'$ jumps, we have $2\log t'$ sets of $t'$ consequent i.i.d.\ jumps. For independence, the probability that at least in one round before round  $2t'\log t'$ the \levyflight has displacement $\myTheta{\sqrt{t'}}$ from the origin is at least
    \[
        1-(1-c)^{2\log t'} = 1 - \mybigo{\frac{1}{(t')^2}}.
    \]
    Once reached such a distance, there are at least $\lambda^2 = \myTheta{t'}$ different nodes that are at least as equally likely as $\origin$ to be visited at any given future time for the monotonicity property (\cref{lemma:monotonicity}).
    Thus, the probability to reach the origin at any future time is at most $\mybigo{1/t'}$.
Let $H_{t'}$ be the event that in any instant before time $2t' \log t'$ the \levyflight has displacement at least $\myTheta{\sqrt{t'}}$. Observe that
    \[
        p_{\origin,k} = \pr{\levyflightrand{t} = \origin \mid H_{t'}}\pr{H_{t'}} + \pr{\levyflightrand{t} = \origin \mid H_{t'}^C}\pr{H_{t'}^C},
    \]
    by the law of total probability.
    We remark that in an interval $\left[2t'\log t', 2(t'+1)\log(t'+1)\right)$ there are
    \[2(t'+1)\log(t'+1)-2t'\log t' = 2t'\left[\log\left(1+\frac{1}{t'}\right)\right]+2\log(t'+1) = \bigo\left(\log t'\right)\]
    integers.
    Thus, if $I_{t'}=\left[2t'\log t', 2(t'+1)\log(t'+1)\right)$, we have
    \begin{align*}
        \sum_{k=1}^t p_{\origin,k} \le & \ \sum_{t' = 1}^t \sum_{k\in I_{t'}} p_{\origin,k} \\
        \le & \ \sum_{t' = 1}^t \left[\pr{\levyflightrand{t} = \origin \mid H_{t'}}\pr{H_{t'}} + \pr{\levyflightrand{t} = \origin \mid H_{t'}^C}\pr{H_{t'}^C }\right]\bigo(\log t')\\
        \le & \ \overline{t'}+ \sum_{t' = \overline{t'}}^t \left[\bigo\left( \frac{1}{ t'}\right)+\bigo\left(\frac{1}{(t')^2}\right)\right]\bigo(\log t') = \bigo\left (\log^2 t\right ),
    \end{align*}
    since $ \overline{t'} $ is a constant.
\end{proof}

We have also the following.

\begin{restatable}{lemma}{corollarypfrwvisitsorigin}\label{corollary:pfrwvisitsorigin}
    For any node  $u\in \integer^2$,   it holds that
    \begin{itemize}
        \item[(i)] $\expect{\levyflightvisits{u}{t}} \le b_t$;
        \item[(ii)] $1 \le \expect{\levyflightvisits{u}{t} \mid \levyflightvisits{u}{t} > 0} \le b_t$;
        \item[(iii)] $\expect{\levyflightvisits{u}{t}} / b_t \le \pr{\levyflightvisits{u}{t} > 0} \le \expect{\levyflightvisits{u}{t}}$.
    \end{itemize}
\end{restatable}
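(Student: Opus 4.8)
The plan is to mirror the proof of \cref{corollary:visitsorigin} essentially line for line, simply dropping the conditioning on the event $\EE_t$ (which is unnecessary in this regime) and replacing the quantity $a_t(\alpha)$ by $b_t = \expect{\levyflightvisits{\origin}{t}}$, whose order $\bigo\left(\log^2 t\right)$ was established in \cref{lemma:pfrworiginvisits}. The single structural ingredient needed is that a \levyflight is a \emph{spatially homogeneous Markov chain}: its one-step transition law depends only on the $\ell_1$-displacement between the current and next node, so the process restarted from an arbitrary node $u$ is, up to translation, distributed as the process started from $\origin$.

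First I would prove (ii). Let $\tau$ be the first step at which the flight visits $u$. Conditioning on $\levyflightvisits{u}{t} > 0$ forces $\tau \le t$, and then the strong Markov property together with spatial homogeneity identifies the number of visits to $u$ from time $\tau$ onward with the number of visits to $\origin$ of a fresh flight, giving
\[
    \expect{\levyflightvisits{u}{t} \mid \levyflightvisits{u}{t} > 0}
    =
    \expect{\levyflightvisits{\origin}{t-\tau} \mid \tau \le t}
    \le
    \expect{\levyflightvisits{\origin}{t}}
    =
    b_t ,
\]
where the inequality uses that $\expect{\levyflightvisits{\origin}{s}}$ is nondecreasing in $s$ and $\tau \ge 0$. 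The lower bound of $1$ is immediate, because the conditioning already guarantees at least one visit. This establishes (ii), and (i) follows at once, since conditioning on positivity can only increase the mean: $\expect{\levyflightvisits{u}{t}} \le \expect{\levyflightvisits{u}{t} \mid \levyflightvisits{u}{t} > 0} \le b_t$.

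For (iii), I would invoke the elementary identity $\expect{\levyflightvisits{u}{t}} = \expect{\levyflightvisits{u}{t} \mid \levyflightvisits{u}{t} > 0}\cdot\pr{\levyflightvisits{u}{t} > 0}$, which holds because $\levyflightvisits{u}{t}$ vanishes on the complementary event. Solving for the probability and inserting the two-sided bound $1 \le \expect{\levyflightvisits{u}{t} \mid \levyflightvisits{u}{t} > 0} \le b_t$ from (ii) yields
\[
    \frac{\expect{\levyflightvisits{u}{t}}}{b_t}
    \le
    \pr{\levyflightvisits{u}{t} > 0}
    =
    \frac{\expect{\levyflightvisits{u}{t}}}{\expect{\levyflightvisits{u}{t} \mid \levyflightvisits{u}{t} > 0}}
    \le
    \expect{\levyflightvisits{u}{t}} ,
\]
which is exactly (iii).

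I do not expect any genuine obstacle: this is a routine adaptation of the earlier argument, and if anything the present setting is cleaner, since removing the conditioning on $\EE_t$ leaves a genuinely Markov process (unlike the \levywalk), for which the restart-at-$\tau$ step in (ii) is transparent. The only point meriting a sentence of care is precisely this appeal to the Markov property and translation invariance in (ii), which is what licenses replacing visits to $u$ by visits to $\origin$.
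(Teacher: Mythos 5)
Your proposal is correct and follows essentially the same route as the paper, which simply notes that the proof is identical to that of \cref{corollary:visitsorigin} with the conditioning on $\EE_t$ dropped and $a_t(\alpha)$ replaced by $b_t$: part (ii) via the restart-at-$\tau$ argument, part (i) as an immediate consequence, and part (iii) from the identity $\expect{\levyflightvisits{u}{t}} = \expect{\levyflightvisits{u}{t} \mid \levyflightvisits{u}{t} > 0}\cdot\pr{\levyflightvisits{u}{t} > 0}$.
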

\begin{proof}The proof is exactly as that of \cref{corollary:visitsorigin}.
\end{proof}
Thus, the total number of visits to  $\rwfirstregion$ is upper bounded by $m_{\trespoint} b_t $, where $m_{\trespoint} = \cardinality{Q_{\dist}(\origin)}$.

Furthermore, from the monotonicity propert (\cref{lemma:monotonicity}), the following holds.
\begin{restatable}{corollary}{corollarypfrwinduction}\label{corollary:pfrwinduction}
    For any node $u$ in $\integer^2$, we have $\expect{\levyflightvisits{u}{t}} \ge \expect{\levyflightvisits{v}{t}}$ for all $v\notin Q_{\distu}(\origin)$. \end{restatable}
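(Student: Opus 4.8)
The plan is to reduce the statement to a termwise application of the monotonicity property (\cref{lemma:monotonicity}), exactly as in the conditional analogue \cref{corollary:induction} from the super-diffusive regime, but now working with the \emph{unconditional} visit probabilities $p_{u,j} = \pr{\Lf{j} = u}$ that were just introduced in this section.

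First I would recall that, by definition, $\expect{\levyflightvisits{u}{t}} = \sum_{j=0}^t p_{u,j}$ and likewise $\expect{\levyflightvisits{v}{t}} = \sum_{j=0}^t p_{v,j}$. Hence it suffices to establish the pointwise inequality $p_{u,j} \ge p_{v,j}$ for every $0 \le j \le t$: summing these inequalities over $j$ immediately yields the claim. This is the only real content of the proof, and it is purely a consequence of the radial monotonicity of the flight rather than of any quantitative estimate.

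Next I would translate the hypothesis $v \notin Q_{\distu}(\origin)$ into the form required by the monotonicity lemma. By the definition of $Q_{\distu}(\origin)$, the condition $v \notin Q_{\distu}(\origin)$ means $\|v\|_\infty > \distu = \|u\|_1$, and in particular $\|v\|_\infty \ge \|u\|_1$. Since a \levyflight is a monotone radial process (its one-step transition probability depends only on the $\ell_1$-length of the jump and is non-increasing in that length), \cref{lemma:monotonicity} applies with the starting node taken to be the origin and yields $\pr{\Lf{j} = u} \ge \pr{\Lf{j} = v}$, i.e.\ $p_{u,j} \ge p_{v,j}$, for every $j \ge 0$. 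Summing over $j = 0, \dots, t$ completes the argument.

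There is no genuine obstacle here; the only point to verify is that \cref{lemma:monotonicity} holds simultaneously for each fixed step $j$, so that summing the pointwise bounds is legitimate — and this is immediate, since that lemma is asserted for all $t \ge 0$. The result is therefore the exact unconditional counterpart of \cref{corollary:induction}, and its proof is identical save for the removal of the conditioning on $\EE_t$ (which plays no role in the diffusive regime of \cref{prop:pfrwhittingtime}).
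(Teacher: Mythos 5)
Your proposal is correct and follows exactly the paper's route: the paper also derives this corollary directly from \cref{lemma:monotonicity} by writing $\expect{\levyflightvisits{u}{t}} = \sum_{j} p_{u,j}$ and applying the monotonicity property termwise, noting that $v\notin Q_{\distu}(\origin)$ is precisely the condition $\|v\|_\infty > \|u\|_1$ required by that lemma. Nothing is missing.
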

Namely, almost all the nodes that are ``further'' than $u$ from the origin are less likely to be visited at any given future time. This easily gives an upper bound on the total number of visits to $\rwsecondregion$ until time $t$, namely, by taking $u = \trespoint$ and by observing that each $v \in \secondregion$ lies outside $Q_{\dist}(\origin)$, we get that the average number of visits to $\secondregion$ is at most    the expected number of visits to the \treasure $\trespoint$ (i.e. $\expect{\levyflightvisits{\trespoint}{t}}$) times (any upper bound of) the size  of $\secondregion$:
in formula,  it is    upper bounded by $\expect{\levyflightvisits{\trespoint}{t}} \cdot 32(1+\delta)t \log^2 t$.

We also give a bound to the average number of visits to nodes that are further roughly $\sqrt{t} \cdot \log t$ from the origin.

\begin{lemma}\label{lemma:pfrwboundfaraway}
    A sufficiently large positive real $\delta$ exists such that  $\delta = \myTheta{1/(\alpha-3)^2}$  and
    \[
        \sum_{\substack{v \in \integer^2 \ : \\
        		\manhattan{v} \ge 4\sqrt{2(1+\delta)t}\log t}} \expect{\levyflightvisits{v}{t}} = \mybigo{t^{1-\frac{\alpha-3}{2}} + 1}.
    \]
\end{lemma}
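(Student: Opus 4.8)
The plan is to reduce the sum over the far region to a one-dimensional tail estimate for the projected flight, and then to control that tail by separating the contribution of ``many small jumps'' from that of a ``single big jump''. Since $\expect{\levyflightvisits{v}{t}} = \sum_{t'=1}^t \pr{\Lf{t'} = v}$, summing over all $v$ with $\manhattan{v}\ge L$, where $L = 4\sqrt{2(1+\delta)t}\log t$, gives
\[
	\sum_{v\,:\,\manhattan{v}\ge L} \expect{\levyflightvisits{v}{t}} = \sum_{t'=1}^t \pr{\manhattan{\Lf{t'}}\ge L}.
\]
Writing $\Lf{t'} = (X_{t'}, Y_{t'})$ with $X_{t'} = \sum_{j=1}^{t'} S^x_j$ the $x$-projection, the event $\manhattan{\Lf{t'}}\ge L$ forces $\abs{X_{t'}}\ge L/2$ or $\abs{Y_{t'}}\ge L/2$; by the union bound and the symmetry between the two coordinates it thus suffices to bound $2\sum_{t'\le t}\pr{\abs{X_{t'}}\ge L/2}$. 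I would recall (\cref{lemma:projection}) that $\pr{S^x_j = \pm d} = \myTheta{1/d^\alpha}$, so each $S^x_j$ is symmetric with mean $0$, finite variance $\sigma^2 = \variance{S^x_1} = \myTheta{\sum_{d\ge 1} d^{2-\alpha}}$ (convergent since $\alpha>3$, and $\sigma^2 = \myTheta{1/(\alpha-3)}$ as $\alpha\to 3^+$), and heavy tail $\pr{\abs{S^x_j}\ge x} = \myTheta{1/x^{\alpha-1}}$.

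Next I would truncate at level $M = \sqrt t$. Let $B_{t'}$ be the event that at least one of the first $t'$ jumps satisfies $\abs{S^x_j}>M$, and let $\bar X_{t'} = \sum_{j=1}^{t'} S^x_j\,\ind_{\abs{S^x_j}\le M}$, which coincides with $X_{t'}$ on $B_{t'}^C$. Hence $\pr{\abs{X_{t'}}\ge L/2}\le \pr{B_{t'}} + \pr{\abs{\bar X_{t'}}\ge L/2}$. For the big-jump term, the union bound gives $\pr{B_{t'}}\le t'\,\pr{\abs{S^x_1}>M} = \myTheta{t'/M^{\alpha-1}}$, so
\[
	\sum_{t'=1}^t \pr{B_{t'}} = \myTheta{t^2/M^{\alpha-1}} = \myTheta{t^{(5-\alpha)/2}} = \myTheta{t^{1-(\alpha-3)/2}}.
\]
This single-big-jump contribution is exactly what produces the leading term $t^{1-(\alpha-3)/2}$, and the choice $M=\sqrt t$ is dictated precisely by the requirement that it be no larger than this target.

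For the concentration term, the variables $S^x_j\,\ind_{\abs{S^x_j}\le M}$ are i.i.d., mean $0$, bounded by $M=\sqrt t$, with variance at most $\sigma^2$, so Bernstein's inequality yields
\[
	\pr{\abs{\bar X_{t'}}\ge L/2}\le 2\exp\!\left(-\frac{(L/2)^2/2}{\,t'\sigma^2 + M(L/2)/3\,}\right).
\]
Since $M/(L/2) = \myTheta{1/(\sqrt{1+\delta}\,\log t)} = \mylittleo{1}$, the term $M(L/2)$ is negligible next to $(L/2)^2$; in the regime where it dominates the denominator the exponent is $\myOmega{\sqrt{1+\delta}\,\log t} = \myOmega{\log t}$, while in the complementary regime, where $t'\sigma^2$ dominates, the exponent is $\myOmega{(1+\delta)\log^2 t/\sigma^2}$. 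This is where the scaling of $\delta$ enters: choosing $\delta = \myTheta{1/(\alpha-3)^2}$ keeps $(1+\delta)/\sigma^2 = \myOmega{1}$ uniformly (it is $\myTheta{1/(\alpha-3)}$ as $\alpha\to3^+$ and $\myTheta{1}$ for $\alpha$ bounded away from $3$), so the exponent is $\myOmega{\log^2 t}$. In either regime $\pr{\abs{\bar X_{t'}}\ge L/2}$ is super-polynomially small in $t$, whence $\sum_{t'\le t}\pr{\abs{\bar X_{t'}}\ge L/2} = \mylittleo{1}$, absorbed into the additive constant. Combining the two contributions gives $\sum_{\manhattan{v}\ge L}\expect{\levyflightvisits{v}{t}} = \mybigo{t^{1-(\alpha-3)/2} + 1}$, as required.

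The main obstacle is the absence of exponential moments for the jumps, which rules out a direct Chernoff bound and forces the truncation split. The delicate point is to tune the two pieces simultaneously: the single-big-jump term is genuinely of order $t^{1-(\alpha-3)/2}$, so $M$ must be taken as small as $\sqrt t$ to avoid inflating it, yet the truncated sum must still concentrate below the threshold even as $\alpha\to 3^+$, where the per-step variance $\sigma^2 = \myTheta{1/(\alpha-3)}$ blows up. Reconciling these two demands is exactly what forces the inflation factor $\delta = \myTheta{1/(\alpha-3)^2}$ in the radius defining region $\AA_3$.
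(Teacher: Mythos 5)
Your proof is correct and follows essentially the same route as the paper's: truncate the jumps at $\sqrt t$, bound the single-big-jump contribution by a union bound to get the $t^{1-(\alpha-3)/2}$ term, apply a Bernstein/Chernoff-with-variance bound to the truncated projected sum, and take $\delta = \myTheta{1/(\alpha-3)^2}$ to compensate for the variance blow-up as $\alpha\to 3^+$ (the paper phrases the last step as bounding the probability the walk \emph{ever} exceeds the radius and multiplying by $t$, which is equivalent to your per-step summation). The only quibble is the phrase ``super-polynomially small'' for the regime where $M(L/2)$ dominates the denominator — there the exponent is only $\myTheta{\sqrt{1+\delta}\log t}$, hence polynomially small — but since $\delta$ is taken sufficiently large this still gives $\mybigo{t^{-2}}$ per step and the conclusion stands.
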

\begin{proof}Since $ \alpha > 3 $, the expectation and the variance of a single jump-length are finite. By Equation \eqref{eq:jumpatleast} in the preliminaries (\cref{sec:preliminaries}), the probability a jump length is at least $\sqrt{t}$ is $\myTheta{1/t^\frac{\alpha-1}{2}}$. Let us call $A_j$ the event that the $j$-th jump-length is less than $\sqrt{t}$.
    Let us recall that $\levyflightrand{j}$ is the random variable denoting the coordinates of the nodes the corresponding \levyflight visits at the end of the $j$-th jump.
    We can write $\levyflightrand{j} = (X_{j}, Y_{j})$, where $X_{j}$ is $x$-coordinate of the \levywalk after the $j$-th jump, and $Y_{j}$ is the $y$-coordinate.
    Then, $X_{j}$ can be seen as the sum $\sum_{i=1}^j S^x_{i}$ of $j$ random variables representing the projections of the jumps along the $x$-axis.
    For symmetry, $\expect{X_j} = 0$ for each $j$, while $\variance{X_j} = j\variance{S^x_1} = \mybigo{j/(\alpha - 3) + j} = \mybigo{\alpha j / (\alpha - 3)}$ since $S^x_1$ has finite variance $ \mybigo{1 + 1 / (\alpha - 3)} $. This comes by observing that $S^x_1 \le S_1$.
    Then, conditional on $A = \cap_{i=1}^{t} A_i$, we can apply the Chernoff bound (\cref{lem:chernoffbound:variance}) on the sum of the first $j$ jumps, for $j\le t$. We have
    \begin{align*}
        \pr{\abs{X_{t}} \ge 2\sqrt{2(1+\delta)t}\log t \bigm| A} & \le 2\exp\left(-\frac{8(1+\delta)t\cdot \log^2 t}{\mybigo{\frac{\alpha t}{\alpha - 3}}+\myTheta{\sqrt{(1+\delta)t}\cdot \log t}\sqrt{t}}\right)\\
        & \le 2\exp\left(-\myTheta{\frac{\alpha-3}{\alpha}\sqrt{1+\delta}\cdot \log t}\right)\\
        & \le \frac{2}{t^{\myTheta{\frac{\alpha - 3}{\alpha}\sqrt{1 + \delta}}}},
    \end{align*}
    which is less than $1/t ^2$ if we choose $\delta = \myTheta{ 1/(\alpha - 3)^2}$ large enough. The same result holds for the random variable $X_j$ for each $j<t$, since the variance of $X_j$ is smaller than the variance of $X_t$. Notice that
    \begin{align*}
        \pr{\cap_{j=1}^t \{\abs{X_j} < 2\sqrt{2(1+\delta)t}\log t \} \mid A} & = 1 - \pr{\cup_{j=1}^t \{\abs{X_t} \ge 2\sqrt{2(1+\delta)t}\log t \} \mid A} \\
        & \ge 1 - \frac{t}{t^ 2} = 1 - \frac{1}{t },
    \end{align*}
    and that
    \begin{align*}
        \pr{A} = 1 - \pr{A^C} = 1 - \pr{\cup_{j=1}^tA_j^C} \ge 1 - \mybigo{\frac{t}{t^\frac{\alpha-1}{2}}} = 1 - \mybigo{\frac{1}{t^\frac{\alpha-3}{2}}}.
    \end{align*}
    An analogous argument holds for the random variable $Y_t$ conditioned to the event $A$. Then,
    \begin{align*}
        & \ \pr{\cap_{j=1}^t \{\manhattan{X_j} < 2\sqrt{2(1+\delta)t}\cdot\log t\}, \cap_{j=1}^t \{\manhattan{Y_j} < 2\sqrt{2(1+\delta)t}\cdot\log t\}} \\
        \ge & \ \pr{\cap_{j=1}^t \{\manhattan{X_j} < 2\sqrt{2(1+\delta)t}\cdot\log t\}, \cap_{j=1}^t \{\manhattan{Y_j} < 2\sqrt{2(1+\delta)t}\cdot\log t\}\mid A}\pr{A}\\
        \ge & \  \left(2\pr{\cap_{j=1}^t \{\manhattan{X_j} < 2\sqrt{2(1+\delta)t}\cdot\log t\} \mid A} - 1\right)\pr{A} \\
        \stackrel{(\ast)}{\ge} & \ \left[2\left(1 - \frac{1}{t }\right) - 1\right]\left( 1 - \mybigo{\frac{1}{t^\frac{\alpha-3}{2}}}\right)  \\
        \ge & \ 1 - \mybigo{\frac{1}{t^\frac{\alpha-3}{2}} + \frac{1}{t}},
    \end{align*}
    where $(\ast)$ holds for symmetry (the distribution of $Y_t$ is the same as the one of $X_t$) and for the union bound. Thus, in $t$ jumps (which take at least time $t$), the walk has never reached distance $4\sqrt{2(1+\delta)t}\cdot \log t$, w.h.p. The average number of visits until time $t$ to nodes at distance at least $4\sqrt{2(1+\delta)t}\cdot \log t$ is then less than $t\cdot \mybigo{1/ t^\frac{\alpha-3}{2} + 1/t} = \mybigo{t^{1-\frac{\alpha-3}{2}} + 1}$.
\end{proof}

The following puts together the previous estimations in order to get a lower bound on the average number of visits the \treasure $\trespoint$. Let $ \delta > 0 $ be as given in \cref{lemma:pfrwboundfaraway} for the rest of the section.

\begin{lemma}\label{lemma:pfrwgoodtrick}
    For every node  $\trespoint \in   \integer^2 $ and     every time $t \geq 1$,
\[
        m_{\trespoint} b_t + \expect{\levyflightvisits{\trespoint}{t}}\cdot 32(1+\delta)(t\log^2 t)+ \mybigo{t^{1-\frac{\alpha-3}{2}}+1} \ge t.
    \]
\end{lemma}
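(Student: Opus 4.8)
The plan is to rely on a single exact identity: a \levyflight makes precisely one visit per step, so $\sum_{v\in\integer^2}\levyflightvisits{v}{t} = t$ holds deterministically, and linearity of expectation gives $\sum_{v\in\integer^2}\expect{\levyflightvisits{v}{t}} = t$. I would then split this sum over the partition $\integer^2 = \AA_1 \cup \AA_2 \cup \AA_3$ and upper bound each of the three pieces separately. Since the three regions genuinely partition $\integer^2$ by construction, the three upper bounds must sum to at least $t$, which is exactly the claimed inequality; the whole lemma is thus an assembly of the three preceding estimates.

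For the first piece I would use that $\AA_1 = Q_{\dist}(\origin)$ has exactly $m_{\trespoint} = \cardinality{Q_{\dist}(\origin)}$ nodes, while by \cref{corollary:pfrwvisitsorigin}(i) every node is visited at most $b_t$ times in expectation, so $\sum_{v\in\AA_1}\expect{\levyflightvisits{v}{t}} \le m_{\trespoint} b_t$. For the second piece I would invoke the monotonicity corollary \cref{corollary:pfrwinduction} with $u=\trespoint$: every $v\in\AA_2$ is disjoint from $\AA_1 = Q_{\dist}(\origin)$ and hence lies outside $Q_{\dist}(\origin)$, so $\expect{\levyflightvisits{v}{t}} \le \expect{\levyflightvisits{\trespoint}{t}}$; and since $\AA_2$ is contained in the $\ell_1$-ball of radius $4\sqrt{2(1+\delta)t}\log t$, whose cardinality is $\mybigo{t\log^2 t}$, I obtain $\sum_{v\in\AA_2}\expect{\levyflightvisits{v}{t}} \le 32(1+\delta)(t\log^2 t)\cdot\expect{\levyflightvisits{\trespoint}{t}}$. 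For the third piece I would just quote \cref{lemma:pfrwboundfaraway}, which (for the fixed choice $\delta = \myTheta{1/(\alpha-3)^2}$) already bounds $\sum_{v\in\AA_3}\expect{\levyflightvisits{v}{t}}$ by $\mybigo{t^{1-\frac{\alpha-3}{2}}+1}$, using that $\AA_3 \subseteq \{v : \manhattan{v} > 4\sqrt{2(1+\delta)t}\log t\}$.

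Summing the three bounds and using $\sum_{v\in\integer^2}\expect{\levyflightvisits{v}{t}} = t$ yields
\[
    m_{\trespoint}b_t + \expect{\levyflightvisits{\trespoint}{t}}\cdot 32(1+\delta)(t\log^2 t) + \mybigo{t^{1-\frac{\alpha-3}{2}}+1} \ge t,
\]
which is the statement. This lemma is essentially bookkeeping, since all the substantive work lives in the three supporting results: the origin-visit bound \cref{lemma:pfrworiginvisits}, the monotonicity corollary \cref{corollary:pfrwinduction}, and the far-away bound \cref{lemma:pfrwboundfaraway}. Consequently I do not expect a real obstacle here; the only point requiring a little care is the cardinality estimate for $\AA_2$ (an $\ell_1$-ball of radius $r$ contains $2r^2+2r+1$ lattice points, with $r^2 = 32(1+\delta)t\log^2 t$), together with checking that $\{\AA_1,\AA_2,\AA_3\}$ is a genuine partition so that no visits are double counted or dropped when the conservation identity is split.
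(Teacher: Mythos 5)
Your proof is correct and follows essentially the same route as the paper: the conservation identity $\sum_{v}\expect{\levyflightvisits{v}{t}}=t$ split over the partition $\AA_1,\AA_2,\AA_3$, with the three pieces bounded by \cref{corollary:pfrwvisitsorigin}, \cref{corollary:pfrwinduction} together with the cardinality of $\AA_2$, and \cref{lemma:pfrwboundfaraway} respectively. The only wrinkle you flag — that an $\ell_1$-ball of radius $r=4\sqrt{2(1+\delta)t}\log t$ actually contains about $2r^2=64(1+\delta)t\log^2 t$ points rather than $32(1+\delta)t\log^2 t$ — is a constant-factor slip present in the paper as well and immaterial to the asymptotic use of the lemma.
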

\begin{proof}Suppose the agent has made $t$ jumps, thus visiting $t$ nodes. Then,
    \[
        \expect{\sum_{v\in \integer^2}\levyflightvisits{v}{t} } = t.
    \]
    We divide the plane in different zones, and we bound the number of visits over each zone in expectation.
    From \cref{corollary:pfrwvisitsorigin}, the number of visits inside $\firstregion=Q_{\dist}(\origin)$ until time $t$ is at most $m_{\trespoint} b_t$, where $m_{\trespoint} = \abs{Q_{\dist}(\origin)} = 4\dist^2$. From \cref{lemma:pfrwboundfaraway}, the number of visits $\thirdregion$ is at most $\mybigo{t^{1-\frac{\alpha-3}{2}}}$. Each of the remaining nodes, i.e.\ the nodes in $\secondregion$, which are at most $32(1+\delta)(t\log^2 t)$ in total, is visited by the agent at most $\expect{\levyflightvisits{u}{t} }$ times, for \cref{corollary:pfrwinduction}. Then, we have that
    \[
        m_{\trespoint} b_t + \expect{\levyflightvisits{\trespoint}{t}}\cdot 32(1+\delta)(t\log^2 t)+ \mybigo{t^{1-\frac{\alpha-3}{2}}+1} \ge t.
        \qedhere
    \]
\end{proof}

We can now complete the proof of \cref{prop:pfrwhittingtime} as follows.
 	\cref{lemma:pfrwgoodtrick} implies that
    \[
    	\expect{\levyflightvisits{\trespoint}{t}} = \myOmega{\frac{t - t^{1 - \frac{\alpha - 3}{2}} - 1 - m_{\trespoint}b_t}{(1 + \delta) t\log^2t}},
    \]
    while \cref{corollary:pfrwvisitsorigin} implies that
    \[
    	\pr{h_f \le t} = \myOmega{\frac{t - t^{1- \frac{\alpha - 3}{2}} - 1 - m_\trespoint b_t}{(1+\delta) t\log^2 t \cdot b_t}}.
    \]
    \cref{lemma:pfrworiginvisits} gives $ b_t = \mybigo{\log^2 t} $, while $ \cref{lemma:pfrwboundfaraway} $ gives $ \delta = \myTheta{1/(\alpha-3)^2 } $. If $ t = \myTheta{\dist^2 \log^2 \dist} $ is large enough and $ \alpha - 3 = \mylittleomega{\log\log \dist / \log \dist} $, so that $t - t^{1- \frac{\alpha - 3}{2}} - m_\trespoint b_t = \myTheta{t} $, we get the result.

\subsection{Proof of Lemma \ref{lemma:coupling_levyflight_levywalk_diffusive}}
    \label{sec:proof-coupling-lf3infty}

     If $S_i$ is the random variable yielding the $i$-th jump length, then it has expectation $ \myTheta{1} $ and variance. This means that the sum $\Bar{S}_t = \sum_{i=1}^t S_i$ has expectation $\Theta(t)$ and variance $\mybigo{t + t/(\alpha - 3)} = \mybigo{\alpha t / \alpha-3}$. Then, from  Chebyshev's inequality,
    \[
        \pr{\Bar{S}_t \ge \Theta(t) +  t} \le \frac{\variance{\Bar{S}_t}}{t^2} = \mybigo{\frac{\alpha}{(\alpha - 3)t}}.
    \]
    Hence,
    \begin{align*}
    	\pr{h_w = \mybigo{t}} & \ge \pr{h_f \le t, \Bar{S}_t \le \Theta(t) + t } = \pr{h_f \le t} - \mybigo{\frac{\alpha}{(\alpha -3) t}},
    \end{align*}
    where the latter equality is obtained using the union bound.
 \subsection{Proof of Lemma \ref{lemma:pwrwlowerb2}}
\label{ssec:pwrw-lowbounds}

Let $X_i$ be the $x$-coordinate of the agent at the end of the $i$-th jump-phase. For any $i\le t$, we bound the probability that $X_i > \dist / 4$. The probability that there is a jump whose length is at least $\dist$ among the first $i$ jumps is $\bigo(i/\dist^{\alpha-1})$ for the union bound. Conditional on the event that the first $i$ jump lengths are all smaller than $\dist$ (event $\EE_i$), the expectation of $X_i$ is zero and its variance is
    \[
        i\cdot\sum_{d=1}^{\dist/4}\Theta\left( d^2 / d^\alpha\right) = \mybigo{i\nu \dist^{3-\alpha}},
    \]
    for the integral test (\cref{fact:integraltest}),  where $ \nu = \min \{\log \dist, \frac{1}{\alpha - 3}\} $.
    Chebyshev's inequality implies that
    \[
        \pr{\abs{X_i} \ge \dist/4 \mid \EE_i} \le \frac{\mybigo{i\nu \dist^{3-\alpha} }}{\Theta(\dist^2)} = \mybigo{\frac{i\nu}{\dist^{\alpha - 1}}},
    \]

    Since the conditional event has probability $1-\bigo(i/\dist^{\alpha-1})$, then the ``unconditional'' probability that of the event $\abs{X_i} \le \dist/4$ is
    \[
        \left[1-\mybigo{\frac{i}{\dist^{\alpha-1}}}\right]\cdot \left[1 - \mybigo{\frac{i\nu}{\dist^{\alpha - 1}}}\right] = 1-\mybigo{\frac{\nu t }{\dist^{\alpha - 1}}},
    \]
    since $i\le t $, with $ t  $ which is some function in $\mybigo{{\dist^{\alpha -1}}/{\nu}} $.
    The same result holds analogously for $Y_i$ (the $y$-coordinate of the agent after the $i$-th jump), obtaining that $\abs{X_i} + \abs{Y_i} \le \dist/2$ with probability $1-\mybigo{\nu t/\dist^{\alpha -1}}$ by the union bound.

    Consider the first jump-phase. The probability the agents visits the \treasure during it is $\bigo(1/\dist^\alpha)$ for \cref{cor:visit-direct-path} (\cref{sec:preliminaries}). Now, let $2 \le i \le t$. We want to estimate the probability the agent visits the \treasure during the i-th jump-phase. We recall that $ B_{\dist/4}(\trespoint) $ is the rhombus centered in $\trespoint$ that contains the nodes at distance at most $\frac{\dist}{4}$ from $\trespoint$.
    We denote the event that the agent visits the \treasure during the $i$-th jump-phase by  $F_i$. Furthermore, let $V_{i-1}$ be the event that the $(i-1)$-th jump ends in $B_{\dist/4}(\trespoint)$, and $W_{i-1}$ the event that  $(i-1)$-th jump ends at distance farther than $\dist/2$ from the origin.
    Then, by the law of total probabilities, we have
    \begin{align}
        \pr{F_i} = & \ \pr{F_i \mid W_{i-1}}\pr{W_{i-1}} + \pr{F_i \mid W_{i-1}^C}\pr{W_{i-1}^C} \nonumber\\
        = & \ \left[\pr{F_i \mid W_{i-1}, V_{i-1}}\pr{V_{i-1}\mid W_{i-1}}+\pr{F_i \mid W_{i-1}, V_{i-1}^C}\pr{V_{i-1}^C\mid W_{i-1}}\right]\pr{W_{i-1}} \nonumber\\
        & + \ \pr{F_i \mid W_{i-1}^C}\pr{W_{i-1}^C} \nonumber\\
        \stackrel{(\ast)}{\le} & \ \left[\pr{F_i \mid V_{i-1}}\pr{V_{i-1}\mid W_{i-1}}+\pr{F_i \mid W_{i-1}, V_{i-1}^C}\right]\pr{W_{i-1}} + \pr{F_i \mid W_{i-1}^C}\pr{W_{i-1}^C} \nonumber\\
        \stackrel{(\star)}{\le} & \  \left[\bigo\left(\frac{1}{\dist^2}\right) + \bigo\left(\frac{1}{\dist^\alpha}\right)\right]\bigo\left(\frac{\nu t}{\dist^{\alpha-1}}\right) + \bigo\left(\frac{1}{\dist^\alpha}\right) = \bigo\left(\frac{\nu t}{\dist^{\alpha+1}}\right)\label{eq:pwrwlowbound2}
    \end{align}
    where in $(\ast)$ we used that $V_{i-1} \subset W_{i-1}$ and that $\pr{V_{i-1}^C \mid W_{i-1}}\le 1$, while in $(\star)$ we used that
    \[
        \pr{F_i \mid V_{i-1}}\pr{V_{i-1}\mid W_{i-1}} = \bigo\left(\frac{1}{\dist^2}\right), \text{ (the proof is below) }
    \]
    that $\pr{F_i \mid W_{i-1},V_{i-1}^C} = \bigo\left(1/\dist^\alpha\right)$ because the jump starts in a node whose distance form the \treasure is $\Omega(\dist)$, and that $\pr{F_i \mid W_{i-1}^C} =\bigo\left(1/\dist^\alpha\right)$ for the same reason. As for the term $\pr{F_i \mid V_{i-1}}\cdot \pr{V_{i-1}\mid W_{i-1}}$ we observe the following. Let $t_i$ be the time at the end of the $ i $-th jump phase. Then
    \begin{align*}
        \pr{F_i \mid V_{i-1}}\pr{V_{i-1}\mid W_{i-1}} = & \ \sum_{v \in B_{\dist/4}(\trespoint)} \pr{F_i \mid \levyrand{t_i} = v}\pr{\levyrand{t_i} = v \mid W_{i-1}} \\
        \le & \ \bigo\left(\frac{1}{\dist^2}\right)\sum_{v \in B_{\dist/4}(\trespoint)} \pr{F_i \mid \levyrand{t_i} = v},
    \end{align*}
    since \cref{lemma:monotonicity}
holds in a consequent way conditional on $W_{i-1}$, and since, for each $v\in B_{\dist/4}(\trespoint)$, there are at least $\Theta\left(\dist^2\right)$ nodes at distance at least $\dist/2$ from the origin which are more probable to be visited than $v$.
    Then, we proceed similarly to the proof of \cref{lemma:prob_anyjumpfindstreasure}
    to
show that $\sum_{v \in B_{\dist/4}(\trespoint)} \pr{F_i \mid \levyrand{t_i} = v} = \bigo(1)$, and we obtain $\pr{F_i \mid V_{i-1}}\pr{V_{i-1}\mid W_{i-1}} = \bigo\left(1/\dist^2\right)$.

    Thus, by the union bound and by the inequality \eqref{eq:pwrwlowbound2}, the probability that at least during one of the $ t $ jump-phases the agent finds the \treasure is, for some $ t = \mybigo{\dist^{2}/\nu} $,
    \begin{align*}
        \frac{1}{\dist^\alpha} + \bigo\left(\frac{\nu t^2}{\dist^{\alpha + 1}}\right) & =  \mybigo{\frac{\nu t^2}{\dist^{\alpha + 1}}},
    \end{align*}
    since $ t \ge \dist $.

\subsection{Proof of Corollary~\ref{cor:lw3+}}

	From \cref{thm:lw3+}.\cref{thm:lw3+:a} and the independence between agents we ge that
	\[
		\pr{\tau_\alpha^k(\trespoint) = \mybigo{\dist^2 \log^2 \dist} } = 1 - \left[1 - \myOmega{\frac{1}{\gamma \log^4 \dist}}\right]^ k \ge 1 - e^{-\myOmega{\frac{k}{\gamma \log^4 \dist}}},
	\]
	where we have used the inequality $ 1 - x \le e^{-x} $ for all $ x $. Then, part (a) follows.
	Let $ t = \dist^2/k \cdot \sqrt{k} / \log^2 \dist$; hence, we have $  t \le \mylittleo{\dist^{\alpha - 1} / \nu } $ since $ k = \mylittleo{\dist^2} $. If $ t < \dist $, then $ \pr{\tau_\alpha^k(\trespoint) > t} = 1 $, since $ \dist $ steps are needed to reach distance $ \dist $. If $ t \ge \dist $, from \cref{thm:lw3+}.\cref{thm:lw3+:b} and the independence between agents, we get that
	\[
		\pr{\tau_\alpha^k (\trespoint) > t} = \left[1 - \mybigo{\frac{\nu t^2}{\dist^{\alpha+1}}}\right]^k \ge \exp\left (-\mybigo{\frac{k t^2 \log \dist }{\dist^{\alpha+1}}}\right ),
	\]
	where we have used the inequality $ 1 - x \ge e^{-\frac{x}{1 - x}} $ for $ x < 1 $, that $ \nu t^2 / \dist^{\alpha+1} = \mylittleo{1} $, and that $ \nu \le \log \dist $.  Then, by substituting $ t = \dist^2/k \cdot \sqrt{k} / \log^2 \dist $, and by the Taylor's expansion  of the exponential function, we get
	\[
		\exp\left (-\mybigo{\frac{k t^2 \log \dist }{\dist^{\alpha+1}}}\right )	= 1 - \mybigo{\frac{1}{\log^3 \dist}}.
	\]
	\cref{cor:lw3+}.\cref{cor:lw3+:b} follows by observing that $ \dist^2/k \cdot \sqrt{k} / \log^2 \dist \le \dist^2 / \sqrt{k} $.

\section{Distributed Search Algorithm}
\label{sec:algorithm_analysis}

In this section, we prove the following theorem, which provides a simple distributed search algorithm, which allows $k$ agents to find an arbitrary, unknown target on $\integer^2$ in optimal time (modulo polylogarithmic factors).

\begin{theorem}
	\label{thm:uniform-algo}
	Consider $k$ independent \levywalks that start simultaneously from the origin, and the exponent of each walk is sampled independently and uniformly at random from the real interval $(2,3)$.
	Let $\tau^k_{\mathit{rand}}(u^\ast)$ be the parallel hitting time for a given target $u^\ast$.
	If $k \geq \log^8 \dist$, and $\ell = \manhattan{u^\ast}$ is large enough, then
	\[
	\pr{\tau^k_{\mathit{rand}}(u^\ast) = \mybigo{  (\dist^2/k)\cdot \log^7\dist + \dist\log^3\dist}}
	=
	1-e^{-\omega(\log\dist)}
	.
	\]
\end{theorem}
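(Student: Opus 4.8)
The plan is to reduce the random-exponent process to the common-exponent analysis behind \cref{cor:plw23}, by isolating among the $k$ walks a large sub-collection whose (random) exponents all fall in a single \emph{thin} interval on which the single-walk bound of \cref{thm:lw23}\cref{thm:lw23:a} is both feasible within the time budget and as favourable as possible. Write $T = \mybigo{(\dist^2/k)\log^7\dist + \dist\log^3\dist}$ for the target time and $\alpha^\ast = 3 - \tfrac{\log k}{\log\dist}$ for the common-exponent optimum. I would split into two regimes according to which term of $T$ dominates: the \emph{moderate} regime $\log^8\dist \le k \le \dist\log^4\dist$ (where $\alpha^\ast > 2$ and the first term dominates), and the \emph{dense} regime $k > \dist\log^4\dist$ (where $\alpha^\ast \le 2$ lies outside $(2,3)$ and the $\dist\log^3\dist$ term dominates). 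In each regime I would fix a deterministic window $W\subset(2,3)$ of width $1/\log\dist$, whose position depends on $k$ and $\dist$ but not on the algorithm.

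In the moderate regime set $\alpha_{\max} = \alpha^\ast + 6\tfrac{\log\log\dist}{\log\dist}$ and $W = [\alpha_{\max}-1/\log\dist,\ \alpha_{\max}]$. The width $1/\log\dist$ is the crux: because $\dist^{1/\log\dist} = \Theta(1)$, both $\dist^{\alpha-1}$ and $\dist^{3-\alpha}$ — and hence the time bound $\mybigo{\mu\dist^{\alpha-1}}$ and the hitting probability $\myOmega{1/(\gamma\dist^{3-\alpha})}$ supplied by \cref{thm:lw23}\cref{thm:lw23:a} — vary only by constant factors over $W$. Using $k\ge\log^8\dist$ one checks that $W\subset(2,3)$ with $3-\alpha=\mylittleomega{1/\log\dist}$ throughout (so the refined version of \cref{thm:lw23} applies with $\alpha$ depending on $\dist,k$), that $\mu\dist^{\alpha-1}=\mybigo{(\dist^2/k)\log^7\dist}\le T$, and that every walk with exponent in $W$ hits $\trespoint$ within $T$ steps with probability at least $p=\myOmega{\log^2\dist\,(\log\log\dist)^2/k}$. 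In the dense regime one instead places the window just above $2$, at $\alpha_{\max}=2+2\tfrac{\log\log\dist}{\log\dist}$ with the same width $1/\log\dist$; here $\mu\dist^{\alpha-1}=\mybigo{\dist\log^3\dist}\le T$ and $p=\myOmega{1/\dist}$.

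The two regimes are then finished identically by concentration plus independence. Let $N$ be the number of the $k$ i.i.d.\ exponents that land in $W$; then $\mean N = k\,|W| = \Theta(k/\log\dist) = \myOmega{\log^7\dist}$, so a Chernoff bound gives $N \ge \tfrac12\mean N$ except with probability $e^{-\mylittleomega{\log\dist}}$. Conditioned on the exponents the walks evolve independently, so the probability that none of these $N$ walks hits $\trespoint$ within $T$ is at most $(1-p)^N \le e^{-pN}$. A direct computation gives $pN = \mylittleomega{\log\dist}$ in both regimes — in the moderate one $pN = \myOmega{\log\dist\,(\log\log\dist)^2}$, and in the dense one $pN = \myOmega{\log^3\dist}$ — so this failure probability is $e^{-\mylittleomega{\log\dist}}$. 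A union bound over the event $N<\tfrac12\mean N$ and the event that all $N$ selected walks miss yields $\pr{\tau^k_{\mathit{rand}}(\trespoint) = \mybigo{T}} = 1-e^{-\mylittleomega{\log\dist}}$.

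The main obstacle is the polylogarithmic bookkeeping that pins down $\alpha_{\max}$ and the window width so that $\mu\dist^{\alpha-1}$ still fits inside $T$ while $p$ remains large enough to force $pN = \mylittleomega{\log\dist}$. In particular, the natural window width $\tfrac{\log\log\dist}{\log\dist}$ appearing in \cref{cor:plw23} fails here: over such a window the hitting probability swings by a factor $\log\dist$, and lower-bounding it by its value at the lower endpoint loses precisely the factor needed to reach $\mylittleomega{\log\dist}$. Shrinking the width to $1/\log\dist$ keeps the probability constant (up to constants) across $W$ at the price of pushing $\alpha_{\max}$ one extra $\tfrac{\log\log\dist}{\log\dist}$ higher, which is exactly why the budget carries $\log^7\dist$ rather than the $\log^6\dist$ of the common-exponent \cref{cor:plw23}; balancing this trade-off is the delicate part.
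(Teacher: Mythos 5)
Your proposal is correct and follows essentially the same strategy as the paper: isolate the walks whose random exponents land in a thin window just above the common-exponent optimum, show via Chernoff that $\Theta(k/\log\dist)$ of them do, and then beat down the failure probability using independence and the single-walk bound of \cref{thm:lw23}\cref{thm:lw23:a}. The only substantive difference is the window width: the paper uses width $\epsilon = \log\log\dist/\log\dist$ (packaged as \cref{thm:phtub23}, which bounds the parallel hitting time of walks with exponents spread over $[\alpha,\alpha+\epsilon]$ by taking the worst endpoint of each factor), whereas you shrink to $1/\log\dist$ so that $\dist^{3-\alpha}$ is constant across the window. Your side remark that the width-$\epsilon$ window ``fails'' is not accurate: the paper absorbs the $\dist^{\epsilon}=\log\dist$ loss by choosing $\alpha$ so that the exponent in the failure probability comes out to $\Theta(\log\dist\cdot\log\log\dist)=\omega(\log\dist)$, at the cost of a correspondingly adjusted time budget --- the same trade-off you negotiate by moving $\alpha_{\max}$ one extra $\log\log\dist/\log\dist$ higher. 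Both routes land on the stated $\log^7\dist$ and $\log^3\dist$ factors; yours keeps the per-walk probability uniform over the window at the price of slightly more careful placement of $\alpha_{\max}$, while the paper's keeps the window placement looser at the price of a worst-case bound over the window.
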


We need the next lemma, which is a slight generalization of \cref{cor:plw23}\ref{cor:plw23:a} that bounds the hitting time of a collection of \levywalks with different exponent values.

\begin{lemma}
	\label{thm:phtub23}
	Consider $k$ independent \levywalks that start simultaneously from the origin, and the exponent of each walk is in $[\alpha_1,\alpha_2]$.
	Let $h_{\mathit{diff}}$ be the parallel hitting time for a target $u^\ast$ with $\manhattan{u^\ast}=\dist$. If $2 < \alpha_1 \leq \alpha_2 \leq 3 - \epsilon$ and $\epsilon = \omega(1/\log\dist)$,
	then
	\[
		\pr{h_{\mathit{diff}} = \mybigo{\frac{\dist^{\alpha_2 - 1}}{\alpha_1 - 2}}}
		= 1 - e^{-\myOmega{\frac{(3-\alpha_2)^2  k} {\dist^{3-\alpha_1}  \log^2 \dist}}}.
	\]
	
\end{lemma}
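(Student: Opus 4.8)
The plan is to reduce to the single-walk bound of \cref{thm:lw23}\cref{thm:lw23:a} and then amplify over the $k$ independent walks, taking worst-case values of all parameters across the admissible exponent range. First I would fix one common time budget $T = \mybigo{\ell^{\alpha_2 - 1}/(\alpha_1 - 2)}$ that simultaneously dominates the per-walk time scale of every walk. Indeed, a walk with exponent $\alpha \in [\alpha_1,\alpha_2]$ hits within $\mybigo{\mu\,\ell^{\alpha-1}}$ steps by \cref{thm:lw23}\cref{thm:lw23:a}, where $\mu = \min\{\log\ell, \frac{1}{\alpha-2}\}$; since $\mu \le \frac{1}{\alpha-2} \le \frac{1}{\alpha_1-2}$ and $\ell^{\alpha-1} \le \ell^{\alpha_2-1}$, every per-walk scale is at most $\frac{\ell^{\alpha_2-1}}{\alpha_1-2}$, so a single $T$ with an appropriate absolute constant upper-bounds them all, and the event of hitting within $\mybigo{\mu\ell^{\alpha-1}}$ steps is contained in the event of hitting within $T$ steps.

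Next I would lower-bound, uniformly over $\alpha \in [\alpha_1,\alpha_2]$, the probability that a single fixed walk hits $\trespoint$ within $T$ steps. The hypothesis $\epsilon = \omega(1/\log\ell)$ ensures $3-\alpha \ge 3-\alpha_2 \ge \epsilon = \omega(1/\log\ell)$ for every admissible $\alpha$, so the hypothesis of \cref{thm:lw23}\cref{thm:lw23:a} holds throughout the range and the hitting probability within $T$ is at least $\myOmega{1/(\gamma\,\ell^{3-\alpha})}$ with $\gamma = (\log\ell)^{2/(\alpha-1)}/(3-\alpha)^2$. I would then replace each factor by its worst case over $[\alpha_1,\alpha_2]$: $(3-\alpha)^2 \ge (3-\alpha_2)^2$ because $\alpha \le \alpha_2$; $(\log\ell)^{2/(\alpha-1)} \le \log^2\ell$ because $\alpha - 1 > 1$ forces the exponent $2/(\alpha-1) < 2$; and $\ell^{3-\alpha} \le \ell^{3-\alpha_1}$ because $\alpha \ge \alpha_1$. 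Combining these gives a uniform lower bound $p := \myOmega{(3-\alpha_2)^2/(\ell^{3-\alpha_1}\log^2\ell)}$ on the hit probability within $T$, valid for any exponent assignment in $[\alpha_1,\alpha_2]$.

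Finally, since the walks are mutually independent and $h_{\mathit{diff}} \le T$ whenever at least one of them hits $\trespoint$ within $T$ steps, I would conclude $\pr{h_{\mathit{diff}} > T} \le (1-p)^k \le e^{-pk}$, using $1-x \le e^{-x}$; this is exactly the claimed failure probability $e^{-\myOmega{(3-\alpha_2)^2 k/(\ell^{3-\alpha_1}\log^2\ell)}}$.

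I do not anticipate a deep obstacle: the argument is essentially worst-case monotonicity bookkeeping over the parameter range, feeding into a standard independence amplification. The only point requiring genuine care is checking that the two extremal directions are consistent — that the single budget $T$ is forced by the \emph{largest} exponent $\alpha_2$ while the probability lower bound $p$ is forced by the \emph{smallest} per-factor contributions — and that \cref{thm:lw23}\cref{thm:lw23:a} is applicable simultaneously for all $\alpha$ in the range, which is precisely what the condition $\epsilon = \omega(1/\log\ell)$ guarantees.
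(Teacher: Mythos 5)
Your proposal is correct and follows essentially the same route as the paper: a uniform per-walk lower bound on the hitting probability within a common budget $T=\mybigo{\ell^{\alpha_2-1}/(\alpha_1-2)}$, obtained by taking worst-case values of $(3-\alpha)^2$, $(\log\ell)^{2/(\alpha-1)}$, and $\ell^{3-\alpha}$ over the range, followed by independence amplification via $1-x\le e^{-x}$. The only cosmetic difference is that you invoke \cref{thm:lw23}\cref{thm:lw23:a} directly while the paper cites the underlying \cref{lemma:hittinglevyflight,lemma:couplevyflightintowalk}; the content is identical.
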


\begin{proof}
	First, we recall that the $ k $ agents move independently from each other. Let $ \tau_\alpha(\trespoint) $ be the hitting time of a single walk. 
If $ \alpha \in [\alpha_1, \alpha_2] $, then from \cref{lemma:hittinglevyflight,lemma:couplevyflightintowalk},
	\[
		\pr{\tau_\alpha(\trespoint) = \mybigo{\frac{\dist^{\alpha - 1}}{\alpha - 2}}} = \myOmega{\frac{(3-\alpha)^2}{\dist^{3 - \alpha} \log^{\frac{2}{\alpha - 1}} \dist}} = \myOmega{\frac{(3 - \alpha_2)^2}{\dist^{3 - \alpha_1}\log^2 \dist}},
	\]
	provided that $ 3 - \alpha_2 = \omega\left (1 / \log \dist\right ) $.
	Observe that $ \pr{\tau_\alpha(\trespoint) = \mybigo{\frac{\dist^{\alpha_2 -1}}{\alpha_1 - 2}}} \ge \pr{\tau_\alpha(\trespoint) = \mybigo{\frac{\dist^{\alpha - 1}}{\alpha - 2}}} $. Then,
	\[
		\pr{h_{\mathit{diff}} = \mybigo{\frac{\dist^{\alpha_2 - 1}}{\alpha_1 - 2}}}
		= 1 - \left ( 1 - \myOmega{\frac{(3 - \alpha_2)^2}{\dist^{3 - \alpha_1}\log^2 \dist}}\right )^k \le 1 - e^{- \myOmega{\frac{(3 - \alpha_2)^2k}{\dist^{3 - \alpha_1}\log^2 \dist}}}.
    \qedhere
	\]
\end{proof}

We can now prove our main result.

\begin{proof}[\bf Proof of \cref{thm:uniform-algo}]
	Fix $k,\dist$ such that $k \geq \log^8 \dist$, and let $\epsilon = \log\log\dist /\log\dist$.
	Let $\alpha\in[2+\epsilon,3-2\epsilon]$, and
let $k_\alpha$ be the number of L\'evy walks whose exponent is in the interval $[\alpha,\alpha+\epsilon]$.
	Then $\expect{k_\alpha} = \epsilon k$, and by the Chernoff bound \cref{lem:chernoff:multiplicative},
	\[
	\pr{k_\alpha \geq \epsilon k/2}
	=
	1 - e^{-\myOmega{\epsilon k}} .
	\]
	Clearly, the parallel hitting time of the $k$ \levywalks is upper bounded by the parallel hitting time of the $k_\alpha$ L\'evy walks whose exponent is in $[\alpha,\alpha+\epsilon]$.
	Then, from \cref{thm:phtub23}, it follows that
	\[
		\pr{\tau^k_{\mathit{rand}}(u^\ast) = \mybigo{\frac{\dist^{\alpha +\epsilon - 1}}{\alpha - 2} }
		\Mid
		k_\alpha \geq \epsilon k/2
		}
	= 1 - e^{-\myOmega{\frac{(3-\alpha-\epsilon)^2 \epsilon k} {\dist^{3-\alpha}  \log^2 \dist}}}
	.
	\]
	Combining the last two equations, we obtain
	\begin{equation}
	\label{eq:hrand}
	\pr{
	\tau^k_{\mathit{rand}}(u^\ast) = \mybigo{\frac{\dist^{\alpha +\epsilon - 1}}{\alpha - 2}}}
	= 1 - e^{-\myOmega{\frac{(3-\alpha-\epsilon)^2  \epsilon k} {\dist^{3-\alpha}  \log^2 \dist}}}
	.
	\end{equation}
	We distinguish the following two cases.
	
	\paragraph{Case $\log^7\dist \leq k \leq  \dist\log^3\dist$.}
	Choose some $\alpha\in[2+\epsilon,3-2\epsilon]$ such that
	\[
	k
	=
	\frac{\dist^{3-\alpha}\cdot \log^2 \dist} {(3-\alpha-\epsilon)^2\cdot \epsilon}
	\cdot\log \dist\cdot \log\log\dist
	.
	\]
	Such an $\alpha$ exists because the values of function $f(\alpha) = \frac{\dist^{3-\alpha}\cdot \log^3\dist\cdot\log\log\dist} {(3-\alpha-\epsilon)^2\cdot \epsilon}$ at the extreme points of $\alpha$ are
	$
	f(2+\epsilon)
\geq
	\dist \log^3 \dist\geq k
	$
	and
	$
	f(3-2\epsilon)
\leq
	\log^8\dist
	\leq k
	.
	$
	Substituting the above value of $\alpha$ to \cref{eq:hrand}, we obtain
	\[\pr{\tau^k_{\mathit{rand}}(u^\ast) = \mybigo{\frac{\dist^{\alpha +\epsilon - 1}}{\alpha - 2} }}
	=
	1-e^{-\omega(\log\dist)}
	.\]
	Thus, with probability
	$1 - e^{-\omega(\log\dist )}$,
	\[
	k\cdot \tau^k_{\mathit{rand}}(u^\ast)
	=
	k\cdot
	\mybigo{\frac{\dist^{\alpha +\epsilon - 1}}{\alpha - 2} }
	=
	\mybigo{
	\frac{\dist^{2}\cdot \log^5 \dist} {(3-\alpha-\epsilon)^2(\alpha - 2)}
	}
	=
	o\left(
	\dist^{2}\cdot \log^7 \dist
	\right)
	.
	\]
	
	\paragraph{Case $k \geq  \dist\log^3\dist$.}
	In this case, we set $\alpha = 2 + \epsilon$ and substitute this value of $\alpha$ to \cref{eq:hrand} to obtain
	$
	\pr{\tau^k_{\mathit{rand}}(u^\ast) = \mybigo{{\dist^{1 + 2\epsilon}}/ {\epsilon} }}
	=
	1 - e^{-\myOmega{ \frac{k\log\log\dist}{\dist\log^2\dist} }}
	=
	1-e^{-\omega(\log\dist)}
	.
	$
	Thus, with probability
	$1 - e^{-\omega(\log\dist )}$,
	\[
	\tau^k_{\mathit{rand}}(u^\ast)
	=
	\mybigo{ {\dist^{1 + 2\epsilon}}/ {\epsilon} }
	=
	o\left(\dist \cdot\log^3\dist
	\right)
	.
	\]
	Combining the two cases completes the proof.
\end{proof} 
\phantomsection
\addcontentsline{toc}{section}{APPENDIX}
\appendix
\section*{APPENDIX}
\section{Tools}\label{app:tools}

Below we list some standard concentration bounds we  use, and prove some simple algebraic inequalities.

\subsection{Concentration Bounds}
\begin{theorem}[Central limit theorem~{\cite[Chapter X]{feller1}}]\label{thm:CLT}
    Let $\{X_k\}_{k\ge 1}$ be a sequence of i.i.d.\ random variables. Let $\mu = \expect{X_1}$, $\sigma^2 = \variance{X_1}$, and $S_n = \sum_{k=1}^n X_k$ for any $n \ge 1$. Let $\Phi : \real \to [0,1]$ be the cumulative distribution function of a standard normal distribution. Then, for any $\beta \in \real$,
    \[
        \lim_{n \to \infty} \pr{\frac{S_n - n\mu }{\sigma \sqrt{n}} < \beta} = \Phi(\beta).
    \]
\end{theorem}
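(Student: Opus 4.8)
The plan is to prove this classical statement via the method of characteristic functions together with Lévy's continuity theorem, which is the standard route (as in the cited reference). Write $\varphi_Y(t) = \expect{e^{\mathrm{i}tY}}$ for the characteristic function of a random variable $Y$. By replacing each $X_k$ with $X_k - \mu$, I may assume without loss of generality that $\mu = 0$, so that the goal becomes showing that $S_n/(\sigma\sqrt{n})$ converges in distribution to a standard normal variable $Z$, whose characteristic function is $e^{-t^2/2}$.

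First I would exploit independence and identical distribution to factor the characteristic function of the normalized sum. Writing $\psi$ for the common characteristic function of the centred variables $X_k$, independence gives $\expect{\exp(\mathrm{i}t\,S_n/(\sigma\sqrt{n}))} = \psi\!\left(t/(\sigma\sqrt{n})\right)^n$ for every fixed $t\in\real$.

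Next I would establish the local quadratic expansion of $\psi$ at the origin. Since $\expect{X_1} = 0$ and $\expect{X_1^2} = \sigma^2 < \infty$, a second-order Taylor argument for characteristic functions yields $\psi(s) = 1 - \tfrac{1}{2}\sigma^2 s^2 + o(s^2)$ as $s\to 0$. Substituting $s = t/(\sigma\sqrt{n})$ gives $\psi\!\left(t/(\sigma\sqrt{n})\right) = 1 - \tfrac{t^2}{2n} + o(1/n)$, and raising this to the $n$-th power and letting $n\to\infty$ produces $\psi\!\left(t/(\sigma\sqrt{n})\right)^n \to e^{-t^2/2}$ for each fixed $t$, using the elementary limit $(1 + z_n/n)^n \to e^{z}$ whenever $z_n\to z$.

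Finally, I would invoke Lévy's continuity theorem: pointwise convergence of characteristic functions to a limit that is continuous at $0$ (here $e^{-t^2/2}$) implies convergence in distribution of $S_n/(\sigma\sqrt{n})$ to $Z$. Convergence in distribution means convergence of the cumulative distribution functions at every continuity point of the limit; since $\Phi$ is continuous on all of $\real$, this gives $\pr{(S_n - n\mu)/(\sigma\sqrt{n}) < \beta} \to \Phi(\beta)$ for every $\beta\in\real$, as claimed. The main obstacle is justifying the quadratic expansion of $\psi$ under only a finite second moment: rather than differentiating under the expectation, one needs a dominated-convergence estimate on the remainder of $e^{\mathrm{i}sX_1}$ (bounding $|e^{\mathrm{i}x} - 1 - \mathrm{i}x + \tfrac12 x^2|$ by $\min\{|x|^2,|x|^3\}$ and integrating against the law of $X_1$), together with verification of the hypotheses of the continuity theorem; both steps are standard and available in the cited source.
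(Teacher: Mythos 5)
Your proof is correct: reducing to $\mu = 0$, factoring the characteristic function of the normalized sum via independence, establishing the expansion $\psi(s) = 1 - \tfrac{1}{2}\sigma^2 s^2 + o(s^2)$ under only a finite second moment (via the bound on $\lvert e^{\mathrm{i}x} - 1 - \mathrm{i}x + \tfrac12 x^2\rvert$ and dominated convergence), and concluding with L\'evy's continuity theorem is the standard and complete argument, with the strict inequality in the statement handled by the continuity of $\Phi$. The paper itself gives no proof of this statement --- it is quoted as a classical tool with a citation to Feller --- so there is nothing in the paper to diverge from, and your argument is essentially the one found in the cited literature.
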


\begin{theorem}
[Multiplicative Chernoff bounds~\cite{dubhashi2009}]\label{lem:chernoff:multiplicative}
	Let $X_1, X_2, \dots, X_n$ be independent random variables taking values in $[0,1]$. Let $X = \sum_{i=1}^n X_i$ and $\mu=\mathbb{E}[X]$. Then:
	\begin{enumerate}
		\item[(i)] For any $\delta > 0$ and $\mu \le \mu_+ \le n$,
		\begin{equation}\label{MCB+}
		    P\big(X\ge (1+\delta)\mu_+\big)\le e^{-\frac{1}{3}\delta^2\mu_+},
		\end{equation}
		\item[(ii)] For any $\delta \in (0,1)$ and $0 \le \mu_- \le \mu$,
		\begin{equation}\label{MCB-}
			P\big(X\le (1-\delta)\mu_-\big)\le e^{-\frac{1}{2}\delta^2\mu_-}.
		\end{equation}
	\end{enumerate}
\end{theorem}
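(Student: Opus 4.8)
The plan is to prove both tail bounds by the standard exponential-moment (Chernoff--Cram\'er) method, treating the two parts in parallel since they are mirror images. For the upper tail I would fix an arbitrary parameter $s>0$ and apply Markov's inequality to the nonnegative variable $e^{sX}$, giving $\Pr(X\ge(1+\delta)\mu_+)\le e^{-s(1+\delta)\mu_+}\,\mathbb{E}[e^{sX}]$. The next step exploits independence to factor the moment generating function as $\mathbb{E}[e^{sX}]=\prod_{i=1}^n\mathbb{E}[e^{sX_i}]$, and then bounds each factor using that $X_i$ is supported on $[0,1]$: by convexity of $x\mapsto e^{sx}$ one has the pointwise chord bound $e^{sX_i}\le 1-X_i+X_ie^{s}$, so $\mathbb{E}[e^{sX_i}]\le 1+(e^s-1)p_i\le\exp((e^s-1)p_i)$ with $p_i=\mathbb{E}[X_i]$, using $1+y\le e^y$. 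Multiplying the factors yields $\mathbb{E}[e^{sX}]\le\exp((e^s-1)\mu)$ with $\mu=\sum_i p_i$.

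The second step is the $\mu_+$ substitution and the optimization over $s$. Since $e^s-1>0$ and $\mu\le\mu_+$, I may replace $\mu$ by $\mu_+$ to get $\Pr(X\ge(1+\delta)\mu_+)\le\exp\!\big(\mu_+[(e^s-1)-s(1+\delta)]\big)$. Minimizing the bracket over $s>0$ gives the stationary point $s=\ln(1+\delta)>0$, and substituting it produces the canonical Chernoff form $\Pr(X\ge(1+\delta)\mu_+)\le\big(e^{\delta}/(1+\delta)^{1+\delta}\big)^{\mu_+}$. For the lower tail I would argue symmetrically, applying Markov to $e^{-sX}$; the only twist is that now $e^{-s}-1<0$, so the hypothesis $\mu\ge\mu_-$ pushes the bound in the correct direction and lets me replace $\mu$ by $\mu_-$. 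Optimizing over $s$ (stationary point $s=-\ln(1-\delta)$, positive precisely when $\delta\in(0,1)$) gives $\Pr(X\le(1-\delta)\mu_-)\le\big(e^{-\delta}/(1-\delta)^{1-\delta}\big)^{\mu_-}$.

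It remains to simplify the two canonical forms to the clean exponents $e^{-\delta^2\mu_+/3}$ and $e^{-\delta^2\mu_-/2}$ stated in the theorem, and \emph{this is the only place where care is needed.} For the lower tail I would verify the elementary inequality $-\delta-(1-\delta)\ln(1-\delta)\le-\tfrac{\delta^2}{2}$ for $\delta\in(0,1)$: letting $f(\delta)$ be the left side plus $\tfrac{\delta^2}{2}$, one checks $f(0)=0$, $f'(\delta)=\delta+\ln(1-\delta)$ with $f'(0)=0$, and $f''(\delta)=-\delta/(1-\delta)<0$, so $f$ is nonincreasing and hence $f\le0$. The upper-tail inequality $\delta-(1+\delta)\ln(1+\delta)\le-\tfrac{\delta^2}{3}$ follows by the same two-derivative argument on $(0,1]$. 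I expect the \textbf{main obstacle} to be precisely this last inequality: the exponent $\tfrac{\delta^2}{3}$ is genuinely valid only for bounded $\delta$ (it fails once $\delta$ is large, where $\tfrac{\delta^2}{3}$ overtakes $(1+\delta)\ln(1+\delta)-\delta$), so the honest statement restricts the upper-tail simplification to $\delta\in(0,1]$, which is exactly the regime in which the bound is invoked in this paper, while the canonical form of the previous paragraph is what remains available for all $\delta>0$.
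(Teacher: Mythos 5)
The paper itself offers no proof of this statement: it is imported as a black-box tool from~\cite{dubhashi2009}, so your argument can only be judged on its own merits. Your route is the standard exponential-moment (Chernoff--Cram\'er) method -- Markov's inequality applied to $e^{\pm sX}$, factorization of the moment generating function by independence, the convexity chord bound $e^{sx}\le 1+(e^s-1)x$ for $x\in[0,1]$, replacement of $\mu$ by $\mu_+$ (resp.\ $\mu_-$) using the sign of $e^{s}-1$ (resp.\ $e^{-s}-1$), and optimization at $s=\ln(1+\delta)$ (resp.\ $s=-\ln(1-\delta)$) -- and all of these steps are correct. Your closing observation is also correct and worth making explicit: part~(i) as stated, for \emph{all} $\delta>0$, is actually false, since the canonical exponent $(1+\delta)\ln(1+\delta)-\delta$ grows only like $\delta\ln\delta$ and is essentially tight for Poisson-like sums, so it cannot dominate $\delta^2/3$ for large $\delta$; the bound with constant $1/3$ requires $\delta\le 1$ (or the exponent $\delta^2/(2+\delta)$, valid for all $\delta>0$). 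The paper only ever invokes (i) with $\delta=\Theta(1)\le 1$ (in the proof of Lemma~\ref{lemma:couplevyflightintowalk}) and (ii) with $\delta=1/2$ (in the proof of Theorem~\ref{thm:uniform-algo}), so this restriction is harmless there.

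One step is stated imprecisely. For the lower tail, $f(\delta)=-\delta-(1-\delta)\ln(1-\delta)+\delta^2/2$ indeed has $f(0)=0$, $f'(0)=0$, $f''(\delta)=-\delta/(1-\delta)<0$, so your argument is airtight. But the upper-tail inequality does \emph{not} follow ``by the same two-derivative argument'': for $g(\delta)=\delta-(1+\delta)\ln(1+\delta)+\delta^2/3$ one gets $g'(\delta)=2\delta/3-\ln(1+\delta)$ and $g''(\delta)=2/3-1/(1+\delta)$, which changes sign at $\delta=1/2$ and is \emph{positive} on $(1/2,1]$. The conclusion still holds, but needs an extra endpoint check: $g'$ decreases on $[0,1/2]$ and increases on $[1/2,1]$, so its maximum on $[0,1]$ is $\max\{g'(0),g'(1)\}=\max\{0,\,2/3-\ln 2\}=0$; hence $g'\le 0$ and $g$ is nonincreasing, giving $g\le 0$ on $[0,1]$. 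Alternatively, use the standard bound $\ln(1+\delta)\ge 2\delta/(2+\delta)$, which yields $\delta-(1+\delta)\ln(1+\delta)\le-\delta^2/(2+\delta)\le-\delta^2/3$ for $\delta\le 1$. With either one-line repair your proof is complete.
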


\begin{theorem}
[Additive Chernoff bound using variance {\cite[Theorem 3.4]{chung2006}}]\label{lem:chernoffbound:variance}
    Let $X_1, \dots, X_n$ be independent random variables satisfying $X_i \le \expect{X_i} + M$ for some $M\ge 0$, for all $i = 1, \dots, n$. Let $X = \sum_{i=1}^n X_i$, $\mu = \expect{X}$, and $\sigma^2 = \variance{X}$. Then, for any $\lambda > 0$,
    \begin{equation}\label{ACB+}
        \pr{X \ge \mu + \lambda} \le \exp\left(-\frac{\lambda^2}{\sigma^2 + \frac{M\lambda}{3}}\right).
    \end{equation}
\end{theorem}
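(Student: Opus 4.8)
The plan is to establish \eqref{ACB+} by the exponential-moment (Chernoff) method, which is the standard route to Bernstein-type inequalities. First I would center the summands: set $Y_i = X_i - \expect{X_i}$, so that $\expect{Y_i}=0$, $Y_i \le M$ almost surely, and $Y = X - \mu = \sum_{i=1}^n Y_i$. For any $s>0$, Markov's inequality applied to $e^{sY}$ gives $\pr{X \ge \mu + \lambda} = \pr{Y \ge \lambda} \le e^{-s\lambda}\,\expect{e^{sY}}$, and by independence $\expect{e^{sY}} = \prod_{i=1}^n \expect{e^{sY_i}}$. The whole argument then reduces to controlling each factor $\expect{e^{sY_i}}$ in terms of $\sigma_i^2 := \variance{X_i} = \expect{Y_i^2}$, and optimizing over $s$ at the end.

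The key step is a one-variable moment-generating-function estimate. Writing $g(x) = (e^x - 1 - x)/x^2$ (with $g(0)=1/2$), the function $g$ is increasing on $\real$; since $s>0$ and $Y_i \le M$, we have $sY_i \le sM$ and hence $e^{sY_i} \le 1 + sY_i + (sY_i)^2\, g(sM)$. Taking expectations and using $\expect{Y_i}=0$ together with $1+z \le e^z$ yields $\expect{e^{sY_i}} \le \exp\!\big(\sigma_i^2\, s^2 g(sM)\big) = \exp\!\big(\sigma_i^2\,(e^{sM}-1-sM)/M^2\big)$. Multiplying over $i$ and using $\sum_i \sigma_i^2 = \variance{X} = \sigma^2$ gives $\expect{e^{sY}} \le \exp\!\big(\sigma^2 (e^{sM}-1-sM)/M^2\big)$, so that $\pr{X \ge \mu+\lambda} \le \exp\!\big(-s\lambda + \sigma^2 (e^{sM}-1-sM)/M^2\big)$ for every $0 < s < 3/M$.

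To convert this into the claimed closed form I would use the elementary inequality $e^x - 1 - x \le \tfrac{x^2/2}{1 - x/3}$, valid for $0 \le x < 3$ (provable by comparing the series $\sum_{k\ge 2} x^k/k!$ termwise against $\tfrac{x^2}{2}\sum_{j\ge 0}(x/3)^j$, i.e. checking $2\cdot 3^j \le (j+2)!$). Applying it with $x = sM$ bounds the exponent by $-s\lambda + \tfrac{s^2\sigma^2/2}{1-sM/3}$, and the choice $s = \lambda/(\sigma^2 + M\lambda/3)$, which indeed lies in $(0,3/M)$, makes $1 - sM/3 = \sigma^2/(\sigma^2 + M\lambda/3)$ and collapses the exponent to a clean multiple of $-\lambda^2/(\sigma^2 + M\lambda/3)$. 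The main obstacle is purely the constant bookkeeping of this last step: pinning down the exact coefficient in the denominator (the $\tfrac{1}{3}$ multiplying $M\lambda$ and the leading constant) requires the sharp form of the scalar inequality above and the exactly optimal $s$, and it is here that one must match the normalization of the cited statement rather than settle for a generic Bernstein bound carrying a spare factor of two.
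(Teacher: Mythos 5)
Your plan is the standard Bernstein/Chernoff argument, and every individual step in it is sound: the centering, the monotonicity of $g(x)=(e^x-1-x)/x^2$, the moment bound $\expect{e^{sY_i}}\le\exp\bigl(\sigma_i^2(e^{sM}-1-sM)/M^2\bigr)$, the scalar inequality $e^x-1-x\le\frac{x^2/2}{1-x/3}$ for $0\le x<3$, and the admissibility of $s=\lambda/(\sigma^2+M\lambda/3)$. (For comparison purposes: the paper gives no proof of this statement at all; it is quoted as an external tool from Chung and Lu, and your argument is essentially the proof given in that reference.) But carry your own computation to the end and look at what it yields: $s\lambda=\frac{\lambda^2}{\sigma^2+M\lambda/3}$ and $1-sM/3=\frac{\sigma^2}{\sigma^2+M\lambda/3}$, so the exponent is $-s\lambda+\frac{s^2\sigma^2/2}{1-sM/3}=-\frac{\lambda^2}{\sigma^2+M\lambda/3}+\frac{\lambda^2}{2(\sigma^2+M\lambda/3)}=-\frac{\lambda^2}{2(\sigma^2+M\lambda/3)}$. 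That is Bernstein's inequality with a factor $2$ in the denominator, not the inequality \eqref{ACB+} as printed.

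The step you postpone as ``constant bookkeeping'' is therefore not bookkeeping: it cannot be done, because \eqref{ACB+} as printed is false. Take $n=1$ and $X_1$ a fair Bernoulli variable: then $\mu=1/2$, $\sigma^2=1/4$, the smallest admissible $M$ is $1/2$, and with $\lambda=1/2$ the left-hand side is $\pr{X_1\ge 1}=1/2$, while the right-hand side is $\exp\bigl(-\frac{1/4}{1/4+1/12}\bigr)=e^{-3/4}<1/2$. No choice of $s$ can rescue this---for this example even the unrelaxed optimized Chernoff bound equals exactly $1/2$. The source of the discrepancy is the paper's transcription of the cited result: Theorem 3.4 of Chung--Lu reads $\pr{X\ge\mu+\lambda}\le\exp\bigl(-\frac{\lambda^2}{2(\variance{X}+M\lambda/3)}\bigr)$, which is precisely what your argument proves. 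So your proposal, finished along the lines you describe, is a complete and correct proof of the correct statement; the factor $2$ should be restored in \eqref{ACB+}, and this change is harmless in the one place the paper uses the bound (the proof of \cref{lemma:pfrwboundfaraway}), since only the constant in the exponent is affected.
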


\subsection{Inequalities}

\begin{lemma}
    \label{fact:integraltest}
    Let $0 < d < \dmax$ be any integers.
    For any $\alpha > 1$,
    \begin{gather}
         \frac{1}{(\alpha-1)(d)^{\alpha-1}} \ \le \ \sum_{k\ge d} \frac{1}{k^\alpha} \ \le \ \frac{1}{(\alpha-1)(d)^{\alpha-1}}+ \frac{1}{d^\alpha}, \ \ \text{and} \label{eq:integraltest1.1}
        \\ \frac{1}{(\alpha-1)}\left(\frac{1}{d^{\alpha-1}} - \frac{1}{\dmax^{\alpha-1}}\right) \ \le \ \sum_{k= d}^{\dmax} \frac{1}{k^\alpha} \ \le \ \frac{1}{(\alpha-1)}\left(\frac{1}{d^{\alpha-1}} - \frac{1}{\dmax^{\alpha-1}}\right) + \frac{1}{d^\alpha}. \label{eq:integraltest1.2}
    \end{gather}
    Also,
    \begin{equation}
        \log\left(\frac{\dmax}{d}\right) \ \le \ \sum_{k=d}^{\dmax} \frac{1}{k} \ \le \ \log\left(\frac{\dmax}{d}\right) + \frac{1}{d} \label{eq:integraltest2},
    \end{equation}
    and for any $0<\alpha<1$,
    \begin{equation}
        \frac{(\dmax)^{1-\alpha}-d^{1-\alpha}}{1-\alpha} \ \le \ \sum_{k = d}^{\dmax} \frac{1}{k^\alpha} \ \le \ \frac{(\dmax)^{1-\alpha}-d^{1-\alpha}}{1-\alpha} + \frac{1}{d^\alpha}.
        \label{eq:integraltest3}
    \end{equation}
\end{lemma}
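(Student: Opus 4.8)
The plan is to prove all four displays as instances of a single, standard integral-comparison argument. The common ingredient is that for every real exponent $\beta > 0$ the function $f(x) = x^{-\beta}$ is positive and strictly decreasing on $[1,\infty)$; consequently, for each integer $k \ge 2$ one has the two-sided bound
\[
    \int_{k}^{k+1} f(x)\,\diff x \ \le\ f(k) \ \le\ \int_{k-1}^{k} f(x)\,\diff x,
\]
which is immediate from monotonicity (on $[k,k+1]$ the integrand is at most $f(k)$, and on $[k-1,k]$ it is at least $f(k)$). Summing the left inequality over the relevant range of indices bounds the sum from below by an integral, while summing the right inequality, after first peeling off the initial term $f(d)$ (for which the comparison interval $[d-1,d]$ would reach below $d$), bounds it from above. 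This single observation drives all four estimates; only the antiderivative and the sign of the exponent differ.

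First I would establish the finite-range bound \eqref{eq:integraltest1.2} with $f(x)=x^{-\alpha}$ and $\alpha>1$. The lower bound follows from $\sum_{k=d}^{\dmax} f(k) \ge \int_{d}^{\dmax+1} f \ge \int_{d}^{\dmax} f$, together with the evaluation $\int_{d}^{\dmax} x^{-\alpha}\,\diff x = \frac{1}{\alpha-1}\bigl(d^{-(\alpha-1)} - \dmax^{-(\alpha-1)}\bigr)$. For the upper bound I peel off the first term and apply $\sum_{k=d+1}^{\dmax} f(k) \le \int_{d}^{\dmax} f$, which contributes the stated additive correction $f(d)=d^{-\alpha}$. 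The infinite-range bound \eqref{eq:integraltest1.1} is then the limit $\dmax \to \infty$: since $\alpha-1>0$ we have $\dmax^{-(\alpha-1)} \to 0$, so the integral collapses to $\frac{1}{(\alpha-1)d^{\alpha-1}}$, giving exactly the two claimed inequalities.

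The remaining two displays have an identical structure, with only the antiderivative changing. For \eqref{eq:integraltest2} I take $f(x)=1/x$, whose antiderivative is $\log x$, so $\int_d^{\dmax} f = \log(\dmax/d)$ and the peeled-off first term contributes $1/d$. For \eqref{eq:integraltest3} I take $f(x)=x^{-\alpha}$ with $0<\alpha<1$; since $\alpha>0$ the function is still positive and decreasing, so the same two-sided bound applies verbatim, and $\int_d^{\dmax} x^{-\alpha}\,\diff x = \frac{\dmax^{1-\alpha} - d^{1-\alpha}}{1-\alpha}$, with first-term correction $d^{-\alpha}$.

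There is no genuine obstacle here; the lemma is purely a careful packaging of the integral test. The only points requiring attention are cosmetic: the orientation of the constant flips between $\frac{1}{\alpha-1}$ for $\alpha>1$ and $\frac{1}{1-\alpha}$ for $\alpha<1$ (but the corresponding integral is positive in both regimes, because the exponent $1-\alpha$ carries the matching sign), and the boundary bookkeeping in peeling off $f(d)$ for the upper bounds while using the slightly loose $\int_{d}^{\dmax+1} f \ge \int_{d}^{\dmax} f$ for the lower bounds. Since $f$ is decreasing in every case considered, one monotonicity step suffices to produce all four displays uniformly.
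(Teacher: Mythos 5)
Your proof is correct and follows essentially the same route as the paper, which simply invokes the integral test to get $\int_d^{\dmax} k^{-\alpha}\,\diff k \le \sum_{k=d}^{\dmax} k^{-\alpha} \le \int_d^{\dmax} k^{-\alpha}\,\diff k + d^{-\alpha}$ and then evaluates the antiderivatives; you merely spell out the monotone-comparison step and the peeling of the first term that the paper leaves implicit.
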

\begin{proof}
    By the integral test, it holds that
    \[
        \int_d^{\dmax} \frac{1}{k^\alpha}dk \le \sum_{k= d}^{\dmax} \frac{1}{k^\alpha} \le \int_d^{\dmax} \frac{1}{k^\alpha}dk +  \frac{1}{d^\alpha}.
    \]
    Straightforward calculations give the result for \cref{eq:integraltest1.2,eq:integraltest2,eq:integraltest3}. As for \cref{eq:integraltest1.1}, it comes from the integral test letting $\dmax \to \infty$.
\end{proof}

\section{Proofs Omitted from Section \ref{sec:preliminaries}}

\subsection{Proof of Lemma \ref{lemma:direct-path}}
\label{app:preliminaries}

\color{gray}

\begin{figure}
    \centering
    \includegraphics[scale=0.75]
        {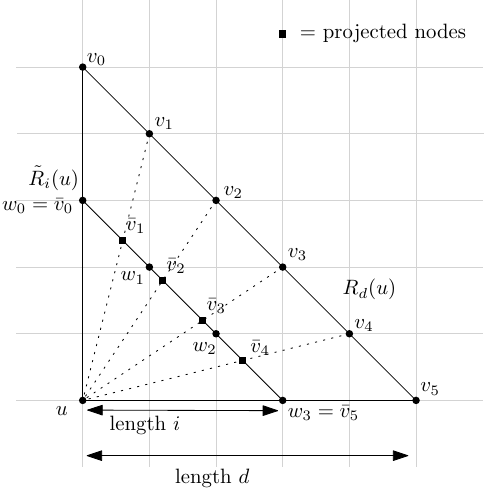}
    \caption{Projection from $ R_d(u) $ to $ \tilde R_i(u) $, with $ d = 5 $, $ i=3 $.}
    \label{fig:direct-path-projection}
\end{figure}

\color{black}

For $ 1 \le i < d $, consider the real rhombus $ \tilde R_i(u) $ which is the set $ \{v \in \real^2 : \manhattan{u-v} = i\} $. Project each element of $ R_d(u) $ on $ \tilde R_i(u) $ as in \cref{fig:direct-path-projection}. We obtain $ 4d $ equidistant points $ \bar{v}_1, \dots, \bar{v}_{4d} $ in Euclidean distance. Then, for each $ w \in R_i(u) $, consider the set $ C_w \subseteq \{\bar{v}_1, \dots, \bar{v}_{4d}\} $ of points that are closest to $ w $ than to any other node of $ R_i(u) $ in Euclidean distance.
Note that each $\bar{v}_j$ belongs to either one or two sets $C_w$; in the latter case we say that $\bar{v}_j$ is \emph{shared}.
For the cardinality of set $C_w$, we have the following cases:
(i)~if $d \equiv 0 \pmod i$ then $|C_w| = d/i + 1$, and two of the elements of $C_w$ are shared;
(ii)~if $d \not\equiv 0 \pmod i$ then either $|C_w| = \myfloor{d/i}$ and no elements of $C_w$ are shared, or $|C_w| = \myceil{d/i}$ and at most one element of $C_w$ is shared.

	Choose a node $ v \in R_d(u) $ u.a.r.\ and look at a node $ w \in R_i(u) $. The probability that $ w $ is on a direct-path chosen u.a.r.\ is exactly the probability that the projection $ \bar{v} $ of $ v $ on $ \tilde R_i(u) $ belongs to $ C_w $, where shared points contribute by $ 1/2 $.
In all cases, this probability is between $\myfloor{d/i}/(4d)$ and $\myceil{d/i}/(4d)$.
\color{black}
 
\subsection{Proof of Lemma \ref{lemma:monotonicity}}
\label{app:monotonicity}

    For any node $ w $, define $ D(w) $ as the set $ B_{\manhattan{w}}(\origin) \cup Q_{\supdist{w}}(\origin) $. Notice that $ D(u) \subseteq Q_{\manhattan{u}}(\origin) $. Then it suffice to prove that for all nodes $ v \notin D(u) $ we have
    \[
    	\pr{J_t = u} \ge \pr{J_t = v}.
    \]

    \begin{figure}[tb]
        \centering
        \includegraphics[scale=0.75]{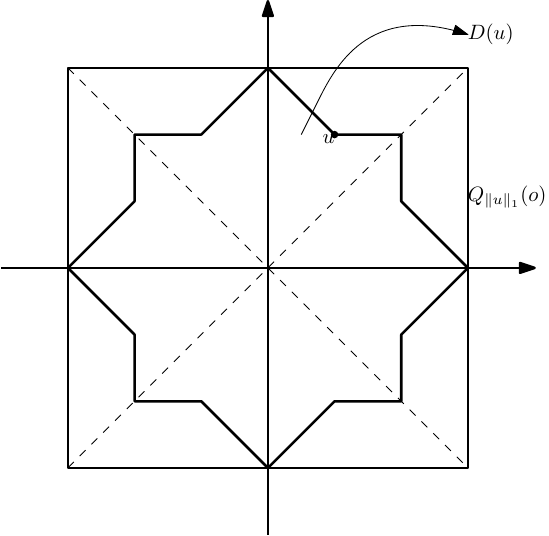}
        \caption{The set $D(u)$, consisting in all inner nodes of the ``star'', and the square $Q_{\distu}(\origin)$.}
        \label{fig:shape01}
    \end{figure}

    \begin{figure}[t]
        \centering
        \includegraphics[scale=0.75]{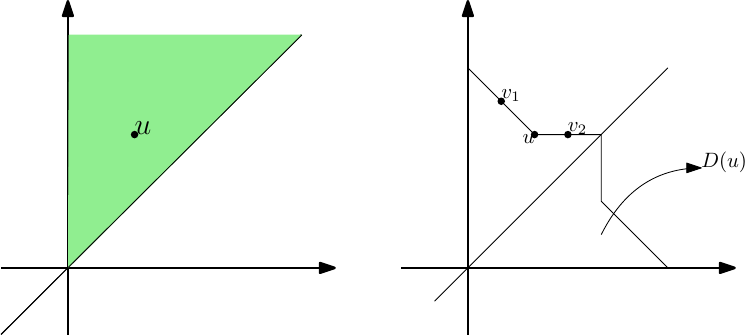}
        \caption{The ``area'' in which we take $u$, and the possible choices of $v$.}
        \label{fig:induction02}
    \end{figure}
    \begin{figure}[t]
        \centering
        \includegraphics[scale=0.75]{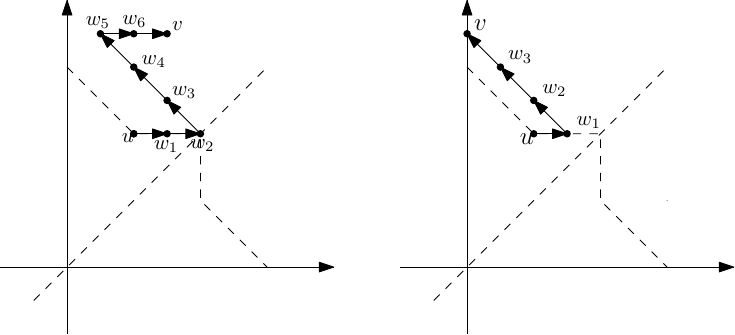}
        \caption{Two ``path'' examples.}
        \label{fig:induction03}
    \end{figure}
    \begin{figure}[t]
        \centering
        \includegraphics[scale=0.75]{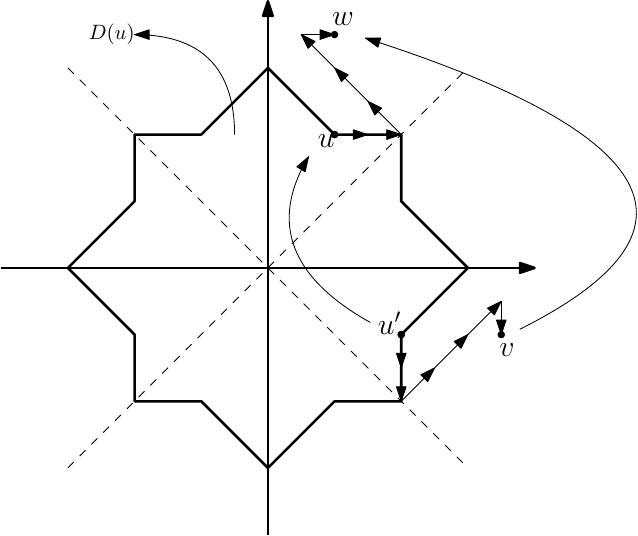}
        \caption{Symmetrical argument.}
        \label{fig:induction04}
    \end{figure}
    \begin{figure}[t]
        \centering
        \includegraphics[scale=0.75]{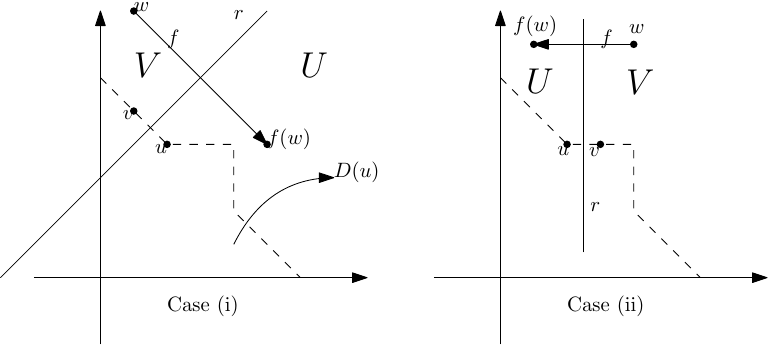}
        \caption{Geometric constructions in the two cases.}
        \label{fig:induction05}
    \end{figure}
    \begin{figure}[t]
        \centering
        \includegraphics[scale=0.75]{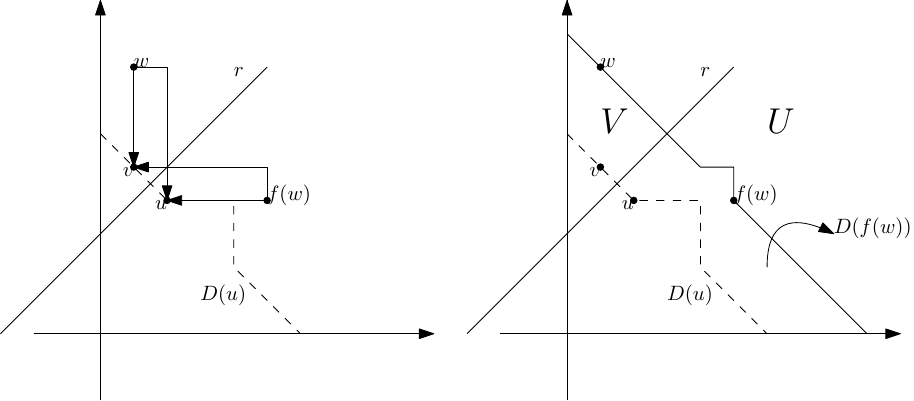}
        \caption{Left: equivalence between distances. Right: $w\notin D(f(w))$.}
        \label{fig:induction06}
    \end{figure}

    Let $u=(x_u,y_u)$ and,
    without loss of generality, suppose $u$ is in the first quadrant and not below the main bisector, i.e.\ in the set $\{(x,y) \in \integer^2 : y\ge 0, x\ge y\}$ (\cref{fig:induction02}).
    If we show that, for any $ v $ in $ \{v_1 =(x_u-1,y_u+1),v_2 = (x_u+1,y_u)\}$ (\cref{fig:induction02})., we have $ \pr{J_t = u} \ge \pr{J_t = v}$, than we have the statement.
    Indeed, for any $v\notin D(u)$ that ``lives'' in the highlighted area in \cref{fig:induction02}, there exists a sequence of nodes $u=w_0, w_1, \dots, w_k = v$ from $u$ to $v$ such that $w_{i+1}$ belongs to the set
    \[
        \{(x_{w_i}-1,y_{w_i}+1), (x_{w_i}+1,y_{w_i})\},
    \]
    where $w_i = (x_{w_i}, y_{w_i})$,
    as \cref{fig:induction03} shows. Thus, if the thesis is true for $v\in\{v_1,v_2\}$, then it is true also for all $v\notin D(u)$ in the highlighted area in \cref{fig:induction02}.
    At the same time, for any other $v\notin D(u)$, outside the highlighted area in \cref{fig:induction02}, there exists a symmetrical argument explained in \cref{fig:induction04}. Thus, if the thesis is true for all $v\notin D(u)$ in the highlighted area in \cref{fig:induction02}, then it is also true for any $v\notin D(u)$.
    We now consider some geometric constructions which will be used in the proof, one for each choice of $v$. The following description is showed in \cref{fig:induction05} .

    \begin{enumerate}[(i)]
        \item $v = (x_u-1,y_u+1)$: consider the strict line defined by $r: y=x+(y_u-x_u)+1$ (i.e.\ the line in $\real^2 $ which is the set of points that are equidistant from $u$ and $v$ in Euclidean distance). Call $V \subset \integer^2$ the set of nodes that are ``above'' this line, namely the ones that are closer to $v$ than $u$. Define $U = \integer^2\backslash (V\cup r)$ the complementary set without line $r$.
        Consider the injective function $f: V \to U$ such that $f(x,y) = (y-(y_u-x_u)-1, x+(y_u-x_u)+1)$, which is the symmetry with respect to $r$. It trivially holds that for any $w\in V$, $\manhattan{w-v}=\manhattan{f(w)-u}$ and $\manhattan{w-u}=\manhattan{f(w)-v}$. Furthermore, it holds that for each $w\in V$, either $w \notin D\left(f(w)\right)$, or $w$ lies on the ``border'' of $D\left(f(w)\right)$. All these properties are well-shown in \cref{fig:induction06}.

        \item $v = (x_u+1,y_u)$: the same construction can be done in this case. Indeed, the strict line will be $x = x_u + \frac{1}{2}$, and the injective function $f(x,y) = (2x_u+1 - x,y)$. The same properties we have seen in the previous case hold here too.
    \end{enumerate}

    Now we go for the proof. For any time $i$, and any two nodes $u',v' \in \integer^2$, define
    \[
        p^i(u',v')=\pr{\genericrand{i} = v' \mid \genericrand{0} = u'}.
    \]
    Let $v\in\{v_1,v_2\}$. We show that $p^t(\origin,u)\ge p^t(\origin,v)$ by induction on $t$.
    The base case is $t=1$. From the monotonicity, we know that
    \begin{align*}
        p^1(\origin,u)-p^1(\origin,v) \ge 0
    \end{align*}
    for any $u$ and $v$ in $\integer^2$ such that $\manhattan{u} \le \manhattan{v}$.
    We now suppose $t\ge 2$ and the thesis true for $t-1$. Fix $u$ and $v$ as in \cref{fig:induction02}; then, for the geometric construction we made above, it holds that
    \begin{align*}
        p^t(\origin,u)-p^t(\origin,v) = & \ \sum_{w\in \integer^2} p^{t-1}(\origin,w)\left(p^1(w,u)-p^1(w,v)\right)  \\
        \ge & \ \sum_{w\in U} p^{t-1}(\origin,w)\left(p^1(w,u)-p^1(w,v)\right) + \sum_{w\in V} p^{t-1}(\origin,w)\left(p^1(w,u)-p^1(w,v)\right)
    \end{align*}
    where last inequality is immediate for case (ii), indeed the line $r$ does not contain elements of $\integer^2$, while in case (i) the sum over nodes in line $r$ is zero. Then, the previous value is equal to
    \[
        \sum_{w\in V} p^{t-1}(\origin,f(w))\left(p^1(f(w),u)-p^1(f(w),v)\right) + \sum_{w\in V} p^{t-1}(\origin,w)\left(p^1(w,u)-p^1(w,v)\right)
    \]
    because of the definition of $f: V \to U$, and, changing the sign of the second sum, we obtain
    \[
        \sum_{w\in V} p^{t-1}(\origin,f(w))\left(p^1(f(w),u)-p^1(f(w),v)\right) - \sum_{w\in V} p^{t-1}(\origin,w)\left(p^1(w,v)-p^1(w,u)\right).
    \]
    Now, observe that the definition of $f$ implies that for each $w\in V$, $\manhattan{w-v}=\manhattan{f(w)-u}$ and $\manhattan{f(w)-v}=\manhattan{w-u}$ (\cref{fig:induction06}).
    Thus we can group out the term $p^1(f(w),u)-p^1(f(w),v)=p^1(w,v)-p^1(w,u)$, and we have
    \begin{equation}
    \label{eq:last_of_part_2}
        \sum_{w\in V} \left(p^{t-1}(\origin,f(w))-p^{t-1}(\origin,w)\right)\left(p^1(f(w),u)-p^1(f(w),v)\right).
    \end{equation}
    We observe that $p^{t-1}(\origin,f(w))-p^{t-1}(\origin,w)\ge 0$ by the inductive hypothesis, since either $w\notin D(f(w))$ or $w$ lies on the ``border'' of $D\left(f(w)\right)$ (\cref{fig:induction06}), and $p^1(f(w),u)-p^1(f(w),v)\ge 0$ by definition of $f$, since the distance between $f(w)$ and $u$ is no more than the distance between $f(w)$ and $v$.
    It follows that \eqref{eq:last_of_part_2} is non-negative, and, thus, the thesis.
 \section{Projection of a \levyFlight Jump}\label{app:projection}

Let $\levyflightrand{t}$ be the two dimensional random variable representing the coordinates of an agent performing an $\alpha$-\levyflight at time $t$, for any $\alpha > 1$.
Consider the projection of the \levyflight on the $x$-axis, namely the random variable $X_{t'}$ such that $\levyflightrand{t} = (X_{t}, Y_{t})$.
The random variable $X_{t}$ can be expressed as the sum of $t$ random variables $S^x_j$, $j=1,\dots,t$, representing the projection of the jumps (with sign) of the agent on the $x$-axis at times $j=1,\dots,t$.
With the next lemma, we prove that the jump projection length has the same tail distribution as the original jump length.

\begin{restatable}{lemma}{lemmaprojection}\label{lemma:projection}
    The probability that a jump $S^x_j$ has length equal to $d$ is $\myTheta{1/d^\alpha}$.
\end{restatable}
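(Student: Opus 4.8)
We have a \levyflight with exponent $\alpha > 1$ on $\integer^2$. The jump distance $d$ is chosen with $\pr{d=i} = c_\alpha/i^\alpha$ (and $\pr{d=0}=1/2$), then the destination is uniform on $R_d(u)$. We project a single jump onto the $x$-axis, getting $S^x_j$, and want to show $\pr{S^x_j = \pm d} = \Theta(1/d^\alpha)$ for $d \ge 1$.

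**The structure of the projection:**

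A jump of total $\ell_1$-length $k$ lands uniformly on $R_k(\origin)$, which has $4k$ nodes. The $x$-coordinates of these nodes range over $\{-k, \ldots, k\}$. For a given $x$-coordinate value $d$ with $|d| \le k$, I need to count how many of the $4k$ nodes on $R_k$ have $x$-coordinate exactly $d$. A node on $R_k$ with $x$-coordinate $d$ has $|d| + |y| = k$, so $|y| = k - |d|$. If $0 < |d| < k$ there are two such nodes ($y = \pm(k-|d|)$); if $|d| = k$ there is one; if $d = 0$ there are two ($y = \pm k$). So the conditional probability that the projection equals $d$ (with $0 < d < k$) given jump length $k$ is $2/(4k) = 1/(2k)$.

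Let me check which count is right. For $d \ge 1$ fixed, summing over all $k \ge d$ (since we need $k \ge |d|$), and including the sign, I get for the jump landing at $x$-coordinate $+d$:
$$\pr{S^x_j = +d} = \sum_{k \ge d} \pr{d_{\text{jump}} = k} \cdot \frac{(\text{\# nodes on } R_k \text{ with } x=d)}{4k}.$$

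**My proof plan:**

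The plan is to compute the count of nodes on $R_k(\origin)$ with $x$-coordinate exactly equal to a fixed value $d$, then sum over all jump lengths $k \ge d$ weighted by the jump-length distribution. For $1 \le d < k$ there are exactly $2$ such nodes (namely $(d, \pm(k-d))$), and for $d = k$ there is exactly $1$. So for fixed $d \ge 1$,
\[
\pr{S^x_j = d}
= \frac{c_\alpha}{k^\alpha}\cdot\frac{1}{4k}\Big|_{k=d}
+ \sum_{k > d} \frac{c_\alpha}{k^\alpha}\cdot\frac{2}{4k}
= \frac{c_\alpha}{4d^{\alpha+1}} + \frac{c_\alpha}{2}\sum_{k > d}\frac{1}{k^{\alpha+1}}.
\]
The first term is clearly $\Theta(1/d^{\alpha+1})$, which is $o(1/d^\alpha)$. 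For the tail sum, I would invoke the integral test, \cref{fact:integraltest} (specifically \cref{eq:integraltest1.1} applied with exponent $\alpha+1$), which gives
\[
\sum_{k > d}\frac{1}{k^{\alpha+1}} = \Theta\!\left(\frac{1}{d^{\alpha}}\right),
\]
since $\frac{1}{\alpha\, d^\alpha} \le \sum_{k\ge d+1} k^{-(\alpha+1)} \le \frac{1}{\alpha\, d^\alpha} + \frac{1}{(d+1)^{\alpha+1}}$. Combining, the dominant contribution is the tail sum of order $1/d^\alpha$, so $\pr{S^x_j = d} = \Theta(1/d^\alpha)$. By the symmetry of the uniform choice on $R_k$ (the map $(x,y)\mapsto(-x,y)$ is a bijection of $R_k$), the same bound holds for $\pr{S^x_j = -d}$, giving $\pr{S^x_j = \pm d} = \Theta(1/d^\alpha)$ as claimed.

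**Where the work lies:**

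There is no serious obstacle here; the argument is essentially a bookkeeping exercise combining a counting step with the integral test. The one point requiring a little care is confirming the node counts on $R_k$ for the boundary case $d=k$ versus the interior case $d<k$, and verifying that the $d=k$ term (of order $d^{-(\alpha+1)}$) is negligible compared to the tail sum (of order $d^{-\alpha}$). I would also note, as the paper does in \cref{eq:partialdistribution}, that in the $d=0$ case one must account separately for the jump lying along the axis versus off-axis contributions, but since the lemma statement concerns $d \ge 1$ this case does not affect the claimed $\Theta(1/d^\alpha)$ bound. The constant $c_\alpha$ is bounded away from $0$ and $\infty$ for $\alpha$ bounded away from $1$ (cf.\ \cref{remark:alpha>1+eps}), so it is absorbed into the $\Theta$ notation.
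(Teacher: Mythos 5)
Your proof is correct and follows essentially the same route as the paper: count the nodes of $R_k(\origin)$ with a fixed $x$-coordinate $d$ (one at the vertex $k=d$, two for each $k>d$, or four if both signs are pooled), sum against the jump-length distribution, and apply the integral test to see that the tail sum $\sum_{k>d}k^{-(\alpha+1)}=\Theta(d^{-\alpha})$ dominates the $\Theta(d^{-(\alpha+1)})$ on-axis term. Incidentally, your final conclusion $\Theta(1/d^{\alpha})$ is the right one and matches the lemma statement; the last display in the paper's own proof, which reads $\myTheta{1/d^{\alpha+1}}$, is a typo.
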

\begin{proof}
    The partial distribution of the jumps along the $x$-axis is given by the following. For any $d \ge 0$,
    \begin{equation}\label{eq:pfjumpprojection}
        \pr{S^x_j = \pm d } = \left[ \frac{1}{2}  +  \sum_{k= 1}^{\infty} \frac{\levyconst}{2k^{\alpha+1}}\right] \ind_{d=0} + \left[\frac{\levyconst}{2d^{\alpha+1}}+\sum_{k= 1+d}^{\infty} \frac{\levyconst}{k^{\alpha+1}}\right]\ind_{d\neq 0}\ ,
    \end{equation}
    where $\ind_{d\in A}$ returns $1$ if $d\in A$ and 0 otherwise, the term
    \[
        \frac{\ind_{d=0}}{2} + \frac{\levyconst}{2d^{\alpha+1}}\ind_{d\neq 0}
    \]
    is the probability that the original jump lies along the horizontal axis and has ``length'' exactly $d$ (there are two such jumps if $d > 0$), and, for $k\ge 1+d$, the terms
    \[
        \frac{\levyconst}{2k^{\alpha+1}}\ind_{d = 0} +  \frac{\levyconst}{k^{\alpha+1}}\ind_{d\neq 0}
    \]
    are the probability that the original jump has ``length'' exactly $k$ and its projection on the horizontal axis has ``length'' $d$ (there are two such jumps if $d =0$, and four such jumps if $d > 0$). 
    By the integral test (\cref{fact:integraltest}) we know that quantity $\eqref{eq:pfjumpprojection}$ is
    \[
        \pr{S^x_j = \pm d} = \myTheta{\frac{1}{d^{\alpha+1}}}.
    \]
\end{proof}

\phantomsection
\addcontentsline{toc}{section}{References}
\bibliographystyle{abbrv}\bibliography{bibliography}

\end{document}